\theoremstyle{definition}
\newtheorem{thm}{Theorem}
\newtheorem{prop}[thm]{Proposition}
\newtheorem{lemma}[thm]{Lemma}
\newtheorem{claim}[thm]{Claim}
\newtheorem{defn}[thm]{Definition}
\newtheorem{cor}[thm]{Corollary}
\newtheorem{eg}[thm]{Example}
\newtheorem{rmk}[thm]{Remark}
\newtheorem{question}{Question}
\newtheorem{convention}[thm]{Convention}
\newcommand{\QQ}{\mathbb{Q}}
\newcommand{\NN}{\mathbb{N}}
\newcommand{\ZZ}{\mathbb{Z}}
\newcommand{\FF}{\mathbb{F}}
\newcommand{\MM}{\mathbb{M}}
\newcommand{\PP}{\mathbb{P}}
\DeclareMathOperator{\mmax}{\bold{max}}
\DeclareMathOperator{\zeromat}{\bold{O}}
\DeclareMathOperator{\idmat}{\bold{I}}
\DeclareMathOperator{\row}{\bold{r}}
\DeclareMathOperator{\column}{\bold{c}}
\DeclareMathOperator{\extract}{\bold{e}}
\DeclareMathOperator{\summation}{\bold{\sum}}
\DeclareMathOperator{\lamt}{\bold{\lambda}}
\DeclareMathOperator{\chmat}{\bold{chmat}}
\DeclareMathOperator{\ch}{\bold{ch}}
\DeclareMathOperator{\detrat}{\bold{det}_{rat}}
\DeclareMathOperator{\num}{\bold{num}}
\DeclareMathOperator{\den}{\bold{den}}
\DeclareMathOperator{\adj}{\bold{adj}}
\DeclareMathOperator{\mul}{\bold{mul}}
\DeclareMathOperator{\symm}{\bold{symm}}
\DeclareMathOperator{\detpol}{\bold{det}_{pol}}
\DeclareMathOperator{\cond}{\bold{cond}}
\DeclareMathOperator{\conv}{\bold{conv}}
\DeclareMathOperator{\pconv}{\bold{pconv}}
\DeclareMathOperator{\rem}{\bold{rem}}
\DeclareMathOperator{\divi}{\bold{div}}
\DeclareMathOperator{\coeff}{\bold{coeff}}
\DeclareMathOperator{\mcoeff}{\bold{mcoeff}}
\DeclareMathOperator{\basis}{\bold{basis}}
\DeclareMathOperator{\sol}{\bold{sol}}
\DeclareMathOperator{\solve}{\bold{solve}}
\DeclareMathOperator{\im}{im}
\DeclareMathOperator{\polize}{\bold{polize}}
\DeclareMathOperator{\Mat}{\bold{Mat}}
\DeclareMathOperator{\LAPPD}{\bold{LA}\mathtt{P}\bold{PD}}
\DeclareMathOperator{\ct}{\bold{ct}}
\DeclareMathOperator{\rk}{\bold{rk}}
\DeclareMathOperator{\LA}{\bold{LA}}
\DeclareMathOperator{\LAP}{\bold{LA}\mathtt{P}}
\DeclareMathOperator{\rank}{\bold{rank}}
\DeclareMathOperator{\VNC}{\bold{VNC}}
\DeclareMathOperator{\lincoeff}{lincoeff}
\DeclareMathOperator{\power}{\mathtt{P}}
\DeclareMathOperator{\indexsort}{\mathtt{index}}
\DeclareMathOperator{\fieldsort}{\mathtt{field}}
\DeclareMathOperator{\matrixsort}{\mathtt{matrix}}
\DeclareMathOperator{\DET}{DET}
\DeclareMathOperator{\Det}{Det}
\DeclareMathOperator{\Den}{Den}
\DeclareMathOperator{\Num}{Num}
\DeclareMathOperator{\Eval}{Eval}
\title{On matrix rank function over bounded arithmetics}
\author{Eitetsu KEN
\footnote{Institute: Graduate School of Mathematical Sciences, the University of Tokyo,
3-8-1 Komaba,
Meguro-ku, Tokyo-to 153-0041, Japan.
email: \url{yeongcheol-kwon@g.ecc.u-tokyo.ac.jp}}
 \& Satoru KURODA
 \footnote{Institute: Department of Culture and Informatics, Gunma Prefectural Women's University,
1395-1 Kaminote Tamamura-machi, Sawa-gun, Gumma-ken 370-1127, Japan.
email: \url{satoru@mail.gpwu.ac.jp}}}
\begin{document}
\maketitle




\begin{abstract}

In \cite{Mulmuley's algorithm}, Mulmuley gave an algorithm reducing the computation of the matrix rank function to that of determinants, of which the proof for the verification is elementary. 
In this article, we formalize this argument in the bounded arithmetic $\LAP$; that is, we show that 
\[\det(AB)=\det(A)\det(B)\]
 for matrices $A,B$ with $\FF(X)$-coefficients implies 
 \[\rank(M)=\dim(\im M),\]
  where $\FF$ is the universe of the $\fieldsort$ of the theory, $M$ is a matrix with $\FF$-coefficients, and $\rank(M)$ is the rank function computed by Mulmuley's algorithm.
Furthermore, interpreting $\LAP$ by $\VNC^{2}$ with $\FF=\QQ$ and using the result of \cite{Uniform}, we see that $\VNC^{2}$ can formalize $\rank(M)$ and prove $\rank(M)=\dim(\im M)$.

Lastly, we give several examples of combinatorial statements provable in $\VNC^{2}$, using the formalized linear algebra.
\end{abstract}









\section{Introduction}\label{Introduction}
Based on the characterization of the $NP$ v.s. $coNP$ problem \cite{CookReckhow}, proof complexity of propositional logic has been vigorously investigated. 
Among various concrete proof systems, the Frege system has been playing an important role in the field. 
Despite decades of intensive efforts, it is still open whether the Frege system is polynomially bounded or not \cite{proofcomplexity}.

If $NP \neq coNP$, then the Frege system cannot be polynomially bounded. 
Upon this observation, \cite{Hard} gave numerous examples of combinatorial statements which can be formalized as a family of propositional tautologies but yet seemed to be difficult to prove efficiently in the Frege system.
 
However, studies have been showing that the Frege system is quite strong.
It is known that there are:
\begin{itemize}
\item Polynomial-sized Frege proofs for Bondy's theorem (\cite{Hard})
\item Polynomial-sized Frege proofs for Bollob\'{a}s' theorem (\cite{Bollobas}) 
\item Quasipolynomial-sized Frege proofs for Frankl's theorem (\cite{Frankl})
\item Quasipolynomial-sized Frege proofs for multiplicativity of matrix determinant, Cayley-Hamilton theorem, and several other hard matrix identities (\cite{Uniform}). 
 \end{itemize}

This paper can be located in this line of research. 
Roughly speaking, we show that the quasipolynomial Frege system can formalize the matrix rank function $\rank(M)$ for square matrices $M$ with $\QQ$-coefficients and prove that $\rank(M)=\dim(\im M)$.

To explain it more precisely, we need the notion of bounded arithmetics.
It is known that many systems $T$ of bounded arithmetic have a counterpart $P_{T}$ in propositional proof systems so that if a $\forall \Sigma^{B}_{0}$-theorem is provable in a theory of bounded arithmetic $T$, then there are (quasi)-polynomial-sized propositional proofs is $P_{T}$ (\cite{Cook}).
In terms of bounded arithmetics, our work and the above results on feasible provability of combinatorial statements are located in the research program ``Bounded Reverse Mathematics,'' proposed by Cook.

As for linear algebra, the first and perhaps the most natural bounded arithmetics related to it are $\LA$ and $\LAP$ introduced in \cite{Soltys} and \cite{The proof complexity of linear algebra}, consisting of three-sorts: $\indexsort$-sort, $\fieldsort$-sort, and $\matrixsort$-sort.
It is model theoretically interesting that they treat general coefficient fields including those which are not even definable in the standard model of arithmetics. 
\cite{Soltys} formalized matrix determinant in $\LAP$, following \cite{Berkowitz}, and proved that
\begin{enumerate}
\item The following are equivalent over $\LAP$:
\begin{enumerate}
 \item\label{hardstart} The existence of cofactor expansion,
 \item An axiomatic definition of the determinant,
 \item\label{hardend} The Cayley-Hamilton Theorem
\end{enumerate}
\item\label{detmult} The multiplicativity of matrix determinants implies all of them above.
\end{enumerate}

Moreover, they gave a stronger theory $\forall \LAP$, which is capable to prove the above matrix identities, and a translation theorem which converts proofs of universal sentences of $\forall \LAP$ to the extended Frege system.
This corresponds to the fact that the above hard matrix identities are all provable in the bounded arithmetic $V^{1}_{1}$.

However, whether the hard matrix identities (\ref{hardstart})-(\ref{hardend}) and (\ref{detmult}) above are provable in bounded arithmetics (seeming to be) weaker than $V^{1}_{1}$ had been a long-standing open problem.
The next breakthrough was given by \cite{Short proofs for the determinant identities} \cite{Uniform}.
 Formalizing the matrix determinant via Schur complement by arithmetical circuits with division, they showed that, for integer coefficients, multiplicativity of determinants and therefore the hard matrix equalities mentioned above can be proven in the bounded arithmetic $\VNC^{2}$, which corresponds to Frege proofs of $2^{O(|n|^{2})}$-size, and therefore the corresponding tautologies have quasipolynomial-sized Frege proofs.

The main contribution of this article is to give upper bounds on the provability of rank properties.
There have been a series of research on comparison of computational complexity of the matrix rank function in the context of parallel algorithms (\cite{ParallelLinearAlgorithms}), and it is known that the rank function is reduced to the matrix determinant by Mulmuley's elegant parallel algorithm \cite{Mulmuley's algorithm}.
Although it is straightforward to formalize the matrix rank function in $\LAP$ since Mulmuley's algorithm does not rely on the coefficient field, it does not directly imply that basic rank properties are provable in the theory.

Indeed, we consider an extension $\LAPPD$ of $\LAP$, which has an additional axiom $\det(AB)=\det(A)\det(B)$ for square matrices $A,B$ with $\FF(X)$-coefficients.
Here, $\FF$ is the universe of the $\fieldsort$ of $\LAP$.
Then we will show that the rank of a matrix with $\FF$-coefficients computed by Mulmuley's algorithm coincides with:
\begin{itemize}
 \item the number of linearly independent row vectors
 \item the number of linearly independent column vectors
 \item the degree of maximal nonzero minors
\end{itemize}

Furthermore, we prove that $\LAPPD$ can be interpreted by $\VNC^{2}$ with $\FF=\QQ$, and therefore quasipolynomial Frege proofs can refer and utilize the matrix rank function in their reasoning.

The organization of the paper is as follows:

In \S \ref{Preliminaries}, we set up our notations and conventions on the bounded arithmetics we use: namely, $\LA_{-},\LA$, $\LAP$, and $\LAP_{-}$. 

In \S \ref{SigmaB0-formula and -definability}, we introduce $\Sigma^{B}_{0}$-definability in the language of $\LAP$ and observe that $\LAP$ has quantifier elimination for $\Sigma^{B}_{0}$-formulae.

In \S \ref{A formalization of polynomials in LAP}, we formalize some basic treatment of polynomials in $\LAP$ based on the results in section \ref{SigmaB0-formula and -definability}.
To be precise, we observe that
 \[\langle \MM, \FF[X], \Mat_{\FF[X]} \rangle \models \LAP_{-}\]
 holds given 
\[\langle \MM, \FF, \Mat_{\FF} \rangle \models \LAP.\]

Formally speaking, we define an interpretation $\llbracket \cdot \rrbracket_{pol}$ of $\mathcal{L}_{\LAP_{-}}$ by $\mathcal{L}_{\LAP}$, which interprets $\fieldsort$ as the universe of polynomials (denoted by $\FF[X]$ as usual) and $\matrixsort$ as the universe of matrices with polynomial coefficients (denoted by $\Mat_{\FF[X]}$).

In \S \ref{Rational functions}, analogously with the previous \S \ref{A formalization of polynomials in LAP}, we formalize some basic treatment of rational functions in $\LAP$.
We introduce $\FF(X)$ and $\Mat_{\FF(X)}$, and
we observe that 
\[\langle \MM, \FF(X), \Mat_{\FF(X)} \rangle \models \LAP.\]

In \S \ref{The theory LAPPD}, we introduce a bounded arithmetic $\LAPPD$, which is $\LAP$ plus multiplicativity of determinants in $\Mat_{\FF(X)}$.
An alternative simpler definition of $\LAPPD$ which just mentions polynomials is also presented.

In \S \ref{Matrix substitution to polynomials}, we formalize matrix substitution to polynomials in our notations.

In \S \ref{A definition of rank and its basic properties}, using the basic properties of rational functions in the previous sections, we formalize the matrix rank function following Mulmuley's algorithm and prove that it coincides with several other notions of linear algebra in $\LAPPD$.

In \S \ref{Interpretation}, we observe that $\VNC^{2}$ can interpret $\LAPPD$ for $\QQ$-coefficients.
The work includes that the definition of the determinant in \cite{Uniform} is provably compatible with Berkowitz' algorithm (\cite{Berkowitz}) in $\VNC^{2}$.
(The latter was the approach of \cite{Soltys}'s formalization.)  

In \S \ref{Some combinatorial results}, we mention several combinatorial principles whose known proofs utilize linear algebra methods and which are provable in $\VNC^{2}$.
We start with immediate corollaries of the results of \cite{Uniform}: namely, the Oddtown theorem, the Graham-Pollak theorem, the Fisher's inequality, and a version of Ray-Chaudhuri-Wilson theorem.
Furthermore, as an application of our formalization of the matrix rank function, we formalize Grolmusz's explicit construction of Ramsey graphs. 

Lastly, \S \ref{Open questions} discusses open questions.

The Appendix includes technical definitions and the details of proofs in sections \ref{Preliminaries}, \ref{A formalization of polynomials in LAP}, and \ref{Rational functions}, and \ref{Matrix substitution to polynomials}.
 
\section{Acknowledgement}
We are grateful for Toshiyasu Arai and Iddo Tzameret for their comments.
The authors deeply thank anonymous referees for their sincere suggestions, which greatly improved the presentation of this paper.
This work was supported by JSPS KAKENHI Grant Numbers 18K03400 and 22J22505.
The first author was also supported by Grant-in-Aid for JSPS Fellows and FoPM program at the University of Tokyo.

\section{Preliminaries}\label{Preliminaries}
 Throughout this article, when we write first-order formulae, we prioritize the readability and often use abbreviation with clear meaning. 
 For a natural number $k$, $[k]$ denotes the $k$-set $\{1,\ldots,k\}$.

We denote the language of a given first-order theory $T$ by $\mathcal{L}_{T}$.
Also, a tuple $(x_{1}, \ldots, x_{k})$ of variables is often denoted by $\bar{x}$ if the length is clear.
For a formula $\varphi(x_{1}, \ldots, x_{k})$ and terms $u_{1}, \ldots, u_{k}$, 
\[\varphi [x_{1} \mapsto u_{1},\ldots, x_{k} \mapsto u_{k}] \ \mbox{or} \  \varphi[\bar{x} \mapsto \bar{u}]\]
 denotes the formula obtained by simultaneously substituting each $u_{i}$ for $x_{i}$.
 If $\bar{x}$ is clear from the context, we just write $\varphi(\bar{u})$.

Similarly, for a term $t(x_{1}, \ldots, x_{k})$ and terms $u_{1}, \ldots, u_{k}$, 
\[t [x_{1} \mapsto u_{1},\ldots, x_{k} \mapsto u_{k}]\ \mbox{or} \  t[\bar{x} \mapsto \bar{u}]\]
 denotes the term obtained by simultaneously substituting each $u_{i}$ for $x_{i}$.
 If $\bar{x}$ is clear from the context, we just write $t(\bar{u})$.

The theory $\LA$ and its extension $\LAP$ are defined as quantifier-free theories in \cite{The proof complexity of linear algebra}, and they are later treated as first-order theories in \cite{LAPasFOtheory} as well.
Unless stated otherwise, we follow the notations of  \cite{The proof complexity of linear algebra}, but we treat $\LA$ and its variations as many-sorted first order theories.

Concretely;

\begin{defn}\label{The language of LA}
The language $\mathcal{L}_{\LA}$ is three-sorted: $\indexsort$, $\fieldsort$, and $\matrixsort$, and it is defined as the limit language $\mathcal{L}_{\LA}:= \bigcup_{k=0}^{\infty} \mathcal{L}_{k}$, where $\mathcal{L}_{k}$ ($k \geq 0$) is defined as follows: $\mathcal{L}_{0}$ consists of the following relation and function symbols:
\begin{enumerate}
 \item $0_{\indexsort}$, $1_{\indexsort}$ (constants of $\indexsort$)
 \item $+_{\indexsort}$, $*_{\indexsort}$, $-_{\indexsort}$, $\divi$, $\rem$ (functions of type $\indexsort \times \indexsort \rightarrow \indexsort$)
 \item $0_{\fieldsort}$, $1_{\fieldsort}$. (constants of $\fieldsort$)
 \item $+_{\fieldsort}$, $*_{\fieldsort}$ (functions of type $\fieldsort \times \fieldsort \rightarrow \fieldsort$)
 \item $-_{\fieldsort}$, $(\cdot)^{-1}$ (a function of type $\fieldsort \rightarrow \fieldsort$)
 \item $\row$, $\column$ (functions of type $\matrixsort \rightarrow \indexsort$)
 \item $\extract$ (a function of type $\matrixsort \times \indexsort \times \indexsort \rightarrow \fieldsort$)
 \item $\summation$ (a function of type $\matrixsort \rightarrow \fieldsort$)
 \item $\leq_{\indexsort}$, $=_{\indexsort}$ (binary relations on $\indexsort$)
 \item $=_{\fieldsort}$ (a binary relation on $\fieldsort$)
 \item $=_{\matrixsort}$ (a binary relation on $\matrixsort$)
\end{enumerate}

 Now, assuming $\mathcal{L}_{k}$ is already defined, $\mathcal{L}_{k+1}$ is obtained by adding the following new function symbols to $\mathcal{L}_{k}$: 

\begin{enumerate}
 \item A function symbol $\cond_{\indexsort}^{\alpha,\vec{x}}$ of type 
 \[\indexsort^{l+2} \rightarrow \indexsort; (\vec{x},y,z) \mapsto \cond_{\indexsort}^{\alpha,\vec{x}}(\vec{x},y,z)\]
  for each open $\mathcal{L}_{k}$-formula $\alpha$ and a vector $\vec{x}=(x_{1},\ldots,x_{l})$ ($l \geq 0$) of $\indexsort$-variables, where the variables occurring in $\alpha$ are among the elements of $\vec{x}$. (Some $x_{i}$ might not occur in $\alpha$.)
 \item Similarly as above, a function symbol $\cond_{\fieldsort}^{\alpha,\vec{x}}$ of type 
 \[\indexsort^{l} \times \fieldsort^{2} \rightarrow \fieldsort; (\vec{x},y,z) \mapsto \cond_{\fieldsort}^{\alpha,\vec{x}}(\vec{x},y,z)\]
  for each open $\mathcal{L}_{k}$-formula $\alpha$ and a vector $\vec{x}=(x_{1},\ldots,x_{l})$ of $\indexsort$-variables, where the variables occurring in $\alpha$ are among the elements of $\vec{x}$.
  \item A function symbol $\lamt_{ij} \langle m,n, t \rangle$ for each quintuple $(i,j,m,n,t)$, where $i$ and $j$ are $\indexsort$-variables, $m$ and $n$ are $\mathcal{L}_{k}$-terms whose outputs are of $\indexsort$ in which $i$ and $j$ do not occur, and $t$ is an $\mathcal{L}_{k}$-term whose outputs are of $\fieldsort$.
\end{enumerate}
\end{defn}


\begin{convention}
We often omit subscripts of the symbols (e.g. $=_{\indexsort}$ is denoted by $=$).
Formally speaking, this causes clashes of several symbols, for example, $+_{\indexsort}$ and $+_{\fieldsort}$ are both represented by the same symbol $+$.
Therefore, we follow this convention when there is no danger of confusion.
\end{convention}

Now, $\LA$ is defined as an $\mathcal{L}_{\LA}$-theory as follows (originally, all of the following were presented as deduction rules of sequent-calculus formalization, but we treat them as first-order formulae here):
\begin{defn}\label{DefLA}
\begin{enumerate}
 \item  (cf. $(A1)$-$(A5)$ in \cite{The proof complexity of linear algebra}.) 
 Equality axioms for $=_{\indexsort}$, $=_{\fieldsort}$, $=_{\matrixsort}$.
 \item (cf. Induction rule in \cite{The proof complexity of linear algebra}.)
 Open induction for $\indexsort$. 
  \item\label{axiomsforindex}  (cf. $(A6)$-$(A16)$ in \cite{The proof complexity of linear algebra}. Note that the forms of the conditions here are different than those in \cite{The proof complexity of linear algebra}.) 
   Axioms for $0_{\indexsort}$, $1_{\indexsort}$, $+_{\indexsort}$, $*_{\indexsort}$, $\leq_{\indexsort}$, $-_{\indexsort}$, $\divi$, $\rem$ which, over open induction for $\mathcal{L}_{\indexsort}$-formulae, are equivalent to the conjunction of the following
  (here, we set
 \[\mathcal{L}_{\indexsort}:=\{0_{\indexsort}, 1_{\indexsort},+_{\indexsort},*_{\indexsort},\leq_{\indexsort}, -_{\indexsort}, \divi, \rem\}.\] 
 See \ref{ReformulationofLA} for a proof sketch of the equivalence. 
 There, we refer this conjunction as (\dag)):
 \begin{enumerate}
  \item With $(0_{\indexsort}, 1_{\indexsort}, +_{\indexsort},*_{\indexsort},\leq_{\indexsort})$, $\indexsort$ forms the nonnegative part of a discretely ordered ring, that is, an ordered ring in which $1$ is the smallest positive element.
  \item $-_{\indexsort}$ is the modified-minus.
  \item $\indexsort$ admits a division with respect to $\leq_{\indexsort}$, and $\divi(i,j)$ and $\rem(i,j)$ are the quotient and the remainder respectively of division of $i$ by $j$.
   \end{enumerate}
    \item (cf. $(A17)$ in \cite{The proof complexity of linear algebra}.)
  \[\cond_{\indexsort}^{\alpha,\vec{x}}(\vec{x},y,z)=
  \begin{cases}
  y &\quad \mbox{(if $\alpha(\vec{x})$ holds)}\\
  z &\quad \mbox{(otherwise)}
  \end{cases}\]
  for all $\vec{x},y,z$ of $\indexsort$.
   \item (cf. $(A18)$-$(A26)$ except $(A21)$ in \cite{The proof complexity of linear algebra}.) 
With $(0_{\fieldsort},1_{\fieldsort}, +_{\fieldsort},*_{\fieldsort})$, $\fieldsort$ forms a commutative ring. $-_{\fieldsort}$ returns the opposite of the input.   
 \item\label{inverseaxiom} (cf. $(A21)$ in \cite{The proof complexity of linear algebra}.) 
 $(\cdot)^{-1}$ returns the inverse element of the input. 
    With this, $\fieldsort$ actually forms a field.
   \item (cf. $(A27)$ in \cite{The proof complexity of linear algebra}.) 
   \[\cond_{\fieldsort}^{\alpha,\vec{x}}(\vec{x},y,z)=
  \begin{cases}
  y &\quad \mbox{(if $\alpha(\vec{x})$ holds)}\\
  z &\quad \mbox{(otherwise)}
  \end{cases}
  \]
  for all $\vec{x}$ of $\indexsort$ and $y,z$ of $\fieldsort$.

\item\label{componentconvention} (cf. $(A28)$ in \cite{The proof complexity of linear algebra}.)
For $A$ of $\matrixsort$ and $i,j$ of $\indexsort$, $\extract (A,i,j)=0_{\fieldsort}$ if $i \not \leq_{\indexsort} \row (A)$ or $j \not\leq_{\indexsort} \column (A)$ or $i=_{\indexsort}0_{\indexsort}$ or $j=_{\indexsort}0_{\indexsort}$. 
\item\label{matrixextensionality} (cf. Matrix equality rule in \cite{The proof complexity of linear algebra}.) $A=_{\matrixsort}B$ if and only if $\row (A)=\row (B)$ and $\column (A)=\column (B)$ and 
\[\extract (A,i,j) = \extract (B,i,j)\]
 for each $i \leq_{\indexsort} \row (A)$ and $j \leq_{\indexsort} \column (A)$.
\item (cf. $(A29)$ in \cite{The proof complexity of linear algebra}. Note that $m,n,t$ may include variables other than $i,j$.)
$\lamt_{ij}\langle m,n,t \rangle$ returns a matrix $A$ with $\row (A)=m$, $\column (A)=n$, and 
\[\extract (A,i,j) = t(i,j) \quad (i \leq_{\indexsort} \row (A),\ j\leq_{\indexsort} \column (A)).\]

\item (cf. $(A30)$-$(A34)$ in \cite{The proof complexity of linear algebra}.)
A recursive characterization of $\summation (A)$, whose intended meaning is: 
\[\summation (A) = \sum_{i \leq_{\indexsort} \row (A)}\sum_{j \leq_{\indexsort} \column (A)} \extract (A,i,j).\]
 \end{enumerate}
 \end{defn}

Intuitively, by item \ref{matrixextensionality}, we may regard each $A$ of $\matrixsort$ as an \\
$(\row (A)\times \column (A))$-matrix of components of $\fieldsort$. 
Then $\extract (A,i,j)$ amounts to the $(i,j)$-component of $A$.
The item \ref{componentconvention} says we adopt the convention under which $\extract (A,i,j)=0$ if $(i,j)$ is outside of $[\row (A)] \times [\column (A)]$, the intended range of the indices for $A$.


In \cite{The proof complexity of linear algebra}, $\LA$ is shown to be powerful enough to formalize the $(k \times l)$ zero-matrix $\zeromat_{k,l}$, the identity matrix $\idmat_{k}$ of $(k\times k)$, and the matrix sum $A+B$ (the matrices $A,B$ may have different sizes; in that case, $A+B$ is a $(\mmax\{\row (A),\row (B)\} \times \mmax\{\column (A),\column (B)\})$-matrix, calculated by regarding $A$ and $B$ as the matrices of this size, padding zero to the additional lower right components) the matrix product $A*B$ (here, $A$ and $B$ are allowed to have any size, and $A*B$ is of size $\row (A) \times \column (B)$, calculated by regarding $A$ as $(\row (A) \times \mmax\{\column (A),\row (B)\})$-matrix and $B$ as $(\mmax\{\column (A),\row (B)\} \times \column (B))$-matrix, padded by $0$ for the additional right columns and lower rows respectively) and show some of their basic arithmetical properties, such as that the $(k \times k)$-matrices form a ring with $(\zeromat_{kk}, \idmat_{k}, +, *)$ (not commutative in general) for each $k$ of $\indexsort$.

In order to formalize Berkowitz algorithm (\cite{Berkowitz}), \cite{The proof complexity of linear algebra} introduced an extension $\LAP$ of $\LA$, which is helpful for our work, too.

\begin{defn}\label{The language of LAP}
The language $\mathcal{L}_{\LAP}$ is defined analogously as $\mathcal{L}_{\LA}$, except for one thing; the initial language $\mathcal{L}_{0}$ contains the following additional function symbol:
\begin{itemize}
 \item $\power$ (a function symbol of type $\indexsort \times \matrixsort \rightarrow \matrixsort$)
\end{itemize}
\end{defn}
\begin{rmk}
The intended meaning of $\power$ is: $\power (k,A)=A^{k}$. 
Here, the power is with respect to the matrix product $*$ above.
Note that, formally speaking, $A$ is allowed to be non-square, although we are interested only in when $A$ is square.
Also, note that $\mathcal{L}_{\LAP}$ includes $\lambda$-terms in which $\mathtt{P}$ occurs, which is not the case for $\mathcal{L}_{\LA}$.
\end{rmk}

\begin{defn}\label{DefLAP}
The theory $\LAP$ is $\LA$ plus the following axioms:
\begin{itemize}
 \item $\row (A)=\column (A) \rightarrow \power (0,A)=I_{\row (A)}$.
 \item $\row (A)=\column (A) \rightarrow \power (n+1,A)=\power (n,A)*A$
\end{itemize}
(Note that \cite{The proof complexity of linear algebra} omitted the assumption $\row (A)=\column (A)$ formally.)
\end{defn}

Actually, most theorems of $\LA$ (resp. $\LAP$) presented in \cite{The proof complexity of linear algebra} can be proved in the following $\LA_{-}$ (resp. $\LAP_{-}$):

\begin{defn}
$\mathcal{L}_{\LA_{-}}$ (resp. $\mathcal{L}_{\LAP_{-}}$) is defined similarly as $\mathcal{L}_{\LA}$, except that the function symbol $(\cdot)^{-1}$ is removed from $\mathcal{L}_{0}$ (resp. $(\cdot)^{-1}$ is removed from and $\mathtt{P}$ is added to $\mathcal{L}_{0}$).
Note that infinitely many function symbols of $\lambda$-type are removed.

 Let $\LA_{-}$ (resp. $\LAP_{-}$) be the theory obtained by collecting the axioms of $\LA$ which are $\mathcal{L}_{\LA_{-}}$-formulae (resp. $\mathcal{L}_{\LAP}$-formulae). 
 In particular, $\LA_{-}$ does not have Definition \ref{DefLA}(\ref{inverseaxiom}), and therefore $\fieldsort$-universes can be general rings.
\end{defn}

Thus we develop some theorems in $\LA_{-}$ first, and show that basic treatment of polynomials can be done in the theory (namely, Proposition \ref{[]pol is an interpretation})
Towards it, in order to describe the arithmetic of polynomials neatly, we utilize the notion of \textit{interpretation}:

\begin{defn}\label{interpretation}
 Let $\mathcal{L}$ be a $k$-sorted first-order language, and $s_{1}, \ldots, s_{k}$ be its sorts.
 Similarly, let $\mathcal{L}'$ be a $k'$-sorted first-order language, and $s'_{1}, \ldots, s'_{k'}$ be its sorts.
 We assume $\mathcal{L}$ (resp. $\mathcal{L}'$) includes the equality symbol $=_{s}$ for each $s$ of its sorts.
 Now, suppose we have an $\mathcal{L}^{\prime}$-theory $T^{\prime}$.
 \textit{A syntactic interpretation of $\mathcal{L}$ by $T^{\prime}$} are maps 
 \begin{align*}
 \mathcal{I}_{1} &\colon \{s_{1}, \ldots, s_{k}\} 
 \rightarrow  \{\langle \bar{s}',\varphi \rangle \mid \mbox{$\bar{s}' \bar{\in} \{s'_{1}, \ldots, s'_{k}\}$, and $\varphi$ is an $\mathcal{L'}$-formula}\}\\
 \mathcal{I}_{2} &\colon \{\mbox{variables for $\mathcal{L}$-formulae}\} \rightarrow \{\mbox{tuples of variables for $\mathcal{L}'$-formulae}\}\\
 \mathcal{I}_{3} &\colon \mathcal{L} 
 \rightarrow  \{\mbox{$\mathcal{L'}$-formulae}\}
 \end{align*}
 satisfying the following (below, we omit subscripts of $\mathcal{I}$ for readability):
 \begin{enumerate}
  \item  Given a sort $s_{i}$ ($i \in [k]$), let $\mathcal{I}(s_{i})=\langle \sigma'_{i1}, \ldots, \sigma'_{il_{i}}, \alpha_{s_{i}} \rangle$.
   Then there are variables $x_{1}, \ldots, x_{l_{i}}$, where each $x_{j}$ is of sort $\sigma'_{ij}$, such that
   \begin{enumerate} 
     \item The free variables occurring in $\alpha_{s_{i}}$ are among $\{x_{1},\ldots,x_{l_{i}}\}$.
     \item $T^{\prime} \vdash \exists \bar{x}.\ \alpha_{s_{i}}(\bar{x})$.
     \end{enumerate}
     
  \item For each variable $x$ of sort $s_{i}$, $\mathcal{I}(x)$ is a tuple $(x_{1}, \ldots, x_{l_{i}})$ of variables of sorts $\sigma'_{i1}, \ldots, \sigma'_{il_{i}}$.
  Furthermore, $\mathcal{I}$ assigns disjoint tuples to different variables.
  Below, given a tuple $\bar{x}=(x_{1}, \ldots, x_{k})$ of varibales, we set $\mathcal{I}(\bar{x}) := (\mathcal{I}(x_{1}), \ldots, \mathcal{I}(x_{k}))$.
  \item $\beta_{R}(\mathcal{I}(x_{1},\ldots,x_{n}))$ for each predicate symbol 
  $R$ (including equalities) of arity $s_{i_{1}} \times \cdots \times s_{i_{n}}$, 
  where each 
  $x_{j}$ is an $s_{i_{j}}$-variable.
  \item\label{definablefunction} $\gamma_{f}(\mathcal{I}(x_{1},\ldots,x_{n}),\mathcal{I}(z))$ for each function symbol $f$ of type \\
  $s_{i_{1}} \times \cdots \times s_{i_{n}} \rightarrow s_{i_{n+1}}$, where each $x_{j}$ is an $s_{i_{j}}$-variable, $z$ is an $s_{i_{n+1}}$-variable, and
  \begin{align*}
  T^{\prime} \vdash &\forall \mathcal{I}(\bar{x}). \Bigg( \bigwedge_{j=1}^{n} \alpha_{s_{i_{j}}} (\mathcal{I}(x_{j})) \rightarrow \exists \mathcal{I}(z).\ \left(\alpha_{s_{i_{n+1}}}(\mathcal{I}(z)) \land \gamma_{f}(\mathcal{I}(\bar{x}),\mathcal{I}(z)) \right)\Bigg), \\
   T^{\prime} \vdash &\forall \mathcal{I}(\bar{x}), \mathcal{I}(z_{1}), \mathcal{I}(z_{2}). \Bigg( \bigwedge_{j=1}^{n}\alpha_{s_{i_{j}}}(\mathcal{I}(x_{j})) \land \bigwedge_{j=1}^{2}\alpha_{s_{i_{n+1}}}(\mathcal{I}(z_{j})) \land \\
   &\bigwedge_{j=1}^{2}\gamma_{f}(\mathcal{I}(\bar{x}),\mathcal{I}(z_{j}))\rightarrow \beta_{=_{s_{i_{n+1}}}}(\mathcal{I}(z_{1}), \mathcal{I}(z_{2})) \Bigg).
  \end{align*}
 \end{enumerate}
\end{defn}

  The interpretation above can be readily extended to $\mathcal{L}$-terms and $\mathcal{L}$-formulae:  
  \begin{defn}
  Consider the setting in Definition \ref{interpretation}.
  By structural induction on $\mathcal{L}$-terms, we define a formula $\Gamma_{\mathcal{I},t}(\mathcal{I}(\bar{x}),\mathcal{I}(y))$ for each $\mathcal{L}$-term $t$, a tuple $\bar{x}=(x_{1}, \ldots, x_{n})$ covering all the variables occurring in $t$, and a fresh variable $y$ of the output sort of $t$, as follows:
  \begin{enumerate}
   \item If $t$ is a variable $z$ of sort $s_{i}$, then let $y$ be a fresh variable of sort $s_{i}$, and let $\Gamma_{\mathcal{I},t}(\mathcal{I}(\bar{x}))$ be the conjunction saying $\mathcal{I}(z) = \mathcal{I}(y)$.
   \item If $t=f(t_{1},\ldots, t_{m})$, then let $\Gamma_{\mathcal{I},t}(\mathcal{I}(\bar{x}), \mathcal{I}(y))$ be the formula saying:
   \[\exists \mathcal{I}(y_{1}), \ldots, \mathcal{I}(y_{m}). \left(\bigwedge_{j=1}^{m} \Gamma_{\mathcal{I},t_{j}}(\mathcal{I}(\bar{x}),\mathcal{I}(y_{j})) \land \gamma_{f}(\mathcal{I}(y_{1}), \ldots, \mathcal{I}(y_{m}),\mathcal{I}(y)) \right) \]
   Here, each $y_{i}$ is a distinct fresh variable of the sort of the corresponding input of $f$.
  \end{enumerate}
  
  \end{defn}
 

\begin{defn}\label{newmodel}
Consider the setting in Definition \ref{interpretation}.
 For each $\mathcal{M} \models T^{\prime}$, we define an $\mathcal{L}$-structure $\mathcal{M}_{\mathcal{I}}$.
 For each $i \in [k]$ (recall that $k$ is the number of sorts of $\mathcal{L}$), let $M_{i}$ be the universe of the sort $s_{i}$ of $\mathcal{M}$.
 Now, $\mathcal{M}_{\mathcal{I}}$ is defined as follows:
 \begin{enumerate}
  \item The universe of $\mathcal{M}_{\mathcal{I}}$ is a tuple $\langle N_{1}, \ldots, N_{k} \rangle$ where each $N_{i}$ is for sort $s_{i}$ and given as follows:
  \[N_{i} := \{\langle a_{1} , \ldots, a_{l_{i}} \rangle \mid a_{j} \in M_{\sigma_{ij}} \ (j \in [l_{i}]) \ \mbox{and} \ \mathcal{M} \models \alpha_{s_{i}} (a_{1}, \ldots, a_{l_{i}})\}.\]
  \item Each relation symbol $R$ on $s_{i_{1}}\times \cdots \times s_{i_{n}}$ is interpreted by 
  \begin{align*}
  R_{\mathcal{I}} = \Large\{ \langle \bar{a}_{1}, \ldots, \bar{a}_{n} \rangle \mid &\forall j \in [n]. \bar{a}_{j} \in N_{i_{j}},\\
  &\mathcal{M} \models \bigwedge_{i \in [n] } \alpha_{s_{i}}(\bar{a}_{i}) \land \beta_{R}(\bar{a}_{1}, \ldots, \bar{a}_{n})\Large\}.
  \end{align*}
  \item Each function symbol $f(\bar{a}_{1}, \ldots, \bar{a}_{n})$ of type $s_{i_{1}} \times \cdots \times s_{i_{n}} \rightarrow s_{i_{n+1}}$ in $\mathcal{L}$ is interpreted by the graph:
  \begin{align*}
  G_{f,\mathcal{I}} := \Large\{ \langle \bar{a}_{1}, \ldots, \bar{a}_{n}, \bar{b} \rangle \mid &\forall j \in [n].\ \bar{a}_{j} \in N_{i_{j}},\\
    &\bar{b} \in N_{i_{n+1}}, \mbox{and} \ \mathcal{M} \models \gamma_{f} (\bar{a}_{1}, \ldots, \bar{a}_{n}, \bar{b})\Large\}.
    \end{align*}  
  By Definition \ref{interpretation}, it is guaranteed that for any $\bar{a}_{1}\in M_{s_{i_{1}}}$, \ldots, $\bar{a}_{n} \in M_{s_{i_{n}}}$, there exists a unique $\bar{b} \in M_{s_{i_{n+1}}}$ such that $\langle \bar{a}_{1}, \ldots, \bar{a}_{n},\bar{b}\rangle \in G_{f, \mathcal{I}}$.
  We denote such a unique $\bar{b}$ by $f_{\mathcal{I}}(\bar{a}_{1}, \ldots, \bar{a}_{n})$.
  
   Moreover, for each $\mathcal{L}$-term $t(x_{1}, \ldots, x_{k})$, $G_{t,\mathcal{I}}$ and $t_{\mathcal{I}}$ are analogously defined, and they coincide with the definable function on $\mathcal{M}_{\mathcal{I}}$ given by $t$.
 \end{enumerate}

\end{defn}

\begin{defn}
Consider the setting in Definition \ref{interpretation}.
Given a tuple $\bar{x}=(x_{1},\ldots,x_{n})$ of variables in $\mathcal{L}$, where each $x_{j}$ is of sort $s_{i_{j}}$, and an assignment $\bar{a}$ to $\bar{x}$ in $\mathcal{M}_{\mathcal{I}}$,
let $\mathcal{I}(\bar{a})$ be the induced assignment to $\mathcal{I}(\bar{x})$ in $\mathcal{M}$.
That is, if $\bar{a}$ assigns $\bar{a}_{j} \in N_{i_{j}}$ to $x_{j}$, then $\mathcal{I}(\bar{a})$ assigns $\bar{a}_{j}$ to the tuple $\mathcal{I}(x_{j})$.
\end{defn}

\begin{defn}\label{formulaeinterpretation}
We continue in the setting of Definition \ref{interpretation}.
For any $\mathcal{L}$-formula 
 \[\varphi(x_{1}, \ldots, x_{n}),\]
   define an $\mathcal{L}^{\prime}$-formula 
  \[\varphi_{\mathcal{I}}(\mathcal{I}(x_{1}), \ldots, \mathcal{I}(x_{n}))\]
 inductively as follows:
 \begin{enumerate}
  \item If $\varphi$ is $R(t_{1},\ldots,t_{m})$, where $R$ is an $m$-ary relation symbol of $\mathcal{L}$ and each $t_{i}$ is an $\mathcal{L}$-term, then $\varphi_{\mathcal{I}}$ is:
  \[ \exists \mathcal{I}(y_{1},\ldots,y_{m}). \Bigg( \bigwedge_{j=1}^{m} G_{t,\mathcal{I}}(\mathcal{I}(\bar{x}),\mathcal{I}(y_{j})) \land \beta_{R}(\mathcal{I}(y_{1},\ldots,y_{m})) \Bigg).\]
  
  \item If $\varphi$ is of the form $\psi \land \eta$, $\psi \lor \eta$, or $\lnot \psi$, then $\varphi_{\mathcal{I}}$ is:
  \[  \psi_{\mathcal{I}} \land \eta_{\mathcal{I}}, \psi_{\mathcal{I}} \lor \eta_{\mathcal{I}}, \lnot \psi_{\mathcal{I}} \] 
  respectively.
 
 \item If $\varphi$ is of the form $\forall x \psi(x)$ or $\exists x \psi(x)$, and $x$ is an $s_{i}$-variable, then $\varphi_{\mathcal{I}}$ is:
 \[ \forall \mathcal{I}(x). \left(\alpha_{s_{i}}(\mathcal{I}(x)) \rightarrow \psi_{\mathcal{I}}(\mathcal{I}(x))\right), \quad
  \exists \mathcal{I}(x). \left(\alpha_{s_{i}}(\mathcal{I}(x)) \land \psi_{\mathcal{I}}(\mathcal{I}(x)) \right).\]
  respectively.
 \end{enumerate}

\end{defn}

\begin{prop}
 Let $\mathcal{M} \models T^{\prime}$ and $\mathcal{I}$ be an interpretation of $\mathcal{L}$ in $T^{\prime}$. For any $\mathcal{L}$-formula $\varphi(\bar{x})$ and assignment $\bar{a}$ to $\bar{x}$ in $\mathcal{M}_{\mathcal{I}}$, 
 \[\mathcal{M}_{\mathcal{I}} \models \varphi (\bar{a}) \Leftrightarrow \mathcal{M} \models \varphi_{\mathcal{I}}(\mathcal{I}(\bar{a})),\]
\end{prop}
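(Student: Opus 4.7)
The plan is to prove the claim by structural induction on the $\mathcal{L}$-formula $\varphi$, after first establishing a preliminary lemma handling terms. The setup in Definition \ref{newmodel} already guarantees that for each function symbol $f$ the graph $G_{f,\mathcal{I}}$ is a well-defined total function on the appropriate cartesian product of the $N_{i}$'s (this is exactly what condition \ref{definablefunction} of Definition \ref{interpretation} supplies), so the interpretations of terms and atomic formulas are unambiguous.

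First I would prove, by induction on the structure of $\mathcal{L}$-terms, the following term lemma: for every $\mathcal{L}$-term $t(\bar{x})$, every $\mathcal{M} \models T'$, every assignment $\bar{a}$ to $\bar{x}$ in $\mathcal{M}_{\mathcal{I}}$, and every tuple $\bar{b}$ interpreting a fresh variable $y$ of the output sort,
\[ \mathcal{M}_{\mathcal{I}} \models t(\bar{a}) = \bar{b} \iff \mathcal{M} \models \Gamma_{\mathcal{I},t}(\mathcal{I}(\bar{a}), \mathcal{I}(\bar{b})). \]
The base case where $t$ is a variable is immediate from the definition of $\Gamma_{\mathcal{I},t}$ (an equality between tuples). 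For the inductive step $t = f(t_{1},\ldots,t_{m})$, apply the induction hypothesis to each $t_{j}$ to compute its value as some $\bar{b}_{j} \in N_{i_{j}}$, then invoke the defining equations for $G_{f,\mathcal{I}}$ and $f_{\mathcal{I}}$ from Definition \ref{newmodel}.

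Next I would perform the induction on $\varphi$. For atomic $\varphi = R(t_{1},\ldots,t_{m})$, the translation $\varphi_{\mathcal{I}}$ existentially quantifies witnesses $\mathcal{I}(y_{1}),\ldots,\mathcal{I}(y_{m})$ for the term graphs and then checks $\beta_{R}$; by the term lemma these witnesses are forced to equal $\mathcal{I}(t_{j}^{\mathcal{M}_{\mathcal{I}}}(\bar{a}))$, so the equivalence reduces to the definition of $R_{\mathcal{I}}$ in Definition \ref{newmodel}. The Boolean connectives $\land, \lor, \lnot$ are routine since the translation commutes with them. For the quantifier cases I would handle $\exists x.\psi(x)$ with $x$ of sort $s_{i}$: a witness $c \in N_{i}$ in $\mathcal{M}_{\mathcal{I}}$ corresponds under $\mathcal{I}$ to a tuple satisfying $\alpha_{s_{i}}$ in $\mathcal{M}$, which is exactly the content of the relativized quantifier $\exists \mathcal{I}(x).(\alpha_{s_{i}}(\mathcal{I}(x)) \land \psi_{\mathcal{I}}(\mathcal{I}(x)))$; the universal case is dual. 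The nontrivial nonemptiness clause $T' \vdash \exists \bar{x}.\alpha_{s_{i}}(\bar{x})$ from Definition \ref{interpretation} guarantees that $N_{i}$ is nonempty, which is what one needs to justify, e.g., the vacuous treatment of bound variables not occurring in the matrix.

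The step I expect to be the most delicate is the atomic case, because of the bureaucratic matching between the tuple-valued interpretation $\mathcal{I}(y_{j})$ of each term output, the existential witnesses introduced in $\varphi_{\mathcal{I}}$, and the uniqueness clause of Definition \ref{interpretation}(\ref{definablefunction}); the term lemma is designed precisely to absorb this bookkeeping so that the inductive steps for connectives and quantifiers remain completely formal.
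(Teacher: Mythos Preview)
Your proposal is correct and is exactly the standard structural induction the paper has in mind; indeed, the paper simply writes ``The proof is straightforward, and we omit it.'' Your term lemma together with the case analysis on atomic formulas, Boolean connectives, and relativized quantifiers is the expected argument.
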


The proof is straightforward, and we omit it.

\begin{defn}
Consider the setting in Definition \ref{formulaeinterpretation}. 
Let $T$ be an $\mathcal{L}$-theory.
$\mathcal{I}$ is \textit{an interpretation of $T$ by $T^{\prime}$} if and only if 
$T^{\prime}\vdash \varphi_{\mathcal{I}}$ for all $\varphi \in T$, that is,
$\mathcal{M}_{\mathcal{I}} \models T$ for all $\mathcal{M} \models T^{\prime}$.
\end{defn}

\begin{rmk}
Especially, if $\mathcal{L}=\mathcal{L}^{\prime}$, $T=T^{\prime}$, and $\mathcal{M} \models T$, we obtain a sequence of applications of the interpretation $\mathcal{M}_{\mathcal{I}}, (\mathcal{M}_{\mathcal{I}})_{\mathcal{I}}$, all giving the models of $T$. 
\end{rmk}

\section{$\Sigma^{B}_{0}$-formulae and -definablity}\label{SigmaB0-formula and -definability}
In this section, we introduce the notion of $\Sigma^{B}_{0}$ with respect to the  language $\mathcal{L}_{\LAP}$ analogously to \cite{Cook}, and we show that $\LAP$ has quantifier elimination for $\Sigma^{B}_{0}$-formula.
Armed with this helpful fact, we proceed to formalize polynomials, rationals, and their basic treatments in $\LAP$ in the subsequent sections.

\begin{defn}
For an $\mathcal{L}_{\LAP}$-formula $\varphi$,
 $\varphi$ is a \textit{$\Sigma^{B}_{0}$-formula} (abbreviated as $\varphi \in \Sigma^{B}_{0}$) if and only if all of its quantifiers are of $\indexsort$ and bounded by terms.  
\end{defn}


We observe that $\LAP$ has quantifier elimination for $\Sigma^{B}_{0}$-formulae:

\begin{lemma}[$\LAP$]\label{characteristicfunction}
 For each $\varphi(\bar{x}) \in \Sigma^{B}_{0}$, there exists an $\mathcal{L}_{\LAP}$-term $\delta_{\varphi}(\bar{x})$ outputting $\fieldsort$ elements such that 
 \begin{align*}
 \LAP \vdash &\varphi(\bar{x}) \rightarrow \delta_{\varphi}(\bar{x}) = 1\\
             &\lnot \varphi(\bar{x})\rightarrow \delta_{\varphi}(\bar{x}) = 0.
 \end{align*}
 In particular, $\varphi$ is equivalent to an open formula;
 \[\LAP \vdash \varphi(\bar{x}) \leftrightarrow \delta_{\varphi}(\bar{x})=1.\]
\end{lemma}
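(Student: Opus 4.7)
The proof is a meta-induction on the structure of $\varphi \in \Sigma^{B}_{0}$, building simultaneously the term $\delta_{\varphi}(\bar{x})$ and $\LAP$-proofs of the two displayed implications. The final equivalence $\varphi \leftrightarrow \delta_{\varphi}(\bar{x}) = 1$ is then an immediate consequence.

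For atomic comparisons $u \sim v$ with $\sim \in \{=_{\indexsort}, \leq_{\indexsort}\}$ (the index terms $u, v$ may contain free variables of any sort), I would introduce fresh $\indexsort$-variables $x, y$ and take
\[ \delta_{\varphi} := \cond_{\fieldsort}^{x \sim y,\, (x,y)}(u, v, 1, 0); \]
the defining axiom of $\cond_{\fieldsort}$ then gives the required equivalence directly. For a $\fieldsort$-equality $s = t$, I would use the field-inverse trick $\delta_{\varphi} := 1 - (s - t) \cdot (s - t)^{-1}$, which works regardless of the convention chosen for $0^{-1}$ since $0 \cdot a = 0$ for every $a$. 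For a matrix equality $A =_{\matrixsort} B$, expand via the matrix extensionality axiom (Definition~\ref{DefLA}(\ref{matrixextensionality})) into two $\indexsort$-equalities on $\row$, $\column$ together with a bounded universal over $\extract$-equalities, and reduce to the other cases. The Boolean connectives are handled by field arithmetic: $\delta_{\neg \varphi} := 1 - \delta_{\varphi}$, $\delta_{\varphi \wedge \psi} := \delta_{\varphi} \cdot \delta_{\psi}$, and $\delta_{\varphi \vee \psi} := \delta_{\varphi} + \delta_{\psi} - \delta_{\varphi} \cdot \delta_{\psi}$; since the induction hypothesis yields $\delta_{\psi_{1}}, \delta_{\psi_{2}} \in \{0, 1\}$, the ring axioms for $\fieldsort$ verify correctness.

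The main obstacle is the bounded quantifier $\forall i \leq N.\, \psi(i, \bar{x})$, where I must collapse the $0$--$1$ values $\delta_{\psi}(1, \bar{x}), \ldots, \delta_{\psi}(N, \bar{x})$ into their conjunction as a single field element. The naive approach using $\summation$ is insufficient, since the $\fieldsort$-universe in a model of $\LAP$ may have arbitrary characteristic, so a sum of $0$--$1$ values can vanish spuriously when the number of zero entries is divisible by the characteristic. Instead I would build the diagonal matrix
\[ D := \lamt_{ij}\langle N, N,\, \cond_{\fieldsort}^{x = y,\, (x,y)}(i, j, \delta_{\psi}(i, \bar{x}), 0)\rangle \]
and set $\delta_{\forall i \leq N.\, \psi} := \Det(D)$, where $\Det$ denotes the $\mathcal{L}_{\LAP}$-term implementing Berkowitz's algorithm as developed in \cite{The proof complexity of linear algebra}. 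Since $\LAP$ proves that the determinant of a diagonal matrix equals the product of its diagonal entries, we obtain $\Det(D) = \prod_{i=1}^{N} \delta_{\psi}(i, \bar{x}) \in \{0, 1\}$, which equals $1$ iff every $\delta_{\psi}(i, \bar{x}) = 1$. The $\exists$-case reduces to $\neg \forall \neg$. The determinant route is essential here, because the characteristic-independent aggregation of a sequence of $0$--$1$ field values into their product is not directly available through $\summation$ alone.
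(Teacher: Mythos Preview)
Your proof is correct and follows the same inductive skeleton as the paper: identical treatment of atomic $\indexsort$- and $\fieldsort$-formulae, the same reduction of $=_{\matrixsort}$ via extensionality, and the same Boolean arithmetic for connectives.

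The only substantive difference is in the bounded-quantifier step. The paper does not go through the determinant; it simply sets $\delta_{\forall y\leq u.\psi}(\bar{x}):=\prod_{i=0}^{u(\bar{x})}\delta_{\psi}(\bar{x},i)$, observing that an iterated product of $\fieldsort$-elements is directly available in $\mathcal{L}_{\LAP}$ via the powering function $\mathtt{P}$ (place the factors on the diagonal of a sequence of matrices and multiply, or use the standard device from \S 4.1 of \cite{The proof complexity of linear algebra}). Your detour through $\Det$ of a diagonal matrix also yields this product, but it adds an obligation you leave implicit: one must first verify in $\LAP$ that Berkowitz's determinant of a diagonal matrix equals the product of its diagonal entries. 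This is true and provable by induction on the size (for a diagonal matrix the off-diagonal blocks $R_k,S_k$ in the Berkowitz recursion vanish), but it is extra work that the paper's direct $\prod$-via-$\mathtt{P}$ route avoids entirely. In particular, your closing remark that ``the determinant route is essential'' overstates the case: what is essential is access to iterated multiplication, and $\mathtt{P}$ already supplies it without invoking $\Det$.
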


\begin{proof}
 By induction on the construction of $\varphi$.
 Here, for simplicity, we take $\land$, $\lnot$, $\forall$ as the basic logical symbols.
 \begin{enumerate}
  \item When $\varphi(\bar{x})\equiv R (\bar{x})$, where $R$ is an atomic formula on $\indexsort$ (i.e. of the form $s_{1}=s_{2}$ or $s_{1} \leq s_{2}$ where each $s_{i}$ is a term outputting $\indexsort$ elements) let 
  \[\delta_{\varphi}(\bar{x}) := \cond^{y_{1}=y_{2}, (y_{1},y_{2})}_{\fieldsort}(s_{1}(\bar{x}),s_{2}(\bar{x}),1,0).\]
  or
  \[\delta_{\varphi}(\bar{x}) := \cond^{y_{1}\leq y_{2}, (y_{1},y_{2})}_{\fieldsort}(s_{1}(\bar{x}),s_{2}(\bar{x}),1,0)\]
  respectively.
  \item When $\varphi(\bar{x}) \equiv R(\bar{x})$, where $R$ is an atomic formula of $\fieldsort$ (i.e. of the form $s_{1}=s_{2}$ where each $s_{i}$ is a term outputting $\fieldsort$ elements), let 
  \[\delta_{\varphi}(\bar{x}) := 1-(s_{1}-s_{2})^{-1}(s_{1}-s_{2}).\]
  (Note that if $s_{1}=s_{2}$, $\delta_{\varphi}(\bar{x})=1$ regardless what value $(s_{1}-s_{2})^{-1}$ takes.)
  \item When $\varphi(\bar{x}) \equiv R (\bar{x})$, where $R$ is an atomic formula of $\matrixsort$ (i.e. of the form $s_{1}=s_{2}$ where each $s_{i}$ is a term outputting $\matrixsort$ elements) let $\delta_{\varphi}(\bar{x})$ be:
  \begin{align*}
  \delta_{(\row (s_{1})=\row (s_{2}))}(\bar{x})\delta_{(\column (s_{1})=\column (s_{2}))}(\bar{x})\prod_{i=0}^{\mmax\{\row (s_{1}),\row (s_{2})\}} \prod_{j=0}^{\mmax\{\column (s_{1}),\column (s_{2})\}} \delta_{(\extract (s_{1},i,j) = \extract (s_{2},i,j))} (\bar{x},i,j).
  \end{align*}
   (Recall that iterated product $\prod$ can be realized using the powering function $P$ in $\LAP$. See section 4.1 of \cite{The proof complexity of linear algebra}.)
  
  \item When $\varphi(\bar{x}) = \psi_{1}(\bar{x}) \land \psi_{2}(\bar{x})$,  let $\delta_{\varphi}(\bar{x}) := \delta_{\psi_{1}}(\bar{x}) \delta_{\psi_{2}}(\bar{x})$.
  
  \item When $\varphi(\bar{x}) = \lnot \psi(\bar{x})$, let $\delta_{\varphi}(\bar{x}) := 1-\delta_{\psi}(\bar{x})$.
  
  \item When $\varphi(\bar{x}) = \forall y \leq u(\bar{x}).\ \psi(\bar{x},y)$, let 
  \[\delta_{\varphi}(\bar{x}) := \prod_{i =0}^{u(\bar{x})} \delta_{\psi}(\bar{x},i).\]

 \end{enumerate}
 The verifications of implications
  \begin{align*}
 \LAP \vdash &\varphi(\bar{x}) \rightarrow \delta_{\varphi}(\bar{x}) = 1\\
             &\lnot \varphi(\bar{x})\rightarrow \delta_{\varphi}(\bar{x}) = 0.
 \end{align*}
 are straightforward.
\end{proof}


\begin{cor}
 $\LAP$ admits $\Sigma^{B}_{0}$-induction, that is, for each $\Sigma^{B}_{0}$-formula $\varphi(x)$ (where $x$ is a variable of $\indexsort$),
 \[\LAP \vdash \left( \varphi(0) \land \forall x.(\varphi(x) \rightarrow \varphi(x+1)) \right) \rightarrow \forall x. \varphi(x).\]
\end{cor}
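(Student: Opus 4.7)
The plan is to combine Lemma \ref{characteristicfunction} with the open induction axiom already built into $\LAP$. Given a $\Sigma^{B}_{0}$-formula $\varphi(x)$ with $x$ of sort $\indexsort$, first invoke Lemma \ref{characteristicfunction} to obtain an $\mathcal{L}_{\LAP}$-term $\delta_{\varphi}(x)$ of output sort $\fieldsort$ satisfying $\LAP \vdash \varphi(x) \leftrightarrow \delta_{\varphi}(x) = 1$. Because $\delta_{\varphi}(x)$ is a term (no hidden quantifiers), the formula $\psi(x) \equiv (\delta_{\varphi}(x) = 1)$ is a single atomic, hence quantifier-free, $\mathcal{L}_{\LAP}$-formula whose induction variable is of sort $\indexsort$, so it falls within the scope of the open induction axiom of $\LAP$.

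The remaining argument is essentially a one-line deduction. Reasoning inside $\LAP$, suppose $\varphi(0) \land \forall x.\,(\varphi(x) \to \varphi(x+1))$. Transporting across the provable equivalence yields $\psi(0) \land \forall x.\,(\psi(x) \to \psi(x+1))$, whereupon open induction applied to $\psi$ produces $\forall x.\, \psi(x)$, and a final appeal to the equivalence concludes $\forall x.\, \varphi(x)$. I do not expect any substantive obstacle here; Lemma \ref{characteristicfunction} has already done all the real work, and the corollary is simply the slogan ``quantifier elimination plus open induction equals $\Sigma^{B}_{0}$-induction.'' The only point worth double-checking before writing up is that $\LAP$'s open induction scheme permits arbitrary open $\mathcal{L}_{\LAP}$-formulae rather than merely $\mathcal{L}_{\indexsort}$-formulae; this is indeed the intended reading of item~2 of Definition~\ref{DefLA}, where the qualifier ``for $\indexsort$'' specifies the sort of the induction variable rather than restricting the class of formulae.
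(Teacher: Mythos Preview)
Your proposal is correct and matches the paper's intended argument: the corollary is stated without proof precisely because it follows immediately from Lemma~\ref{characteristicfunction} together with the open induction axiom, exactly as you outline. Your clarifying remark about the scope of open induction in Definition~\ref{DefLA} is also on point.
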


Later, we formalize the matrix rank function in $\LAP$, but if we want an $\indexsort$-output, we cannot construct an $\mathcal{L}_{\LAP}$-term for it. 
Instead, we use $\Sigma^{B}_{0}$-definable functions in $\LAP$.

\begin{defn}
When an $\mathcal{L}_{\LAP}$-theory $T$ and a $\Sigma^{B}_{0}$-formula $\varphi(\bar{x},\bar{a},\bar{A},\bar{y},\bar{b},\bar{B})$ ($\bar{x},\bar{y} \in \MM$, $\bar{a}, \bar{b} \in \FF$, $\bar{A},\bar{B} \in \Mat_{\FF}$) satisfy
\[T \vdash \forall \bar{x}\bar{a}\bar{A} \exists! \bar{y}\bar{b}\bar{B}. \varphi(\bar{x}\bar{a}\bar{A}\bar{y}\bar{b}\bar{B}),\]
we say $\varphi$ \textit{$\Sigma^{B}_{0}$-defines a function over $T$}. 
A function $f$ on a generic structure $(\MM,\FF,\Mat_{\FF})$ is \textit{$\Sigma^{B}_{0}$-definable over $T$} when its graph is defined by a $\Sigma^{B}_{0}$-formula (without parameters) defining a function over $T$.
 
 If $\varphi$ is open, we say $\varphi$ \textit{open-defines a function over $T$}.
 And the corresponding function is said to be \textit{open-definable}.
\end{defn}

\begin{eg}
 $\deg$ is a $\Sigma^{B}_{0}$-definable function over $\LAP$.
\end{eg}

By Lemma \ref{characteristicfunction};

\begin{cor}
Let $T$ be an $\mathcal{L}_{\LAP}$-theory extending $\LAP$.
Then a function is $\Sigma^{B}_{0}$-definable over $T$ if and only if it is open-definable over $T$.
\end{cor}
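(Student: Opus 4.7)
The plan is to handle the two directions separately, with one being essentially trivial and the other following immediately from the quantifier elimination established in Lemma \ref{characteristicfunction}.

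For the easy direction, I would observe that every open $\mathcal{L}_{\LAP}$-formula is syntactically a $\Sigma^{B}_{0}$-formula (an open formula has no quantifiers at all, so the condition that every quantifier is a bounded $\indexsort$-quantifier is vacuously satisfied). Hence if the graph of a function $f$ is open-definable over $T$ by a formula $\psi$, the very same $\psi$ witnesses $\Sigma^{B}_{0}$-definability over $T$. No further work is needed here.

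For the nontrivial direction, suppose $\varphi(\bar{x}\bar{a}\bar{A}\bar{y}\bar{b}\bar{B})$ is a $\Sigma^{B}_{0}$-formula that $\Sigma^{B}_{0}$-defines a function $f$ over $T$, so that
\[T \vdash \forall \bar{x}\bar{a}\bar{A}\, \exists! \bar{y}\bar{b}\bar{B}.\ \varphi(\bar{x}\bar{a}\bar{A}\bar{y}\bar{b}\bar{B}).\]
By Lemma \ref{characteristicfunction} applied inside $\LAP$ (and hence inside $T$, since $T$ extends $\LAP$), there is an $\mathcal{L}_{\LAP}$-term $\delta_{\varphi}$ such that
\[\LAP \vdash \varphi(\bar{x}\bar{a}\bar{A}\bar{y}\bar{b}\bar{B}) \leftrightarrow \delta_{\varphi}(\bar{x}\bar{a}\bar{A}\bar{y}\bar{b}\bar{B}) = 1.\]
Let $\psi$ denote the open formula $\delta_{\varphi} = 1$. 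Because the equivalence is provable in $\LAP$, it is also provable in $T$, so we can substitute $\psi$ for $\varphi$ inside the uniqueness statement to obtain
\[T \vdash \forall \bar{x}\bar{a}\bar{A}\, \exists! \bar{y}\bar{b}\bar{B}.\ \psi(\bar{x}\bar{a}\bar{A}\bar{y}\bar{b}\bar{B}),\]
and $\psi$ defines the same graph as $\varphi$. Hence $f$ is open-definable over $T$.

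There is no real obstacle here: the whole content of the corollary is carried by the quantifier elimination lemma, which itself relies on the availability of the $\cond$-terms of both sorts together with the powering operator $\power$ to internalize iterated products needed for handling matrix equality and bounded universal quantifiers. Once Lemma \ref{characteristicfunction} is in hand, the corollary is a one-line observation that the equivalence $\varphi \leftrightarrow \delta_{\varphi}=1$ transports the uniqueness condition from $\varphi$ to an open formula. The only minor bookkeeping is to make sure the tuples of free variables in $\delta_{\varphi}$ include all the parameters of $\varphi$, which is ensured by the construction in the proof of Lemma \ref{characteristicfunction}.
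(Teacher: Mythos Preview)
Your proposal is correct and matches the paper's approach: the corollary is stated immediately after Lemma \ref{characteristicfunction} with only the phrase ``By Lemma \ref{characteristicfunction}'' as justification, and your argument spells out exactly the intended one-line derivation---open formulae are trivially $\Sigma^{B}_{0}$, and conversely the equivalence $\varphi \leftrightarrow \delta_{\varphi}=1$ transports the defining condition to an open formula.
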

\begin{cor}
  The function $\deg$ is open-definable over $\LAP$.
\end{cor}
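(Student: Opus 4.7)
The plan is to combine the two items that immediately precede the statement. By the preceding Example, $\deg$ is $\Sigma^{B}_{0}$-definable over $\LAP$: concretely, one codes a polynomial $p$ as a column vector of coefficients (an element $p$ of $\matrixsort$ with $\column(p)=1$, the $i$-th entry carrying the coefficient of $X^{i-1}$), fixes a convention for the zero polynomial (say $\deg(p)=0$ in that case), and writes the graph $\deg(p)=d$ as the $\Sigma^{B}_{0}$-formula expressing ``$d \leq \row(p)$, $\extract(p,d+1,1)\neq 0$, and for all $i$ with $d+1 < i \leq \row(p)$ we have $\extract(p,i,1)=0$,'' disjoined with the zero-polynomial clause. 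Existence and uniqueness are provable in $\LAP$ by $\Sigma^{B}_{0}$-induction on the maximal index at which $p$ is nonzero, and $\Sigma^{B}_{0}$-induction is available by the corollary to Lemma \ref{characteristicfunction}.

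With $\Sigma^{B}_{0}$-definability of $\deg$ in hand, the preceding corollary (the one asserting the equivalence of $\Sigma^{B}_{0}$- and open-definability over any extension of $\LAP$) immediately yields open-definability. The mechanism is that Lemma \ref{characteristicfunction} manufactures, for each $\Sigma^{B}_{0}$-formula $\varphi$, an $\mathcal{L}_{\LAP}$-term $\delta_{\varphi}$ of $\fieldsort$ with $\LAP \vdash \varphi \leftrightarrow \delta_{\varphi}=1$. Applying this to the graph formula for $\deg$ rewrites the graph as the open formula $\delta_{\varphi}(\cdot)=1$.

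There is essentially no obstacle: the entire content has already been packaged into Lemma \ref{characteristicfunction} and the preceding corollary, and the present statement is merely their instantiation at $\deg$. The only point to which I would pay attention is pinning down an explicit coding convention for polynomials, in particular the zero polynomial, so that the graph formula is provably single-valued on its intended domain and the verification of uniqueness goes through cleanly.
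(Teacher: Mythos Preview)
Your proposal is correct and matches the paper's approach: the corollary is stated without proof as the immediate combination of the preceding Example (that $\deg$ is $\Sigma^{B}_{0}$-definable over $\LAP$) and the preceding Corollary (that $\Sigma^{B}_{0}$-definability and open-definability coincide over $\LAP$). Your only deviation is the convention for the zero polynomial---the paper uses the symbol $-\infty$ rather than $0$---but this is cosmetic and does not affect the argument.
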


Although it is not the case in general that a given $\Sigma^{B}_{0}$-definable function with an $\indexsort$ output can be represented by an $\mathcal{L}_{\LAP}$-term, it is still possible to construct a term with a $\matrixsort$ output which can be identified the original $\indexsort$ output:

\begin{defn}
An \textit{index vector} is a vector $v \in \Mat_{\FF}(n,1)$ such that
\[ \exists i \leq n (v_{i1} = 1 \land \forall j \leq n (j \neq i \rightarrow v_{j1} = 0)).\]
We identify an index vector $v$ with the unique index $i \leq n$ such that $v_{i1} = 1$.
Note that such an index $\iota(v)$ is $\Sigma^{B}_{0}$-defined as
\[ \iota(v) := \min\{i \leq \row(v) \mid v_{i1} =1\}.\] 
\end{defn}

\section{A formalization of polynomials in $\LAP$ and the interpretation $\llbracket \cdot \rrbracket_{pol}$}\label{A formalization of polynomials in LAP}

In this section, we discuss formalization of polynomials in $\LAP$.
For convenience and readability, we fix a generic model $\mathcal{M}=\langle \MM, \FF, \Mat_{\FF} \rangle$ of $\LAP$, where
\begin{itemize}
 \item $\FF$ denotes a universe of $\fieldsort$.
 \item $\MM$ denotes a universe of $\indexsort$.
 \item $\Mat_{\FF}$ denotes a universe of $\matrixsort$.
\end{itemize}

Furthermore, for each $n,m \in \MM$, $\Mat_{\FF}(n,m)$ be the definable subset of a universe of $\matrixsort$, collecting the elements $A$ such that $\row (A)=n$ and $\column (A)=m$.

Moreover, for $a,b \in \MM$, we set
\begin{itemize}
 \item $[a,b] := \{x \in \MM \mid a \leq x \leq b\}$.
 \item $\left[ a,b \right[ := \{x \in \MM \mid a \leq x < b\}$.
 \item $\left] a,b \right]$ and $\left] a,b \right[$ are defined similarly.
 \item $[b] := [1,b]$.
\end{itemize}

Note that the notations of intervals do not clash the notation $[k]$ ($k \in \NN$) introduced in the beginning of this section, and the above intervals are all definable by open formulae in $\LAP$.

The main goal of this section is to establish the interpretation $\llbracket \cdot \rrbracket_{pol}$ of $\LAP_{-}$ by $\LAP$, which produces the model $\langle \MM, \FF[X], \Mat_{\FF[X]} \rangle \models \LAP_{-}$ from $\langle \MM, \FF, \Mat_{\FF} \rangle \models \LAP$.

First, we informally explain the coding of polynomials and matrices with polynomial coefficients.
Like in \cite{The proof complexity of linear algebra}, we code a polynomial 
\[a_{n}X^{n}+ \cdots + a_{0}\]
 (where $n \in \MM$, and each $a_{i}$ is in $\FF$) by a vector 
\begin{align*}
\begin{bmatrix}
a_{0}, \cdots, a_{n} 
\end{bmatrix}^{t}
 \in \Mat_{\FF}(n+1,1).
\end{align*} 
(Note that the order of coefficients are reversed compared to the presentation of \cite{The proof complexity of linear algebra}, but there is no essential difference.
Furthermore, we sometimes use transposes as above meta-theoretically just for saving space, but the operation transpose itself and its basic treatment are already formalized in $\LA$ in \cite{The proof complexity of linear algebra}, so there is no danger for us to ignore the distinction between meta-theoretical transpose and formalized transpose.)

More generally, a matrix $\widetilde{A}=(f_{ij})_{i\in [m], j \in [n]}$ of polynomials $f_{ij}(X)$ of degree $\leq d \in \MM$ can be coded by a pair $(A,d)$ of the index $d \in \MM$ and the block matrix  
\begin{align*}
A=
\begin{bmatrix}
A_{0} \\
\vdots \\
A_{d} 
\end{bmatrix}
 \in \Mat_{\FF}(m(1+d),n)
\end{align*}
where 
\[A_{k} = (c_{ijk})_{i \in [m], j \in [n]} \in \Mat_{\FF}(m,n) \ \& \ f_{ij} = \sum_{k=0}^{d}c_{ijk}X^{k},\]
that is, informally, we can write
\[\widetilde{A} = A_{0} + A_{1}X + \cdots + A_{d}X^{d} \quad (A_{k} \in \Mat_{\FF}(m,n)).\]

Now, we formally implement the coding described above.
Let $\FF[X]$ be the subset of $\Mat_{\FF}$ of the codes of polynomials:
\begin{defn}[$\LAP$]
A predicate ``$A \in \FF[X]$'' ($A \in \Mat_{\FF}$) is defined as follows; 
 \[A \in \FF[X] :\leftrightarrow \column (A)=1.\]
\end{defn}

The coefficients and the degree of polynomials are defined as follows:

\begin{defn}[$\LAP$]
For $f,g \in \FF[X]$, define the following $\mathcal{L}_{\LAP}$-term and relations:
\begin{enumerate}
 \item $\coeff(f,k) := \extract (f,k+1,1)$.\\
 (Informally, $\coeff(f,k)$ returns the coefficient of degree-$k$ in the polynomial $f$)
 
 \item $\deg(f) = j : \leftrightarrow \lnot \coeff(f,j) =0 \land \forall i \in \left]j,\row (f)\right].\ \coeff(f,i) = 0$.\\
 (By open induction, $\LAP$ can prove its totality for \\
 $f \in \FF[X] \setminus \{0_{k1} \mid k \in \MM\}$.)
 \item $\deg(f) = -\infty : \leftrightarrow \forall i \leq \row (f). \coeff(f,i) = 0$. \\
 ($-\infty$ is just a symbol for convenience.)
\end{enumerate}
\end{defn}

Note that $\deg(f)=j$ and $\deg(f)=-\infty$ are not open formulae but $\Sigma^{B}_{0}$.

Furthermore, we formalize matrices with polynomial coefficients as follows:
\begin{defn}[$\LAP$]\label{matrixofpolynomials}
A predicate ``$(A,d) \in \Mat_{\FF[X]}(m,n)$'' and ``$(A,d) \in \Mat_{\FF[X]}(m,n)$'' ($A \in \Mat_{\FF}$, $m,n,d \in \MM$) are defined as follows; 
 \begin{align*}
 (A,d) \in \Mat_{\FF[X]} &:\leftrightarrow \rem(\row (A),1+d)=0.\\
 (A,d) \in \Mat_{\FF[X]}(m,n) &:\leftrightarrow \row (A)=m(1+d) \land \column (A)=n.
 \end{align*}
\end{defn}

\begin{rmk}
\[\LAP \vdash A \in \FF[X]  \leftrightarrow  (A,\row (A)-1) \in \Mat_{\FF[X]}(1,1).\]
\end{rmk}

We adopt the following notation throughout this paper;
\[ \sum_{j=a}^{b} t(j) := \sum( \lamt_{kl}\langle 1,b-a+1, t(a+l-1) \rangle),\]
where $a,b \in \MM$ and $t(j)$ is a term outputting $\fieldsort$ elements.

Now, we define the interpretation $\llbracket \cdot \rrbracket_{pol}$ of $\LAP_{-}$ by $\LAP$.
We will define $\mathcal{I}_{1}$, $\mathcal{I}_{2}$, and $\mathcal{I}_{3}$ in Definition \ref{interpretation} in the subsequent subsections.

\subsection{The definitions of $\mathcal{I}_{1}$ and $\mathcal{I}_{2}$}\label{I1andI2onL0}
Set $\mathcal{I}_{1}$ as follows:
\begin{enumerate}
 \item $\indexsort \mapsto \langle \indexsort, x=x \rangle$.
 \item $\fieldsort \mapsto \langle \matrixsort, f \in \FF[X] \rangle$.
 \item $\matrixsort \mapsto \langle \matrixsort, \indexsort, (A,d) \in \Mat_{\FF[X]} \rangle$.
\end{enumerate}

As for $\mathcal{I}_{2}$, we take the following map: let $x$ be a variable.
\begin{enumerate}
\item For $x$ of $\indexsort$, assign itself.
\item For $x$ of $\fieldsort$, assign a fresh variable of $\matrixsort$. We denote it by $h_{x}$.
\item For $x$ of $\matrixsort$, assign a tuple $(x,d)$, where $d$ is a fresh variable of $\indexsort$.
 We denote it by $d_{x}$.
\end{enumerate}

\subsection{$\mathcal{I}_{3}$ on $\mathcal{L}_{0}$}\label{I3onL0forpoly}
To define $\mathcal{I}_{3}$ on $\mathcal{L}_{\LAP_{-}}$, we follow induction on the construction of $\mathcal{L}_{\LAP_{-}}$.
For a function symbol $f(x_{1},\ldots,x_{l})$, instead of $\gamma_{f}$ in Definition \ref{interpretation}, we define an $\mathcal{L}_{\LAP}$-definable function $f_{pol}(\mathcal{I}(x_{1}), \ldots, \mathcal{I}(x_{l}))$ outputting an element of $N_{i}$ in Definition \ref{newmodel} corresponding to the output sort of $f$.
Then $\gamma_{f}$ is naturally defined by the conjunction expressing $\mathcal{I}(z)=f_{pol}(\mathcal{I}(x_{1}), \ldots, \mathcal{I}(x_{l}))$.
In this subsection, we deal with $\mathcal{L}_{0}$.
As for $\mathcal{L}_{\indexsort}$ (cf. Definition \ref{DefLA}), $\mathcal{I}_{3}$ is an ``identity mapping:''

\begin{enumerate}
 \item $0_{\indexsort} \mapsto x=0_{\indexsort}$, $1_{\indexsort} \mapsto x=1_{\indexsort}$.
 \item $\odot\mapsto \odot(x_{1},x_{2})$, where $\odot=+_{\indexsort}, *_{\indexsort}, -_{\indexsort}, \divi, \rem$.
 \item $R \mapsto R (x_{1},x_{2})$, where $R$ is one of $=_{\indexsort}, \leq_{\indexsort}$.
 \end{enumerate}
 
Next, we consider the symbols related to $\fieldsort$.
It amounts to implement basic arithmetical operations on polynomials:
 \begin{enumerate}
 \item $=_{\fieldsort} \mapsto f =_{pol} g :\leftrightarrow \forall j\leq \mmax\{\row (f),\row (g)\}.\  \coeff(f,j)=\coeff(g,j)$.
 \item $(0_{\fieldsort})_{pol}:= \zeromat_{11}$, $(1_{\fieldsort})_{pol}:= \idmat_{1}$.

 \item $f(+_{\fieldsort})_{pol}g :=_{pol}\lamt_{kl} \langle \mmax\{\row (f),\row (g)\}, 1,\ \extract (f,k,1) + \extract (g,k,1) \rangle$.
 
 Note that $\mmax$ is already defined in \cite{The proof complexity of linear algebra}.
 We denote the RHS of $=_{pol}$ by just $f+_{pol}g$.
 \item $f(*_{\fieldsort})_{pol} g  := \conv(f,\row (g)-1) g$, where 
 \[ \conv(f,l) := \lamt_{ij} \langle \row (f) + l, 1+l, \extract (f,(i+1)-j,1) \rangle\]
 We denote the RHS of $=_{pol}$ by just $f*_{pol}g$.
 ($\conv(f,l)$ denotes the matrix representing $f$ as a convolution operating on degree-$l$ polynomials, that is, the Toeplitz matrix whose first column is 
 \begin{align*}
 \begin{bmatrix}
 f, 0, \cdots ,0
 \end{bmatrix}^{t}
  \in \Mat_{\FF}(\row (f)+l,1). \quad )
 \end{align*}
 
 \item $(-_{\fieldsort})_{pol}f :=\lamt_{kl} \langle \row (f), 1,\ -\extract (f,k,1) \rangle$.
 \end{enumerate}

Lastly, we deal with the remaining symbols in $\mathcal{L}_{0}$ (of $\LAP_{-}$):

\begin{enumerate}
 \item  $\row_{pol}(A,d) := \divi(\row (A),1+d)$.
 \item $\column_{pol}(A,d):= \column (A)$. 
 \item  
 \[\extract_{pol}(A,d,i,j) := \lamt_{kl}\langle 1+d, 1,  \extract (\mcoeff(A,d,k-1),i,j)\rangle.\]
 Here, 
 \[\mcoeff(A,d,k) := \lamt_{hl}\langle \row_{pol}(A,d), \column_{pol}(A,d), \extract (A,\row_{pol}(A,d)k+h,l) \rangle.\]
 
 (Intuitively, if $(A,d)$ codes $(f_{ij})_{(i,j) \in [\row]\times[\column]}=A_{0}+\cdots+A_{d}X^{d} \in \Mat_{\FF[X]}$, then $\mcoeff(A,d,k)=A_{k}$ and $\extract_{pol}(A,d,i,j)=f_{ij} \in \FF[X]$.)
 
 \item $\summation_{pol}(A,d)$ is the term 
 \[\lamt_{kl}\left\langle 1+d, 1, \sum(\mcoeff(A,d,k-1)) \right\rangle.\]
 
 \item $=_{\matrixsort} \mapsto$
   \begin{align*}
   &\row_{pol}(A,d) = \row_{pol}(B,d^{\prime}) \land \column_{pol}(A,d) = \column_{pol}(B,d^{\prime}) \land \\
   &\forall i \in [\row (A)].\forall j \in [\column (A)].\ \extract (A,i,j) =\extract (B,i,j)
   \end{align*}
We denote the RHS by $(A,d)=_{pol}(B,d^{\prime})$.
It clashes with $=_{pol}$ for $=_{\fieldsort}$, but there is no danger of confusion, so we stick to this readable notation.

\item In this item, we abbriviate $\row_{pol}(A,d)$, $\column_{pol}(A,d)$ as $\row_{pol}$, $\column_{pol}$ respectively.
Furthermore, we suppress the subscript $\indexsort$ of the symbols in $\mathcal{L}_{\indexsort}$ for readability.
Let 
\begin{align*}
&\pconv(A,d,l)\\
 :=& \lamt_{ij} \Bigg\langle \row_{pol}*(d+1+l), \column_{pol}*(l+1),
 e\Big(\mcoeff \big(A,d,\\
 &\cond^{y_{2}\leq y_{1}, (y_{1},y_{2})}_{\indexsort}(y_{1},y_{2},y_{1}-y_{2},d+1) [y_{1},y_{2} \mapsto \divi(i-1,\row_{pol}),\divi(j-1,\column_{pol})] \big), \\ 
 &\rem(i-1,\row_{pol})+1, \rem(j-1,\column_{pol})+1\Big)\Bigg\rangle.
\end{align*}

(Intuitively, if $(A,d)$ codes $A_{0}+\cdots+A_{d}X^{d} \in \Mat_{\FF[X]}$, then $\pconv(A,d,l)$ is the convolution matrix for $(\row_{pol}\times \column_{pol})$-matrices in $\Mat_{\FF[X]}$ of degree $l$; precisely speaking:
\begin{itemize}
  \item $\pconv(A,d,l)$ is a Toeplitz block matrix consisting of $(d+1+l)\times (1+l)$-many $(\row_{pol}\times \column_{pol})$-sized blocks.
  \item The first column block matrix of $\pconv(A,d,l)$ is 
  \[ \begin{bmatrix}
A_{0} \\
\vdots \\
A_{d} \\
\zeromat_{\row_{pol} \column_{pol}}\\
\vdots \\
\zeromat_{\row_{pol} \column_{pol}}
\end{bmatrix} \quad ).\]
\end{itemize}

Furthermore, set
  \begin{align*}
  &\bold{Q}_{pol}(k,A,d) :=\\
   &\pconv(A,d,(k-1)d) \cdots \pconv(A,d,d) \pconv(A,d,0)  \idmat_{\row (A)}.
  \end{align*}
  (Note that we can formalize iterated multiplication of boundedly many matrices using powering function; see \cite{The proof complexity of linear algebra}.)
  
  Now, we set $\mathcal{I}_{3}(\mathtt{P})$ by
  \[\mathtt{P}_{pol}(k,A,d) := (\bold{Q}_{pol} (k,A,d) ,kd).\]
  We often write $(A,d)^{k}$ to denote $\mathtt{P}_{pol}(k,A,d)$ for readability.

\end{enumerate}

\begin{rmk}
It is straightforward to verify:
\begin{align*}
 \LAP \vdash \mathtt{P}_{pol}(k,A,d) \in \Mat_{\FF[X]}(m,m).
 \end{align*}
\end{rmk}

\subsection{Degree bounding terms $\bold{b}[t]$}\label{b[t] for L0}
To proceed to the definition of $\mathcal{I}_{3}$ on $\mathcal{L}_{k}$ ($k \geq 1$),
we simultaneously define a \textit{degree bounding term} $\bold{b}[t]$ for each $\mathcal{L}_{k}$-term ($k \geq 0$) $t$ of output sort $\fieldsort$ or $\matrixsort$, whose variables are among $\mathcal{I}(x)$ of the variables $x$ occurring in $t$.
The precise definition is as follows:
  \begin{enumerate}
  \item For each field-variable $a$, we set $\bold{b}[a] := \row (h_{a})$.
  Recall that $h_{a} = \mathcal{I}_{2}(a)$.
    \item For each matrix-variable $A$, set $\bold{b}[A] := d_{A}$.
    Recall that $(A,d_{A})=\mathcal{I}_{2}(A)$.

    \item we set $\bold{b}[0_{field}] := \bold{b}[1_{field}] := 0$.
    
    \item If $t$ is $t_{1} +_{field} t_{2}$, then $\bold{b}[t] := \mmax\{\bold{b}[t_{1}],\bold{b}[t_{2}]\}$.
    
    \item If $t$ is $t_{1} *_{field} t_{2}$, then
    $\bold{b}[t] := \bold{b}[t_{1}]+\bold{b}[t_{2}]$.
    
    \item If $t$ is $\sum (A)$ or $\extract (A,i,j)$, then
     $\bold{b}[t] :\equiv \bold{b}[A]$.
    
    \item If $t$ is $\mathtt{P}(m,A)$, then $\bold{b}[t] := m*\bold{b}[A]$.

     \item If $t$ is $\cond_{field}^{\alpha, \vec{x}}(t_{1},t_{2})$, set $\bold{b}[t] := \mmax\{\bold{b}[t_{1}], \bold{b}[t_{2}]\}$.
 \item If $t$ is $\lamt_{ij} \langle m,n,u \rangle$, set $\bold{b}[t]:\equiv\bold{b}[u]$.
    
   \end{enumerate} 
   
   \subsection{$\mathcal{I}_{3}$ on $\mathcal{L}_{k}$ and $\bold{b}[t]$ for $\mathcal{L}_{k}$-terms}
   Let $k \geq 1$, and assume $\mathcal{I}_{3}$ on $\mathcal{L}_{k-1}$ is already defined.
   Furthermore, assume that a tuple $f_{pol}(\mathcal{I}(x_{1}), \ldots, \mathcal{I}(x_{l}))$ of $\mathcal{L}_{\LAP}$-terms is defined for each $\mathcal{L}_{k-1}$-function symbol $f$, and $\mathcal{I}(f)$ is defined by $\mathcal{I}(z)=f_{pol}(\mathcal{I}(x_{1}), \ldots, \mathcal{I}(x_{l}))$.
   Note that $\tau \mapsto \tau_{pol}$ naturally extends to the case when $\tau$ is an $\mathcal{L}_{k-1}$-term.
      
   Define $\mathcal{I}_{3}$ on $\mathcal{L}_{k} \setminus \mathcal{L}_{k-1}$ as follows:
   \begin{enumerate}
 \item $\cond_{\indexsort}^{\alpha,\vec{x}}(\vec{x},x',x'') \mapsto \cond_{\indexsort}^{\alpha,\vec{x}}(\vec{x},x',x'')$.
 \item $\cond_{\fieldsort}^{\alpha,\vec{x}}(\vec{x},x',x'') \mapsto  \cond_{\fieldsort}^{\alpha, \vec{x}}(1,0) h_{x'} +\cond_{\fieldsort}^{\alpha,\vec{x}}(0,1) h_{x''}$.
  \item $\lamt_{ij} \langle m,n, u \rangle \mapsto$
   \begin{align*}
    &\Bigg(\lamt_{kl} \Big\langle (1+\bold{b}[u]) m_{pol}, 
     n_{pol} ,\\ 
     &\coeff \big(u_{pol}(k,\rem(l,m_{pol})),\divi(l,m_{pol})+1\big) \Big\rangle , \bold{b}[u] \Bigg),
    \end{align*} 

\end{enumerate}

 \subsection{The interpretation $\llbracket \cdot \rrbracket_{pol}$}
 We have completed the definition of an interpretation $\mathcal{I}$ of $\mathcal{L}_{\LAP_{-}}$ by $\mathcal{L}_{\LAP}$.
 We denote $\mathcal{I}(\cdot)$ and $(\cdot)_{\mathcal{I}}$ by $\llbracket \cdot \rrbracket_{pol}$.
 We observe the following:
\begin{prop}\label{[]pol is an interpretation}
 $\llbracket \cdot \rrbracket_{pol}$ is an interpretation of $\LAP_{-}$ by $\LAP$, that is, for each axiom $\varphi \in \LAP_{-}$,
 \[ \LAP \vdash \llbracket \varphi \rrbracket_{pol}.\]
\end{prop}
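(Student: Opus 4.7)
The plan is to verify that the translation $\llbracket \varphi \rrbracket_{pol}$ of each axiom $\varphi$ of $\LAP_{-}$ is provable in $\LAP$, grouping axioms by the sort they principally concern. Before treating the axioms, I would first discharge the conditions on the nonemptiness of $\alpha_{\fieldsort}$ and $\alpha_{\matrixsort}$ (witnessed by $\zeromat_{11}$ and by $(\zeromat_{11},0)$) and verify, for each $\mathcal{L}_{\LAP_{-}}$-function symbol $f$, that its interpretation $f_{pol}$ outputs an element satisfying the appropriate relativization predicate ($f \in \FF[X]$ or $(A,d) \in \Mat_{\FF[X]}$), so clause \ref{definablefunction} of Definition \ref{interpretation} holds. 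This is a straightforward check from the definitions in \S\ref{I3onL0forpoly}, using $\rem$ and $\divi$ bookkeeping on the block dimensions.

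The axioms concerning only $\indexsort$ (ordered semiring axioms, division axioms, open $\indexsort$-induction, and $\cond_{\indexsort}$) translate essentially to themselves because $\mathcal{I}$ acts as the identity on $\indexsort$ with trivial relativization $\alpha_{\indexsort} \equiv (x = x)$, so they follow immediately from the corresponding axioms of $\LAP$. The commutative-ring axioms for $\fieldsort$ become the assertion that $(\FF[X], +_{pol}, *_{pol}, 0_{pol}, 1_{pol})$ forms a commutative ring modulo $=_{pol}$; commutativity and associativity of $+_{pol}$ follow pointwise from the $\fieldsort$-ring axioms of the ambient model, while distributivity and commutativity of $*_{pol}$ reduce to manipulation of the convolution sum $\coeff(f *_{pol} g,k) = \sum_{i+j=k}\coeff(f,i)\coeff(g,j)$, provable by $\Sigma^B_0$-induction (available by Lemma \ref{characteristicfunction}) on the maximum of $\row(f)$ and $\row(g)$. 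The $\cond_{\fieldsort}$ axiom translates to the assertion that $\cond^{\alpha}(1,0)h_{x'} + \cond^{\alpha}(0,1)h_{x''}$ equals $h_{x'}$ or $h_{x''}$ depending on $\alpha$, which is immediate by case analysis.

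The matrix axioms (extensionality, the defining equations for $\row$, $\column$, $\extract$, $\sum$, $\lamt$) require checking that the block-matrix coding behaves correctly. Extensionality of $=_{pol}$ on pairs $(A,d)$ is built into its definition. The equations for $\row_{pol}$, $\column_{pol}$, $\extract_{pol}$ follow from the block indexing formulas in $\mcoeff$, and the $\sum_{pol}$ axioms reduce to a short calculation showing that block-wise summation reproduces the polynomial coefficient-wise sum. For the $\lamt_{ij}\langle m,n,t\rangle$ axiom, one must verify that the coded matrix of polynomials extracted at $(i,j)$ coincides with the polynomial $t_{pol}(i,j)$; this is proved by structural induction on $t$, carrying along the invariant that $\bold{b}[t]$ (defined in \S\ref{b[t] for L0}) actually bounds the degree of $t_{pol}$ from above, which itself is a parallel structural induction.

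The hardest part will be the power axioms, namely showing that $\llbracket \mathtt{P}(0,A) = I_{\row(A)} \rrbracket_{pol}$ and $\llbracket \mathtt{P}(n+1,A) = \mathtt{P}(n,A) * A \rrbracket_{pol}$ are provable in $\LAP$ when $(A,d) \in \Mat_{\FF[X]}$. The base case reduces to checking $\bold{Q}_{pol}(0,A,d) = \idmat_{\row(A)}$, which is essentially a definition. The inductive case demands a Toeplitz-style lemma: namely, that multiplying by the convolution-block matrix $\pconv(A,d,l)$ at the level of $\FF$-matrices faithfully realizes the polynomial matrix multiplication $(\cdot) *_{pol} (A,d)$ on any coded polynomial matrix of degree $\le l$, with the degree bound incrementing by $d$. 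Granting this lemma, the recursive clause of $\mathtt{P}_{pol}$ reduces to unfolding one factor of $\bold{Q}_{pol}$ and matching block dimensions, then $\Sigma^B_0$-induction on $n$ closes the argument. The key technical step is thus the Toeplitz identity itself, which is proved inside $\LAP$ by expanding both sides as double sums of $\FF$-products and invoking the already-established associativity and distributivity results for $+_{pol}$ and $*_{pol}$.
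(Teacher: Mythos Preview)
Your overall plan tracks the paper's appendix closely: the ring-axiom verification via the convolution formula, the degree-bound invariant $\bold{b}[t]$ carried by structural induction (this is exactly Lemma~\ref{bounding}), and the Toeplitz lemma for $\pconv$ to handle the $\mathtt{P}$-axioms all match. One minor difference: for associativity of $*_{pol}$ the paper does not run an induction but instead rewrites both sides as sums over a single matrix and invokes the $\LA$-theorem $\summation(A)=\summation(A^{t})$; your $\Sigma^{B}_{0}$-induction route also works.

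There is, however, a real gap in your treatment of open induction. You file ``open $\indexsort$-induction'' under the axioms that ``translate essentially to themselves,'' but this is not correct. Open induction in $\LAP_{-}$ is a schema ranging over \emph{all} open $\mathcal{L}_{\LAP_{-}}$-formulae $\alpha(x)$, and such $\alpha$ may contain $\fieldsort$- or $\matrixsort$-equality atoms. Under $\llbracket\cdot\rrbracket_{pol}$ an atom $a =_{\fieldsort} b$ becomes $h_{a} =_{pol} h_{b}$, i.e.\ a bounded universal over $j \leq \mmax\{\row(h_{a}),\row(h_{b})\}$, so $\llbracket\alpha\rrbracket_{pol}$ is in general only $\Sigma^{B}_{0}$, not open. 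Hence you cannot simply cite open induction in $\LAP$; you must first observe that the translation of every open $\mathcal{L}_{\LAP_{-}}$-formula is (equivalent to) a $\Sigma^{B}_{0}$-formula, and then invoke the $\Sigma^{B}_{0}$-induction available in $\LAP$ via Lemma~\ref{characteristicfunction}. The paper makes exactly this move at the end of its appendix sketch. You already use $\Sigma^{B}_{0}$-induction elsewhere in your argument, so the repair is straightforward, but as written your proposal misclassifies this axiom schema and the step that actually closes it is missing.
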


Since the formal verification is very long but tedious, so we omit the proof.
For a proof sketch, see \ref{proof of []pol}.

Recall that the definitions of the characteristic polynomial $p_{A}(X)$ (which will be expressed as $\ch(A)$ from now on) and the determinant $\bold{det} (A)$ of $A \in \Mat_{\FF}(m,m)$ in \cite{The proof complexity of linear algebra} are actually carried out in $\LAP_{-}$.
 By the interpretation $\llbracket \cdot \rrbracket_{pol}$, we have:
\begin{itemize}
 \item The characteristic polynomial $(\llbracket \ch \rrbracket_{pol}(A,d),m) \in \Mat_{\FF[X]}(m+1,1)$ for each $(A,d) \in \Mat_{\FF[X]}(m,m)$.
 \item The determinant $\llbracket \bold{det} \rrbracket_{pol} (A,d) \in \FF[X]$ for each $(A,d) \in \Mat_{\FF[X]}(m,m)$.
\end{itemize}

We end this section by showing the following formalized identity theorem for polynomials for future use:

\begin{defn}[$\LAP$]
Let $f,g \in \FF[X]$.
We define $f(g) := \sum_{i=0}^{d} c_{i}g^{i}$, where $f=_{pol}[c_{0}, \ldots, c_{d}]^{t} \in \FF$.
Note that $g^{i}$ uses $\mathtt{P}_{pol}$, and $f(g)$ is well-defined.
\end{defn}

\begin{lemma}[$\LAP$]\label{simplesubstitutionishom}
For $g \in \FF$, $\FF[X] \rightarrow \FF[X];\ f \mapsto f(g)$ is a ring homomorphism.
Furthermore, if $M \in \Mat_{\FF[X]}(n,1)$, then 
\[\left(\sum_{i=1}^{n} f_{i}\right)(g) = \sum_{i=1}^{n} f_{i}(g) \quad \mbox{and} \quad \left(\prod_{i=1}^{n} f_{i}\right)(g) = \prod_{i=1}^{n} f_{i}(g). \]
Here, note that $\prod_{i=1}^{n} f_{i}(g)$ is computed by inputting $f_{i}(g)$'s for the interpretation of the function $[a_{1},\ldots, a_{n}] \mapsto \prod_{i=1}^{n}a_{i}$.
\end{lemma}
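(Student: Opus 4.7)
The plan is to decompose the lemma into four atomic statements: preservation of $0_{pol}$ and $1_{pol}$, additivity, multiplicativity, and the extension to $n$-fold finite sums and products, and then to treat them in turn. Preservation of the constants is immediate from the definition of $f(g)$: when $f = [0]^t$ or $f = [1]^t$, the indexing range in $\sum_{i=0}^{d} c_i g^i$ collapses to $i=0$ and the result is the zero or one polynomial by definition of $g^0$ (via $\mathtt{P}_{pol}$).

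For additivity, write $f_1 =_{pol} [c_0, \ldots, c_{d_1}]^t$ and $f_2 =_{pol} [e_0, \ldots, e_{d_2}]^t$, and pad both to the common length $d := \max\{d_1, d_2\}$. By the definition of $+_{pol}$ from Section \ref{I3onL0forpoly}, the $i$-th coefficient of $f_1 +_{pol} f_2$ is $c_i + e_i$, so
\[ (f_1 +_{pol} f_2)(g) = \sum_{i=0}^{d}(c_i + e_i) g^i. \]
Applying linearity of the formalized summation operator $\sum$ (a direct consequence of the recursive axiomatization of $\summation$ in Definition \ref{DefLA} together with open induction) splits this into $\sum c_i g^i +_{pol} \sum e_i g^i = f_1(g) +_{pol} f_2(g)$.

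The main work is multiplicativity. By the definition of $*_{pol}$ via $\conv$, the $k$-th coefficient of $f_1 *_{pol} f_2$ is $h_k := \sum_{i+j = k,\ i \le d_1,\ j \le d_2} c_i e_j$, so
\[ (f_1 *_{pol} f_2)(g) = \sum_{k=0}^{d_1+d_2} \left( \sum_{i+j=k} c_i e_j \right) g^k. \]
Two ingredients are needed to transform this into $f_1(g) *_{pol} f_2(g)$. First, the exponent law $g^i *_{pol} g^j =_{pol} g^{i+j}$, which is the $\llbracket \cdot \rrbracket_{pol}$-interpretation of Lemma 4.1 of \cite{The proof complexity of linear algebra} (established in $\LAP_{-}$ via open induction on $j$ and the recursive definition of $\mathtt{P}$), and Proposition \ref{[]pol is an interpretation} that legitimates this transfer. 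Second, a Fubini-style rearrangement of a bounded double sum, reindexing $(k,i)$ with $k \in [0,d_1+d_2]$ and $i+j=k$ into $(i,j) \in [0,d_1] \times [0,d_2]$; this is a $\Sigma^B_0$-provable identity in $\LAP$ following from open induction on $d_1+d_2$ and linearity of $\sum$. Combining these gives
\[ \sum_{i,j} c_i e_j\, g^i *_{pol} g^j = \left( \sum_i c_i g^i \right) *_{pol} \left( \sum_j e_j g^j \right) = f_1(g) *_{pol} f_2(g), \]
where the factorization of the double sum uses bilinearity of $*_{pol}$, itself a consequence of distributivity provable in $\LAP_{-}$ via $\llbracket \cdot \rrbracket_{pol}$.

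For the final clause about $M \in \Mat_{\FF[X]}(n,1)$, the $n$-fold identities $(\sum_{i=1}^n f_i)(g) = \sum_{i=1}^n f_i(g)$ and $(\prod_{i=1}^n f_i)(g) = \prod_{i=1}^n f_i(g)$ follow by $\Sigma^B_0$-induction on $n$ (available by the corollary in Section \ref{SigmaB0-formula and -definability}) from the two-variable cases above, together with the fact that iterated sums and products of polynomially-many polynomials are defined via $\summation$ and $\mathtt{P}_{pol}$, both of which enjoy the recursive unrolling needed for the induction step. The main obstacle is the double-sum rearrangement in the multiplicativity step, since it must be carried out within the bounded fragment on explicit convolution data; but this is routine open-induction book-keeping on $d_1+d_2$ rather than anything requiring new arithmetic strength.
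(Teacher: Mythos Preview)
The paper does not give a proof of this lemma; it is stated without argument and treated as routine. Your proof is correct and fills this gap adequately.

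For comparison, the paper does prove the closely related matrix-substitution lemma in Section~\ref{Matrix substitution to polynomials} (that $f \mapsto f(A)$ is a ring homomorphism for $A \in \Mat_{\FF}(n,n)$), and there it uses a different strategy for multiplicativity: instead of your direct double-sum Fubini rearrangement, the paper argues by induction on the truncation $f_k := [a_0,\ldots,a_k]^t$, reducing the step case to the monomial identity $(X^{k+1} *_{pol} g)(A) = A^{k+1} * g(A)$. Your approach is more direct and conceptually cleaner, while the paper's truncation induction avoids having to formalize the Fubini reindexing explicitly inside $\LAP$ (though the appendix does carry out an analogous reindexing via $\sum(M) = \sum(M^t)$ when proving associativity of $*_{pol}$, so the machinery you invoke is already available). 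Either route works in $\LAP$.
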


\begin{prop}[$\LAP$]\label{identitytheorem}
Let $f(X) \in \FF[X]$, $f \not=_{pol} 0_{pol}$, and $d=\deg f \in \MM$.
Let $v \in \Mat_{\FF}(d+1,1)$.
We denote $\extract(v,i)$ by $v_{i}$ for $i \in [d+1]$.
Assume $v$ satisfies $v_{i} \neq v_{j}$ for $i \neq j \in [d+1]$.
If $f(v_{i})=_{pol} 0_{pol}$ for all $i \in [d+1]$, then $f =_{pol} 0_{pol}$.
\end{prop}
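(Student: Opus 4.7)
The plan is to proceed by induction on $d = \deg f$. The base case $d = 0$ is immediate: $f$ would be a nonzero constant $c_0 \in \FF$, but then $f(v_1) =_{pol} c_0$ is nonzero, contradicting the hypothesis. For the inductive step I need a factorization lemma that formalizes ``dividing out the root $v_1$.''

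The factorization lemma to be proved is: if $f \in \FF[X]$ has $\deg f = d \geq 1$ and $a \in \FF$ satisfies $f(a) =_{pol} 0_{pol}$, then there exists $q \in \FF[X]$ with $\deg q = d-1$ such that $f =_{pol} (X - a)\, q$. I would construct $q$ explicitly via a $\lamt$-term, setting $\coeff(q, k) := \sum_{j=k+1}^{d} \coeff(f, j)\, a^{j-k-1}$ for $k = 0, \ldots, d-1$; this is definable in $\LAP$ because the sum is bounded by $d$ and powers are supplied by $\mathtt{P}$. A coefficient-by-coefficient computation then verifies $f =_{pol} (X-a)\, q + f(a)$: the $X^k$-coefficient on the right is $\coeff(q, k-1) - a\, \coeff(q, k)$, which telescopes to $\coeff(f, k)$ for $1 \leq k \leq d-1$; the top coefficient matches as $\coeff(q, d-1) = \coeff(f, d)$; and the constant term is checked via $a\, \coeff(q, 0) = \sum_{j=1}^{d} \coeff(f, j)\, a^j = f(a) - \coeff(f, 0)$. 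Substituting $f(a) =_{pol} 0$ yields $f =_{pol} (X-a)\, q$, and by construction $\coeff(q, d-1) = \coeff(f, d) \neq 0$, so $\deg q = d-1$.

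Given the factorization, the induction step runs as follows. Applied with $a = v_1$, it produces $q$ of degree $d - 1$ with $f =_{pol} (X - v_1)\, q$; by construction $q \neq_{pol} 0_{pol}$. For each $i \geq 2$, Lemma~\ref{simplesubstitutionishom} gives $0 =_{pol} f(v_i) =_{pol} (v_i - v_1)\, q(v_i)$, and since $\FF$ is a field and $v_i - v_1 \neq 0$, we conclude $q(v_i) =_{pol} 0_{pol}$. Thus $q$ is a nonzero polynomial of degree $d-1$ vanishing at the $d$ distinct points $v_2, \ldots, v_{d+1}$, contradicting the induction hypothesis. The main obstacle I anticipate is the coefficient-level verification of the factorization identity inside $\LAP$ using the concrete definitions of $+_{pol}$ and $*_{pol}$ (via $\conv$) from Section~\ref{A formalization of polynomials in LAP}; granted that, the induction itself is legitimate because the induction formula is $\Sigma^B_0$ in $d$ with $f$ and $v$ as free matrix parameters, so the $\Sigma^B_0$-induction derived from Lemma~\ref{characteristicfunction} applies.
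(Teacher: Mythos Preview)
Your factorization lemma and the overall strategy are mathematically sound, but there is a genuine gap in the justification of the induction inside $\LAP$. You claim the induction formula is $\Sigma^B_0$ in $d$ with $f$ and $v$ as free matrix parameters. If $f$ and $v$ are free, however, the induction hypothesis at stage $d-1$ is a statement about the \emph{same} $f$ and $v$: it says ``if $\deg f = d-1$ and \ldots'', which is vacuous once $\deg f = d$. What your step actually uses is the statement for the \emph{new} polynomial $q$ and the truncated vector $(v_2,\ldots,v_{d+1})$, and for that the induction formula must be universally closed over the matrix-sort variables $f$ and $v$. That closure takes the formula out of $\Sigma^B_0$, so the $\Sigma^B_0$-induction you cite does not apply.

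The paper's proof circumvents exactly this obstacle by fixing $f$ and $v$ once and constructing the whole sequence of successive quotients $h_0, h_1, \ldots, h_d$ uniformly as $\mathcal{L}_{\LAP}$-terms: each division-by-$(X-v_i)$ is a linear map on coefficient vectors, so the $j$-th quotient is given by an iterated matrix product $U_jS_jV_j\cdots U_1S_1V_1 w$, which is available in $\LAP$ via $\mathtt{P}$. The induction is then on the claim $f =_{pol} h_j *_{pol} \prod_{i\le j}(X-v_i)$, where both $h_j$ and the partial product are explicit terms in $j,f,v$; this claim is genuinely $\Sigma^B_0$ with $f,v$ as parameters. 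Your explicit formula for $q$ is precisely the right ingredient for one step of this construction; the missing piece is to package the iteration of that step as a single term, so that the induction variable ranges over $[0,d]$ with all matrix-sort data held fixed.
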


\begin{proof}
We work in $\LAP$.
By Proposition \ref{[]pol is an interpretation}, we have 
\[\langle \MM, \FF[X], \Mat_{\FF[X]} \rangle \models \LAP_{-}.\]
Thus we can treat polynomials $g_{j}(X):=\prod_{i=1}^{j}(X-v_{i})$ $(j \in [d+1])$ in $\FF[X]$. (We also define $g_{0}:=1_{pol}$ in $\FF[X]$.)

For $i=1, \ldots, d$, let 
\begin{align*}
V_{i}&:=\begin{bmatrix}
(X+v_{i})^{d+1-i}, \ldots, (X+v_{i})^{0}
\end{bmatrix} \in \Mat_{\FF}(d+2-i, d+2-i), \\
U_{i}&:=\begin{bmatrix}
(X-v_{i})^{d-i}, \ldots, (X-v_{i})^{0}
\end{bmatrix} \in \Mat_{\FF}(d+1-i, d+1-i).
\end{align*}
Note that each polynomial $(X+v_{i})^{j}$ and $(X-v_{i})^{j}$ are regarded as the vector of the coefficients here.

Furthermore, let
\[S_{i} := \begin{bmatrix}
\zeromat_{d+1-i, 1} & I_{d+1-i}
\end{bmatrix} \in \Mat_{\FF}(d+1-i, d+2-i).\]

Put $f=_{pol}[a_{0}, \ldots, a_{d}]^{t}=:w$, where $a_{i} \in \FF$.
Then 
\[U_{d}S_{d}V_{d}\cdots U_{1}S_{1}V_{1}w \in \Mat_{\FF}(1,1),\]
which can be regarded as a scalar $c$ in $\FF$.
Then $f(X)=c g_{d}(X)$ holds.

Indeed, for polynomials $h_{j}$ represented by the vectors 
\[w_{j}:=U_{j}S_{j}V_{j} \cdots U_{1}S_{1}V_{1}w \in \Mat_{\FF}(d+1-j,1) \quad (0 \leq j \leq d),\]
 we can show $f(X) =_{pol} h_{j}*_{pol} g_{j}$ by induction on $j=0, \ldots, d$ as follows.

The base case $j=0$ is trivial.

Assume that the claim holds for $j < d$.
Then, by Induction Hypothesis, we have $f=_{pol} h_{j} *_{pol} g_{j}$.
It suffices to show $h_{j}=_{pol} h_{j+1} *_{pol} (X-v_{j+1})$.
First, by Induction Hypothesis, we have
\[0_{pol}=_{pol} f(v_{j+1}) =_{pol} h_{j}(v_{j+1}) *_{pol} g_{j}(v_{j+1}).\]
By Lemma \ref{simplesubstitutionishom} and that $\LAP$ can show that 
\[\prod_{i=1}^{l}a_{i}= 0 \rightarrow \exists i \in [l].\ a_{i}=0\]
 for $\fieldsort$-elements $a_{i}$,
 we have $g_{j}(v_{j+1}) \neq_{pol} 0_{pol}$, and thus 
\begin{align}\label{anotherroot}
h_{j}(v_{j+1})=_{pol}0_{pol}
\end{align}
follows.

On the other hand, by definition, we have $w_{j+1}=U_{j+2}S_{j+1}V_{j+1} w_{j}$.
Let $w_{j}=[a_{0}, \ldots, a_{d-j}]^{t}$.
Then $V_{j+1}w_{j}$ represents the polynomial 
\begin{align}\label{defofhj}
\widetilde{h}_{j}:=\sum_{i=0}^{d-j}a_{i}(X+v_{j+1})^{i}.
\end{align}
By the equality (\ref{anotherroot}), we obtain $\widetilde{h}_{j}(0)=_{pol}0$, which means $V_{j+1}w_{j}$ is of the form:
\[V_{j+1}w_{j}=[0, b_{1}, \ldots, b_{d-j}]^{t}.\]
Thus $S_{j+1}V_{j+1} w_{j}= [b_{1}, \ldots, b_{d-j}]^{t}$ represents the polynomial 
\[\widetilde{h}_{j+1}:=\sum_{i=0}^{d-j-1} b_{i+1}X^{i},\]
 that is, 
\[\widetilde{h}_{j} =_{pol} \widetilde{h}_{j+1} *_{pol} X.\]
Since $U_{j+1}S_{j+1}V_{j+1}w_{j}$ represents the polynomial $\widetilde{h}_{j+1}(X-v_{j+1})$, and it is at the same time $h_{j+1}$ by definition, we have
\[\widetilde{h}_{j}(X-v_{j+1})=_{pol} h_{j+1} *_{pol} (X-v_{j+1}).\]

Following the definition (\ref{defofhj}) and Lemma \ref{simplesubstitutionishom}, we obtain $\widetilde{h}_{j}(X-v_{j+1})=_{pol} h_{j}$.
Thus the claim follows.

\end{proof}

\section{Rational functions}\label{Rational functions}

So far, we have shown that $\LAP$ can treat $\FF[X]$ and $\Mat_{\FF[X]}$ properly and prove that they satisfy $\LAP_{-}$. 
 That is, we obtain the new model $(\MM, \FF[X], \Mat_{\FF[X]})$ of $\LAP_{-}$.

Now, since we use some properties of fields in order to formalize $\rank$ function, we develop a formalization of rational functions in $\LAP$, too.
Once we have developed it, we show that rational functions and the matrices of rational coefficients satisfy $\LAP$ itself.
Then, applying the interpretation $\llbracket \cdot \rrbracket_{pol}$ above again, we can obtain a model which can be safely denoted by $(\MM, \FF(X)[Y], \Mat_{\FF(X)[Y]})$.
Since a natural embedding $\FF[X] \rightarrow \FF(X)$ can be established in $\LAP$, we can treat bivariate polynomials in $\LAP$.
Repeating the argument, we are able to treat multivariate polynomials $f \in \FF[X_{1}, \ldots, X_{k}]$ for any $k \in \NN$ in $\LAP$.

From now on, we abbreviate $*_{pol}$ if it is clear from the context.

We code a rational function $f/g$ (where $f,g$ are polynomials) by an ordered pair $(f,g)$.

\begin{defn}[$\LAP$]
 We define a $\Sigma^{B}_{0}$-predicate $(f,g) \in \FF(X)$ as follows;
 \[ (f,g) \in \FF(X) : \leftrightarrow f\in \FF[X] \land g\in \FF[X] \land g \neq_{pol} 0. \]

\end{defn} 

\begin{defn}[$\LAP$]
We define the following $\mathcal{L}_{\LAP}$-terms:
 \begin{enumerate}
 \item $\num(f,g) := f$.
 \item $\den(f,g) := g$.
 \end{enumerate}
\end{defn}


Next, we consider formalization of matrices of rational functions.
We code a matrix $A=(f_{ij}/g)_{i \in [m], j \in [n]}$ of rationals $f_{ij}(X)$ of degree $\leq d \in \MM$ by a triple $(g,A,d)$ where $g$ is the common denominator, and $(A,d)$ is a code of the matrix $(f_{ij})_{i \in [m], j \in [n]}$ of polynomials described in \S \ref{A formalization of polynomials in LAP}.
Informally, we write
\[A = \frac{1}{g}\left(A_{0} + A_{1}X + \cdots + A_{d}X^{d} \right) \quad (A_{k} \in \Mat_{\FF}).\]
Note that $d \in \MM$.

\begin{defn}[$\LA_{-}$]
For $g,A \in \Mat_{\FF}$ and $m,n,d \in \MM$,
an open predicate $(g,A,d) \in \Mat_{\FF(X)}(m,n)$  is defined as follows; 
 \[(g,A,d) \in \Mat_{\FF(X)}(m,n) : \leftrightarrow g \in \FF[X]\setminus\{0\} \land (A,d) \in \Mat_{\FF[X]}(m,n).\]
 Let $\coeff_{rat}(g,A,d,k) := (\coeff(A,d,k) ,g) \in \FF(X)$. 
\end{defn}

Now, we define the interpretation $\llbracket \cdot \rrbracket_{rat}$ of $\LAP_{-}$ by $\LAP$.
We will define $\mathcal{I}_{1}$, $\mathcal{I}_{2}$, and $\mathcal{I}_{3}$ in Definition \ref{interpretation} in the subsequent subsections.

\subsection{The definitions of $\mathcal{I}_{1}$ and $\mathcal{I}_{2}$}
Set $\mathcal{I}_{1}$ as follows:
\begin{enumerate}
 \item $\indexsort \mapsto \langle \indexsort, x=x \rangle$.
 \item $\fieldsort \mapsto \langle \matrixsort, \matrixsort, (f,g) \in \FF(X) \rangle$.
 \item $\matrixsort \mapsto \langle \matrixsort, \matrixsort, \indexsort, (g,A,d) \in \Mat_{\FF(X)} \rangle$.
\end{enumerate}

As for $\mathcal{I}_{2}$, we take the following map: let $x$ be a variable.
\begin{enumerate}
\item For $x$ of $\indexsort$, assign itself.
\item For $x$ of $\fieldsort$, assign a pair of fresh variables of $\matrixsort$. We denote it by $(h_{x},g_{x})$.
\item For $x$ of $\matrixsort$, assign a tuple $(g,x,d)$, where $g$ is a 
fresh variable of $\matrixsort$, and $d$ is a fresh variable of $\indexsort$.
 We denote them by $g_{x}$ and $d_{x}$ respectively.
\end{enumerate}

\subsection{$\mathcal{I}_{3}$ on $\mathcal{L}_{0}$}
To define $\mathcal{I}_{3}$ on $\mathcal{L}_{\LAP}$, we follow induction on the construction of $\mathcal{L}_{\LAP}$.
As in \S \ref{A formalization of polynomials in LAP} for the function symbols $f(x_{1},\ldots,x_{l})$, instead of $\gamma_{f}$ in Definition \ref{interpretation}, we define an $\mathcal{L}_{\LAP}$-term $f_{rat}(\mathcal{I}(x_{1}), \ldots, \mathcal{I}(x_{l}))$.
Then $\gamma_{f}$ is naturally defined by $\mathcal{I}(z)=f_{rat}(\mathcal{I}(x_{1}), \ldots, \mathcal{I}(x_{l}))$.
In this subsection, we deal with $\mathcal{L}_{0}$.
As for $\mathcal{L}_{\indexsort}$, $\mathcal{I}_{3}$ is exactly the same ``identity mapping'' as in \S\S \ref{I3onL0forpoly}.
 
Next, we consider the symbols related to $\fieldsort$.
It amounts to implementing basic arithmetical operations on rational functions (below, we omit $*_{pol}$ if it is clear from the context):
 \begin{enumerate}
 \item  $=_{\fieldsort} \mapsto (f_{1},g_{1}) =_{rat}(f_{2},g_{2}) : \leftrightarrow f_{1}g_{2} =_{pol} f_{2}g_{1}$.
 \item $(0_{\fieldsort})_{rat}:= (0_{pol}, 1_{pol})$, $(1_{\fieldsort})_{rat}:= (1_{pol}, 1_{pol})$.

 \item $(f_{1},g_{1})(+_{\fieldsort})_{rat} (f_{2},g_{2}) := (f_{1}g_{2}+_{pol}f_{2}g_{1},g_{1}g_{2})$.

 We denote the RHS of $=$ by just $(f_{1},g_{1})+_{rat} (f_{2},g_{2}) $.
 \item $(f_{1},g_{1})(*_{\fieldsort})_{rat} (f_{2},g_{2})  := (f_{1}f_{2} , g_{1}  g_{2})$.
 
 We denote the RHS of $=$ by just $(f_{1},g_{1})*_{rat} (f_{2},g_{2})$.

 \item $(-_{\fieldsort})_{rat}(f,g) :=((-_{\fieldsort})_{pol} f,g)$.

 \item 
 \begin{align*}
 (f_{1},g_{1})^{-1}_{rat} := 
 \begin{cases}
 (g_{1},f_{1}) \quad &(f_{1} \neq_{pol} 0_{pol})\\
 (0,1) &(f_{1} =_{pol} 0_{pol}),
 \end{cases}
  \end{align*}
 that is,
 \[(f_{1},g_{1})^{-1}_{rat} := (\delta_{\lnot A=_{pol}0}(A \mapsto f_{1}) g_{1}, \delta_{\lnot A=_{pol}0}(A \mapsto f_{1})f_{1}+_{pol} \delta_{A=_{pol}0}(A \mapsto f_{1}) 1_{pol} ),\]
 where $\delta_{\cdot}$'s are the terms given in Lemma \ref{characteristicfunction}.
 \end{enumerate}

Lastly, we deal with the remaining symbols in $\mathcal{L}_{0}$ (of $\LAP$):

\begin{enumerate}
 \item  $\row_{rat}(g, A,d) := \row_{pol}(A,d)$.
 \item $\column_{rat}(g,A,d):= \column_{pol}(A,d)$. 
 \item  $\extract_{rat}(g,A,d,i,j) := (\extract_{pol}(A,d,i,j),g)$.
 
  ($\extract_{rat}(g,A,d,i,j)$ returns the polynomial $f_{ij}/g$, where $A=(f_{ij}/g)_{ij}$).
 
 \item $\summation_{rat}(g,A,d) := (\summation_{pol}(A,d),g)$.
 
 \item $=_{\matrixsort} \mapsto$
   \begin{align*}
   &\row_{rat}(g,A,d) = \row_{rat}(h,B,d^{\prime}) \land \column_{rat}(g,A,d) = \column_{rat}(h,B,d^{\prime}) \land \\
   &\forall i \in [\row_{rat}(g,A,d)].\forall j \in [\column_{rat}(g,A,d)].\ \extract_{rat}(g,A,d,i,j) =\extract_{rat}(h,B,d',i,j).
   \end{align*}
We denote the RHS by $(g,A,d)=_{rat}(h,B,d^{\prime})$.
It clashes with $=_{rat}$ for $=_{\fieldsort}$, but there is no danger of confusion, so we stick to this readable notation.

\item $\mathtt{P}_{rat}(k,g,A,d) := (g^{k}, \mathtt{P}_{pol}(k,A,d))$ \\
  Here, $g^{k}:=\llbracket a^l \rrbracket_{pol} [h_a, l \mapsto g,k ]$. (For the definition of $h_{a}$, see the definition of $\mathcal{I}_{2}$ for $\llbracket \cdot \rrbracket_{pol}$ given in \S\S \ref{I1andI2onL0}.)
  Note that the powering function 
  \[\fieldsort \times \indexsort \rightarrow \fieldsort; (a,l) \mapsto a^l\]
  is given by the $\mathcal{L}_{\LAP_{-}}$-term $\extract (\mathtt{P}(l,\lamt_{ij}\langle 1,1, a \rangle),1,1)$.
  Recall $(\MM, \FF[X], \Mat_{\FF[X]}) \models \LAP_{-}$.

\end{enumerate}

  \subsection{$\mathcal{I}_{3}$ on $\mathcal{L}_{k}$}
   Let $k \geq 1$, and assume $\mathcal{I}_{3}$ on $\mathcal{L}_{k-1}$ and $\bold{b}[t]$ for $\mathcal{L}_{k-1}$-terms are already defined.
   Furthermore, assume that a tuple $f_{rat}(\mathcal{I}(x_{1}), \ldots, \mathcal{I}(x_{l}))$ of $\mathcal{L}_{\LAP}$-terms is defined for each $\mathcal{L}_{k-1}$-function symbol $f$, and $\mathcal{I}(f)$ is defined by $\mathcal{I}(z)=f_{rat}(\mathcal{I}(x_{1}), \ldots, \mathcal{I}(x_{l}))$.
   Note that $\tau \mapsto \tau_{rat}$ naturally extends to the case when $\tau$ is an $\mathcal{L}_{k-1}$-term.
      
   Define $\mathcal{I}_{3}$ on $\mathcal{L}_{k} \setminus \mathcal{L}_{k-1}$ as follows:
   \begin{enumerate}
 \item $\cond_{\indexsort}^{\alpha,\vec{x}}(\vec{x},x',x'') \mapsto \cond_{\indexsort}^{\alpha,\vec{x}}(\vec{x},x',x'')$.
 \item $\cond_{\fieldsort}^{\alpha,\vec{x}}(\vec{x},x',x'') \mapsto $\\
 $(\cond_{\fieldsort}^{\alpha, \vec{x}}(\vec{x},1,0) h_{x'} +\cond_{\fieldsort}^{\alpha,\vec{x}}(\vec{x}, 0,1) h_{x''}, \cond_{\fieldsort}^{\alpha, \vec{x}}(\vec{x},1,0) g_{x'} +\cond_{\fieldsort}^{\alpha,\vec{x}}(\vec{x}, 0,1) g_{x''})$.
  \item $\lamt_{ij} \langle m,n, u \rangle (x_{1}, \ldots, x_{l})\mapsto$
   \begin{align*}
    \Bigg(&D_{u}(\llbracket m \rrbracket_{rat},\llbracket n \rrbracket_{rat}),\\ 
    &\lamt_{kl}\Big\langle \llbracket m \rrbracket_{rat}(b[u]*\llbracket m \rrbracket_{rat}*\llbracket n \rrbracket_{rat}+1),n,\\
    &\quad \quad \coeff \Big(\num(u_{rat})[i,j \mapsto \rem(k,\llbracket m \rrbracket_{rat}),l] *_{pol} p_u(\llbracket m \rrbracket_{rat},\llbracket n \rrbracket_{rat},\rem(k,\llbracket m \rrbracket_{rat}),l),\\ 
    &\quad \quad \quad \quad \divi(k,\llbracket m \rrbracket_{rat})\Big) \Big\rangle,
    b[u]*\llbracket m \rrbracket_{rat}*\llbracket n \rrbracket_{rat}\Bigg).
    \end{align*} 
    
Here,
  \[D_{u}(m,n) := \prod_{i \in [m],j \in [n]} \den(\llbracket u \rrbracket_{rat}(i,j)) \in \FF[X],\]
  and, for each $(i_{0},j_{0}) \in [m] \times [n]$, we set
  \[p_{u}(m,n,i_{0},j_{0}) := \prod_{(i,j) \in [m] \times [n] \setminus \{(i_{0},j_{0})\}} \den(\llbracket u \rrbracket_{rat}(i,j)).\]
  Informally, if each $u(i,j)$ is a rational $f_{ij}/g_{ij}$, then $\llbracket \lamt_{ij} \langle m,n,u\rangle \rrbracket_{rat}$ returns the matrix $(f_{ij}p_{u}(m,n,i,j) / D_{u}(m,n))_{i \in [m],j \in [n]}$.
\end{enumerate}

\begin{prop}[$\LAP$]
 $\llbracket \cdot \rrbracket_{rat}$ is an interpretation of $\LAP$, that is, for each axiom $\varphi \in \LAP$,
 \[ \LAP \vdash \llbracket \varphi \rrbracket_{rat}.\]
\end{prop}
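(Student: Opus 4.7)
The plan is to mirror the verification of Proposition~\ref{[]pol is an interpretation}, checking that $\LAP \vdash \llbracket \varphi \rrbracket_{rat}$ for each axiom $\varphi$ of $\LAP$, now exploiting the already-available fact $\langle \MM, \FF[X], \Mat_{\FF[X]} \rangle \models \LAP_{-}$ to reduce every assertion about rationals to a polynomial identity provable inside $\LAP_{-}$. The key preparatory step is to show that $\FF[X]$ is provably an integral domain in $\LAP$: since $\deg$ is open-definable (Lemma~\ref{characteristicfunction} and its corollary), the identity $\deg(f *_{pol} g) = \deg(f) + \deg(g)$ for nonzero factors is provable by comparing leading coefficients, and since $\FF$ is a field in $\LAP$, it follows that $f *_{pol} g =_{pol} 0 \rightarrow f =_{pol} 0 \lor g =_{pol} 0$. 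This one fact underlies both the well-definedness of $+_{rat}, *_{rat}, (\cdot)^{-1}_{rat}$ (the interpreted denominators remain nonzero) and the transitivity of $=_{rat}$ via cross-cancellation.

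The axioms concerning $\indexsort$ are interpreted by themselves and are immediate. The $\fieldsort$-axioms already present in $\LAP_{-}$ (commutativity, associativity, distributivity, axioms for $0_{\fieldsort}, 1_{\fieldsort}, -_{\fieldsort}$) translate to polynomial identities that can be verified inside the previously established model of $\LAP_{-}$ after straightforward cross-multiplication. The genuinely new axiom is the inverse axiom (Definition~\ref{DefLA}(\ref{inverseaxiom})), which was absent from $\LAP_{-}$: for $(f,g) \in \FF(X)$ with $(f,g) \neq_{rat} (0_{pol},1_{pol})$, the case-distinction built into $(\cdot)^{-1}_{rat}$ via the characteristic functions of Lemma~\ref{characteristicfunction} selects the branch $(g,f)$, so that
\[(f,g) *_{rat} (f,g)^{-1}_{rat} =_{rat} (f *_{pol} g,\ g *_{pol} f) =_{rat} (1_{pol}, 1_{pol})\]
follows at once. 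Compatibility of $+_{rat}, *_{rat}, (-_{\fieldsort})_{rat}, (\cdot)^{-1}_{rat}$ with $=_{rat}$ likewise reduces to polynomial identities in $\FF[X]$, with transitivity of $=_{rat}$ being the only place where integral-domain cancellation is essential.

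For the $\matrixsort$-axioms, each code $(g,A,d)$ carries a common denominator $g$ and a polynomial-matrix body $(A,d)$, so the axioms governing $\row_{rat}, \column_{rat}, \extract_{rat}, \summation_{rat}$, matrix extensionality, the $\cond_{\fieldsort}$-axiom, and the recursion for $\mathtt{P}_{rat}$ all reduce to their counterparts already verified for $\llbracket \cdot \rrbracket_{pol}$. The most delicate step is the $\lambda$-axiom for $\lamt_{ij}\langle m,n,u \rangle$: one must verify that the common denominator
\[D_u(m,n) = \prod_{(i,j) \in [m] \times [n]} \den(\llbracket u \rrbracket_{rat}(i,j))\]
is nonzero (each factor is nonzero and $\FF[X]$ is an integral domain), and that
\[\extract_{rat}\bigl(\llbracket \lamt_{ij}\langle m,n,u \rangle \rrbracket_{rat}, i_0, j_0\bigr) =_{rat} \llbracket u \rrbracket_{rat}[i,j \mapsto i_0, j_0]\]
for every $(i_0,j_0) \in [m] \times [n]$. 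The key identity is $D_u(m,n) =_{pol} \den(\llbracket u \rrbracket_{rat}(i_0,j_0)) *_{pol} p_u(m,n,i_0,j_0)$, from which cross-multiplication yields the required $=_{rat}$-equivalence.

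The main obstacle I anticipate is not conceptual but bookkeeping: formalizing the bounded iterated products $D_u$ and $p_u$ uniformly inside $\LAP$ (via the powering function $\mathtt{P}$), tracking the degree-bound $\bold{b}[u]$ on the numerator correctly, and uniformly checking nonvanishing of the common denominator. Conceptually, however, everything reduces to (i) provability of ``$\FF[X]$ is an integral domain'' in $\LAP$, and (ii) the already-proven Proposition~\ref{[]pol is an interpretation}.
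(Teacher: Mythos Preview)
Your proposal is correct and takes essentially the same approach as the paper: the paper's own proof consists of the single sentence ``The proof is similar to that of Proposition~\ref{[]pol is an interpretation}, and we omit it,'' and your plan is precisely a detailed elaboration of that similarity, correctly isolating the one genuinely new ingredient (the inverse axiom, handled via the case split in $(\cdot)^{-1}_{rat}$) and the one structural fact needed beyond the polynomial case (that $\FF[X]$ is provably an integral domain in $\LAP$, which the paper records as the last item of the lemma in Appendix~\ref{proof of []pol}).
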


\begin{proof}
The proof is similar to that of Proposition \ref{[]pol is an interpretation}, and we omit it.
\end{proof}

Similarly as $\llbracket \cdot \rrbracket_{pol}$, we have the following interpretations of $\mathcal{L}_{\LAP}$-terms:
\begin{itemize}
 \item The characteristic polynomial $(\llbracket \ch \rrbracket_{rat}(g,A,d),m) \in \Mat_{\FF[X]}(m+1,1)$ for each $(g,A,d) \in \Mat_{\FF(X)}(m,m)$.
 \item The determinant $\llbracket \det \rrbracket_{rat} (g,A,d) \in \FF(X)$ for each $(g,A,d) \in \Mat_{\FF(X)}(m,m)$.
\end{itemize}

\begin{rmk}\label{multivariaterationals}
 Thus $\LAP$ can treat $\FF(X)$ and $\Mat_{\FF(X)}$ properly and prove that they satisfy $\LAP$ itself: setting $\FF(X_1):=\FF(X)$, then
 \[\langle \MM,\FF(X_1), \Mat_{\FF(X_1)} \rangle \models \LAP.\]
 
 Therefore, applying the interpretation $\llbracket \cdot \rrbracket_{rat}$ again, we obtain 
 \[(\langle \MM, \FF(X_1)(X), \Mat_{\FF(X_1)(X)} \rangle \models \LAP.\]
 Putting $\FF(X_1,\ldots,X_{k+1}):=\FF(X_1,\ldots, X_k) (X)$ ($k$ is a standard natural number) inductively and repeating the argument, we have \[\langle \MM, \FF(X_1,\ldots, X_k), \Mat_{\FF(X_1,\ldots, X_k)} \rangle \models \LAP.\]
\end{rmk}

The polynomials are embedded into the rationals straightforwardly as follows:

\begin{lemma}[$\LAP$]\label{polisrat}
 \begin{enumerate}
  \item\label{substfield} $\FF \rightarrow \FF[X];\ a \mapsto aX^{0}$ is an embedding of a ring.
  \item\label{subring} $\FF[X] \rightarrow \FF(X);\ f \mapsto (f,1)$ is an embedding of a ring.
  \end{enumerate}
\end{lemma}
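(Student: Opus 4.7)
The plan is to verify each assertion by directly unfolding the definitions of $=_{pol}$, $+_{pol}$, $*_{pol}$, $=_{rat}$, $+_{rat}$, $*_{rat}$ given in \S\ref{A formalization of polynomials in LAP} and \S\ref{Rational functions}, and then checking the three ring-homomorphism conditions (preservation of constants, of addition, of multiplication) together with injectivity. Throughout, we may freely invoke Proposition \ref{[]pol is an interpretation}, so that ordinary ring identities in $\FF[X]$ are available inside $\LAP$.

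For item (\ref{substfield}), an element $a \in \FF$ corresponds to the $1\times 1$ vector $[a] \in \Mat_{\FF}(1,1)$. Preservation of the constants is immediate from $(0_{\fieldsort})_{pol} := \zeromat_{11}$ and $(1_{\fieldsort})_{pol} := \idmat_{1}$. For addition, since $\mmax\{1,1\} = 1$, the term $[a] +_{pol} [b]$ reduces by the definition of $+_{pol}$ to the single-entry vector with coefficient $\extract([a],1,1) + \extract([b],1,1) = a + b$, i.e.\ to $[a+b]$. For multiplication, we unfold $\conv([a], \row([b])-1) = \conv([a],0)$, which by definition is the $1\times 1$ matrix $[a]$; hence $[a] *_{pol} [b] = [ab]$. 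Finally, injectivity follows from the definition of $=_{pol}$: the equality $[a] =_{pol} [b]$ forces $\coeff([a],0) = \coeff([b],0)$, i.e.\ $a = b$, while the higher-index conditions are vacuously satisfied because of the out-of-range convention in Definition \ref{DefLA}(\ref{componentconvention}).

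For item (\ref{subring}), the verifications are essentially bookkeeping against the definitions of the operations on $\FF(X)$. Preservation of constants holds because $(0_{\fieldsort})_{rat} := (0_{pol}, 1_{pol})$ and $(1_{\fieldsort})_{rat} := (1_{pol}, 1_{pol})$ coincide with the images of $0_{pol}$ and $1_{pol}$. For addition, we compute $(f,1) +_{rat} (g,1) = (f *_{pol} 1_{pol} +_{pol} g *_{pol} 1_{pol}, 1_{pol} *_{pol} 1_{pol})$, which is $=_{rat}$-equal to $(f +_{pol} g, 1_{pol})$ using that $1_{pol}$ is a multiplicative identity in $\FF[X]$ (a theorem of $\LAP_{-}$ transported through $\llbracket \cdot \rrbracket_{pol}$). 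Multiplication is similar: $(f,1) *_{rat} (g,1) = (f *_{pol} g, 1_{pol} *_{pol} 1_{pol}) =_{rat} (f *_{pol} g, 1_{pol})$. Injectivity follows since $(f,1) =_{rat} (g,1)$ unfolds to $f *_{pol} 1_{pol} =_{pol} g *_{pol} 1_{pol}$, hence $f =_{pol} g$.

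I expect no serious obstacle: all steps are by direct unfolding of the coded operations, combined with elementary ring identities in $\FF[X]$ which have been already certified through Proposition \ref{[]pol is an interpretation}. The only point demanding care is to respect the distinction between syntactic equality of codes and the equivalence relations $=_{pol}$, $=_{rat}$; hence all claims must be stated and verified at the level of these equivalences rather than as literal equalities of representing vectors and pairs.
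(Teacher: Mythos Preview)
Your proposal is correct and is exactly the natural approach: the paper states this lemma without proof, treating it as an immediate verification from the definitions, and your unfolding of $=_{pol}$, $+_{pol}$, $*_{pol}$, $=_{rat}$, $+_{rat}$, $*_{rat}$ together with the ring identities supplied by Proposition~\ref{[]pol is an interpretation} is precisely what is required. Your care in distinguishing literal equality of codes from the equivalence relations $=_{pol}$ and $=_{rat}$ is appropriate and matches the paper's conventions.
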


\begin{defn}[$\LAP$]\label{multivariatepolynomials}
Inductively on $k \geq 1$, we define
\[\FF[X_1, \ldots, X_{k+1}]:=\{f \in \FF(X_1, \ldots, X_k)[X] \mid \forall i \leq \row (f). \coeff(f,i) \in \FF[X_1,\ldots, X_k] \}.\]
\end{defn}

\begin{lemma}[$\LAP$]\label{changeofvar}
  \begin{align*}
  \FF[X_1] &\rightarrow \FF[X_1,X_2];\\
   [a_{0}, \ldots, a_{d}]=a_{0}X_1^{0} + \cdots + a_{d} X_1^{d} 
  &\mapsto \left(
  \begin{bmatrix}
  a_{0}X_1^{0}\\
   \vdots \\
  a_{d}X_1^{0}
  \end{bmatrix} ,0\right)= a_{0}X_2^{0} + \cdots + a_{d} X_2^{d}
  \end{align*}
   is an embedding of a ring.
\end{lemma}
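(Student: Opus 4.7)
The plan is to verify directly that the stated map $\phi$ is well-defined and satisfies the axioms of a ring embedding, by unfolding the two coding layers $\llbracket \cdot \rrbracket_{pol}$ and $\llbracket \cdot \rrbracket_{rat}$ through which $\FF[X_1, X_2] \subseteq \FF(X_1)[X_2]$ is built. For $f = [a_0, \ldots, a_d]^t \in \FF[X_1]$, the image $\phi(f)$ is the element of $\Mat_{\FF(X_1)}$ whose $\llbracket \cdot \rrbracket_{rat}$-code is the triple $(1, f, 0)$: trivial denominator $1$ in $\FF[X_1]$, matrix slot $f$ reinterpreted in $\Mat_{\FF[X_1]}$ via the implicit embedding $(A, d) \mapsto (1, A, d)$, and $X_1$-degree bound $0$ (since the intended coefficients $a_i$ are constants in $X_1$).

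Well-definedness follows from $\column_{rat}(\phi(f)) = \column(f) = 1$, which places $\phi(f)$ in $\FF(X_1)[X_2]$, together with the observation that each coefficient in $\FF(X_1)$ is the rational $(a_i, 1)$, which is the image of the constant $a_i$ under the embedding in Lemma \ref{polisrat}(\ref{subring}); hence $\phi(f)$ lies in $\FF[X_1, X_2]$ per Definition \ref{multivariatepolynomials}. Preservation of $0$ and $1$ is a direct computation on the canonical codes.

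Preservation of $+$ and $*$ reduces to a structural match between the two coding layers. Addition $+_{pol}$ in $\FF[X_1]$ and the addition in $\FF(X_1)[X_2]$, which is the $+_{pol}$ of $\llbracket \cdot \rrbracket_{pol}$ applied to the model $\langle \MM, \FF(X_1), \Mat_{\FF(X_1)} \rangle$, both act componentwise on coefficient vectors of length $\mmax\{\row(f), \row(g)\}$; likewise, both multiplications act through $\conv$ and matrix product, yielding convolutions of length $\row(f) + \row(g) - 1$. Since the scalar operations in $\FF(X_1)$ on triples with denominator $1$ satisfy $(a, 1) +_{rat} (b, 1) = (a + b, 1)$ and $(a, 1) *_{rat} (b, 1) = (ab, 1)$, which is exactly preservation of $+, *$ by the embedding in Lemma \ref{polisrat}(\ref{subring}), the coefficient vectors computed in $\FF(X_1)[X_2]$ for $\phi(f) + \phi(g)$ and $\phi(f) \cdot \phi(g)$ reduce entrywise to those of $f + g$ and $f \cdot g$ in $\FF[X_1]$, packaged into triples of the form $(1, -, 0)$. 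Injectivity is then immediate: if $\phi(f) =_{rat} \phi(g)$, extracting the $k$-th coefficient forces $(a_k, 1) =_{rat} (b_k, 1)$ in $\FF(X_1)$, whence $a_k = b_k$ in $\FF$ for each $k$, so $f =_{pol} g$.

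The main obstacle is not conceptual but lies in the careful bookkeeping of the two coding layers: the $\llbracket \cdot \rrbracket_{pol}$- and $\llbracket \cdot \rrbracket_{rat}$-interpretations produce $\lamt$-expressions whose output triples $(D_u, A', d')$ must be verified to simplify back to the canonical form $(1, -, 0)$ whenever all inputs have trivial denominator. Nothing new beyond the arithmetic already established by Proposition \ref{[]pol is an interpretation} and its $\llbracket \cdot \rrbracket_{rat}$ counterpart is required, but vigilance is needed to track the row-count normalization and to confirm that products of trivial denominators remain trivial, keeping the final code within the canonical form representing elements of $\FF[X_1, X_2]$.
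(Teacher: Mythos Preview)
Your proposal is correct and takes the same approach as the paper: the paper's own proof reads simply ``The verification is straightforward,'' and your detailed unfolding of the two coding layers $\llbracket\cdot\rrbracket_{pol}$ and $\llbracket\cdot\rrbracket_{rat}$ is exactly the direct verification that this remark gestures at. If anything, you have supplied considerably more detail than the paper does, and the bookkeeping you flag (tracking that trivial denominators stay trivial and that the codes remain in the canonical $(1,-,0)$ form) is precisely what makes the verification routine but tedious.
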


\begin{proof}
The verification is straightforward.
\end{proof}

\section{The theory $\LAPPD$}\label{The theory LAPPD}
In \cite{The proof complexity of linear algebra}, 
\[\LAP + \mbox{$\det$ is multiplicative}\]
 is considered, and it is shown that cofactor expansion and Cayley-Hamilton theorem are proved in the theory. 
Based on it, we consider the following theory ($\bold{PD}$ stands for ``polynomial determinant'');

\begin{defn}
\[ \LAPPD := \LAP + \llbracket \mbox{$\det$ is multiplicative} \rrbracket_{rat},\]
that is,
\begin{align*}
\LAPPD &:= \LAP +\\
 &\Large((g,A,d) ,(h,B,d^{\prime}) \in \Mat_{\FF(X)}(m,m) \rightarrow \\
 &\detrat((g,A,d)*_{rat}(h,B,d^{\prime})) =_{rat} \detrat(g,A,d) *_{rat}\detrat(h,B,d^{\prime}) \Large)
 \end{align*}
\end{defn}
Consider the following axiom:
\begin{defn}
We define the formula $(MDP)$ as:
\[ (A,d) ,(B,d^{\prime}) \in \Mat_{\FF[X]}(m,m) \rightarrow \detpol((A,d)*_{pol}(B,d^{\prime})) =_{pol} \detpol(A,d) *_{pol}\detpol(B,d^{\prime}).\]
\end{defn}

\begin{prop}
$\LAP + (MDP)$ is equivalent to $\LAPPD$.
\end{prop}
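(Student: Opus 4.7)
The plan is to establish both inclusions $\LAPPD \vdash (MDP)$ and $\LAP + (MDP) \vdash \LAPPD$ working over the common base $\LAP$. The basic idea in both directions is to relate the rational-function determinant $\detrat$ to the polynomial determinant $\detpol$ via the ring embedding $\FF[X] \hookrightarrow \FF(X);\ f \mapsto (f,1)$ from Lemma \ref{polisrat}, which extends componentwise to $(A,d) \mapsto (1,A,d)$ from $\Mat_{\FF[X]}(m,m)$ into $\Mat_{\FF(X)}(m,m)$ and is compatible with the interpreted products $*_{pol}$ and $*_{rat}$.

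For $\LAPPD \vdash (MDP)$, the key observation is that both $\detpol$ and $\detrat$ arise by applying the same Berkowitz-style definition of $\det$ (formalized in $\LAP_{-}$ in \cite{The proof complexity of linear algebra}) through the interpretations $\llbracket \cdot \rrbracket_{pol}$ and $\llbracket \cdot \rrbracket_{rat}$ respectively. A routine structural induction on the syntactic form of the $\det$-term, exploiting that $f \mapsto (f,1)$ preserves the field operations, then yields
\[ \detrat(1,M,k) =_{rat} (\detpol(M,k),\,1) \]
for every $(M,k) \in \Mat_{\FF[X]}(m,m)$. Applying the $\LAPPD$ axiom to $(1,A,d)$ and $(1,B,d')$, converting the resulting $=_{rat}$ identity into an equation of numerator polynomials (with denominator $1$), produces $(MDP)$.

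For $\LAP + (MDP) \vdash \LAPPD$, I would clear denominators. Given $(g,A,d), (h,B,d') \in \Mat_{\FF(X)}(m,m)$ coding the rational matrices $R_1 = (1/g)\widetilde{A}$ and $R_2 = (1/h)\widetilde{B}$, their rational product codes $(1/gh)\widetilde{A}\widetilde{B}$. The main auxiliary identity to prove in $\LAP$ is
\[ \detrat(g,A,d) =_{rat} (\detpol(A,d),\, g^m), \]
and symmetrically for $(h,B,d')$ and for the product. Granting this, the $\LAPPD$ axiom reduces to the polynomial equation $\detpol(\widetilde{A}\widetilde{B}) =_{pol} \detpol(A,d) *_{pol} \detpol(B,d')$, which is precisely $(MDP)$.

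The main obstacle is therefore the auxiliary identity $\detrat(g,A,d) =_{rat} (\detpol(A,d),g^m)$. I would split it into two sub-claims, both provable in $\LAP$: (a) compatibility of $\det$ with the polynomial-to-rational embedding, i.e.\ $\detrat(1,M,k) =_{rat} (\detpol(M,k),1)$, which is exactly the induction already needed for the first direction; and (b) the scalar-homogeneity identity $\det(cM) =_{\fieldsort} c^m \det(M)$ for any $m\times m$ matrix $M$ over any field and any scalar $c$. Claim (b) is the standard $m$-fold multilinearity of the determinant along rows and is provable from the development of Berkowitz's algorithm in $\LAP_{-}$ carried out in \cite{The proof complexity of linear algebra}. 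Viewing the rational matrix coded by $(g,A,d)$ as the scalar $(1/g) \in \FF(X)$ times the embedded polynomial matrix $(1,A,d)$, applying (b) with $c = 1/g$ and then (a) gives $\detrat(g,A,d) =_{rat} (1/g)^m \cdot (\detpol(A,d),1) =_{rat} (\detpol(A,d), g^m)$, as required.
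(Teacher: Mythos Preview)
Your proposal is correct and follows essentially the same approach as the paper. Both directions hinge on the two auxiliary identities you isolate: $\detrat(1,A,d) =_{rat} (\detpol(A,d),1)$ (proved by induction along the Berkowitz definition) and the scalar-homogeneity $\det(aA) = a^{m}\det(A)$, from which $\detrat(g,A,d) =_{rat} (\detpol(A,d),g^{m})$ follows; the paper's proof uses exactly these ingredients and the same reductions you outline.
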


\begin{proof}
We first show $\LAPPD \vdash (MDP)$.
Since
\[\LAP \vdash \detrat(1,A,d) =_{rat} (\detpol(A,d),1)\]
follows immediately by open induction along the definition of determinant given in \cite{The proof complexity of linear algebra}, we have
\begin{align*}
(\detpol((A,d)*_{pol}(B,d^{\prime})),1) &=_{rat} \detrat(1,(A,d)*_{pol}(B,d^{\prime}))\\
&=_{rat} \detrat((1,A,d)*_{rat}(1,B,d^{\prime}))\\
&=_{rat} \detrat(1,A,d)*_{rat} \detrat(1,B,d^{\prime})\\
&=_{rat} (\detpol(A,d),1) *_{rat}(\detpol(B,d^{\prime}),1)\\
&=_{rat} (\detpol(A,d)*_{pol}\detpol(B,d^{\prime}),1),
\end{align*}
which implies $\detpol((A,d)*_{pol}(B,d^{\prime}))=\detpol(A,d)*_{pol}\detpol(B,d^{\prime})$.
We used the multiplicativity of $\detrat$ at the third equality.

Next we show $\LAP + (MDP) \vdash \LAPPD$.
We can show that 
\[\LAP \vdash \detrat(g,A,d) =_{rat} (\detpol(A,d) , g^{m}).\]
It is because
\[\LAP \vdash \detrat(1,A,d) =_{rat} (\detpol(A,d),1),\]
and 
\[\LAP \vdash \det(aA) = a^{m} \det(A).\]
Now, in $\LAP+(MDP)$,
\begin{align*}
 \detrat(g,A,d) \cdot_{rat}\detrat (h,B,l) 
 &=_{rat}(\detpol(A,d) , g^{m}) *_{rat} (\detpol(B,l) , h^{m}) \\
 &=_{rat}(\detpol(A,d) *_{pol} \detpol(B,l) , g^{m}h^{m}) \\
 &=_{rat}(\detpol((A,d) *_{pol}(B,l)), (g\cdot h)^{m}) \\
 &=_{rat}  \detrat(gh,(A,d) *_{pol}(B,l)) \\
 &=_{rat}  \detrat ((g,A,d) *_{rat} (h,B,l) ) .
\end{align*}

\end{proof}

We end this section by the fact that $\LAPPD$ shows that a natural definition of $\det(A)$ coincides with the result of Berkowitz's algorithm:
\begin{prop}[$\LAPPD$]\label{twodefsofdet}
For $A \in \Mat_{\FF}(n,n)$, 
\[\detpol(XI-A) =_{pol} (-1)^{n}\ch (A)\]
 as polynomials of $X$.
\end{prop}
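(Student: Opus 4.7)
The plan is to leverage the multiplicativity of $\detpol$ on $\Mat_{\FF[X]}$, which $\LAPPD$ provides via the equivalent axiom $(MDP)$, and to run a division-free proof of the correctness of Berkowitz's algorithm entirely inside $\LAP_{-} + (MDP)$ applied to the polynomial ring.

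First, I would establish cofactor expansion and the other ``hard matrix identities'' for $\detpol$ on $\Mat_{\FF[X]}(m,m)$. Soltys showed in \cite{The proof complexity of linear algebra} that over $\LAP$, multiplicativity of the determinant implies cofactor expansion, the axiomatic characterization, and the Cayley--Hamilton theorem. Crucially, every step of that derivation uses only commutative ring operations; no appeal is made to the field-inversion axiom. Consequently the argument goes through verbatim in $\LAP_{-} + (MDP)$, and, applied via the interpretation $\llbracket \cdot \rrbracket_{pol}$, yields cofactor expansion for $\detpol$ in $\Mat_{\FF[X]}$.

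Second, view $XI - A \in \Mat_{\FF[X]}(n,n)$ by embedding $\FF \hookrightarrow \FF[X]$ via Lemma \ref{polisrat}. Proceeding by open induction on $n$, I would apply the now-available cofactor expansion to $\detpol(XI-A)$ along the last row, producing a recursion of the form
\[
\detpol(XI - A) =_{pol} (X - a_{nn})\,\detpol(XI_{n-1} - A') \; - \; \sum_{j<n} (XI-A)_{nj}\, C_{nj},
\]
where $A'$ is the leading $(n-1)\times(n-1)$ submatrix and $C_{nj}$ are the signed $(n,j)$-minors. On the other side, the definition of $\ch(A)$ in $\LAP$ is Berkowitz's explicit matrix-product formula in terms of $\ch(A')$ and certain Toeplitz matrices built from the last row and column of $A$. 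Both recursions are driven by the same data (the principal submatrices and off-diagonal entries); matching them up — using the induction hypothesis $\detpol(XI-A') =_{pol} (-1)^{n-1}\ch(A')$ — and accounting for the sign yields the claimed identity, with $(-1)^n$ appearing from the $n$ sign flips $X - a_{ii} = -(a_{ii} - X)$ relative to the convention under which $\ch(A)$ represents $\det(A - XI)$.

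The main obstacle is bookkeeping: verifying that the cofactor recursion emerging from $\detpol(XI-A)$ coincides with the Samuelson--Berkowitz product recursion used to define $\ch(A)$ in $\LAP$, up to the factor $(-1)^n$. This is the heart of Soltys's correctness proof for Berkowitz's algorithm; the only additional subtlety here is that coefficients live in $\FF[X]$ rather than $\FF$. Since Berkowitz's algorithm and its correctness proof are manifestly division-free, this adds no difficulty once one has checked — as above — that every invocation of $\LAP$-theorems about determinants can be replaced by its $\llbracket \cdot \rrbracket_{pol}$-interpretation available in $\LAP_{-} + (MDP)$, and therefore in $\LAPPD$.
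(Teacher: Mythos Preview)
Your overall strategy coincides with the paper's: pass to $\Mat_{\FF[X]}$ via $\llbracket\cdot\rrbracket_{pol}$, use $(MDP)$ to obtain cofactor expansion and the adjoint/Cayley--Hamilton identities there, and then prove Berkowitz's recursion correct by induction on $n$. You are also right (and more careful than the paper's own wording) that the Soltys--Cook derivation of cofactor expansion from multiplicativity is division-free and hence goes through in $\LAP_{-}+(MDP)$.

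The implementation differs, and your ``bookkeeping'' paragraph hides exactly the step the paper handles explicitly. The paper does \emph{not} match a raw cofactor expansion against the Berkowitz matrix $C(1,A)$ directly. Instead it invokes Soltys's Lemma~4.2.1, which packages the Samuelson--Berkowitz recursion as
\[
\ch(A)(Y) \;=\; (Y-a_{11})\,\ch(M)(Y) \;-\; R\,\adj(YI-M)\,S,
\]
applies this \emph{to $XI-A$} inside $\FF(X)[Y]$, substitutes $Y=0$, uses the induction hypothesis on $M$, and then recognizes the result via Lemma~4.2.1 again as $(-1)^n\ch(A)$. Your proposed cofactor expansion along a row would ultimately have to identify $\sum_{j}(XI-A)_{1j}C_{1j}$ with $(X-a_{11})\detpol(XI-M)-R\,\adj(XI-M)\,S$, which is precisely the content of those Soltys lemmas; so you are not bypassing them, only postponing them. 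A second, minor point: you expand along the \emph{last} row and recurse on the leading principal submatrix, whereas Berkowitz's algorithm as set up in this paper (see the definition of $A_k$ and $C(k,A)$) peels off the \emph{first} row and column and recurses on the trailing submatrix $M$. You would need to either reverse your expansion or observe that the two conventions are equivalent --- easy, but it should be said.
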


\begin{proof}
In this proof, we often omit subscripts $()_{pol}$ for readability.
Work in $\LAPPD$.
We have $\langle \MM, \FF[X], \Mat_{\FF[X]} \rangle \models \LAP + \mbox{$\det$ is multiplicative}$.
Thus $\langle \MM, \FF[X], \Mat_{\FF[X]} \rangle$ satisfies Cayley-Hamilton theorem and cofactor expansion by the results of \cite{The proof complexity of linear algebra}.
Therefore, the proofs of Lemma 4.2.1 and Lemma 4.2.2 in \cite{Soltys} interpreted for matrices with polynomial coefficients can be formalized and proven in $\LAPPD$.

Now, we show $\detpol(XI-A) =_{pol} (-1)^{n}\ch (A)$ by induction on $n$.
 
 Let
 \begin{align*}
  A=\begin{bmatrix}
   a & R \\
   S & M
  \end{bmatrix},
 \end{align*}
 where $a \in \FF$, $M \in \Mat_{\FF}(n-1,n-1)$.
 
 Then 
  \begin{align*}
  XI-A=\begin{bmatrix}
   X-a & -R \\
   -S & XI-M
  \end{bmatrix},
 \end{align*}
 and, by Lemma 4.2.1 of \cite{Soltys}, we have
 \[\ch_{pol}(XI-A) = (YI-(X-a))\ch_{pol}(XI-M) - (-R) (\adj_{pol})_{pol}(YI-(XI-M)) (-S)\]
 in $\FF(X)[Y]$, where $\adj(N)$ is the adjoint of $N$.
 Substituting $0$ for $Y$, we obtain
 \begin{align*}
 \detpol(XI-A) =& -(X-a)\detpol(XI-M) - R\adj_{pol}(-(XI-M))S \\
 =& (-1)^{n}(X-a)\ch(M)-(-1)^{n-2}R\adj_{pol}(XI-M)S
 \end{align*}
 by Induction Hypothesis.
 By Lemma 4.2.1 of \cite{Soltys} again, we obtain 
 \[\detpol(XI-A)=(-1)^{n}\ch(A).\]
 
\end{proof}

\section{Matrix substitution to polynomials}\label{Matrix substitution to polynomials}
 As discussed in \cite{The proof complexity of linear algebra}, in $\LAP_{-}$, we can substitute a matrix $A$ for the indeterminate in $f(X) \in \FF[X]$ (substitution of the field element, i.e. $f(a)$ for $a \in \FF$ can be formalized analogously and is easier than the following, so we omit the details of it);
\begin{defn}[$\LAP_{-}$]
 For  $f(X) \in \FF[X]$ and $A \in \Mat_{\FF}(n,n)$,
 \[f(A) := \sum_{i=0}^{\row (f)-1} \coeff(f,i) A^{i}.\]
 More formally,
 \[f(A) := \sum (M_{f}) \quad (M_{f} := \lamt_{ij} \langle \row (f)n ,n, \extract (f,i,1) \extract (A^{\divi(i-1,n)}, \rem(i-1,n)+1,j ) \rangle).\]
\end{defn}

Fixing $A \in \Mat_{\FF}(n,n)$, we can show that $f(X) \mapsto f(A)$ is a ring homomorphism:

\begin{lemma}[$\LAP_{-}$]\label{substisinv}
 For  $f(X), g(X) \in \FF[X]$ and $A \in \Mat_{\FF}(n,n)$,
 \[f=_{pol}g \rightarrow f(A)=g(A).\]
\end{lemma}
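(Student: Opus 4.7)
The plan is to unfold the definition
\[f(A) \;=\; \sum\bigl(M_{f}\bigr), \qquad M_{f} := \lamt_{ij} \bigl\langle \row(f)n,\, n,\, \coeff(f,\divi(i-1,n))\,\extract(A^{\divi(i-1,n)}, \rem(i-1,n)+1, j)\bigr\rangle,\]
and to compare $M_{f}$ with $M_{g}$ directly. Since the cases $\row(f)\le \row(g)$ and $\row(g)\le \row(f)$ are symmetric, by the open formula $\row(f)\le\row(g)$ I may reduce to the first. Then $M_g$ is of size $\row(g)n\times n$, and I will argue that it splits as $M_f$ stacked on top of an all-zero block.

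The key input is the hypothesis together with Definition \ref{DefLA}(\ref{componentconvention}). By unfolding $=_{pol}$, the assumption $f=_{pol}g$ gives $\coeff(f,k)=\coeff(g,k)$ for all $k\le \mmax\{\row(f),\row(g)\}=\row(g)$. On the other hand, for $k \ge \row(f)$, the entry $\coeff(f,k)=\extract(f,k+1,1)$ falls outside the intended index range of $f$, so Definition \ref{DefLA}(\ref{componentconvention}) yields $\coeff(f,k)=0$, and therefore $\coeff(g,k)=0$ for $\row(f)\le k\le \row(g)-1$ as well. Consequently the $(i,j)$-entry of $M_g$ vanishes whenever $i>\row(f)n$ (there $\divi(i-1,n)\ge \row(f)$, killing the coefficient factor), while for $1\le i\le \row(f)n$ the entries of $M_g$ and $M_f$ coincide, using $\coeff(f,k)=\coeff(g,k)$ for $k<\row(f)$.

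It remains to pass from entrywise agreement and a zero tail block to $\sum(M_g)=\sum(M_f)$. I would invoke the basic manipulation of $\summation$ developed in Section~4 of \cite{The proof complexity of linear algebra}: namely, that $\sum$ of a row-block partition is the sum of the partial sums, and that $\sum$ of an all-zero matrix is $0$. Both of these are provable in $\LA_{-}$ by open induction on the row count along the recursive characterization of $\summation$. Applied to the partition of $M_g$ into its top $\row(f)n$ rows (equal to $M_f$) and its bottom $(\row(g)-\row(f))n$ rows (all zero), this gives $\sum(M_g)=\sum(M_f)+0=f(A)$, and since the left-hand side is $g(A)$ by definition, we conclude $f(A)=g(A)$.

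The only mild obstacle is the bookkeeping in the open-induction proofs that $\sum$ of a zero matrix is $0$ and that $\sum$ splits along a row partition; but both lemmas are already standard tools in $\LA_{-}$, so the argument goes through with no new technical input.
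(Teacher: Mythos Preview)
Your proof is correct and follows essentially the same approach as the paper: both observe that $f=_{pol}g$ together with the out-of-range convention forces $\coeff(g,k)=0$ for $\row(f)\le k<\row(g)$, so that $M_f$ and $M_g$ agree entrywise (with the larger one having a zero tail). The only cosmetic difference is the final $\summation$-lemma invoked: the paper uses the general fact $\forall i,j.\ \extract(A,i,j)=\extract(B,i,j)\rightarrow\sum(A)=\sum(B)$ (proved by open induction, valid even for matrices of different sizes), whereas you reach the same conclusion via a row-block split and $\sum(\zeromat)=0$.
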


\begin{proof}
  Suppose $f =_{pol}g$.
  By definition,
   \begin{align*}
  f(A) = \sum ( M_{f} ),\   g(A) = \sum ( M_{g} ).  
  \end{align*}
  It is easy to show
  \[\forall i,j.\ \extract (M_{f},i,j) = \extract (M_{g},i,j).\]
  using $f =_{pol}g$.
  
  Moreover, we can show 
  \[\forall i,j.\ \extract (A,i,j) = \extract (B,i,j) \rightarrow \sum (A) = \sum (B)\]
  by open induction.
  
  Hence, the result follows.
\end{proof}

\begin{lemma}[$\LAP_{-}$]
 For  $f(X), g(X) \in \FF[X]$ and $A \in \Mat_{\FF}(n,n)$,
 \begin{enumerate}
  \item\label{unit} $1(A) = I_{n}$.
  \item\label{addition} $f(A)+g(A) = (f+_{pol}g)(A)$.
  \item\label{multiplication} $f(A) * g(A) = (f *_{pol} g) (A)$.
 \end{enumerate}
\end{lemma}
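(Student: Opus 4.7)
The plan is to reduce each identity to elementary properties of the iterated sum $\summation$, of matrix powers $A^{n}$, and of the $\lamt$-construction, all of which are already available in $\LAP_{-}$ through the basic development of \cite{The proof complexity of linear algebra}. The essential ingredients I will draw on are: (i) the axioms $A^{0} = \idmat_{\row (A)}$ and $A^{n+1} = A^{n} * A$; (ii) the exponent law $A^{i+j} = A^{i} * A^{j}$ for square $A$, which follows by open induction on $j$ from (i); (iii) bilinearity of matrix multiplication over matrix addition and scalars, established componentwise via the matrix extensionality axiom (item \ref{matrixextensionality} of Definition \ref{DefLA}); and (iv) the out-of-range convention $\extract (f,i,1) = 0_{\fieldsort}$ whenever $i > \row (f)$ or $i = 0$ (item \ref{componentconvention}).

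For part (\ref{unit}), the polynomial $1_{pol}$ satisfies $\row (1_{pol}) = 1$ and $\coeff (1_{pol},0) = 1_{\fieldsort}$, so the block matrix $M_{1_{pol}}$ has exactly one non-trivial $(n \times n)$ block equal to $1_{\fieldsort} \cdot A^{0} = I_{n}$; unfolding $\summation$ yields $I_{n}$. For part (\ref{addition}), set $N := \mmax \{ \row (f), \row (g) \}$. By construction $\coeff (f +_{pol} g, i) = \coeff (f,i) + \coeff (g,i)$ for $i < N$, and by the zero-coefficient convention each of $f(A)$ and $g(A)$ can be rewritten as an iterated sum of length $N$ without changing its value. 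Pairing the $i$-th summands and invoking additivity of the iterated matrix sum, which follows by open induction on the upper bound from the recursive characterization of $\summation$, delivers the identity using Lemma \ref{substisinv} to handle the padding.

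Part (\ref{multiplication}) is the substantive step. Unfolding $f *_{pol} g = \conv (f, \row (g) - 1) \cdot g$ and computing the matrix-vector product gives the standard convolution
\[ \coeff (f *_{pol} g, k) = \sum_{i+j=k} \coeff (f,i) \coeff (g,j) \quad (0 \leq k \leq \row (f) + \row (g) - 2), \]
where the inner sum is taken over $0 \leq i \leq \row (f) - 1$ and $0 \leq j \leq \row (g) - 1$. Substituting this into the definition of $(f *_{pol} g)(A)$ and applying $A^{i+j} = A^{i} * A^{j}$ together with distributivity rewrites the expression as a double iterated sum
\[ \sum_{i=0}^{\row (f)-1} \sum_{j=0}^{\row (g)-1} (\coeff (f,i) \coeff (g,j)) (A^{i} * A^{j}), \]
which factors, by pushing matrix multiplication through $\summation$, as $\left( \sum_{i} \coeff (f,i) A^{i} \right) * \left( \sum_{j} \coeff (g,j) A^{j} \right) = f(A) * g(A)$.

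The main obstacle is the last factoring step in part (\ref{multiplication}): exchanging the order of the double sum and extracting matrix multiplication out of an iterated $\summation$ must be justified by exhibiting explicit $\lamt$-constructions for the intermediate matrices and verifying their componentwise equality via matrix extensionality. Each required identity reduces, by open induction on one of the bounds, to a distributive law on a single entry, so nothing goes beyond $\LAP_{-}$; the work is notationally heavy because the ``matrix of matrices'' encoded in $M_{f}$, $M_{g}$, and $M_{f *_{pol} g}$ lives in $\Mat_{\FF}$ as a flat array of shape $(\row (h) \cdot n) \times n$ rather than as a genuine block matrix, so the index arithmetic with $\divi$ and $\rem$ must be aligned carefully across the two sides of the claimed equality.
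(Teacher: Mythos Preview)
Your approach to parts (\ref{unit}) and (\ref{addition}) matches the paper's. For part (\ref{multiplication}), however, the paper takes a different route: rather than expanding the convolution and manipulating a double sum directly, it argues by open induction on the degree of $f$. Writing $f_{k} := [a_{0},\ldots,a_{k}]^{t}$ for the truncation of $f$, the paper proves $f_{k}(A)*g(A) = (f_{k}*_{pol}g)(A)$ by induction on $k$; the base case $k=0$ is the scalar case $f_{0}g(A) = (f_{0}g)(A)$, and the step uses part (\ref{addition}) together with $f_{k+1} = f_{k} +_{pol} a_{k+1}X^{k+1}$ to reduce to showing $(X^{k+1}*_{pol}g)(A) = A^{k+1}*g(A)$, which follows from $A^{m}A^{n}=A^{m+n}$ and the single distributive identity $C\sum_{j}B_{j} = \sum_{j} CB_{j}$.

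Your direct double-sum argument is also valid in $\LAP_{-}$, but it requires more infrastructure: you must justify the reindexing $\sum_{k}\sum_{i+j=k}(\cdots) = \sum_{i}\sum_{j}(\cdots)$ at the level of the flat $\lamt$-encodings, and you need \emph{both} left and right distributivity of matrix multiplication over the iterated sum. The paper's inductive strategy sidesteps the reindexing entirely and needs only one-sided distributivity, so it is leaner to formalize. Your approach is closer to the usual algebraic proof and makes the structure of the identity transparent; the paper's buys economy of formal verification inside $\LAP_{-}$.
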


\begin{proof}

(\ref{unit}). $1(A) = I_{n}$ is clear.
 
(\ref{addition}). We show $f(A) + g(A) = (f+g) (A)$.
 By definition,
 \begin{align*}
  f(A) &= \sum (M_{f}),\\
  g(A) &= \sum (M_{g}),\\
  (f+_{pol}g)(A) &= \sum (M_{f+_{pol}g}) .
 \end{align*}
 By Lemma \ref{substisinv}, we may assume that 
 \[\row (f)=\row (g)=\row (f+_{pol}g),\ f+_{pol}g = f+g \ \mbox{ as vectors}.\]
 From these equalities, we can derive that 
 \[M_{f+_{pol}g} = M_{f} + M_{g}.\]
 Since 
 \[\sum(A+B) = \sum(A)+\sum(B),\]
 the result follows.
 
 (\ref{multiplication}). Lastly, we show $f(A) * g(A) = (f *_{pol} g) (A)$.
 
 For $k \in \MM$, let $f_{k} := [a_{0}, \ldots, a_{k}]^{t}$, where $f=[a_{0}, \ldots, a_{d}]^{t}$. 
 More formally,
 \[f_{k} := \lamt_{ij} \langle k+1,1, \extract (f,i,1)\rangle.\]
 We show
 \[f_{k}(A) * g(A) = (f_{k} *_{pol} g) (A)\]
 by induction on $k$ (then the case $k=d$ gives the result).
 
 When $k=0$, $f_{k}$ is a scalar and therefore 
 \begin{align*}
  f_{0} (A) * g(A) = f_{0} g(A)
 \end{align*} 
 and 
 \begin{align*}
  (f_{0} *_{pol} g) (A) = (f_{0}g)(A) = \sum (M_{f_{0}g})=\sum (f_{0}M_{g}) = f_{0} \sum(M_{g}) = f_{0}g(A).
 \end{align*}
 In the middle equation, we used $M_{f_{0}g} = f_{0}M_{g}$, which follows easily from the definition.

 This finishes the case when $k=0$.
 
 Now we show 
  \[f_{k+1}(A) g(A) = (f_{k+1} *_{pol} g) (A).\]
 We can write 
 \[f_{k+1} (X) = f_{k}(X) +_{pol} aX^{k+1}.\]
 Therefore it suffices to show
 \begin{align}\label{commute}
 ( f_{k}(X) +_{pol} aX^{k+1})(A) g(A) = (( f_{k}(X) +_{pol} aX^{k+1}) *_{pol} g) (A).
 \end{align}
 The LHS of (\ref{commute}) is 
 \begin{align*}
 ( f_{k}(X) +_{pol} aX^{k+1})(A)g(A) 
 &= ( f_{k}(A) + aA^{k+1})g(A)\\
 &=f_{k}(A) g(A) + aA^{k+1} g(A).
 \end{align*}
 
 On the other hand, the RHS of (\ref{commute}) is
 \begin{align*}
 (( f_{k}(X) +_{pol} aX^{k+1}) *_{pol} g) (A)
 &=( f_{k}(X)*_{pol}g(X) +_{pol} aX^{k+1}*_{pol}g(X) ) (A) \\
 &= f_{k}(A) g(A) + (aX^{k+1}*_{pol}g(X)) (A).
 \end{align*}
 In the last equality, we have used (\ref{addition}) and the induction hypothesis.
 
 Therefore, it suffices to show
 \[(X^{k+1}*_{pol}g(X))(A) = A^{k+1}* g(A).\]

 Since we know that $A^{n}A^{m} = A^{n+m}$,
 it is enough to show that
 \[\sum_{j=0}^{k} CB_{j} = C \sum_{j=0}^{k}B_{j}.\]
Note that we can show
 \[\sum_{j=0}^{k+1}B_{j} =( \sum_{j=0}^{k} B_{j} ) + B_{k+1},\]
 and this formula implies the equation above by open induction on $k$.

\end{proof}

\section{A definition of $\rank$ and its basic properties}\label{A definition of rank and its basic properties}
Now,  we show that $\LAP$ can $\Sigma^{B}_{0}$-define the rank function.

First, we observe that $\LAP$ can count the number of solutions of open predicate in a vector (however, outputting the appropriate element of $\indexsort$ is impossible because of its vocabulary, so, instead, we design a term outputting a vector whose only non-zero component corresponds to the number):
 
\begin{defn}\label{indexvector}
An \textit{index vector} is a vector $v \in \Mat_{\FF}(n,1)$ such that
\[ \exists i \leq n (v_{i1} = 1 \land \forall j \leq n (j \neq i \rightarrow v_{j1} = 0)).\]
An \textit{index matrix} is a matrix $A= [v_{1}, \ldots, v_{s}]\in \Mat_{\FF}(n,s)$ such that
every $v_{i}$ is an index vector.

We identify an index vector $v$ with the unique index $i \leq n$ such that $v_{i1} = 1$.
Note that such an index is $\Sigma^{B}_{0}$-defined as
\[ \min\{i \leq n \mid v_{i1} =1\}.\] 

\end{defn}
\begin{prop}[$\LAPPD$]
There exists a term $\ct(v,k)$ such that for any $v \in \Mat_{\FF}(n,1)$ and $k \in \MM$, the following hold:
\begin{enumerate}
 \item\label{first} $\ct(v,k) \in \Mat_{\FF}(n+1,1)$, and it is an index vector.
 \item $\ct(v,0) = [1,0,\ldots,0]^{t} \in  \Mat_{\FF}(n+1,1)$.
 \item $\extract (\ct(v,k),i,1)=1 \land \extract (v,k+1) \neq 0 \rightarrow \extract (\ct(v,k+1),i+1,1)=1$.
 \item\label{last} $\extract (\ct(v,k),i,1)=1 \land \extract (v,k+1) = 0 \rightarrow \extract (\ct(v,k+1),i,1)=1$.

\end{enumerate}

\end{prop}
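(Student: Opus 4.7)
My plan is to realize $\ct(v,k)$ as the result of iteratively applying a ``step matrix'' to the initial index vector $e_{1} := \lamt_{ij}\langle n+1, 1, \delta_{i=1}\rangle$, where $n := \row(v)$. Let $S := \lamt_{ij}\langle n+1, n+1, \delta_{i=j+1}\rangle$ be the down-shift matrix and $I := \idmat_{n+1}$. Using the $\delta$-terms from Lemma~\ref{characteristicfunction}, for each $j$ I would define the step matrix
\[
M(v,j) := \lamt_{hl}\Bigl\langle n+1,\ n+1,\ \delta_{v_{j1} \neq 0}\cdot \extract(S,h,l) + \delta_{v_{j1} = 0}\cdot \extract(I,h,l)\Bigr\rangle,
\]
which $\LAP$ proves to equal $S$ when $v_{j1} \neq 0$ and $I$ otherwise.

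The key technical step is to assemble the iterated product $M(v,k)\cdots M(v,1)$ into a single $\mathcal{L}_{\LAP}$-term, since the language provides only the single-matrix powering $\mathtt{P}$. I would use the standard block-matrix trick (compare Section~4.1 of \cite{The proof complexity of linear algebra}): stack the step matrices into one block matrix
\[
\tilde{M}(v,k)\in\Mat_{\FF}\bigl((k+1)(n+1),\,(k+1)(n+1)\bigr)
\]
whose $(j,j+1)$-block is $M(v,k-j+1)$ for $1 \le j \le k$ and whose other blocks are zero. Open induction on the exponent shows that the top-right $(n+1)\times(n+1)$ block of $\tilde{M}(v,k)^{k}$ equals $M(v,k)\cdots M(v,1)$. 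Extracting this block via $\lamt$ gives a term $P(v,k)$, and I set $\ct(v,k) := P(v,k) \cdot e_{1}$.

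Property~(2) is then immediate from the fact that $\tilde{M}(v,0)^{0} = \idmat_{n+1}$ fixes $e_{1}$. For (3) and (4) I would first derive the recurrence $\ct(v,k+1) = M(v,k+1) \cdot \ct(v,k)$ by comparing the relevant blocks of $\tilde{M}(v,k+1)^{k+1}$ with those of $\tilde{M}(v,k)^{k}$, using associativity of matrix multiplication (provable in $\LAP$). Given the recurrence, the two conclusions follow from the defining effect of $S$ and $I$ on a single-$1$ column. Property~(1) is by open induction on $k$: $e_{1}$ is an index vector, and both $I$ and $S$ preserve the property of being an index vector, \emph{provided} the single $1$ sits at some position $i \le n$ before $S$ is applied.

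The main obstacle is this ``no-overflow'' invariant. I would prove by open induction, in parallel with (1)--(4), that the position of the $1$ in $\ct(v,k)$ equals $1 + \#\{\,j \in [k] : v_{j1} \neq 0\,\}$, which is bounded by $n+1$ because Definition~\ref{DefLA}(\ref{componentconvention}) forces $v_{j1} = 0_{\fieldsort}$ for every $j > n = \row(v)$; hence every step that could push the counter past $n+1$ has $v_{j1} = 0$, and so $M(v,j) = I$ there. This bookkeeping is routine but must be threaded carefully through the induction alongside the block-matrix identity governing $\tilde{M}(v,k)^{k}$.
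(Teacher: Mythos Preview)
Your proposal is correct and follows essentially the same approach as the paper: define a step matrix that acts as the down-shift $S$ when $v_{k+1}\neq 0$ and as $I$ otherwise, then realize $\ct(v,k)$ as the iterated product of these step matrices applied to $e_{1}$, using the block-matrix powering trick of \cite{The proof complexity of linear algebra} to express the product as an $\mathcal{L}_{\LAP}$-term. The paper writes the step matrix as the scalar combination $T_{k}=(1-c_{k})I_{n+1}+c_{k}S_{n+1}$ with $c_{k}=\extract(\chmat(v),k,1)=v_{k1}\cdot v_{k1}^{-1}$, which is the same device as your $\delta$-terms, and it dispatches properties (1)--(4) by a bare appeal to $\Sigma^{B}_{0}$-induction; your explicit treatment of the no-overflow invariant (that the position of the $1$ never exceeds $n+1$ because $v_{j1}=0$ for $j>n$) is a point the paper leaves implicit but which is indeed needed for (1).
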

Intuitively, $\ct(v,k)$ counts the number of nonzero components within the first $k$ coordinates of $v$.

\begin{proof}
 We simulate the recursion by matrix powering.
 For $n \in \MM$, let
 \begin{align*}
S_{n+1} := 
 \begin{bmatrix}
\zeromat_{1,n} & \zeromat_{1,1} \\
I_{n} & \zeromat_{n,1}
\end{bmatrix}
\in \Mat_{\FF}(n+1,n+1).
\end{align*}
Given a vector $v \in \Mat_{\FF}(n,1)$, we construct a sequence $Y_{0}, \ldots, Y_{n} \in \Mat_{\FF}(n+1,1)$ such that 
\begin{align}
 Y_{0} &= [1, 0, \cdots, 0]^{t} \\
 Y_{k+1} &= 
 \begin{cases}
  Y_{k} \quad &\mbox{if $\extract (v,k+1,1)=0$} \\
  S_{n+1}Y_{k} \quad &\mbox{if $\extract (v,k+1,1)\neq 0$}.
 \end{cases}\label{recursion}
\end{align}

Firstly, for $A \in \Mat_{\FF}$, let
 \[\chmat (A) := \lamt_{ij} \langle \row (A),\column (A), \extract (A,i,j) *_{field}\extract (A,i,j)^{-1}  \rangle .\]
 
 Then it satisfies that for all $i \leq \row (A)$ and $j \leq \column (A)$, 
 \begin{align*}
  &\extract (A,i,j) = 0 \rightarrow \extract (\chmat(A),i,j) = 0,\\
  &\extract (A,i,j) \neq 0 \rightarrow \extract (\chmat(A),i,j) = 1.
 \end{align*}

Now, let
\begin{align*}
 T_{k}:= (1-\extract (\chmat(v),k,1))I_{n+1} + \extract (\chmat(v),k,1) S_{n+1},
 \end{align*}
 and set
 \begin{align*}
 Y_{0} &:= [1, 0, \cdots, 0]^{t} \\
 Y_{k} &:= T_{k} \cdots T_{1} Y_{0} \quad (k \in [n]).
 \end{align*}
 Finally, define 
 \[\ct(X,k)=Y_{k}.\]
 
 Since $\ct(v,k)$ is defined by iterated matrix product of uniform $\mathcal{L}_{\LAP}$-terms, it is a term in the language of $\mathcal{L}_{\LAP}$.
 
 Moreover, by definition, $\ct(X,k)=Y_{k}$ satisfies the condition \ref{recursion}.
 
 The conditions (\ref{first})-(\ref{last}) in the proposition are proved by $\Sigma^{B}_{0}$-induction.
 
\end{proof}

From now on, we work in $\LAPPD$ and $\Sigma^{B}_{0}$-define a function $\rank (A)$ for $A \in \Mat_{\FF}(m,n)$ and show that it satisfies several basic properties. 

First, note that, by the result of \cite{The proof complexity of linear algebra} and the interpretation above,
\begin{lemma}[$\LAPPD$,\cite{The proof complexity of linear algebra}]
 Cofactor expansion and Cayley-Hamilton theorem for 
 \[A \in \Mat_{\FF[X]}(m,m), \Mat_{\FF(X)}(m,m)\]
  hold.
\end{lemma}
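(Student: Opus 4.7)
The plan is to lift the results of \cite{The proof complexity of linear algebra} to the inner structures produced by the interpretations $\llbracket \cdot \rrbracket_{rat}$ and $\llbracket \cdot \rrbracket_{pol}$, exploiting the multiplicativity of the determinant that $\LAPPD$ provides. The key input is the result listed as items (1) and (2) in Section \ref{Introduction}: over $\LAP$ (and, as the paper already exploits, over $\LAP_{-}$), multiplicativity of $\det$ already implies both cofactor expansion and the Cayley-Hamilton theorem.

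First I would handle the $\Mat_{\FF(X)}(m,m)$ case. By the interpretation theorem for $\llbracket \cdot \rrbracket_{rat}$ established in Section \ref{Rational functions}, the inner structure $\langle \MM, \FF(X), \Mat_{\FF(X)} \rangle$ already satisfies $\LAP$; and by the very defining axiom of $\LAPPD$, this inner structure also satisfies multiplicativity of $\det$. Hence by \cite{The proof complexity of linear algebra}, the inner structure satisfies cofactor expansion and Cayley-Hamilton, which, upon unwinding the interpretation, yields the claim for $\Mat_{\FF(X)}(m,m)$ as a theorem of $\LAPPD$.

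Next I would handle the $\Mat_{\FF[X]}(m,m)$ case. By Proposition \ref{[]pol is an interpretation}, the inner structure $\langle \MM, \FF[X], \Mat_{\FF[X]} \rangle$ satisfies $\LAP_{-}$, and by the equivalence $\LAPPD \equiv \LAP + (MDP)$ proved just above, the axiom $(MDP)$ gives multiplicativity of the polynomial determinant in this inner structure. Applying the $\LAP_{-}$-version of Soltys' derivation inside $\langle \MM, \FF[X], \Mat_{\FF[X]} \rangle$ then yields the desired statements for $\Mat_{\FF[X]}(m,m)$. A cleaner alternative is to use the ring embedding $\FF[X] \hookrightarrow \FF(X)$ from Lemma \ref{polisrat}(\ref{subring}): since the determinant, adjoint, and characteristic polynomial of a matrix with polynomial entries commute with this embedding (the determinant of a polynomial matrix lies in $\FF[X] \subseteq \FF(X)$, and cofactor expansion is an identity in the ring of entries), the results already obtained in the $\FF(X)$-case transfer back.

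The main obstacle is the bookkeeping verification that Soltys' derivation of cofactor expansion and Cayley-Hamilton from multiplicativity does not silently invoke the field-inverse axiom (item \ref{inverseaxiom} of Definition \ref{DefLA}), so that it genuinely transfers to the $\LAP_{-}$ inner structure used for the polynomial case. An inspection of that derivation shows that it only manipulates block matrices and polynomial identities in the entries, so inversion is not needed; still, this check is where the real work lies. The alternative route via the embedding $\FF[X] \hookrightarrow \FF(X)$ sidesteps the issue entirely, and I would include it as a fallback so that the lemma is certified without any delicate dependence on reading Soltys' proof strictly inside $\LAP_{-}$.
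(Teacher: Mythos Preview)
Your proposal is correct and matches the paper's approach exactly: the paper gives no standalone proof of this lemma, merely prefacing it with ``by the result of \cite{The proof complexity of linear algebra} and the interpretation above,'' which is precisely the two-step argument you spell out (apply $\llbracket\cdot\rrbracket_{rat}$ to get an inner $\LAP$-model where the defining axiom of $\LAPPD$ supplies multiplicativity, then invoke Soltys--Cook; and similarly for $\llbracket\cdot\rrbracket_{pol}$ with $(MDP)$). Your explicit flagging of the $\LAP_{-}$ issue for the polynomial case, and the fallback via the embedding of Lemma~\ref{polisrat}(\ref{subring}), are more careful than the paper itself, which in the proof of Proposition~\ref{twodefsofdet} writes ``$\langle \MM, \FF[X], \Mat_{\FF[X]} \rangle \models \LAP + \mbox{$\det$ is multiplicative}$'' where strictly only $\LAP_{-}$ is available.
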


We formalize Mulmuley's algorithm calculating the rank of a given matrix (see \cite{Mulmuley's algorithm} for a simple and clear exposition).

\begin{defn}[$\LAPPD$]\label{formalizedMulmuley}
 For $A \in \Mat_{\FF}(m,n)$, we define the following terms and $\Sigma^{B}_{0}$-definable (or, equivalently, open-definable) functions:
 \begin{enumerate}
  \item \begin{align*}
  &\symm(A):=(1,\lamt_{ij} \langle \row (A)+\column (A), \row (A)+\column (A),\\
  &\cond_{field}(i\leq \row (A) < j,1,0) \extract (A,i,j-\row (A)) \\
  &+\cond_{field}(j \leq \row (A) < i,1,0) \extract (A,j,i-\row (A)) \rangle,0) \in \Mat_{\FF(X)},
  \end{align*}
 which is a term.
   ($\symm(A)$ returns the matrix
  \[ \begin{bmatrix}
\zeromat & A \\
A^{t} & \zeromat \\
\end{bmatrix} \in \Mat_{\FF(X)}.\]
  )
  \item Let the term $\chi$ be
  \[\chi(n) := \llbracket \lamt_{ij}\langle k,l,\cond_{field}^{i=j,(i,j)}(i,j,1,0)a^{i-1} \rangle \rrbracket_{rat} [k,l,h_{a},g_{a} \mapsto n,n,X,1] \in \Mat_{\FF(X)}(n,n)\]
  (here, note that $X$ substituted for $h_{a}$ is a code of the polynomial ``$X$.''
  $\chi(n)$ returns the matrix
  \[ \begin{bmatrix}
X^{0} & 0 &\cdots & 0 \\
0 & X^{1} & \cdots & 0\\
\quad  & \quad & \ddots & \quad \\
0 & \cdots &0 & X^{n-1} 
\end{bmatrix} \in \Mat_{\FF(X)}.\])
We denote $\chi_{n}$ instead of $\chi(n)$ for readability.
  Using this, put
  \[\polize (A):= \chi_{\row (A)+\column (A)} \symm(A),\]
  which is another term.
  \item For $v \in \Mat_{\FF(X)}(n,1) \subset \FF(X)[Y]$, let $i_{v}$ be the minimum $i$ such that $\extract_{rat} (v, i,1) \neq 0$. (Note that $v$ is formally a triplet $(g,A,d)$.)
  Let 
  \[\mul(v):=i_{v}-1,\]
  which is a $\Sigma^{B}_{0}$-definable function.
   ($\mul(v)$ returns the multiplicity of the root $0$ of the polynomial $v(Y)$.)
   Note that $\mul(v) \leq \row_{rat}(v)$.
   \item\label{explicitdefofrank} $\rank (A) := \divi(\row (A)+\column (A) - \mul(\llbracket \ch \rrbracket_{rat} (\polize (A))) , 2)$,
  which is again a $\Sigma^{B}_{0}$-definable function since $\divi(\row (A)+\column (A) - \mul(v), 2)$ is $\Sigma^{B}_{0}$-definable and $\llbracket \ch \rrbracket_{rat} (\polize (A))$ is a term.
 \end{enumerate}
 
 Furthermore, we also introduce the following for convenience in the analysis below:
 for $v \in \Mat_{\FF(X)}(n,1) \subset \FF(X)[Y]$, 
  \[\widetilde{v}:=\llbracket \lamt_{ij}\langle \row (M) - k,1, \extract (M,i+k,1) \rangle \rrbracket_{rat}[(g_{M},M,d_{M}),k \mapsto v,\mul(v)],\]
  which is another $\Sigma^{B}_{0}$-definable function.
  (The polynomial $v(Y)$ is decomposed into $Y^{\mul(v)} \widetilde{v}(Y)$.)
 
\end{defn}

Now, we verify that $\LAPPD$ can prove that the function $\rank(A)$ above satisfies the basic properties of rank function.

First, we recall the following fact:

\begin{lemma}[$\LAP_{-} + \det(AB)=\det(A) \det(B)$ ]\label{CHimpliesnontrivialsolution}
For any $A \in \Mat_{\FF}(n,n)$, there exists $B \in \Mat_{\FF}(n,n) \setminus \{0\}$ such that $AB=\det(A) I_{n}$.
\end{lemma}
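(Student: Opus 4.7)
The plan is to invoke Cayley--Hamilton---available in $\LAP_{-} + $ multiplicativity of $\det$ by Soltys's equivalences---and then extract the desired $B$ from the coefficients of the characteristic polynomial. Writing $\chi_{A}(X) = X^{n} + c_{n-1}X^{n-1} + \cdots + c_{0}$ with $c_{0} = (-1)^{n}\det(A)$, I would define for each $j \in \{0,\dots,n\}$
\[
B_{j} := A^{n-j} + c_{n-1}A^{n-j-1} + \cdots + c_{j}I_{n},
\]
so that $B_{n} = I_{n}$, $B_{0} = \chi_{A}(A) = 0$, and a direct expansion yields the telescoping identity
\[
AB_{j} \;=\; B_{j-1} - c_{j-1}I_{n} \qquad (j \in [n]).
\]

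Next I would split cases on whether $\det(A)$ vanishes. If $\det(A) \neq 0$, then setting $B := (-1)^{n+1}B_{1}$ gives $AB = (-1)^{n+1}(B_{0} - c_{0}I_{n}) = \det(A)I_{n} \neq 0$, so $B \neq 0$ automatically. If $\det(A) = 0$ (equivalently $c_{0} = 0$), then since $B_{n} = I_{n} \neq 0$ while $B_{0} = 0$, there is a smallest $j \in [n]$ with $B_{j} \neq 0$; for this $j$ the telescoping identity gives $AB_{j} = -c_{j-1}I_{n}$. Taking determinants of both sides and using the $\det$-multiplicativity axiom gives $\det(A)\det(B_{j}) = (-c_{j-1})^{n}$, and since $\det(A) = 0$ we must have $c_{j-1} = 0$, so $AB_{j} = 0 = \det(A)I_{n}$ and $B := B_{j}$ is the required nonzero witness.

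The main obstacle I anticipate is formalizing the ``smallest $j$ with $B_{j} \neq 0$'' step in a theory that lacks the $(\cdot)^{-1}$-symbol used in Lemma~\ref{characteristicfunction} for $\Sigma^{B}_{0}$-quantifier elimination. The saving feature is that ``$B_{j} = \zeromat_{n,n}$'' is already atomic (hence open) in $\mathcal{L}_{\LA}$ via the primitive $=_{\matrixsort}$, so the case-2 argument can be carried out by open induction on $j \leq n$ maintaining the invariant that at every step either a valid witness $B_{k}$ ($k \leq j$) has been exhibited or else $B_{j} = 0$ together with the conjunct $c_{0} = \cdots = c_{j-1} = 0$; the inductive step is precisely the application of $\det$-multiplicativity above. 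Packaging this invariant in a genuinely open form (rather than $\Sigma^{B}_{0}$) is the technical care needed, but all of the substantive content---Cayley--Hamilton plus multiplicativity of $\det$---is plainly available in the base theory.
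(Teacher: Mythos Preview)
Your proof is correct and follows essentially the same path as the paper's: Cayley--Hamilton, a case split on $\det(A)=0$, the adjugate in the nonsingular case (your $(-1)^{n+1}B_1$ is precisely $\adj(A)$), and in the singular case one application of $\det$-multiplicativity combined with an open least-number argument---the paper merely factors $\ch(A)=Y^{m}\widetilde f(Y)$ and uses multiplicativity \emph{before} the minimal-index search, whereas you search first on the Horner remainders $B_j$ and apply multiplicativity afterwards, producing the same witness. Your closing worry is over-cautious: open LNP on the open predicate $B_j\neq\zeromat_{n,n}$ alone already delivers the minimal $j$ with $B_{j-1}=0$, whence $AB_j=-c_{j-1}I_n$ and one use of $\det$-multiplicativity finish; the conjunctive invariant tracking $c_0=\cdots=c_{j-1}=0$ is unnecessary.
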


\begin{proof}
Let $f(Y):=\ch(A) \in \Mat_{\FF}$.
 By \cite{Soltys}, Cayley-Hamilton theorem, $f(A)=0$, is available.
 Let $m$ be the multiplicity of the root $0$ of $f$, and let $\widetilde{f}$ be the factor of $f$ satisfying $f(Y)=Y^{m}\widetilde{f}(Y)$.
  Then we have
 \[A^{m} \widetilde{f}(A)=0.\]
 
 If $m= 0$, $\det(A) = f(0) \neq 0$. 
 Hence, take $B:=\adj A$. 
 Then $AB=\det(A) I \neq 0$, and therefore $B \neq 0$. 
 
 Consider the case $m \neq 0$, that is, $\det(A)=0$.
 We first show that $\widetilde{f}(A) \neq 0$.
 Suppose otherwise.
 Let
 \[\widetilde{f}(Y) = a_{0} + a_{1}Y + \cdots + a_{k} Y^{k}.\]
 Then $\widetilde{f}(A) =0$ implies  
 \[a_{0}I_{n} =  -a_{1}A - \cdots - a_{k} A^{k} = -A (a_{1} I_{n} + \cdots + a_{k} A^{k}) \]
 Therefore, taking the determinant of both sides, 
 \[0 \neq a_{0}^{n} = (-1)^{n}\det(A) \det (a_{1} I_{n} + \cdots + a_{k} A^{k}) = 0.\]
 This is a contradiction.
 
 Since $\widetilde{f}(A) \neq 0$ and $A^{m} \widetilde{f}(A)=0$, by open induction, we can take the smallest $i+1$ such that
 \[A^{i+1}  \widetilde{f}(A)=0 .\]
 Since 
 \[A^{i}  \widetilde{f}(A) \neq 0,\]
 $B := A^{i} \widetilde{f}(A)$ suffices.
\end{proof}

\begin{cor}[$\LAPPD$]\label{quasiinverse}
For any $A \in \Mat_{\FF[X]}(n,n)$, there exists $B \in \Mat_{\FF[X]}(n,n) \setminus \{0_{pol}\}$ such that $A*_{pol}B=\det_{pol}(A) *_{pol} (I_{n},0)$.

This also holds if we replace $\FF[X]$ by $\FF(X)$, $*_{pol}$ by $*_{rat}$, and $\det_{pol}$ by $\det_{rat}$, and $(I_{n},0)$ by $(1,I_{n},0)$.
\end{cor}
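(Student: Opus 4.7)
The plan is to deduce both halves of the corollary directly from Lemma \ref{CHimpliesnontrivialsolution} via the interpretations $\llbracket \cdot \rrbracket_{pol}$ and $\llbracket \cdot \rrbracket_{rat}$ developed in \S \ref{A formalization of polynomials in LAP} and \S \ref{Rational functions}. The point is that Lemma \ref{CHimpliesnontrivialsolution} is a theorem of $\LA_{-}+\det(AB)=\det(A)\det(B)$, so we only need to check that each of the internal structures $\langle \MM,\FF[X],\Mat_{\FF[X]}\rangle$ and $\langle \MM,\FF(X),\Mat_{\FF(X)}\rangle$ satisfies, in $\LAPPD$, the theory $\LA_{-}$ together with multiplicativity of its determinant. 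Once this is in place, pulling back a witness $B$ along the interpretation gives the desired polynomial (respectively rational) matrix.

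For the rational case, Proposition \ref{[]pol is an interpretation}'s analogue for $\llbracket \cdot \rrbracket_{rat}$ gives $\langle \MM,\FF(X),\Mat_{\FF(X)}\rangle \models \LAP$ in $\LAP$, and the defining axiom of $\LAPPD$ is precisely $\llbracket \det(AB)=\det(A)\det(B) \rrbracket_{rat}$. Thus the hypotheses of Lemma \ref{CHimpliesnontrivialsolution} hold for $\Mat_{\FF(X)}(n,n)$, and applying it inside this interpretation yields a triple $(g,B,d) \in \Mat_{\FF(X)}(n,n)$ with $(g,B,d) \neq_{rat} (1,0,0)$ and $A *_{rat} (g,B,d) =_{rat} \det_{rat}(A) *_{rat} (1,I_n,0)$.

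For the polynomial case, $\llbracket \cdot \rrbracket_{pol}$ ensures $\langle \MM,\FF[X],\Mat_{\FF[X]}\rangle \models \LAP_{-}$ in $\LAP$, and the proposition just shown (equivalence of $\LAPPD$ with $\LAP+(MDP)$) furnishes multiplicativity of $\det_{pol}$ for matrices in $\Mat_{\FF[X]}$. Therefore Lemma \ref{CHimpliesnontrivialsolution} applies verbatim to this structure and produces $(B,d) \in \Mat_{\FF[X]}(n,n)$ with $(B,d) \neq_{pol} 0_{pol}$ and $A *_{pol} (B,d) =_{pol} \det_{pol}(A) *_{pol} (I_n,0)$.

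The only real subtlety I anticipate is bookkeeping around the ``$B \neq 0$'' condition: the existential ``$B \in \Mat_{\FF}(n,n) \setminus \{0\}$'' in Lemma \ref{CHimpliesnontrivialsolution} is a $\Sigma^{B}_{0}$ statement of the language being interpreted, and one must unfold $\llbracket \cdot \rrbracket_{pol}$ and $\llbracket \cdot \rrbracket_{rat}$ to check that $\lnot (B=_{pol}0_{pol})$ and $\lnot ((g,B,d)=_{rat}(1,0,0))$ are exactly what the interpretations yield. Since each interpretation was designed so that its internal equalities $=_{pol}$ and $=_{rat}$ agree with the matrix-equality axioms of the ambient $\LAP_{-}$ (respectively $\LAP$), this is a routine verification and presents no genuine obstacle. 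No explicit construction of $B$ is needed beyond that given by the proof of Lemma \ref{CHimpliesnontrivialsolution}, read inside the interpreted structure.
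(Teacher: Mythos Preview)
Your approach is correct and is exactly what the paper intends: the corollary is stated without proof immediately after Lemma \ref{CHimpliesnontrivialsolution}, and the intended argument is precisely to apply that lemma inside the interpreted structures $\langle \MM,\FF[X],\Mat_{\FF[X]}\rangle$ and $\langle \MM,\FF(X),\Mat_{\FF(X)}\rangle$, using $(MDP)$ and the defining axiom of $\LAPPD$ respectively to supply the multiplicativity hypothesis. One small slip: you write that Lemma \ref{CHimpliesnontrivialsolution} is a theorem of $\LA_{-}+\det(AB)=\det(A)\det(B)$, but it is stated over $\LAP_{-}$ (its proof uses $\ch(A)$ and powers $A^{m}$); this is harmless since both interpretations yield at least $\LAP_{-}$, but you should correct the label.
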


\begin{cor}[$\LAPPD$]\label{dimension}
 For any $A \in \Mat_{\FF[X]}(n,n+1)$, there exists $b \in \FF[X]^{n+1}\setminus \{ 0 \}$ such that $A*_{pol}b=0$.
 
 Regarding $A$ as a set of $(n+1)$-column vectors of dimension $n$, it means that they should be linearly dependent.
 
 This also holds if we replace $\FF[X]$ by $\FF(X)$ and $*_{pol}$ by $*_{rat}$.
\end{cor}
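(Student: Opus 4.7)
}
The plan is to reduce the rectangular case to the square case of Corollary \ref{quasiinverse} by extending $A$ by a zero row. Given $A \in \Mat_{\FF[X]}(n, n+1)$, define the square matrix
\[
 \widetilde{A} := \lamt_{ij}\langle n+1, n+1, \cond_{\fieldsort}^{i \leq n,\, i}(i,1,0)\, \extract_{pol}(A,i,j) \rangle \in \Mat_{\FF[X]}(n+1,n+1),
\]
so that the first $n$ rows of $\widetilde{A}$ coincide with the rows of $A$ and the last row is identically $0_{pol}$. Here I am working inside the interpreted model $\langle \MM, \FF[X], \Mat_{\FF[X]} \rangle$, which by Proposition \ref{[]pol is an interpretation} and the definition of $\LAPPD$ satisfies $\LAP + \det(AB)=\det(A)\det(B)$; hence by \cite{Soltys}, cofactor expansion is available in this model.

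Next I would establish $\det_{pol}(\widetilde{A}) =_{pol} 0_{pol}$ by cofactor expansion along the last row of $\widetilde{A}$: each summand has the factor $\extract_{pol}(\widetilde{A}, n+1, j) =_{pol} 0_{pol}$, so the sum collapses to $0_{pol}$. Applying the polynomial version of Corollary \ref{quasiinverse} to $\widetilde{A}$, one obtains $B \in \Mat_{\FF[X]}(n+1, n+1) \setminus \{0_{pol}\}$ with $\widetilde{A} *_{pol} B =_{pol} \det_{pol}(\widetilde{A}) *_{pol} (I_{n+1}, 0) =_{pol} 0_{pol}$.

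Since $B \neq_{pol} 0_{pol}$, there exist $i_{0} \leq n+1$, $j_{0} \leq n+1$, and $k_{0} \leq \bold{b}[B]$ with $\coeff(\extract_{pol}(B, i_{0}, j_{0}), k_{0}) \neq 0$; by $\Sigma^{B}_{0}$-minimization (a consequence of Lemma \ref{characteristicfunction}) pick the least $j^{*}$ for which some such $i_{0}, k_{0}$ exist, and let $b \in \Mat_{\FF[X]}(n+1,1)$ be the $j^{*}$-th column of $B$ (which is an $\mathcal{L}_{\LAP}$-term via $\lamt$). Then $b \neq_{pol} 0_{pol}$ by construction, and $\widetilde{A} *_{pol} b =_{pol} 0_{pol}$ since it is the $j^{*}$-th column of $\widetilde{A} *_{pol} B$. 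Restricting to the first $n$ rows (which are those of $A$) yields $A *_{pol} b =_{pol} 0_{pol}$, as required. The $\FF(X)$ version is verbatim the same argument, substituting $*_{rat}$, $\det_{rat}$, $(1, I_{n+1}, 0)$, and the $\FF(X)$ clause of Corollary \ref{quasiinverse}, and using the interpretation $\llbracket \cdot \rrbracket_{rat}$ in place of $\llbracket \cdot \rrbracket_{pol}$.

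The only step requiring care is Step 2 (vanishing of $\det_{pol}(\widetilde{A})$): it relies on cofactor expansion being available for polynomial-coefficient matrices inside $\LAPPD$. This is not an obstacle, however, because $\llbracket \cdot \rrbracket_{pol}$ pushes the axiom $\det(AB)=\det(A)\det(B)$ of $\LAPPD$ into the interpreted model, which by the equivalences \ref{hardstart}--\ref{hardend} of \cite{Soltys} entails cofactor expansion there; everything else is a direct assembly of terms already built in \S\ref{A formalization of polynomials in LAP}--\S\ref{Rational functions}.
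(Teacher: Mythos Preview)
Your proposal is correct and follows essentially the same approach as the paper: pad $A$ with a zero row to obtain a square matrix with vanishing determinant, apply Corollary~\ref{quasiinverse} to get a nonzero $B$ annihilated by it, and extract a nonzero column. The only cosmetic differences are that the paper adds the zero row at the top rather than the bottom, and is terser about why the determinant vanishes and how the nonzero column is selected.
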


\begin{proof}
In this proof, we omit $*_{pol}$ for readability.
 Let $C$ be the matrix obtained from $A$ by adding a row vector $0$ as the first row.
 We see that 
 \[C \in \Mat_{\FF[X]}(n+1,n+1).\]
 Take $B \neq 0$ such that $CB=\det (C) I$ or $CB = 0$.
 Since $\det(C)$ = 0 because $C$ has $0$ as the first row, we see that $CB=0$.
 $B \neq 0$ implies existence of a nontrivial column vector $b$ of $B$, which satisfies $Cb = 0$.
\end{proof}

\begin{lemma}[$\LAPPD$]\label{zerointersection}
For $A \in \Mat_{\FF}(m,n)$,  we have
  \[\ker \polize (A) \cap \im \polize (A) = 0\]
   in $\FF(X)^{m+n}$, that is, 
  for all $v(X) \in \FF(X)^{m+n}$,
  \[ \polize (A) *_{rat} (\polize (A) *_{rat} v ) =_{rat} 0 \rightarrow \polize (A) *_{rat} v =0.\]
\end{lemma}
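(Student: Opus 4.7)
The plan is to exploit the structural fact that $\polize(A) = \chi_{m+n}\symm(A)$, where $\symm(A)$ is symmetric and $\chi_{m+n}$ is the invertible-over-$\FF(X)$ diagonal matrix with entries $X^0, X^1, \ldots, X^{m+n-1}$. The bilinear form $B(x,y) = x^\top \chi_{m+n}^{-1} y$ renders $\polize(A)$ self-adjoint, which in turn forces any $w \in \ker\polize(A) \cap \im\polize(A)$ to satisfy $B(w,w)=0$; the remaining task is to show that this weighted sum-of-squares identity can only hold trivially.

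Concretely, I would suppose $w = \polize(A) v$ with $\polize(A) w = 0$. Since $\det(\chi_{m+n}) = X^{(m+n)(m+n-1)/2}$ is nonzero in $\FF(X)$, the matrix $\chi_{m+n}$ is invertible over $\FF(X)$, and $\polize(A) w = 0$ gives $\symm(A) w = 0$. Combining this with the symmetry of $\symm(A)$ and the identity $\symm(A) v = \chi_{m+n}^{-1} \polize(A) v = \chi_{m+n}^{-1} w$, one computes
\[
0 \;=\; v^\top \symm(A)\, w \;=\; (\symm(A) v)^\top w \;=\; (\chi_{m+n}^{-1} w)^\top w \;=\; \sum_{i=1}^{m+n} X^{-(i-1)} w_i^2
\]
in $\FF(X)$. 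Clearing a common denominator of the $w_i$ and multiplying through by $X^{m+n-1}$, this becomes the polynomial identity $\sum_{i=1}^{m+n} X^{m+n-i}\, p_i(X)^2 = 0$ in $\FF[X]$ with $p_i \in \FF[X]$.

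It then remains to derive $p_i = 0$ for every $i$, which gives $w=0$. Splitting the sum according to the parity of $m+n-i$ and absorbing each even power of $X$ into a square yields a decomposition $E(X) + X\cdot O(X) = 0$ with $E = \sum_j Q_j^2$ and $O = \sum_k R_k^2$ both sums of squares of polynomials in $\FF[X]$. Evaluating at $X = 0$ gives $\sum_j Q_j(0)^2 = 0$ in $\FF$; for $\FF = \QQ$ (the intended application) this forces every $Q_j(0) = 0$, hence $X \mid Q_j$ for each $j$. Dividing $Q_j = X Q_j'$ and repeating the argument on the resulting identity of the same shape produces a descent on $\max_j \deg Q_j$, which eventually drives every $Q_j$ and, by the symmetric treatment of the odd-parity piece, every $R_k$ to zero; thus $p_i = 0$ and $w = 0$.

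The main obstacle is precisely the ``sum of squares cancellation'' rule in $\FF$, namely that $\sum a_j^2 = 0$ forces every $a_j = 0$---a property of $\QQ$ (or any formally real field) rather than a general consequence of the $\LAP$ field axioms. Everything else---the symmetric bilinear-form reduction, the passage to $\FF[X]$, and the parity/descent manipulations---goes through in $\LAPPD$ for any field $\FF$ admitted by $\LAP$, and Proposition~\ref{identitytheorem} provides the underlying identity-theorem infrastructure used to formalize each step of the $X$-adic descent.
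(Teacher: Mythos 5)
Your reduction to the single-variable weighted sum-of-squares identity is where the argument breaks, and you have correctly located the obstacle yourself: the rule ``$\sum_j a_j^2 = 0$ in $\FF$ forces every $a_j = 0$'' is not an axiom of $\LAP$. But this is not a caveat about the intended application --- it is fatal to the lemma as stated. The lemma is a theorem of $\LAPPD$, whose $\fieldsort$ is a \emph{generic} field, and the paper later instantiates it with $\FF = \ZZ_{2}$ and $\FF = \ZZ_{3}$ (the Grolmusz construction in \S 9 needs $\rank_{\ZZ_{2}}$ and $\rank_{\ZZ_{3}}$, which rest on Theorem \ref{basic property} and hence on this lemma). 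Moreover, over $\ZZ_{2}$ the identity you reduce to genuinely has nontrivial solutions: taking $u_1 = X$, $u_3 = 1$ and all other $u_i = 0$ gives $X^{0}u_1(X)^2 + X^{2}u_3(X)^2 = X^2 + X^2 = 0$. So no parity-splitting or $X$-adic descent can recover $u=0$ from $\sum_i X^{i-1}u_i(X)^2 = 0$ alone; the information retained after pairing with $v(X)^{t}$ is insufficient outside formally real fields.

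The paper keeps strictly more information by introducing a second indeterminate. From $S\chi_{m+n}Sv = 0$ (with $S = \symm(A)$, after clearing denominators so $v \in \FF[X]^{m+n}$) it multiplies on the left by $v(Y)^{t}$ --- the vector $v$ with $X$ renamed to $Y$ --- rather than by $v(X)^{t}$. Since $S$ has entries in $\FF$, this yields $\sum_{i} u_i(Y)\,X^{i-1}\,u_i(X) = 0$ in $\FF(X)[Y]$, where $u = Sv$. Letting $d$ be the maximal degree among the $u_i$ and $i_1$ the largest index attaining it, the coefficient of $Y^{d}X^{i_1-1+d}$ in this expression is the \emph{single} square $C_{i_1}^{2}$ of the nonzero leading coefficient of $u_{i_1}$; a single square of a nonzero element cannot vanish in any field, which gives the contradiction. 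To repair your proof you would have to either restrict the lemma to formally real $\FF$ (breaking the applications in \S 9) or adopt this two-variable coefficient-extraction step in place of the sum-of-squares descent.
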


\begin{proof}
 In this proof, we omit $*_{rat}$ for readability.
 Furthermore, since we are going to use two indeterminates, we follow the notation of Remark \ref{multivariaterationals} and Definition \ref{multivariatepolynomials}.
 We identify the given indeterminate $X$ with $X_{1}$, and we denote the second indeterminate $X_{2}$ with $Y$.
 
 First, by clearing the denominators (which amounts to computing the bounded product of the denominators) of each component of $v$, we may assume that $v \in \FF[X]^{m+n}$. (Formally speaking, $v$ is in the image of the natural embedding of $\FF[X]^{m+n}$ into $\FF(X)^{m+n}$ induced by Lemma \ref{polisrat}.)
 By definition, 
 \[\polize (A) = \chi_{m+n} S,\]
  where $S:=\symm(A) \in \Mat_{\FF(X)}(m+n,m+n)$.
 Assume
 \[\chi_{m+n}  S \chi_{m+n}  S v = 0.\]
  Hence,
  \[S \chi_{m+n}  S v = 0.\]
  Therefore, embedding $\FF(X)$ into $\FF(X)[Y]$ naturally, we have
  \[v^{t}(Y) S \chi_{m+n} S v = 0\]
  in $\Mat_{\FF(X)[Y]}(m+n,m+n)$.
  Here, $v(Y)$ is the vector in $\Mat_{\FF(X)[Y]}(m+n,1)$ obtained by substituting $Y$ for $X$ in each component of $v$ (Lemma \ref{changeofvar}).
  It means 
  \[ u^{t}(Y) \chi_{m+n} u = 0 \quad (u := Sv) .\]
  (Note that each component of $S$ is in $\FF$, or the image of the natural embedding of $\FF$ into $\FF(X)[Y]$.)
  Therefore,
  \begin{align}\label{criticaleq}
  g:=\sum_{i=1}^{m+n} u_{i}(Y) X^{i-1} u_{i}(X)=0,
  \end{align}
  where each $u_{i}(X)$ is the $i$-th component of $v$, which is a polynomial in $\FF(X)$, and $u_{i}(Y)$ is obtained by $X \mapsto Y$ formalized in Lemma \ref{changeofvar}.
  
  Consider the largest $i_{1}$ such that $\deg u_{i}(X)$ evaluated in $\FF[X]$ takes its maximum.
  We can see that it coincides with $\deg u_{i_{1}}(Y)$ evaluated in $\FF(X)[Y]$.
  Let $d$ be the maximum.
  
  We consider ``the coefficient of $Y^{d}X^{i_{1}-1+d}$'' (formally speaking,\\
   $\coeff(\llbracket \coeff \rrbracket_{pol}(g,d),i_{1}-1+d)$) of the LHS of (\ref{criticaleq}).
  More formally, we consider the coefficient of $X^{i_{1}-1+d}$ of the coefficient of $Y^{d}$.
  For each $i = 1, \ldots, m+n$, if $\deg u_{i}(X) \neq d$, then the coefficient $f(X)$ of $Y^{d}$ in $u_{i}(Y) X^{i-1} u_{i}(X)$ is $0$ by the maximality of $d$, and therefore so is the coefficient of $X^{i_{1}-1+d}$ in $f(X)$.
  If $\deg u_{i}(X) = d$, then the coefficient of $Y^{d}$ in $u_{i}(Y) X^{i-1} u_{i}(X)$ is $C_{i}X^{i-1} u_{i}(X)$, where $C$ is a nonzero constant.
  Especially if $i=i_{1}$, it is $C_{i_{1}}X^{i_{1}-1} u_{i_{1}}(X)$, and its coefficient of $X^{i_{1}-1+d}$ is $C_{i_{1}}^{2}$.
  Otherwise, by the maximality of $i_{1}$, the coefficient of $X^{i-1+d}$ in $C_{i}X^{i-1} u_{i}(X)$ is $0$.
  
Therefore, extracting the coefficient of $X^{i_{1}-1+d}$ of the coefficient of $Y^{d}$ in the both sides of (\ref{criticaleq}), we have $C_{i_{1}}^{2}=0$, a contradiction.
\end{proof}

\begin{lemma}[$\LAPPD$]\label{decomposition}
 For any $A \in \Mat_{\FF(X)}(m,n)$ and $v \in \FF(X)^{m+n}$, let $C := \polize (A)$.
 Then there uniquely exists 
 \[(u_{1},u_{2}) \in \ker C \times \im C \subset (\FF(X)^{m+n})^{2}\]
  such that
 \[ v = u_{1} + u_{2}.\]
\end{lemma}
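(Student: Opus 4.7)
The plan is to construct the decomposition $v=u_1+u_2$ explicitly from Cayley--Hamilton applied to $C:=\polize(A)\in\Mat_{\FF(X)}(m+n,m+n)$, and to use Lemma~\ref{zerointersection} twice: once inside an induction to guarantee that $u_1$ really lies in $\ker C$, and once at the end to obtain uniqueness. Throughout, I work with polynomials in a fresh variable $Y$ over $\FF(X)$ and matrix substitution of $C$ into them, invoking Remark~\ref{multivariaterationals} so that $\langle\MM,\FF(X)[Y],\Mat_{\FF(X)[Y]}\rangle\models\LAP_{-}$ and the substitution calculus of \S\ref{Matrix substitution to polynomials} applies (and is a ring homomorphism in $C$).

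First, by Cayley--Hamilton (available in $\LAPPD$ through the interpretation $\llbracket\cdot\rrbracket_{rat}$ combined with the results of \cite{The proof complexity of linear algebra}), $\ch(C)(C)=0$. Using $\mul$ from Definition~\ref{formalizedMulmuley}, write $\ch(C)(Y)=Y^{k}\widetilde{g}(Y)$ with $k=\mul(\ch(C))\leq m+n$ and $\widetilde{g}(0)\neq 0$ in $\FF(X)$; then $C^{k}\widetilde{g}(C)v=0$.

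Next I claim $C\widetilde{g}(C)v=0$. If $k=0$ this is immediate from $\widetilde{g}(C)v=\ch(C)(C)v=0$. If $k\geq 1$, I prove by $\Sigma^{B}_{0}$ downward induction on $j=k,k-1,\ldots,1$ that $C^{j}\widetilde{g}(C)v=0$; the statement is $\Sigma^{B}_{0}$ in $j$ by Lemma~\ref{characteristicfunction}, and $j$ ranges over a bounded interval. The base case $j=k$ is Cayley--Hamilton. For the inductive step with $j\geq 2$, set $w:=C^{j-1}\widetilde{g}(C)v$; then $Cw=C^{j}\widetilde{g}(C)v=0$ by the hypothesis, while $w=C\cdot(C^{j-2}\widetilde{g}(C)v)\in\im C$, so Lemma~\ref{zerointersection} forces $w=0$. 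Specialising to $j=1$ gives $C\widetilde{g}(C)v=0$.

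Now I write $\widetilde{g}(Y)=\widetilde{g}(0)+Y\,h(Y)$ in $\FF(X)[Y]$ and define
\[
u_1:=\widetilde{g}(0)^{-1}\widetilde{g}(C)v,\qquad u_2:=v-u_1=-\widetilde{g}(0)^{-1}\,C\,h(C)\,v=C\bigl(-\widetilde{g}(0)^{-1}h(C)v\bigr).
\]
By construction $u_1+u_2=v$; since matrix substitution is a ring homomorphism, $Cu_1=\widetilde{g}(0)^{-1}C\widetilde{g}(C)v=0$, so $u_1\in\ker C$, and the last displayed equality exhibits $u_2\in\im C$. For uniqueness, if $u_1+u_2=u_1'+u_2'$ with $u_i,u_i'$ in the respective subspaces, then $u_1-u_1'=u_2'-u_2\in\ker C\cap\im C$, which is $\{0\}$ by Lemma~\ref{zerointersection}; hence $(u_1,u_2)=(u_1',u_2')$.

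The main obstacle is the bookkeeping in the downward induction: one must check that the predicate ``$C^{j}\widetilde{g}(C)v=0$'' is expressible as a $\Sigma^{B}_{0}$-formula in $j$ (using that $C^{j}$ is given by $\mathtt{P}_{rat}$, that matrix substitution is a term, and that equality of vectors in $\Mat_{\FF(X)}$ is $\Sigma^{B}_{0}$ by Lemma~\ref{characteristicfunction}) and that the ambient polynomial arithmetic in $\FF(X)[Y]$ used to form $\widetilde{g}$, $h$, and their evaluations at $C$ is the one transferred from \S\ref{A formalization of polynomials in LAP} via $\llbracket\cdot\rrbracket_{rat}$, so that the homomorphism properties of Lemmas in \S\ref{Matrix substitution to polynomials} apply over $\FF(X)$ rather than only over $\FF$. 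Once this transfer is in place, the rest is a routine verification.
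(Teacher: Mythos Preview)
Your proof is correct and follows essentially the same approach as the paper: both use Cayley--Hamilton to get $C^{k}\widetilde{p_{C}}(C)v=0$, then invoke Lemma~\ref{zerointersection} in a (downward) induction to reduce the exponent to $1$, and finally define $u_{1}=\widetilde{p_{C}}(0)^{-1}\widetilde{p_{C}}(C)v$ and $u_{2}=v-u_{1}$, exhibiting $u_{2}\in\im C$ via the factorisation $\widetilde{p_{C}}(Y)-\widetilde{p_{C}}(0)=Yh(Y)$. Your $\widetilde{g}$ is the paper's $\widetilde{p_{C}}$, and your added remarks on the $\Sigma^{B}_{0}$-bookkeeping make explicit what the paper leaves implicit.
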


\begin{proof}
 The uniqueness quickly follows from Lemma \ref{zerointersection}.

 Now we show the existence.
 For readability, we denote $\ch(M)$ ($M \in \Mat_{\FF(X)}(m+n,m+n)$) by $p_{M}$.
 It lies in $\Mat_{\FF(X)[Y]}$.
 When we want to indicate the indeterminate, we write $p_{M}(Y)$.

 We show that 
 \[ u_{1} = \widetilde{p_{C}}(0) ^{-1}(\widetilde{p_{C}} (C) v), \ u_{2}:= \widetilde{p_{C}}(0)^{-1}( \widetilde{p_{C}}(0) - \widetilde{p_{C}} (C)) v\]
 suffice.
 
 It is clear that 
 \[v=u_{1} + u_{2} .\]
 
 We show $u_{1} \in \ker C$ next.
 First, by Cayley-Hamilton theorem,
 \[C^{\mul(p_{C})} \widetilde{p_{C}} (C) v = 0.\]
 Therefore, by Lemma \ref{zerointersection} and open induction,
 \[C^{\mul(p_{C}) -i} \widetilde{p_{C}} (C) v = 0\]
 holds for $i \in [0, \mul(p_{C})-1]$.
 
 Hence, 
 \[Cu_{1} = 0.\]
 
 Lastly, we show $u_{2} \in \im C$.
 Note that 
 \[\mul (\widetilde{p_{C}}(0) - \widetilde{p_{C}} (T)) \geq 1.\]
 Hence, we can write
 \[ \widetilde{p_{C}}(0) - \widetilde{p_{C}} (C) = C R.\]
 Therefore,
 \[u_{2} = CRv \in \im C.\]
 
\end{proof}

\begin{cor}[$\LAPPD$]\label{uniquedecomp}
 For any $A \in \Mat_{\FF[X]}(m,n)$ and $v \in \FF[X]^{m+n}$, let $C := \polize (A)$.
 Then there uniquely exists 
 \[(u_{1},u_{2}) \in \ker C \times \im C \cap (\FF[X]^{m+n})^{2}\]
  such that
 \[  \widetilde{p_{C}}(0) v = u_{1} + u_{2}.\]
\end{cor}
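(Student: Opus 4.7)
The plan is to lift Lemma \ref{decomposition} from $\FF(X)$ to $\FF[X]$ by multiplying through by the scalar $\widetilde{p_C}(0)$, thereby absorbing the inverse $\widetilde{p_C}(0)^{-1}$ that appeared in the explicit formulas for $u_1,u_2$ of that lemma. Since $A \in \Mat_{\FF[X]}(m,n)$, both $\symm(A)$ and $\chi_{m+n}$ lie in $\Mat_{\FF[X]}$, so $C = \polize(A) \in \Mat_{\FF[X]}(m+n, m+n)$; hence $p_C = \ch(C) \in \FF[X][Y]$, so $\widetilde{p_C} \in \FF[X][Y]$ and in particular $\widetilde{p_C}(0) \in \FF[X]$.

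First I would set
\[ u_1 := \widetilde{p_C}(C)\, v, \qquad u_2 := \bigl(\widetilde{p_C}(0) - \widetilde{p_C}(C)\bigr)\, v. \]
Because $C$ has polynomial entries and $\widetilde{p_C}$ has polynomial coefficients, both $\widetilde{p_C}(C)$ and $\widetilde{p_C}(0) - \widetilde{p_C}(C)$ are matrices in $\Mat_{\FF[X]}$, so $u_1, u_2 \in \FF[X]^{m+n}$; the identity $u_1 + u_2 = \widetilde{p_C}(0)\, v$ holds by construction. The membership $u_1 \in \ker C$ is the Cayley-Hamilton argument already given in the proof of Lemma \ref{decomposition}: from $C^{\mul(p_C)} \widetilde{p_C}(C) v = 0$, repeated application of Lemma \ref{zerointersection} with open induction peels off the powers of $C$ down to $C \widetilde{p_C}(C) v = 0$. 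For $u_2 \in \im C$, inside $\FF[X][Y]$ I would factor $\widetilde{p_C}(0) - \widetilde{p_C}(Y) = Y\,R(Y)$ explicitly with $R(Y) \in \FF[X][Y]$ (by collecting the nonconstant terms of $\widetilde{p_C}(Y)$), then specialize $Y := C$ using the ring-homomorphism property of matrix substitution developed in Section \ref{Matrix substitution to polynomials} to obtain $u_2 = C \cdot (R(C) v)$ with $R(C) v \in \FF[X]^{m+n}$.

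Uniqueness is immediate: if $(u_1', u_2')$ is another such pair with $u_1' + u_2' = \widetilde{p_C}(0)\, v$, then $u_1 - u_1' = u_2' - u_2$ lies in $\ker C \cap \im C$ inside $\FF(X)^{m+n}$, and this intersection vanishes by Lemma \ref{zerointersection}. The only genuinely technical point is to confirm that the factorization $\widetilde{p_C}(0) - \widetilde{p_C}(Y) = Y\,R(Y)$ and the ensuing substitution $Y \mapsto C$ both go through inside $\LAPPD$; but this is a routine application of the interpretation $\llbracket \cdot \rrbracket_{pol}$ together with the matrix-substitution machinery of Section \ref{Matrix substitution to polynomials}, so the formal verification should track the informal argument line by line.
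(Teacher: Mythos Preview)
Your proposal is correct and follows essentially the same approach as the paper: the paper applies Lemma \ref{decomposition} to $v$ and then multiplies the resulting decomposition by $\widetilde{p_C}(0)$, observing that by the explicit formulas from that lemma the new components are precisely your $u_1 = \widetilde{p_C}(C)\,v$ and $u_2 = (\widetilde{p_C}(0) - \widetilde{p_C}(C))\,v$, which land in $\FF[X]^{m+n}$. Your write-up is slightly more self-contained in that you re-verify $u_1 \in \ker C$ and $u_2 \in \im C$ directly rather than inheriting them from the lemma via scalar multiplication, but the underlying computation is identical.
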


\begin{proof}
By the above lemma, we can write
\[v = u_{1} + u_{2}.\]
Consider 
\[ \widetilde{p_{C}}(0) v =  \widetilde{p_{C}}(0) u_{1}+ \widetilde{p_{C}}(0) u_{2},\]
then the right-hand side gives what we want by the definition of $u_{1},u_{2}$.
\end{proof}

\begin{lemma}[$\LAPPD$]\label{solvability}
There exists a $\Sigma^{B}_{0}$-definable relation $\solve (A,b)$ such that for any $A \in \Mat_{\FF} (m,n)$ and $b \in \FF^{m}$,
 \[\solve (A,b) \longleftrightarrow \exists x \in \FF^{n}.\ Ax = b.\]
\end{lemma}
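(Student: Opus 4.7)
The plan is to use the decomposition from Lemma \ref{decomposition} to characterize solvability $\Sigma^{B}_{0}$-definably. Set $C := \polize(A)$, form the augmented vector $b' := \begin{bmatrix} b \\ 0 \end{bmatrix} \in \FF^{m+n}$, and let $v := \chi_{m+n} b' \in \Mat_{\FF(X)}(m+n,1)$. Writing $p_{C} := \llbracket \ch \rrbracket_{rat}(C)$ and $\widetilde{p_{C}}$ for the factor with $p_{C}(Y) = Y^{\mul(p_{C})}\widetilde{p_{C}}(Y)$, I would define
\[\solve(A, b) \;:\Longleftrightarrow\; \widetilde{p_{C}}(C)\cdot v =_{rat} 0.\]
This is $\Sigma^{B}_{0}$: $\widetilde{p_{C}}$ is extracted from $p_{C}$ via $\mul$ (so it is $\Sigma^{B}_{0}$-definable), matrix substitution into a polynomial is the term from \S \ref{Matrix substitution to polynomials} lifted through $\llbracket \cdot \rrbracket_{rat}$, and $=_{rat}$ is a bounded equality.

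For the easy direction ($\Leftarrow$), given $x \in \FF^{n}$ with $Ax = b$, set $w := \begin{bmatrix} 0 \\ x \end{bmatrix}$; then $\symm(A)w = b'$, so $Cw = \chi_{m+n}\symm(A)w = v$, showing $v \in \im C$. By the uniqueness in Lemma \ref{decomposition}, the kernel component of $v$ is $0$, and since that component equals $\widetilde{p_{C}}(0)^{-1}\widetilde{p_{C}}(C)v$ with $\widetilde{p_{C}}(0) \in \FF(X)\setminus\{0\}$, we conclude $\widetilde{p_{C}}(C)v = 0$.

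For $\Rightarrow$, assume $\widetilde{p_{C}}(C)v = 0$; then again by Lemma \ref{decomposition} the kernel component of $v$ vanishes, hence $v \in \im C$. Explicitly, writing $\widetilde{p_{C}}(0) - \widetilde{p_{C}}(Y) = Y R(Y)$, the vector $w := \widetilde{p_{C}}(0)^{-1} R(C)v$ satisfies $Cw = v$, whence $\symm(A)w = b'$. Decomposing $w = \begin{bmatrix} y \\ x' \end{bmatrix}$ with $x' \in \FF(X)^{n}$ yields $Ax' = b$ over $\FF(X)$. The main obstacle is then to convert this $\FF(X)$-solution into a genuine $\FF$-solution. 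Representing $x'$ as a triple $(g, A'', d)$ with $g \in \FF[X]\setminus\{0\}$ and $A''$ encoding polynomial numerators $f_{1}, \ldots, f_{n} \in \FF[X]$, the identity $Ax' = b$ in $\FF(X)^{m}$ becomes $A(f_{1}, \ldots, f_{n})^{t} = gb$ in $\FF[X]^{m}$. Matching coefficients of $X^{k}$ yields $A(f_{1,k}, \ldots, f_{n,k})^{t} = g_{k} b$ in $\FF^{m}$ for every $k$, and since $g \neq 0$ we can use $\ct$ to pick the least $k^{*}$ with $g_{k^{*}} \neq 0$; then $x := g_{k^{*}}^{-1}(f_{1,k^{*}}, \ldots, f_{n,k^{*}})^{t} \in \FF^{n}$ satisfies $Ax = b$. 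The hard part is carrying out this coefficient-matching and pivot-selection carefully within $\LAPPD$; this is essentially a formalization of the standard fact that solvability of an $\FF$-linear system is preserved under extension of the ground field.
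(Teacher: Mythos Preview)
Your proposal is correct and follows essentially the same approach as the paper: both define $\solve(A,b)$ by the condition $\widetilde{p_{C}}(C)\,\chi_{m+n}\widetilde{b}=0$ and use the decomposition of $\FF(X)^{m+n}$ into $\ker C \oplus \im C$ from Lemma~\ref{decomposition}/Corollary~\ref{uniquedecomp}. The paper packages the argument as a chain of equivalences passing through $\FF[X]$-solutions of $\symm(A)\widetilde{x}=\widetilde{p_{C}}(0)\widetilde{b}$ and descends to $\FF$ by substituting $X\mapsto 0$, whereas you treat the two directions separately, work over $\FF(X)$, and descend by clearing the common denominator, matching coefficients of $X^{k}$, and picking a $k^{\ast}$ with $g_{k^{\ast}}\neq 0$; your version is slightly more explicit about this descent step, but the content is the same.
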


\begin{proof}
Let 
\begin{align*}
\widetilde{b} := 
 \begin{bmatrix}
 b \\
 0
 \end{bmatrix}
  \in \FF^{m+n},
\end{align*}
and 
\[C := \polize (A).\]

Note that the following equivalences hold:
\begin{align*}
\exists x \in \FF^{n}.\ Ax=b
&\Longleftrightarrow \exists \widetilde{x} \in \FF^{m+n}.\ \symm (A) \widetilde{x} = \widetilde{b} \\
&\Longleftrightarrow \exists \widetilde{x} \in \FF[X]^{m+n}.\ \symm (A) \widetilde{x} = \widetilde{p_{C}}(0)\widetilde{b} \\
&\Longleftrightarrow \exists \widetilde{x} \in \FF[X]^{m+n}.\ \chi_{m+n} \symm (A) \widetilde{x} = \chi_{m+n} \widetilde{p_{C}}(0)\widetilde{b} \\
&\Longleftrightarrow \exists \widetilde{x} \in \FF[X]^{m+n}.\ C \widetilde{x} = \widetilde{p_{C}}(0)\chi_{m+n} \widetilde{b} \\
&\Longleftrightarrow \widetilde{p_{C}}(C) (\chi_{m+n}\widetilde{b})=0.
\end{align*}
 The second equivalence follows by substituting $0$ for the indeterminate $X$, and doing a division in $\FF$.
 The third equivalence follows by observing that $\chi_{m+n}$ is invertible in $\Mat_{\FF(X)}(m+n,m+n)$.
 The last equivalence follows by recalling that the $u_{1}$ in Corollary \ref{uniquedecomp} is given by $u_{1}:=\widetilde{p_{C}}(C) v$.
 
 Now, the last condition gives the desired $\Sigma^{B}_{0}$-predicate.
\end{proof}

\begin{lemma}[$\LAPPD$]
There exists a $\Sigma^{B}_{0}$-definable function $\sol(A,b)$ such that for any $A \in \Mat_{\FF} (m,n)$ and $b \in \FF^{m}$,
 \[\solve (A,b) \Longrightarrow \sol(A,b) \in \FF^{n} \& \ A \sol(A,b) = b.\]
\end{lemma}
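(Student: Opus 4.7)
The plan is to make the existence proof of Lemma \ref{solvability} constructive by reading off an explicit polynomial witness and decoding it to $\FF$. Under the hypothesis $\solve(A,b)$, set $C := \polize(A)$ and $\widetilde{b} := [b^{t},0]^{t} \in \FF^{m+n}$; from the final equivalence in Lemma \ref{solvability} the hypothesis is $\widetilde{p_{C}}(C)\chi_{m+n}\widetilde{b} =_{pol} 0$, where $\widetilde{p_{C}}(Y) =_{pol} a_{0} + a_{1}Y + \cdots + a_{k}Y^{k} \in \FF[X][Y]$, and $a_{0} = \widetilde{p_{C}}(0) \neq_{pol} 0_{pol}$ in $\FF[X]$ by the very definition of $\mul$.

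I would rewrite this Cayley--Hamilton-style identity as $C R \chi_{m+n}\widetilde{b} =_{pol} a_{0}\chi_{m+n}\widetilde{b}$ with $R := -(a_{1}I + a_{2}C + \cdots + a_{k}C^{k-1}) \in \Mat_{\FF[X]}$, and set $\widetilde{x} := R \chi_{m+n}\widetilde{b} \in \FF[X]^{m+n}$. Unfolding $C = \chi_{m+n}\symm(A)$, the resulting identity $\chi_{m+n}\symm(A)\widetilde{x} =_{pol} a_{0}\chi_{m+n}\widetilde{b}$ cancels row-wise: the $j$-th diagonal entry of $\chi_{m+n}$ is $X^{j-1}$, a non-zero-divisor in $\FF[X]$, so the equation collapses to $\symm(A)\widetilde{x} =_{pol} a_{0}\widetilde{b}$ in $\FF[X]^{m+n}$.

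Next I would extract the coefficient of $X^{i}$ on both sides: for each $i$, letting $x_{i} \in \FF^{m+n}$ collect the $i$-th coefficients of the components of $\widetilde{x}$, this gives the scalar identity $\symm(A) x_{i} = (a_{0})_{i}\widetilde{b}$ in $\FF^{m+n}$ (because $\widetilde{b}$ has $\FF$-entries, so only the $i$-th coefficient of $a_{0}$ contributes). Since $a_{0} \neq_{pol} 0_{pol}$, a bounded search locates the least $i_{0}$ with $(a_{0})_{i_{0}} \neq 0$; then $y := (a_{0})_{i_{0}}^{-1} x_{i_{0}} \in \FF^{m+n}$ solves $\symm(A) y = \widetilde{b}$. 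Writing $y = [y_{1}^{t},y_{2}^{t}]^{t}$ with $y_{2} \in \FF^{n}$, the block form of $\symm(A)$ and of $\widetilde{b}$ forces $Ay_{2} = b$, and I would define $\sol(A,b) := y_{2}$.

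Every piece of this construction is either an $\mathcal{L}_{\LAP}$-term ($\polize$, $\chi_{m+n}$, matrix powering, linear combinations) or a $\Sigma^{B}_{0}$-definable function built from $\coeff$, $\mul$, $\llbracket \cdot \rrbracket_{rat}$, and bounded search, followed by a field division, so $\sol$ is $\Sigma^{B}_{0}$-definable (equivalently open-definable) over $\LAPPD$. The main obstacle I expect is the row-wise cancellation of $\chi_{m+n}$, which reduces to verifying cancellativity of $X$ in $\FF[X]$; this should follow from a degree / leading-coefficient argument formalizable in $\LAP$ via Proposition \ref{[]pol is an interpretation} together with the field axioms for $\FF$. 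Once that lemma is in place, the remainder is routine coefficient-matching bookkeeping.
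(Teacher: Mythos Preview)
Your argument is correct and tracks the paper's strategy closely: both start from $\widetilde{p_C}(C)\,\chi_{m+n}\widetilde{b}=0$, factor out $C$ to obtain $Cv=\widetilde{p_C}(0)\,\chi_{m+n}\widetilde{b}$ with $v=R\chi_{m+n}\widetilde{b}$, cancel $\chi_{m+n}$ to reach $\symm(A)v=\widetilde{p_C}(0)\,\widetilde{b}$ over $\FF[X]$, and finally decode a solution in $\FF$. The only substantive difference is the decoding step. The paper evaluates at $X=0$ and divides by $\widetilde{p_C}(0)(0)$; you instead perform a bounded search for the least $i_0$ with $\coeff(\widetilde{p_C}(0),i_0)\neq 0$ and divide by that coefficient. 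Your variant is actually the more careful of the two: since $C|_{X=0}=\chi_{m+n}(0)\symm(A)$ has only its first row nonzero and that row has $(1,1)$-entry $0$, the matrix $C|_{X=0}$ is nilpotent, hence $p_C(Y)|_{X=0}=\pm Y^{m+n}$ and $\widetilde{p_C}(0)(0)=0$ whenever $A\neq 0$; the paper's literal division is then by zero. Your bounded search always succeeds because $\widetilde{p_C}(0)\neq_{pol}0$ by definition of $\mul$. The row-wise cancellation of $\chi_{m+n}$ you flag as the main obstacle is already available: the paper handles it by invertibility of $\chi_{m+n}$ over $\FF(X)$, while the integral-domain property of $\FF[X]$ (item~(7) of the first lemma in Appendix~\ref{proof of []pol}) gives your entrywise cancellation directly.
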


\begin{proof}
 Let 
 \[\widetilde{p_{C}}(C) (\chi_{m+n}\widetilde{b})=0.\]
 Then 
 \begin{align*}
 \widetilde{p_{C}}(0)\chi_{m+n} \widetilde{b}
 &=(\widetilde{p_{C}}(0)-\widetilde{p_{C}}(C))\chi_{m+n} \widetilde{b} \\
 &= Cv,
 \end{align*}
 where $v$ is of the form $R\chi_{m+n} \widetilde{b}$.
 Now, going backwards the equivalences in the previous Lemma \ref{solvability},
 \begin{itemize}
  \item We see $\symm (A) v = \widetilde{p_{C}}(0) \widetilde{b}$.
  \item Substituting $0$ for the indeterminate $X$, and dividing the both sides by $\widetilde{p_{C}}(0) (0)$,
  we obtain 
  \[\symm (A) \left(\widetilde{p_{C}}(0)(0)\right)^{-1} v(0) =  \widetilde{b}.\]
  By definition of $\symm(A)$ and $\widetilde{b}$, extracting the lower $n$ components of $\left(\widetilde{p_{C}}(0)(0)\right)^{-1} v(0)$,
  we obtain the desired term $\sol(A,b)$.
 \end{itemize}
\end{proof}

\begin{lemma}[$\LAPPD$]\label{basisofimage}
There exist terms $\rk(A)$ and $\basis(A)$ such that for any $A \in \Mat_{\FF}(m,n)$,
\begin{enumerate}
\item $\rk(A) \in \Mat_{\FF}(m+1,1)$ is an index vector, and
\item each column of $\basis(A) \in \Mat_{\FF}(m,n)$ is either a column vector of $A$ or a zero vector. 
\end{enumerate}
Moreover, there are $\rk(A)$-many nonzero vectors in $\basis(A)$ which form a basis of $\im(A)$, that is,
 \begin{enumerate}
  \item For $v:= (\delta_{\exists i \leq \row (A).\ \basis(A)_{ij} \neq 0})_{j \in [n]}$,
  \[ \rk(A) = \ct(v, n)\]
  holds. (For the definition of $\delta$, see Lemma \ref{characteristicfunction}.)
  \item There exists $B \in \Mat_{\FF}(n,n)$ such that $\basis(A) B = A$.
  \item If $c \in \Mat_{\FF}(n,1)$ satisfies $\basis(A) c =0$, and $v_{j} =1$ for $v$ above, then $c_{j}=0$.
 \end{enumerate}
\end{lemma}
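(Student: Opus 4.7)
The plan is to build $\basis(A)$ column-by-column in the natural greedy way: keep the $j$-th column $a_{j}$ of $A$ if and only if it is not already a linear combination of the previous columns $a_{1}, \ldots, a_{j-1}$, and otherwise replace it by the zero vector. The predicate ``$a_{j} \in \im(A_{<j})$'' is by Lemma \ref{solvability} equivalent to the $\Sigma^{B}_{0}$-formula $\solve(A_{<j}, a_{j})$, where $A_{<j}$ and $a_{j}$ are the obvious submatrix and column extractions realized by a $\lamt$-construction. Applying Lemma \ref{characteristicfunction} we obtain a characteristic term $\delta_{\solve}$, and we therefore set
\[ \basis(A) := \lamt_{ij}\bigl\langle m,\, n,\, (1 - \delta_{\solve(A_{<j},\, a_{j})})\cdot \extract(A,i,j)\bigr\rangle. \]
Defining $v$ as in the statement of the lemma and setting $\rk(A) := \ct(v, n)$ then yields property~(1) immediately by construction.

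For property~(2) the key invariant, proved by $\Sigma^{B}_{0}$-induction on $j \leq n$, is that for every $k \leq j$ the predicate $\solve(\basis(A)_{\leq j}, a_{k})$ holds. The inductive step splits into two cases depending on whether the $j$-th column of $A$ was kept (direct) or discarded (in which case $\solve(A_{<j}, a_{j})$ holds by construction of $\basis$, and the witness is lifted through the induction hypothesis by extending it with a zero coordinate). Once this invariant is available at $j = n$ we set
\[ B := \lamt_{ij}\bigl\langle n,\, n,\, \extract(\sol(\basis(A),\, a_{j}),\, i,\, 1)\bigr\rangle \]
and read off $\basis(A)\, B = A$ column by column from the defining property of $\sol$ given in Lemma \ref{solvability}.

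The main obstacle is property~(3), the linear independence of the nonzero columns of $\basis(A)$. Given $c$ with $\basis(A)\, c = 0$, we pick the largest index $j^{\ast} \leq n$ satisfying $v_{j^{\ast}} = 1$ and $c_{j^{\ast}} \neq 0$ (if no such index exists there is nothing to prove); such an index is $\Sigma^{B}_{0}$-definable by bounded maximization over $[n]$. Rearranging $\basis(A)\, c = 0$ and using that $v_{j} = 0$ forces $\basis(A)_{\cdot j} = 0$, and that $v_{j} = 1$ with $j > j^{\ast}$ forces $c_{j} = 0$ by maximality, yields
\[ c_{j^{\ast}}\, a_{j^{\ast}} = - \sum_{\substack{j < j^{\ast} \\ v_{j} = 1}} c_{j}\, a_{j}, \]
so dividing by $c_{j^{\ast}}$ exhibits an explicit $x \in \FF^{\, j^{\ast}-1}$ with $A_{<j^{\ast}}\, x = a_{j^{\ast}}$, witnessing $\solve(A_{<j^{\ast}}, a_{j^{\ast}})$. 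But by the construction of $\basis$, $v_{j^{\ast}} = 1$ means precisely $\lnot \solve(A_{<j^{\ast}}, a_{j^{\ast}})$, a contradiction. The delicate point is to carry out this bookkeeping uniformly in $\LAPPD$: the columnwise quantification must be replaced by explicit $\lamt$-constructions, the witness extracted from the rearranged sum must be packaged as an $\FF$-vector by another $\lamt$-term, and the boundary case $j^{\ast} = 1$ (where $A_{<1}$ has zero columns) must be handled separately by noting that $\solve(A_{<1}, a_{1})$ then reduces to $a_{1} = 0$, which combined with $v_{1} = 1$ already yields the contradiction.
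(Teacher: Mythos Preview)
Your construction of $\basis(A)$ and $\rk(A)$ via the greedy selection predicate $\lnot\solve(A_{<j},a_{j})$ is exactly the paper's, and your argument for property~(3) (pick the maximal index $j^{\ast}$ with $v_{j^{\ast}}=1$ and $c_{j^{\ast}}\neq 0$, derive a witness contradicting $v_{j^{\ast}}=1$) matches the paper's as well, with more explicit bookkeeping.

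The only real divergence is in property~(2). You run a $\Sigma^{B}_{0}$-induction on $j$ to establish $\solve(\basis(A)_{\leq j},a_{k})$ for all $k\leq j$, and then define $B$ in one shot via $\sol(\basis(A),a_{j})$. This works, but your description of the inductive step in the discarded case is a bit compressed: the witness $x$ from $\solve(A_{<j},a_{j})$ lives over the original columns $a_{1},\dots,a_{j-1}$, not over $\basis(A)_{\leq j-1}$, so ``lifting through the induction hypothesis'' really means forming $\sum_{l<j}x_{l}\,\sol(\basis(A)_{\leq j-1},a_{l})$ uniformly and only then appending the zero coordinate. This is fine in $\LAPPD$ since $\sol$ is a $\Sigma^{B}_{0}$-definable function, but it deserves to be said explicitly.

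The paper instead avoids this preliminary induction entirely: it sets $b_{j}$ to be (the padding of) $\sol([A_{1},\dots,A_{j-1}],A_{j})$ when $\solve$ holds and $e_{j}$ otherwise, forms $B=[b_{1},\dots,b_{n}]$, and observes by induction on $i$ that the first $i$ columns of $\basis(A)\,B^{i}$ coincide with $A_{1},\dots,A_{i}$; hence $\basis(A)\,B^{n}=A$. The point of this detour is that the witnessing matrix is produced directly by the $\mathtt{P}$-term $B^{n}$, with the combination of local witnesses happening implicitly through matrix powering rather than through an appeal to $\sol$ against the full $\basis(A)$. Your route is arguably more transparent; the paper's is more in the spirit of expressing everything as $\mathcal{L}_{\LAP}$-terms.
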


\begin{proof}
 For each $j \in [n]$, consider the condition
 \begin{align}\label{span?}
  \lnot \exists x \in \FF^{j-1}. [A_{1} , \ldots, A_{j-1}] x = A_{j} 
 \end{align}
 (Here, we set $[A_{1}, \ldots, A_{j-1}] = 1$ if $j=1$. In this case, the condition is equivalent to; $A_{1} \neq 0$).
 By Lemma \ref{solvability}, it is equivalent to the following $\Sigma^{B}_{0}$-formula:
 \[\varphi(A,j) : \equiv \lnot \solve ([A_{1} , \ldots, A_{j-1}], A_{j}).\]
 Let $v(A) \in \Mat_{\FF}(m,1)$ be the characteristic vector of the condition, that is,
 \[v(A) := \lamt_{ij} \langle n,1, \delta_{\varphi}(A,i) \rangle.\]
 (For the definition of $\delta_{\varphi}$, see Lemma \ref{characteristicfunction}.)
 
 Moreover, let $C$ be
 \[C := \lamt_{ij} \langle n,n, \cond^{i=j,(i,j)}_{field}(i,j,v(A)_{i},0) \rangle.\]
 Then
 \[\rk(A) := \ct(v(A),n), \basis(A) := A C\]
 suffice our purpose.

 Indeed, if $w \in \Mat_{\FF}(n,1)$ satisfies $\basis(A) w =0$, checking the largest index $i$ of $w$ such that $\extract (w,i,1) \neq 0$ by open induction,
 we see that it contradicts that the $i$-th column vector of $\basis(A)$ satisfies the condition \ref{span?}.
 
 Moreover, we can construct $B \in \Mat_{\FF}(n,n)$ such that $\basis(A) B = A$.
 Let 
 \begin{align*}
  b_{i} := \delta_{\solve (A,b)} \sol(A,b) + \delta_{\lnot\solve (A,b)} e_{i},
 \end{align*}
 (here, $e_{i}$ is a size-$n$ vector such that the $i$-th component is $1$ and others are $0$) and
 \[ B := [b_{1}, \ldots, b_{n}].\]
 Then it is clear that $\basis(A)B^{i}$ is of the form
 \[[A_{1}, \ldots, A_{i} , *].\]
 Therefore, $B^{m}$ is the desired matrix. 
\end{proof}

\begin{defn}
Let $A \in \Mat_{\FF}(m,n)$. 
For index matrices 
\begin{align*}
U:=[u_{1}, \ldots, u_{s}] \in \Mat_{\FF}(m,s), \ \mbox{and} \
V:=[v_{1}, \ldots, v_{s}] \in \Mat_{\FF}(n,s).
\end{align*}
 We define \textit{$(U,V)$-minor of $A$} as 
 \[A[U:V] := U^{t}AV.\]
 It immediately follows that 
\[ \extract (A[U:V], i,j) = A_{u_{i}v_{j}} \quad (i,j \in [s]).\]
Note that we are identifying index vectors and the corresponding numbers as mentioned in Definition \ref{indexvector}.
\end{defn}

\begin{lemma}[$\LAPPD$]\label{maximalnonsingular}
 Let $A \in \Mat_{\FF}(m,n)$, and $r=\rk(A)$, that is, the unique nonzero component of $\rk(A)$ is the $r$-th.
 Then there exist index matrices 
 \begin{align*}
 U=[u_{1}, \ldots, u_{r}] \in \Mat_{\FF}(m,r) \ \mbox{and} \
 V=[v_{1}, \ldots, v_{r}] \in \Mat_{\FF}(n,r)
 \end{align*}
 such that $A[U:V]$ is a maximal nonsingular minor of $A$:
 \begin{enumerate}
  \item $\det(A[U:V]) \neq 0$.
  \item For any $U^{\prime}=[U, u_{r+1}] \in \Mat_{\FF}(m,r+1)$ and $V^{\prime}=[V, v_{r+1}] \in \Mat_{\FF}(n,r+1)$, if $U^{\prime}$ and $V^{\prime}$ are again index matrices, then $\det(A[U^{\prime}:V^{\prime}])=0$.
 \end{enumerate}
\end{lemma}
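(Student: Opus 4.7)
The plan is to construct $U$ and $V$ by two successive applications of Lemma~\ref{basisofimage}, and then to convert statements about linear (in)dependence of columns and rows into statements about (non)singularity of determinants via Corollaries~\ref{quasiinverse} and~\ref{dimension}. First I would apply Lemma~\ref{basisofimage} to $A$ and let $V = [v_1,\dots,v_r] \in \Mat_{\FF}(n,r)$ be the index matrix whose columns enumerate the indices $j$ at which the $j$-th column of $\basis(A)$ is nonzero. Setting $B := AV \in \Mat_{\FF}(m,r)$, the columns of $B$ are exactly the nonzero columns of $\basis(A)$ and are linearly independent by Lemma~\ref{basisofimage}.3; moreover, from Lemma~\ref{basisofimage}.2 one extracts a matrix $D \in \Mat_{\FF}(r,n)$ with $A = B D$, so every column of $A$ is a linear combination of the columns of $B$. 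Next I would apply Lemma~\ref{basisofimage} to $B^t \in \Mat_{\FF}(r,m)$ to obtain $r' := \rk(B^t)$ and index vectors $u_1,\dots,u_{r'}$ of the nonzero columns of $\basis(B^t)$; assuming $r' = r$, I set $U := [u_1,\dots,u_r]$.

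The main technical step is to prove $r' = r$ inside $\LAPPD$. Transposing the identity $\basis(B^t)\,E = B^t$ coming from Lemma~\ref{basisofimage}.2 yields $B = C B'$ with $B' \in \Mat_{\FF}(r',r)$ (the $r'$ rows of $B$ selected by $u_1,\dots,u_{r'}$) and some $C \in \Mat_{\FF}(m,r')$. Suppose $r' < r$. Pad $B'$ with zero rows to a matrix in $\Mat_{\FF}(r-1,r)$, embed its entries into $\FF[X]$ via Lemma~\ref{polisrat}, and invoke Corollary~\ref{dimension} to get a nonzero polynomial vector in its kernel. Since $B'$ has $\FF$-coefficients, the equation $B'\,b(X) = 0$ holds coefficient-wise, so some coefficient vector $c \in \FF^r$ is nonzero and satisfies $B' c = 0$. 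Then $Bc = C B' c = 0$, contradicting linear independence of the columns of $B$. Hence $r' = r$.

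With $U$ and $V$ in hand I verify (1) as follows. The matrix $M := A[U:V] = U^t B$ is the $r \times r$ submatrix consisting of the $r$ rows of $B$ indexed by $U$, and its transpose $M^t$ has as columns the corresponding $r$ nonzero columns of $\basis(B^t)$, linearly independent by Lemma~\ref{basisofimage}.3 applied to $B^t$. If $\det(M^t) = 0$, then Corollary~\ref{quasiinverse} would yield a nonzero $N \in \Mat_{\FF}(r,r)$ with $M^t N = 0$, whose nonzero column would contradict that linear independence. Hence $\det(M^t) \neq 0$, and invoking $\det(M) = \det(M^t)$ (a standard consequence of multiplicativity and cofactor expansion in $\LAPPD$) gives $\det(A[U:V]) \neq 0$. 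For (2), suppose towards a contradiction that $\det(A[U':V']) \neq 0$ for some extension by one index vector on each side. Then $U'^t A V'$ is invertible (again by Corollary~\ref{quasiinverse}), which forces the columns of $AV' \in \Mat_{\FF}(m,r+1)$ to be linearly independent: any $AV' w = 0$ would give $U'^t A V' w = 0$, whence $w = 0$. But every column of $AV'$ is a column of $A$ and hence a combination of the columns of $AV$, so $AV' = AV \cdot \beta$ for some $\beta \in \Mat_{\FF}(r,r+1)$; Corollary~\ref{dimension} then produces a nonzero $w$ with $\beta w = 0$, whence $AV' w = 0$, a contradiction.

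The main obstacle I expect is the column-rank-equals-row-rank step $r' = r$. The classical argument has to be reorganised so that every linear dependence is witnessed by a concrete vector obtained from $\sol$, $\solve$, or Corollary~\ref{dimension}, and the back-and-forth between $\FF$, $\FF[X]$, and $\FF(X)$ must be tracked through the explicit embeddings of Lemma~\ref{polisrat} so that a nonzero vector produced in $\FF[X]^r$ specialises coefficient-wise to a nonzero vector in $\FF^r$. Everything else—invertibility from nonzero determinant via Corollary~\ref{quasiinverse}, the identity $\det(A) = \det(A^t)$, and the formation of $U, V$ as genuine index matrices using the $\ct$ term—is already available in $\LAPPD$ from the preceding sections and from \cite{Soltys,The proof complexity of linear algebra}.
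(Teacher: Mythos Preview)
Your argument is correct in outline and close in spirit to the paper's, but you make a genuinely different choice in constructing $U$: the paper applies Lemma~\ref{basisofimage} to $A$ and to $A^{t}$ \emph{symmetrically}, obtaining $V$ from $\basis(A)$ and $U$ from $\basis(A^{t})$, and then proves $\rk(A)=\rk(A^{t})$ by playing the two bases off against each other via the identities $\basis(A)=AVV^{t}$ and $\basis(A^{t})=A^{t}UU^{t}$. You instead take $U$ from $\basis((AV)^{t})$, i.e.\ first select columns, then select rows \emph{from the already-selected column matrix}. Your route makes the nonsingularity step cleaner, since the columns of $M^{t}=B^{t}U$ are literally the nonzero columns of $\basis(B^{t})$; the price is that you must invoke $\det(M)=\det(M^{t})$, which the paper's more symmetric construction avoids entirely (the paper's ``similar reasoning'' for $\det(A[U{:}V])\neq 0$ just reruns the $r=s$ contradiction with a square matrix). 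For maximality, the two arguments are essentially the same: the paper writes $A[U'{:}V']$ as a product of two $(r{+}1)\times(r{+}1)$ matrices padded from $(r{+}1)\times r$ and $r\times(r{+}1)$ factors and takes determinants, which is exactly your $AV'=(AV)\beta$ argument rephrased.

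One small gap: from ``$r'<r$ is impossible'' you conclude $r'=r$, but you never argue $r'\le r$. This needs a line: the $r'$ nonzero columns of $\basis(B^{t})$ are linearly independent vectors in $\FF^{r}$, so if $r'>r$ then taking any $r{+}1$ of them and applying Corollary~\ref{dimension} (over $\FF$, or via your $\FF[X]$ detour with coefficient extraction) yields a nontrivial dependence, contradicting Lemma~\ref{basisofimage}(3). Also note that the passage through $\FF[X]$ in your $r'<r$ step is unnecessary: $\LAPPD$ already gives multiplicativity of $\det$ over $\FF$ (by specialising the $\FF(X)$ axiom to constants), so Corollary~\ref{dimension} holds directly for $\Mat_{\FF}$, which is how the paper uses it.
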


\begin{proof}
 Let $A \in \Mat_{\FF}(m,n)$, $B := \basis (A)$, $C := \basis(A^{t})$, $r := \rk(A)$, $s:=\rk(A^{t})$.
 For a matrix $M$, we denote its $j$-th column vector by $M_{j}$. 
 We construct $U \in \Mat_{\FF}(m,r)$ and $V \in \Mat_{\FF}(n,s)$ as follows.
 
 Let $w \in \Mat_{\FF}(n,1)$ be such that, for $j \in [n]$, 
\begin{itemize}
 \item If $B_{j} \neq 0$, then $\extract(w,j,1) = 1$.
 \item If $B_{j} = 0$, then $\extract(w,j,1) = 0$.
\end{itemize}
More concretely, 
\[w=\lamt_{ij} \langle n,1, \delta_{\lnot B_{j} = 0} \rangle,\]
where ``$B_{j}=0$'' is a $\Sigma^{B}_{0}$-definable relation.
 
Similarly, let $w^{\prime} \in \Mat_{\FF}(m,1)$ be such that, for $j \in [m]$,
\begin{itemize}
 \item If $C_{j} \neq 0$, then $\extract(w',j,1) = 1$.
 \item If $C_{j} = 0$, then $\extract(w',j,1) = 0$.
\end{itemize}

Consider the sequence of index vectors
\[ \ct(w,0), \ct(w,1) , \ldots, \ct(w,n) \in \Mat_{\FF}(n+1,1).\]
Recall that $\ct(w,k)$ denotes the number of $1$'s in $w=[w_{1}, \ldots, w_{n}]^{t}$.
Let
 \[v_{j} := \min \{k \leq n \mid \ct(w,k)=j \} \quad (j=0,\ldots, r).\]
 Note that this is $\Sigma^{B}_{0}$-definable and provably total by open induction.
 Therefore, there exists an open formula $\varphi$ such that
 \[\varphi(k,j) \leftrightarrow v_{j}=k.\]
 Similarly, set
 \[u_{j} := \min \{ k \leq n \mid \ct(w^{\prime},k)=j\}.\]
 There exists an open formula $\varphi^{\prime}$ such that
  \[\varphi^{\prime}(k,j) \leftrightarrow u_{j}=k.\]
  Finally, set
  \begin{align*}
  U &:= \lamt_{ij} \langle m,s, \delta_{\varphi^{\prime}}(i,j) \rangle, \\
  V &:= \lamt_{ij} \langle n,r, \delta_{\varphi}(i,j) \rangle.
  \end{align*}
   We have 
  \[\basis(A)=AVV^{t}, \basis(A^{t})=A^{t}UU^{t}. \]
  
  We show that these $U$ and $V$ are what we want.
  
  We first show that $r=s$.
 Suppose otherwise.
 Without loss of generality, we may assume that $r < s$.
 Consider 
 \[A[U:V] = U^{t}AV \in \Mat_{\FF}(r,s).\]
 Since $r<s$, there exists $b \in \FF^{s} \setminus \{0\}$ such that $A[U:V]b=0$.
 Furthermore, there exists $B \in \Mat_{\FF}(m, m)$ such that 
 \[\basis(A^{t}) B = A^{t} \in \Mat_{\FF}(n,m).\]
 Hence,
 \[A^{t}UU^{t} B = A^{t}.\] 
 Taking the transpose, we obtain
 \[(B^{t} U) U^{t}A  = A \in \Mat_{\FF}(m,n).\]
 Therefore,
 \[0= (B^{t} U) A[U:V] b= AVb = \basis(A) Vb\]
 This forces $Vb=0$, which leads to a contradiction since $V$ is an index matrix. 
 
 Now, we know that $r=s$, and it shows that $A[U:V]$ is indeed a square matrix.
 If $\det(A[U:V]) = 0$, by a similar reasoning above, we obtain a contradiction.
 Therefore, $A[U:V]$ is regular.
 It is maximal since if we add any pair of a new row and a new column, we can obtain the resulting minor $M$ from $A[U:V]$ padded by $0$ multiplied by appropriate matrices, resulting $\det M = 0$. 
\end{proof}
 
 \begin{lemma}[$\LAPPD$]\label{basisofkernel}
  Let $A \in \Mat_{\FF}(m,n)$ and $r=\rk(A)$.
  Then there exists $B \in \Mat_{\FF}(n,n-r)$ such that 
  \begin{enumerate}
  \item\label{coefficientsoflinearcombination} \[ \forall x \in \Mat_{\FF}(n,1). (Ax = 0 \rightarrow \exists c \in \Mat_{\FF}(n-r,1). Bc=x),\]
  \item 
  \[\forall x \in \FF^{n-r}. (Bx=0 \rightarrow x=0).\]
  \end{enumerate}
   that is, column vectors of $B$ is a basis of $\ker (A)$.
 \end{lemma}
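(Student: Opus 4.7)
The plan is to extract a basis of $\ker A$ from the maximal nonsingular minor supplied by Lemma~\ref{maximalnonsingular}. Apply that lemma to obtain index matrices $U \in \Mat_{\FF}(m,r)$ and $V \in \Mat_{\FF}(n,r)$ with $\det(A[U:V]) \neq 0$, and moreover (as in the proof of Lemma~\ref{maximalnonsingular}) satisfying $\basis(A) = A V V^{t}$. In parallel, construct a complementary index matrix $V' \in \Mat_{\FF}(n,n-r)$ whose columns pick the $n-r$ indices in $[n]$ not appearing among the columns of $V$; this is done by the same $\Sigma^{B}_{0}$-bookkeeping (using $\ct$ applied to $1 - w$) as the construction of $V$ in Lemma~\ref{maximalnonsingular}, so it produces an honest $\mathcal{L}_{\LAP}$-term. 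By construction one then verifies $V V^{t} + V' (V')^{t} = I_{n}$ and $(V')^{t} V' = I_{n-r}$ by $\Sigma^{B}_{0}$-induction on the coordinates.

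Next I would locate the coefficients $C \in \Mat_{\FF}(r,n-r)$ expressing $A V'$ in terms of $A V$. By Lemma~\ref{basisofimage}(2) there is $B' \in \Mat_{\FF}(n,n)$ with $\basis(A) B' = A$, and combined with $\basis(A) = A V V^{t}$ this yields $A = A V (V^{t} B')$. In particular, setting $C := V^{t} B' V' \in \Mat_{\FF}(r,n-r)$, one has $A V C = A V'$. Define
\[
B := V' - V C \in \Mat_{\FF}(n,n-r),
\]
so that $A B = A V' - A V C = 0$ immediately.

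For property (1), suppose $A x = 0$. Using $V V^{t} + V' (V')^{t} = I_{n}$, write $x = V y + V' z$ with $y := V^{t} x$ and $z := (V')^{t} x$. Then $0 = A x = A V y + A V' z = A V(y + C z)$. Applying $U^{t}$ gives $A[U:V](y + C z) = 0$, and since $\det(A[U:V]) \neq 0$ the identity $\adj(A[U:V]) \cdot A[U:V] = \det(A[U:V]) I_{r}$ (available in $\LAPPD$ via Cayley--Hamilton and cofactor expansion from \cite{The proof complexity of linear algebra}, as used to derive Lemma~\ref{CHimpliesnontrivialsolution}) forces $y + C z = 0$. Hence $x = -V C z + V' z = B z$, and $c := z$ witnesses the claim. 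For property (2), if $B x = 0$ then $V' x = V C x$; since every component of $V' x$ is supported on indices not appearing in $V$, while every component of $V C x$ is supported on indices appearing in $V$, both sides must vanish coordinatewise by an open induction on $[n]$. Finally $V' x = 0$ combined with $(V')^{t} V' = I_{n-r}$ yields $x = (V')^{t}(V' x) = 0$.

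The main obstacle is bookkeeping rather than conceptual: packaging the complementary index matrix $V'$ and the decomposition $V V^{t} + V'(V')^{t} = I_{n}$ as $\mathcal{L}_{\LAP}$-terms verifiable by open induction on coordinates. Once these identities are in hand, all remaining steps reduce to linear-algebraic manipulations already available in $\LAPPD$: nonsingularity of a minor gives uniqueness of the representation in $\im(AV)$ via the adjoint relation, and the rest is routine termwise computation.
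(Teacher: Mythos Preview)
Your proof is correct and follows essentially the same approach as the paper: both extract the maximal nonsingular minor from Lemma~\ref{maximalnonsingular} and build the kernel basis by expressing the ``free'' columns $AV'$ in terms of the ``pivot'' columns $AV$, so that your $B = V' - VC$ is exactly the paper's $[w_{1},\ldots,w_{n-r}]$ written without the coordinate permutation (and with the opposite sign convention). The only presentational difference is that the paper permutes coordinates to place the minor in the upper-left block, whereas you work directly with the index matrices $V,V'$ and the identity $VV^{t}+V'(V')^{t}=I_{n}$; your framing is arguably cleaner for formalization in $\LAPPD$, but the underlying argument is identical.
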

 
 \begin{proof}
  Let $M \in \Mat_{\FF}(r,r)$ be the maximal regular submatrix of $A$ whose existence is assured by Lemma \ref{maximalnonsingular}.
  By permutating the coordinates (using counting), we may assume that $A$ is of the form
  \begin{align*}
  A= \begin{bmatrix}
M & y_{1} & \cdots & y_{n-r} \\
 &  *  &  \\
\end{bmatrix},
  \end{align*}
  where each $y_{i}$ is in $\Mat_{\FF}(r,1)$.
  Let 
  \begin{align*}
   w_{i} := \begin{bmatrix}
M^{-1}y_{i} \\
 0  \\
 \vdots \\
 -1 \\
 \vdots \\
 0
\end{bmatrix} \in \Mat_{\FF}(n,1) \quad (i \in [n-r]),
  \end{align*}
  where the component $-1$ above is the $(r+i)$-th one. 
  Then 
  \[B:= [w_{1}, \ldots, w_{n-r}] \in \Mat_{\FF}(n,n-r)\] is the desired basis.
  Indeed, each $w_{i}$ is in $\ker (A)$ and clearly linearly independent.
  
  If $v \in \ker (A)$, then writing it as
  \begin{align*}
   v := \begin{bmatrix}
 \widetilde{v} \\
 c_{1}  \\
 \vdots \\
 c_{n-r}\\
\end{bmatrix} \in \Mat_{\FF}(n,1), \ \mbox{where} \ \widetilde{v} \in \Mat_{\FF}(r,1),
  \end{align*}
  we obtain 
  \[M\widetilde{v} + c_{1} y_{1} + \cdots + c_{n-r}y_{n-r}=0.\]
  Multiplying $M^{-1}$, 
  \[\widetilde{v} = - c_{1} M^{-1}y_{1} - \cdots - c_{n-r}M^{-1}y_{n-r}\]
  follows.
  Thus, we have
  \[v = \sum_{i=1}^{n-r} c_{i}w_{i}. \]
 \end{proof}

\begin{cor}[$\LAPPD$]\label{uniformspan}
In the setting of the previous Lemma \ref{basisofkernel}, 
\[ \forall X \in \Mat_{\FF}(n,k). (AX = 0 \rightarrow \exists C \in \Mat_{\FF}(n-r,k). BC=X).\]
\end{cor}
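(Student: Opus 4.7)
The plan is to expose the uniformity already present in the proof of Lemma \ref{basisofkernel}. That proof constructed $B$ very explicitly: after a coordinate permutation (by counting, as done there) placing a maximal nonsingular $r \times r$ submatrix $M$ of $A$ in the upper left, so that the top $r$ rows of $A$ take the form $[M \mid Y]$ with $Y \in \Mat_{\FF}(r, n-r)$, the matrix $B$ has the block form
\[ B \;=\; \begin{bmatrix} M^{-1} Y \\ -I_{n-r} \end{bmatrix} \in \Mat_{\FF}(n, n-r). \]
Moreover, for a single $v \in \ker A$ written conformally as $v = [\widetilde{v}; v']^{t}$ with $v' \in \Mat_{\FF}(n-r, 1)$, the witnessing coefficient vector $c$ in item (\ref{coefficientsoflinearcombination}) of Lemma \ref{basisofkernel} is (up to sign) simply $v'$.

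Hence, for $X \in \Mat_{\FF}(n, k)$ with $AX = 0$, I would apply the same coordinate permutation and decompose $X$ into its top $r$ rows $\widetilde{X} \in \Mat_{\FF}(r,k)$ and bottom $n-r$ rows $X' \in \Mat_{\FF}(n-r, k)$, then set $C := -X'$, defined by an obvious $\lamt$-term extracting the bottom block of $X$. Because the remaining $m - r$ rows of $A$ are linear combinations of the top $r$, the equation $AX = 0$ is equivalent to $[M \mid Y] X = 0$, which unfolds to $M\widetilde{X} + YX' = 0$, giving $\widetilde{X} = -M^{-1} Y X'$. A direct block-matrix computation then yields
\[ BC \;=\; \begin{bmatrix} M^{-1} Y \\ -I_{n-r} \end{bmatrix}(-X') \;=\; \begin{bmatrix} -M^{-1} Y X' \\ X' \end{bmatrix} \;=\; \begin{bmatrix} \widetilde{X} \\ X' \end{bmatrix} \;=\; X. \]

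All ingredients (block products, row/column extractions, the fixed matrix $M^{-1}$, the permutation encoded by counting) are already terms or open-definable functions used in Lemma \ref{basisofkernel}, and the above identities are verified by the matrix arithmetic of \cite{The proof complexity of linear algebra} via open induction. Consequently no new obstacle arises; the only bookkeeping is to check that the witness constructed for a single vector in Lemma \ref{basisofkernel} was already given by an explicit $\mathcal{L}_{\LAP}$-term in $v$, so that wrapping it in a $\lamt$ over the $k$ columns of $X$ produces the desired uniform matrix $C$.
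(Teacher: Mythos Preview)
Your proposal is correct and follows the same approach as the paper, which simply remarks that the coefficient vector $c$ produced in the proof of Lemma~\ref{basisofkernel} is given uniformly by a term in $x$ (namely the negated bottom $n-r$ entries), so one obtains $C$ by applying that term column-by-column to $X$. You have just spelled out explicitly the block computation that the paper leaves implicit; the only superfluous step is your appeal to the equivalence $AX=0 \leftrightarrow [M\mid Y]X=0$, since the forward implication (trivially extracting the top $r$ rows) is all that is used.
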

\begin{proof}
In the proof of Lemma \ref{basisofkernel} (\ref{coefficientsoflinearcombination}) above, $c$ was uniform for $x$.
\end{proof}

\begin{thm}[$\LAPPD$]\label{basic property}
 For any $A \in \Mat_{\FF}(m,n)$, $\rk(A)=\rank(A)$.
\end{thm}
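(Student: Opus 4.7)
Set $r := \rk(A)$. By Definition~\ref{formalizedMulmuley}, $\rank(A) = \divi(m+n - \mul(\ch(\polize(A))),\,2)$, so the theorem reduces to proving $\mul(\ch(\polize(A))) = m+n-2r$. The plan is to exhibit an invertible $P$ over $\FF(X)$ that conjugates $\polize(A)$ to a block-diagonal matrix of shape $\begin{bmatrix} 0 & 0 \\ 0 & N \end{bmatrix}$, with a zero block of size $m+n-2r$ and an invertible $(2r)\times(2r)$ block $N$; then $\ch$ of the conjugate is $Y^{m+n-2r}\ch(N)(Y)$ with $\ch(N)(0)\neq 0$, and similarity invariance of $\ch$ closes the argument.

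First, working in $\Mat_\FF$, the block form $\symm(A)=\begin{bmatrix} 0 & A \\ A^t & 0 \end{bmatrix}$ together with the equality $\rk(A^t)=r$ extracted from the proof of Lemma~\ref{maximalnonsingular} give $\rk(\symm(A))=2r$ via Lemma~\ref{basisofimage}. The nonzero columns of $\basis(\symm(A))$ provide an $\FF$-basis $I_0$ of $\im_\FF \symm(A)$ of size $2r$, and Lemma~\ref{basisofkernel} supplies an $\FF$-basis $K_0$ of $\ker_\FF \symm(A)$ of size $m+n-2r$. Now pass to $\FF(X)$ (justified by Remark~\ref{multivariaterationals}). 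Since $\chi_{m+n}$ is diagonal with nonzero diagonal entries in $\FF(X)$ it is invertible there, so $\ker_{\FF(X)}\polize(A) = \ker_{\FF(X)}\symm(A)$; a clearing-denominators calculation (formalized in $\LAPPD$ by comparing $X$-coefficients of each entry) shows $K_0$ remains $\FF(X)$-linearly independent and spans $\ker_{\FF(X)}\symm(A)$, while $I := \chi_{m+n}I_0$ is an $\FF(X)$-basis of $\im_{\FF(X)}\polize(A)$. By Lemma~\ref{decomposition}, $P := [K_0, I] \in \Mat_{\FF(X)}(m+n,m+n)$ is invertible. Since $\polize(A)P = [0, \polize(A)I]$ and each column of $\polize(A)I$ lies in $\im_{\FF(X)}\polize(A)$, Lemma~\ref{solvability} yields $N \in \Mat_{\FF(X)}(2r,2r)$ with $\polize(A)I = IN$, whence $P^{-1}\polize(A)P = \begin{bmatrix} 0 & 0 \\ 0 & N \end{bmatrix}$. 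Lemma~\ref{zerointersection} forces $N$ to be invertible, as $Nv = 0$ implies $Iv \in \ker\polize(A)\cap\im\polize(A) = \{0\}$ and hence $v = 0$. Cofactor expansion (available in $\LAPPD$) together with Proposition~\ref{twodefsofdet} then gives $\ch(P^{-1}\polize(A)P)(Y) = Y^{m+n-2r}\ch(N)(Y)$ with $\ch(N)(0) = \det(N)\neq 0$.

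The main obstacle is the similarity invariance step $\ch(\polize(A)) = \ch(P^{-1}\polize(A)P)$, which via Proposition~\ref{twodefsofdet} reduces to $\det(P^{-1}(YI - \polize(A))P) = \det(YI - \polize(A))$, an instance of determinant multiplicativity for matrices over $\FF(X)[Y]$. While $\LAPPD$ directly asserts multiplicativity only over $\FF(X)$, one lifts it to $\FF(X)[Y]$ inside $\LAPPD$ by a substitution argument: the polynomial $\det((YI-M)P) - \det(P)\det(YI-M)$ has $\FF(X)$-coefficients and degree bounded by the matrix size, and it vanishes upon each substitution $Y := X^i$ for $i = 0,1,2,\ldots$, which are pairwise distinct elements of $\FF(X)$, by the $\FF(X)$-instance of $(MDP)$ in $\LAPPD$. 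The formalized identity theorem, Proposition~\ref{identitytheorem}, then forces this polynomial to vanish identically, yielding multiplicativity over $\FF(X)[Y]$ and hence the desired similarity invariance of $\ch$.
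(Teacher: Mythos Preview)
Your proposal is correct and follows essentially the same strategy as the paper: build $P$ from an $\FF$-basis of $\ker\symm(A)$ (via Lemma~\ref{basisofkernel}) and $\chi_{m+n}$ times an $\FF$-basis of $\im\symm(A)$ (via Lemma~\ref{basisofimage}), lift both bases to $\FF(X)$ by a clearing-denominators/coefficient-wise argument, conjugate $\polize(A)$ to block form, and close with the substitution $Y:=X^i$ plus Proposition~\ref{identitytheorem} to get the needed similarity invariance of $\ch$. The only cosmetic difference is that you compute $\rk(\symm(A))=2r$ up front from the block structure, whereas the paper defers this and instead assembles the $(m+n-2r)$-dimensional kernel basis at the end from bases of $\ker A$ and $\ker A^t$; your appeal to ``Lemma~\ref{solvability}'' to produce $N$ should really be the same uniform coefficient-wise argument the paper uses (that lemma is stated only over $\FF$), but the content is identical.
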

\begin{rmk}
Note that the LHS is an abused notation introduced in Definition \ref{indexvector}.
\end{rmk}

\begin{proof}
Let $N:=\row (A) + \column (A)$.
 Let $C := \polize (A)$. 
 First, we construct a basis
 \begin{align}\label{basisofker(C)}
 b_{1}, \ldots, b_{s} \in \FF(X)^{N}
 \end{align}
  of $\ker (C)$ as an $\FF(X)$-vector space and a basis
  \begin{align}\label{basisofim(C)}
  b_{s+1}, \ldots, b_{s+t} \in \FF(X)^{N}
  \end{align}
  of $\im(C)$ as an $\FF(X)$-vector space.
  
  To be concrete, we take a basis of $\ker(\symm(A))$ constructed in Lemma \ref{basisofkernel} as the basis (\ref{basisofker(C)}).
  (Note that $\symm(A) \in \Mat_{\FF}$.)
  Indeed, if $b_{1}, \ldots, b_{s}$ is a basis of $\ker (\symm(A))$, then they also belong to $\ker (C)$.
  Moreover, suppose $Cv = 0$ for $v \in \FF(X)^{N}$.
  Recalling that $C= \chi_{N} \symm(A)$ and $\chi_{N}$ is invertible in $\Mat_{\FF(X)} (N,N)$,
  \[\symm(A) v = 0.\]
  Clearing the denominator, there exists $g \in \FF[X]$ such that
  \[ g v \in \FF[X]^{N}.\]
  Write
  \[gv = v_{0} X^{0} + \cdots + v_{d} X^{d},\]
  where each $v_{i} \in \FF^{N}$.
  We see that
  \[0 =g\symm(A) v =\symm(A) (g v) = \symm(A) v_{0} X^{0} + \cdots + \symm(A) v_{d}   X^{d} \]
  in $\Mat_{\FF[X]}$. (cf. Definition \ref{matrixofpolynomials}.)
  
  Looking at the coefficients, we obtain
  \[\symm(A) v_{i} = 0 \quad (0 \leq i \leq d).\]
  Since $b_{1}, \ldots, b_{s}$ is a basis of $\ker (\symm(A))$, there exists $D \in \Mat_{\FF}(s,d+1)$ such that
  \begin{align*}
     [b_{1}, \ldots, b_{s}]D = \begin{bmatrix}
 v_{0}, \ldots, v_{d}
\end{bmatrix} \in \Mat_{\FF}(N,d+1)
  \end{align*}
  by Corollary \ref{uniformspan}.
  Therefore,
  \[gv = [b_{1}, \ldots, b_{s}] D \begin{bmatrix}
 X^{0} & 0 & \cdots & 0  \\
 0 & X_{1} & \cdots & 0 \\
 0 & 0 & \ddots & 0 \\
 0& \cdots & 0& X_{d}
\end{bmatrix}\]
  in $\Mat_{\FF(X)}$.
  Thus $b_{1},\ldots, b_{s}$ serve as the first basis (\ref{basisofker(C)}).
  
  Next, we consider the second basis (\ref{basisofim(C)}).
  It is given by first calculating that of $\symm(A)$ and then multiplying $\chi_{N}$ to them.
  The construction is valid by a similar reason as above.  
 
 Because of Lemma \ref{zerointersection} and Lemma \ref{decomposition}, $b_{1}, \ldots, b_{s+t}$ form a basis of $\FF(X)^{N}$. 
 From the linearly independence, we have $s+t \leq N$ by Lemma \ref{dimension}.
 
 Moreover, we have $s+t=N$.
 Indeed, because of the uniformity of the decomposition given in Lemma \ref{decomposition} and the linear combinations of the bases given in Lemma \ref{basisofimage}, Lemma \ref{basisofkernel} and the argument for $b_{1},\ldots, b_{s}$ above, there exists $C \in \Mat_{\FF(X)}(s+t,N)$ such that
 \[[b_{1},\ldots, b_{s+t}]C=I_{N} \in \Mat_{\FF(X)}(N,N).\]
 If $s+t<N$, then we can pad the matrix $[b_{1},\ldots, b_{s+t}]$ and $C$ by zeros, and obtain
 \[[b_{1},\ldots, b_{s+t},\zeromat_{N,N-(s+t)}] \begin{bmatrix}C \\ \zeromat_{N-(s+t),N} \end{bmatrix} = I_{N},\]
 which is now a product of square matrices. 
 Then, taking the determinant in $\FF(X)$ and using its multiplicativity (which is an axiom of $\LAPPD$), we have $0=1$, a contradiction.

 Let $P := [b_{1}, \ldots, b_{s+t}] \in \Mat_{\FF(X)}(N,N)$.
 Then we can write
 \begin{align*}
 C P = P \begin{bmatrix}
 \zeromat_{s,s} & \zeromat_{s,t} \\
 \zeromat_{t,s} & B \\
 \end{bmatrix},
 \end{align*}
 where $B \in \Mat_{\FF(X)}(t,t)$.
 Hence, 
  \begin{align*}
 P^{-1}C P =
  \begin{bmatrix}
 \zeromat_{s,s} & \zeromat_{s,t} \\
 \zeromat_{t,s} & B \\
 \end{bmatrix}.
 \end{align*}
 Taking the characteristic polynomial of both sides and applying Proposition \ref{twodefsofdet},
 \begin{align*}
 \det(YI - P^{-1}CP) = Y^{s} p_{B}(Y)
  \end{align*}
  in $\FF(X)[Y]$.
  Here, we can further show that
  \begin{align}\label{conjinv}
  \det(YI - P^{-1}CP) = \det(YI - C).
  \end{align}
  Indeed, substituting $X^{i}$ for $Y$ ($i \in \MM$), we have
  \begin{align*}
  \det(X^{i}I - P^{-1}CP) &= \det(P^{-1}(X^{i}I - C)P)\\
  &=\det(P^{-1}) \det(X^{i}I - C) \det(P)\\
  &=\det(X^{i}I - C).
  \end{align*}
  (For the last equalities, we have used $\llbracket \mbox{$\det$ is multiplicative} \rrbracket_{rat}$.)
  Since 
  \[\langle \MM, \FF(X), \Mat_{\FF(X)} \rangle \models \LAP,\] by Proposition \ref{identitytheorem}, the equality (\ref{conjinv}) follows.
  
 Now, recalling Lemma \ref{zerointersection}, we have $\ker (C) \cap \im(C) = 0$.
 Hence, it follows that $\ker (B)=0$, that is, $p_{B}(0) \neq 0$.
 Therefore, 
 \[s = \mul(\llbracket \ch \rrbracket_{rat} (\polize (A))).\]
 Note that the RHS appears in the definition of $\rank(A)$. (cf. Definition \ref{formalizedMulmuley} (\ref{explicitdefofrank}))
 
 On the other hand, by Lemma \ref{basisofkernel}, we can construct $n-r$ many vectors forming a basis of $\ker (A)$ and $m-r$ many vectors forming a basis of $\ker (A^{t})$.
 These give $n+m-2r$ many vectors consisting of a basis of $\ker (C)$.
 
 Hence, $s=n+m-2r$, which gives $r = \rank(A)$.
\end{proof}

\section{Interpretation by $VNC^{2}$}\label{Interpretation}
In this section, we show that linear algebra with rational coefficients formalized in $\VNC^{2}$ satisfies $\LAPPD$ (for a formal definition of the bounded arithmetic $\VNC^{2}$, see \cite{Cook}).

Since we would like to use several $\bold{NC}^{2}$-functions as terms, we utilize $\widehat{\VNC^{2}}$ instead of $\VNC^{2}$.
It is known that $\VNC^{2}$ (or $\widehat{\VNC^{2}}$) is quite powerful to carry out basic arithmetical operations, matrix operations over integers coded by the string-sort, and manipulation of arithmetic circuits (here, an \textit{arithmetic circuit} is a circuit whose gates are among $+,*$ and inputs are either variables or field elements and computes a polynomial. We follow the treatment in \cite{Uniform}). 
More concretely, \cite{Uniform} showed that the soundness of $\PP_{c}(\ZZ)$ can be formalized and proved in $\VNC^{2}$.
Upon this result, we put priority on readability and stick to less formal descriptions compared with the previous sections.

Consider the following interpretation:
\begin{defn}
We define an interpretation $\mathcal{I}$ of $\mathcal{L}_{\LAP}$ by $\VNC^{2}$ as follows:
 \begin{enumerate}
  \item The universe of $\indexsort$ is interpreted by the number-sort of $\VNC^{2}$, that is, 
  \[\alpha_{index}(x) :\equiv (x =_{number} x).\]
  \item The universe of $\fieldsort$ is interpreted by the rationals formalized in $\VNC^{2}$ using the string-sort, that is,
  \[\alpha_{field}(P,Q) :\equiv \mbox{$Q$ is a code of a nonzero integer}\]
  (Informally, $P$ and $Q$ are codes of integers, such as binary expansions of two integers, and the pair $(P,Q)$ is regarded as a representation of the rational $P/Q$).
  \item The universe of $\matrixsort$ is interpreted by the matrix of rational coefficients formalized in $\VNC^{2}$, that is,
  $\alpha_{matrix}(A,m,n,Q)$ is a natural formula expressing:
  $A$ is a code of a ($m \times n$)-matrix of integer coefficients which are at most $|A|$-bits, and $Q$ is a code of a nonzero integer.

  (Informally, $(A,Q)$ represents a matrix $\frac{1}{Q}A$ with rational coefficients).
  \item\label{basics} For each relation symbol $\sigma$ (including $=$), $\llbracket \cdot \rrbracket_{pol}$ are all defined naturally according to the informal interpretations above.
  More concretely:
  \begin{enumerate}
   \item $\beta_{=_{index}}(x,y) :\equiv x=_{num}y$.
   \item $\beta_{\leq_{index}}(x,y) :\equiv x\leq_{num}y$.
   \item $\beta_{=_{field}} (P,Q, X,Y) :\equiv P \times Y = Q \times X$.
   Here, $\times$ on RHS is the integer multiplication formalized in $\VNC^{2}$.
   \item $\beta_{=_{matrix}} (A,m,n,Q,B,k,l,R) := (m=k \land n=l \land R*A = Q*B)$.
   Here, $*$ on RHS is the scalar multiplication of an integer and an integer-coefficient matrix formalized in $\VNC^{2}$.
  \end{enumerate}
  \item In order to define $f_{\mathcal{I}}$ for each function symbol $f$, we define $t_{\mathcal{I}}$ for each $\mathcal{L}_{\LAP}$-term $t$, which is:
  \begin{itemize}
   \item a term with number-sort output if the output of $t$ is $\indexsort$.
   \item a tuple $(P,Q)$ of terms with $\indexsort$ output if the output of $t$ is $\fieldsort$.
   \item if the output of $t$ is $\matrixsort$,
 a tuple $(A,m,n,Q)$ of terms, where the outputs of $m$ and $n$ are number-sort, and those of $A$ and $Q$ are string-sort. 
  \end{itemize}
   will coincide with the interpretation of terms induced by $\mathcal{I}$ in the sense of Definition \ref{interpretation}.   \begin{enumerate}
    \item For each index-variable $x$, prepare a fresh number-sort variable $n_{x}$ and set
    \[x_{\mathcal{I}}:=n_{x}.\]
    \item For each field-variable $a$, prepare fresh string-variables $P_{a}$ and $Q_{a}$ and set 
    \[ a_{\mathcal{I}}:=(P_{a},Q_{a})\]
    \item For each matrix-variable $A$, prepare fresh index-variables $m_{A},n_{A}$ and fresh string-variables $X_{A}$ and $Q_{A}$ and set
    \[A_{\mathcal{I}} := (X_{A},m_{A},n_{A},Q_{A}).\]
    \item If $t=f(t_{1}, \ldots, t_{k})$, where $f$ is among 
    \begin{align*}
    &0_{index}, 1_{index},+_{index},*_{index},-_{index}, \divi, \rem, \\
    &0_{field}, 1_{field}, +_{field}, *_{field}, (\cdot)^{-1}_{field},\\
    & \row, \column, \extract ,\sum, \mathtt{P},
    \end{align*}
     then
    \[t_{\mathcal{I}} := \tau_{f}((t_{1})_{\mathcal{I}}, \ldots, (t_{k})_{\mathcal{I}}).\]
    Here, $\tau_{f}$ denotes a natural $\Sigma^{B}_{1}$-formalization of $f$ carried out in $\bold{NC}^{2}$, implemented as a term in $\widehat{\VNC^{2}}$.

For example, if $t=\row (A)$, then $(A)_{\mathcal{I}}$ should be a quadruple $(X,m,n,Q)$ of terms, and set
    \[t_{\mathcal{I}} := m.\]

Note that the bounded sum $\Sigma$ and the powering function $P$ are both known to be in $\VNC^{2}$, and therefore there are $\mathcal{L}_{\widehat{\VNC^{2}}}$-terms representing these functions.
Furthermore, if we choose natural ones, it is also known that some of their basic properties are formalized and proven in $\VNC^{2}$.
    
   \item If $t=\cond_{index}(\alpha, t_{1},t_{2})$, then
    \[t_{\mathcal{I}} := \cond_{index}(\alpha_{\mathcal{I}}, ( t_{1})_{\mathcal{I}},(t_{2})_{\mathcal{I}} ).\]
    Here, $(\alpha)_{\mathcal{I}}$ denotes the interpretation of $\alpha$ under $\mathcal{I}$, which is already defined by induction.
    Note that $\alpha$ is an open formula whose atomic subformulae are all in $\indexsort$, and therefore it is easy to see by induction that $\alpha_{\mathcal{I}}$ is an open formula whose atomic subformulae are all in number-sort.
    \item If $t = \cond_{field}(\alpha,t_{1},t_{2})$, the definition of $t_{\mathcal{I}}$ is completely analogous.

         
    \item If $t=\lamt_{ij} \langle m,n,u \rangle$ then
we set $t_{\mathcal{I}}$ to be the $(m \times n)$-matrix $\{u(i,j)\}_{i \in [m], j \in [n]}$.
Note that by $\Sigma^{B}_{0}$-CA of $\widehat{VNC^{2}}$, this function is $\Sigma^{B}_{1}$-definable and provably total in $\widehat{\VNC^{2}}$, and therefore there is an $\mathcal{L}_{\widehat{\VNC^{2}}}$-term corresponding to this definition.
    
   \end{enumerate} 
 \end{enumerate}

\end{defn}

\begin{thm}
 $\mathcal{I}$ is an interpretation of $\LAPPD$ by $\VNC^{2}$.
\end{thm}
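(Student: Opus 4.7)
My plan is to split the verification into two parts: first the axioms of $\LAP$, and then the single new axiom of $\LAPPD$ (multiplicativity of determinant over $\FF(X)$). For the $\LAP$ part I would proceed axiom by axiom according to Definition~\ref{DefLA} and Definition~\ref{DefLAP}. The $\indexsort$-axioms translate directly to facts about the number-sort of $\VNC^{2}$ (discretely ordered semiring axioms, open induction, $\divi/\rem$), all of which are standard in $V^{0}$ and hence in $\widehat{\VNC^{2}}$. The $\fieldsort$-axioms become the well-known formalization of rational arithmetic on pairs $(P,Q)$ of integers modulo the equivalence $P/Q = P'/Q' \leftrightarrow PQ' = QP'$; commutative ring axioms reduce to integer multiplication identities provable in $\VTC^{0}$, and the existence of inverses is immediate from the definition. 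The $\matrixsort$-axioms on $\row,\column,\extract,\summation,\lamt$ reduce to bit-manipulation of the string code $A$ and to the fact that $\Sigma^{B}_{0}$-comprehension and bounded sums are $\Sigma^{B}_{1}$-definable in $\widehat{\VNC^{2}}$. For the power axioms, I rely on the fact (established in \cite{Uniform}) that iterated matrix product is $\bold{NC}^{2}$-computable and that $\widehat{\VNC^{2}}$ proves the basic recursive properties $\power(0,A)=I$ and $\power(n+1,A)=\power(n,A)*A$.

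For the extensionality-like equality axioms (matrix equality rule) and the $\cond$-axioms, the verifications are entirely routine once one fixes natural $\Sigma^{B}_{1}$-definitions for the operations; the only subtlety is that under $\mathcal{I}$, two matrices $(A,m,n,Q)$ and $(B,k,l,R)$ represent the same rational matrix iff $R*A = Q*B$, so the equality axiom must be phrased up to this equivalence and all the matrix operations shown to respect it. This is a finite bookkeeping that $\VNC^{2}$ carries out, since integer scalar multiplication distributes over matrix addition and multiplication in a provable way.

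The heart of the theorem is the interpretation of the $\LAPPD$-specific axiom, which by the equivalence with $(MDP)$ amounts to proving $\det(AB)=\det(A)\det(B)$ for matrices with polynomial coefficients over $\QQ$. Here I would invoke the main theorem of \cite{Uniform}: $\VNC^{2}$ proves multiplicativity of determinants for integer-coefficient (and hence, by scaling, rational-coefficient) matrices, and more generally for matrices whose entries are polynomials over $\ZZ$. To translate this to our setting, I would (i) clear the common denominator of the pairs $(A,d)$ and $(B,d')$ to reduce to integer polynomial matrices, (ii) use the $\VNC^{2}$-provable scaling formula $\det(cM)=c^{n}\det(M)$, and (iii) verify that the determinant defined via Berkowitz's algorithm (which is the one obtained from the $\LAP$-definition through $\mathcal{I}$, by applying $\llbracket\cdot\rrbracket_{pol}$ and then $\llbracket\cdot\rrbracket_{rat}$) agrees provably in $\VNC^{2}$ with the Schur-complement determinant used in \cite{Uniform}. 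This last compatibility is mentioned in the introduction as part of the contribution and is the technical crux.

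The main obstacle is therefore the compatibility step: showing that when one unfolds the $\llbracket\cdot\rrbracket_{rat}\circ\llbracket\cdot\rrbracket_{pol}$ encoding of Berkowitz's characteristic polynomial, and then interprets the resulting $\mathcal{L}_{\LAP}$-terms via $\mathcal{I}$ as arithmetic circuits over $\ZZ$, one obtains a circuit provably equivalent in $\VNC^{2}$ to the Schur-complement circuit of \cite{Uniform}. I would establish this via the soundness of $\power{c}(\ZZ)$ formalized in $\VNC^{2}$, reducing the equivalence of the two circuits to a polynomial identity (both compute the determinant on generic matrices) that has a short $\power{c}(\ZZ)$-proof by the results of \cite{Uniform}. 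Once compatibility is in hand, multiplicativity of Berkowitz's determinant over polynomial matrices follows from multiplicativity of the Schur-complement determinant, completing the verification of $(MDP)$ and hence of the $\LAPPD$-axiom under $\mathcal{I}$.
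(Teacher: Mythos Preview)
Your outline tracks the paper closely: both dismiss the $\LAP$-axioms as routine, both isolate $(MDP)$ as the real content, and both identify the compatibility of Berkowitz's determinant with the determinant circuit from \cite{Uniform} as the crux. One minor divergence is in handling rational coefficients: you propose clearing denominators and invoking $\det(cM)=c^{n}\det(M)$, whereas the paper instead extends the $\VNC^{2}$-formalized soundness of $\PP_{c}(\ZZ)$ from integer to rational assignments, by passing each circuit $F$ through a $\Num/\Den$ transformation and defining an evaluation function $\Eval$ on rational inputs. Both routes are viable; the paper's choice buys a uniform treatment of all subsequent $\PP_{c}(\ZZ)$ appeals at the cost of one preparatory claim.

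The genuine gap is your final step. You assert that the equivalence of the Berkowitz and Schur-complement circuits ``has a short $\PP_{c}(\ZZ)$-proof by the results of \cite{Uniform}'', but \cite{Uniform} proves identities \emph{about its own determinant circuit} (multiplicativity, cofactor expansion, Cayley--Hamilton); it does not supply a $\PP_{c}(\ZZ)$-proof that its circuit equals Berkowitz's, and the mere semantic fact that two circuits compute the same polynomial does not by itself yield such a proof. The paper closes this by a different mechanism: using that $\VNC^{2}$ proves cofactor expansion and Cayley--Hamilton for $\Det_{\VNC^{2}}$ over rational polynomial matrices (via the extended soundness), it formalizes Lemmas~4.2.1--4.2.2 of \cite{Soltys} to obtain the recursive identity
\[
\chi(A)(Y) = (Y - a_{11})\chi(M) - R\,\adj(YI-M)\,S
\]
for the block decomposition $A=\left[\begin{smallmatrix}a_{11}&R\\S&M\end{smallmatrix}\right]$, and then runs a $\Sigma^{B}_{0}$-induction on the matrix size $n$ inside $\widehat{\VNC^{2}}$ to conclude that the Berkowitz output $\column(1,A)\cdots\column(n,A)$ coincides with $\Det_{\VNC^{2}}(YI-A)$. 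This explicit induction, exploiting the Samuelson-type recursion underlying Berkowitz's algorithm, is the missing ingredient in your sketch; invoking $\PP_{c}(\ZZ)$-soundness alone does not produce it.
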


\begin{proof}
 It is straightforward to check the axioms of $\LAP$.
 We focus on checking the axiom $(MDP)$.
 We work in $\VNC^{2}$.
 
 Let $\DET(A)$ be the arithmetic circuit in $\VNC^{2}$ that computes the determinant as given in \cite{Uniform}.
 In general, we can convert arithmetic circuit without division gates into circuit with rational inputs as follows:

 Let $F(x_{1}, \ldots, x_{k})$ be an arithmetic circuit without division with inputs $x_{1}, \ldots, x_{k}$.
 We transform $F$ into a circuit $F^{*}(u_{1}, \ldots, u_{k},v_{1}, \ldots, v_{k})$ by replacing each input gate $x_{i}$ with a division gate $u_{i}/v_{i}$.
 Let $A$ denote the assignment
 \[x_{1} := \frac{N_{1}}{D_{1}}, \ldots, x_{k} :=\frac{N_{k}}{ D_{k}}\]
  (Where each $N_{i}$ and $D_{i}$ are codes of integers).
 Also, let $A^{*}$ be the assignment 
\[ A^{*}:= u_{1}:=N_{1}, \ldots, u_{k}:=N_{k}, v_{1} := D_{1}, \ldots, v_{k}:=D_{k}.\]

 Consider $\Eval_{alg}$, $\Den(G)$ and $\Num(G)$ given in \cite{Uniform},
 and define 
 \[\Eval(F,A) := \frac{\Eval_{alg}(\Num(F^{*}),A^{*})}{\Eval_{alg}(\Den(F^{*}),A^{*}) }.\]
 
 Intuitively, $\Eval(F,A)$ computes the output of $F$ with rational input $A$.
 
 It can be proved that:
 \begin{claim}[$\VNC^{2}$]\label{soundness}
 If $F=G$ has a $P_{c}(\ZZ)$-proof $\pi$, then for any assignment $A$ to the variables in $\pi$, we have 
 \[\Eval(F,A) \equiv \Eval(G,A).\]
 \end{claim}
 
 \begin{proof}[Proof of the Claim]
Recall that $\Eval_{alg}$ is a $\Sigma^{B}_1$-definable function of $\VNC^{2}$, which implies that 
  we can use an induction on the number of proof-lines in $\pi$ to prove the claim.
  The argument is split into cases along the last rule or axiom (see $(R1)$-$(R4)$, $(A1)$-$(A10)$ and $(C1)$-$(C2)$ in \cite{Uniform}) in $\pi$.
  
  The case $(R1)$-$(R4)$ and $(C1)$-$(C2)$ are easy.
  
  As for the cases $(A1)$-$(A10)$, we check the most difficult case $(A6)$:
  \[F \cdot (G+H) = F \cdot G + F \cdot H.\]
  
  For each circuit $C$, let $\widetilde{C}$ be the circuit substituting $u_{i}/v_{i}$ for each variable $x_{i}$ in $C$, where $u_{i}$ and $v_{i}$ are distinct fresh variables.
  Furthermore, let $B$ be the integer assignment for $u_{i}$'s and $v_{i}$'s induced by the given rational assignment $A$.
  
  Then the evaluations can be computed as follows:
  \begin{align*}
   &\Eval(F \cdot (G+H),A) \\
   &\equiv \frac{\Eval_{alg}(\Num (\widetilde{F \cdot (G+H)}), B)}{\Eval_{alg}(\Den (\widetilde{F \cdot (G+H)}), B)}\\
   &\equiv \frac{\Eval_{alg}(\Num (\widetilde{F}) \cdot \Num (\widetilde{(G+H)}), B)}{\Eval_{alg}(\Den (\widetilde{F}) \cdot \Den(\widetilde{(G+H)}), B)}\\
   &\equiv \frac{\Eval_{alg}(\Num (\widetilde{F}) \cdot (\Num (\widetilde{G})\cdot \Den (\widetilde{H}) +\Num (\widetilde{H})\cdot \Den (\widetilde{G})), B)}{\Eval_{alg}(\Den (\widetilde{F}) \cdot (\Den(\widetilde{G}) \cdot \Den(\widetilde{H})), B)}\\
   &\equiv \frac{\Eval_{alg}(\Num (\widetilde{F}) \cdot \Num (\widetilde{G})\cdot \Den (\widetilde{H}),B) +\Eval_{alg}(\Num (\widetilde{F}) \cdot \Num (\widetilde{H})\cdot \Den (\widetilde{G})), B)}{\Eval_{alg}(\Den (\widetilde{F}) \cdot \Den(\widetilde{G}) \cdot \Den(\widetilde{H}), B)}\\
   &\equiv \frac{\Eval_{alg}(\Num (\widetilde{F}) \cdot \Num (\widetilde{G})\cdot \Den (\widetilde{H}),B)}{\Eval_{alg}(\Den (\widetilde{F}) \cdot \Den(\widetilde{G}) \cdot \Den(\widetilde{H}), B)}
    +\frac{\Eval_{alg}(\Num (\widetilde{F}) \cdot \Num (\widetilde{H})\cdot \Den (\widetilde{G})), B)}{\Eval_{alg}(\Den (\widetilde{F}) \cdot \Den(\widetilde{G}) \cdot \Den(\widetilde{H}), B)}\\
   &\equiv \frac{\Eval_{alg}(\Num (\widetilde{F}) \cdot \Num (\widetilde{G}),B)\cdot \Eval_{alg}(\Den (\widetilde{H}),B)}{\Eval_{alg}(\Den (\widetilde{F}) \cdot \Den(\widetilde{G}),B) \cdot \Eval_{alg}(\Den(\widetilde{H}), B)} \\
    &\quad +\frac{\Eval_{alg}(\Num (\widetilde{F}) \cdot \Num (\widetilde{H}),B) \cdot \Eval_{alg}( \Den (\widetilde{G}), B)}{\Eval_{alg}(\Den (\widetilde{F})  \cdot \Den(\widetilde{H}), B) \cdot \Eval_{alg}(\Den(\widetilde{G}),B)}\\
  &\equiv \frac{\Eval_{alg}(\Num (\widetilde{F}) \cdot \Num (\widetilde{G}),B)}{\Eval_{alg}(\Den (\widetilde{F}) \cdot \Den(\widetilde{G}),B) }
  +\frac{\Eval_{alg}(\Num (\widetilde{F}) \cdot \Num (\widetilde{H}),B)}{\Eval_{alg}(\Den (\widetilde{F})  \cdot \Den(\widetilde{H}), B)}
  \end{align*}
  
  and
  
  \begin{align*}
   &\Eval(F \cdot G + F \cdot H, A) \\
   &\equiv \frac{\Eval_{alg}( \Num(\widetilde{F \cdot G + F \cdot H}), B)}{\Eval_{alg}( \Den(\widetilde{F \cdot G + F \cdot H}), B)}\\
   &\equiv \frac{\Eval_{alg}( \Num(\widetilde{F \cdot G}) \Den(\widetilde{F \cdot H}) + \Num(\widetilde{F \cdot H}) \Den(\widetilde{F \cdot G})), B)}{\Eval_{alg}( \Den(\widetilde{F \cdot G}),B) \Eval_{alg}(\Den (\widetilde{F \cdot H}), B)} \\
   &\equiv \frac{\Eval_{alg}( \Num(\widetilde{F \cdot G}) \Den(\widetilde{F \cdot H}),B)}{\Eval_{alg}( \Den(\widetilde{F \cdot G}),B) \Eval_{alg}(\Den (\widetilde{F \cdot H}), B)} \\
   &\quad + \frac{\Eval_{alg}(\Num(\widetilde{F \cdot H}) \Den(\widetilde{F \cdot G})), B)}{\Eval_{alg}( \Den(\widetilde{F \cdot G}),B) \Eval_{alg}(\Den (\widetilde{F \cdot H}), B)} \\
   &\equiv \frac{\Eval_{alg}( \Num(\widetilde{F \cdot G}),B) \Eval_{alg}( \Den(\widetilde{F \cdot H}),B)}{\Eval_{alg}( \Den(\widetilde{F \cdot G}),B) \Eval_{alg}(\Den (\widetilde{F \cdot H}), B)} \\
   &\quad + \frac{\Eval_{alg}(\Num(\widetilde{F \cdot H}),B) \Eval_{alg}( \Den(\widetilde{F \cdot G})), B)}{\Eval_{alg}( \Den(\widetilde{F \cdot G}),B) \Eval_{alg}(\Den (\widetilde{F \cdot H}), B)} \\
  &\equiv \frac{\Eval_{alg}( \Num(\widetilde{F \cdot G}),B) }{\Eval_{alg}( \Den(\widetilde{F \cdot G}),B) } 
   + \frac{\Eval_{alg}(\Num(\widetilde{F \cdot H}),B) }{\Eval_{alg}(\Den (\widetilde{F \cdot H}), B)} \\
   &\equiv \frac{\Eval_{alg}(\Num (\widetilde{F}) \cdot \Num (\widetilde{G}),B)}{\Eval_{alg}(\Den (\widetilde{F}) \cdot \Den(\widetilde{G}),B) }
  +\frac{\Eval_{alg}(\Num (\widetilde{F}) \cdot \Num (\widetilde{H}),B)}{\Eval_{alg}(\Den (\widetilde{F})  \cdot \Den(\widetilde{H}), B)}
  \end{align*}
  (all equalities above are easy consequences of soundness of $\Eval_{alg}$ with respect to $\mathbb{P}_{c}(\ZZ)$-proofs and the definitions of $\Num$ and $\Den$)
  
  Therefore, the result follows.
 \end{proof}
 
 Now, we go back to $\VNC^{2}$. 
 Let $A=(f_{ij})_{i,j \in [n]}$ be a matrix with polynomial coefficients and $\deg f_{ij} \leq d$ for all $i,j \in [n]$.
 
 From the circuit $\Det_{balanced}$ in \cite{Uniform}, we can construct a circuit $\Det_{\VNC^{2}}$ which receives input $A$ and outputs coefficients of the determinant of $A$.
 Note that $\det(A)$ is a polynomial of degree $\leq dn$.
 
 Since $\Det_{\VNC^{2}}(A)$ receives rational inputs, we use the function evaluate the output of $\Det_{\VNC^{2}}(A)$.
 Using claim, we can show that
 \[\Det_{\VNC^{2}}(AB) =\Det_{\VNC^{2}}(A)\Det_{\VNC^{2}}(B)\]
 is provable in $\VNC^{2}$.
 
 To finish the proof, it suffices to show that the determinant defined by Berkowitz's algorithm (carried out in \cite{The proof complexity of linear algebra}) coincides with $\Det_{\VNC^{2}}(A)$.
  
 Indeed, the proof can be carried out as follows: in $\LAP$, 
\begin{align}
 \ch(A):= \column (1,A) \cdots \column (n,A) \label{characteristicpolynomial},
 \end{align}
 and 
 \begin{align}
  \det(A) = \ch(A)(0). \label{detviach}
 \end{align}
 Here, $\column (k,A)$ is described as follows: let $A = (a_{ij})_{i,j \in [n]}$.
 Set
 \begin{align}
  A_{k} &:= (a_{(i+k-1),(j+k-1)})_{i,j \in [n-k+1]} \quad (1 \leq k \leq n),\\
    \begin{bmatrix}
   a_{kk} & R_{k}\\
   S_{k} & A_{k+1}\\
  \end{bmatrix}
  &:= A_{k} \quad (1 \leq k \leq n-1),\\
  \column (k,A) &:=
  \begin{bmatrix}
   1 & 0  & & 0\\
   -a_{kk}& 1 &&0 \\
   -R_{k}S_{k} & -a_{kk} &\cdots & 0\\
   & \ddots &  \\
   -R_{k}A_{k}^{n-k-2}S_{k} & -R_{k}A_{k}^{n-k-3}S_{k} & & 1 \\
   -R_{k}A_{k}^{n-k-1}S_{k} & -R_{k}A_{k}^{n-k-2}S_{k} & & -a_{kk} 
  \end{bmatrix} \label{generalform}\\
   &\in \Mat_{\FF}(n-k+2,n-k+1)
  \quad (1 \leq k \leq n-1), \\
  \column (n,A) &:= 
  \begin{bmatrix}
   1\\
   -a_{nn}
  \end{bmatrix} \label{cornerform}.
 \end{align}
 
 Since
\begin{align*}
\LAP \vdash (\Lambda,d) &\in \Mat_{\FF[X]}(m,m) \rightarrow  \\
&\detpol(\Lambda,d) = \coeff_{pol}(\column_{pol}(1,\Lambda,d)\cdots \column_{pol}(n,\Lambda,d), 0), 
\end{align*}
by the interpretability of $\LAP$ by $\VNC^{2}$,
\begin{align*}
\VNC^{2} \vdash ((\Lambda,d) &\in \Mat_{\FF[X]}(m,m))_{\mathcal{I}} \rightarrow \\
&(\detpol(\Lambda,d))_{\mathcal{I}} =(\coeff_{pol})_{\mathcal{I}} \left( \column_{pol}(1,\Lambda,d)_{\mathcal{I}} \cdots  \column_{pol}(n,\Lambda,d)_{\mathcal{I}},0 \right) \end{align*}
(note that iterated multiplication of matrices of polynomial coefficient can be formalized in $\VNC^{2}$ and it is used in the RHS).

Therefore, in $\VNC^{2}$, the following holds; let $(M,m(d+1),m,Q)$ be a code of 
a matrix of rational coefficients.
Assume $A$ is of the following form;
\begin{align*}
  A = \begin{bmatrix}
   A_{0} \\
   \vdots \\
   A_{d}
  \end{bmatrix}
  \mbox{(each} \ A_{i} \ \mbox{is an $(m \times m)$-matrix of integer coefficients)},
 \end{align*}
 that is, informally, $A$ codes the matrix 
 \[ \tilde{A} = A_{0}+A_{1}X + \cdots +A_{d}X^{d}\]
 of integer-coefficient polynomial coefficients.
 Note that the original tuple $(M,m(d+1),m,Q)$ codes the matrix $\hat{A}:=\frac{1}{Q}\tilde{A}$ of rational-coefficient polynomial coefficients.
 
 Then $\chi(\hat{A}) := (\detpol)_{\mathcal{I}}(M,m(d+1),m,Q)$ is computed according to Berkowitz algorithm described by (\ref{characteristicpolynomial})-(\ref{cornerform}).
 
Now, our goal is to show that 
\[\chi(\hat{A})=\Det_{\VNC^{2}}(YI-\hat{A})\]
as polynomials of a fresh indeterminate $Y$.

 Once this is established, 
 \begin{align*}
  (\detpol)_{\mathcal{I}}(\hat{A}) 
  &= (-1)^{m}\chi(\hat{A}) [0/Y]\\
  &= (-1)^{m}\Det_{\VNC^{2}}(YI-\hat{A}) [0/Y]\\
  &=  (-1)^{m}\Det_{\VNC^{2}} (-\hat{A})\\
  &= \Det_{\VNC^{2}} (\hat{A})
 \end{align*}
 (the last two equalities follow from the soundness of $\PP_{c}(\ZZ)$-proof established in $\VNC^{2}$).
 
Indeed, by \cite{Uniform}, $\VNC^{2}$ can formalize and prove cofactor-expansion and Cayley-Hamilton theorem for matrices of polynomial coefficient.
By Claim \ref{soundness} above, we can extend this result to the matrices of rational coefficient, too.

 Therefore, the proofs of Lemma 4.2.1 and Lemma 4.2.2 in \cite{Soltys} can be formalized and proven in $\VNC^{2}$.
 
 Let $A$ be an $(n \times n)$-matrix of rational coefficients in general, and
 \begin{align*}
  A=\begin{bmatrix}
   a_{11} & R \\
   S & M
  \end{bmatrix}.
 \end{align*}
 
 Then by Lemma 4.2.1 of \cite{Soltys}, we have
 \[\chi(A)(Y) = (Y-a_{11})\chi(M) - R \adj(YI-M) S,\]
 where $\adj(N)$ is the adjoint of $N$.
 
 Let 
 \[\chi(M) = q_{n-1}Y^{n-1} + \cdots + q_{1}Y+q_{0}\]
 and 
 \[B(Y) = \sum_{2 \leq k \leq n} (q_{n-1}M^{k-2} + \cdots + q_{n-k+1}I)Y^{n-k}.\]
 Then $B(Y) = \adj (YI-M)$.
 
 By substituting $B(Y)$ for $\adj(YI-M)$, we obtain
 \begin{align*}
  \chi(A)(Y) &= (Y-a_{11})\chi(M) - RB(Y) S\\
  &= (Y-a_{11})\chi(M) - \sum_{2 \leq k \leq n} (q_{n-1}RM^{k-2}S + \cdots + q_{n-k+1}RS)Y^{n-k}.
 \end{align*}
 Therefore, we have
 \begin{align*}
  \begin{bmatrix}
   p_{n} \\
   \vdots \\
   p_{0}
  \end{bmatrix}
  =  \column (1,A)
  \begin{bmatrix}
   q_{n-1} \\
   \vdots \\
   q_{0}
  \end{bmatrix}.
 \end{align*}
Hence, by $\Sigma^{B}_{0}$-induction in $\widehat{\VNC^{2}}$, we conclude that
\[\chi(\hat{A})=\Det_{\VNC^{2}}(YI-\hat{A}).\]
 
\end{proof}

\begin{cor}
$\VNC^{2}$ can $\Sigma^{B}_{1}$-define $\rank(A)$ and proves its basic properties like Theorem \ref{basic property}.
\end{cor}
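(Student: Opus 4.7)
The plan is to transfer the results about $\rank$ from $\LAPPD$ to $\VNC^{2}$ via the interpretation $\mathcal{I}$ established in the previous theorem. First I would recall Definition \ref{formalizedMulmuley}, where $\rank(A)$ is built up from the terms $\symm$, $\chi_{n}$, $\polize$, the $\llbracket \ch \rrbracket_{rat}$ of the $\polize$-matrix, and the $\Sigma^{B}_{0}$-definable multiplicity function $\mul$. Each ingredient is either an $\mathcal{L}_{\LAP}$-term or a $\Sigma^{B}_{0}$-definable function over $\LAPPD$, so by Lemma \ref{characteristicfunction} the whole construction is open-definable over $\LAPPD$.

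Next I would push this definition through the interpretation. Since $\indexsort$ is interpreted by the number-sort and the symbols occurring in the definition of $\rank$ are all among those whose $\mathcal{I}$-translations are given by explicit $\mathcal{L}_{\widehat{\VNC^{2}}}$-terms (bounded sums, matrix powering, the characteristic polynomial produced by $\Det_{\VNC^{2}}$, and the $\lamt$-comprehension furnished by $\Sigma^{B}_{0}$-CA in $\widehat{\VNC^{2}}$), the translated formula $\rank(A)_{\mathcal{I}}$ expresses a $\Sigma^{B}_{1}$-definable and provably total function of $\widehat{\VNC^{2}}$ that outputs a number-sort value, hence a $\Sigma^{B}_{1}$-definable function of $\VNC^{2}$ taking a matrix of rational coefficients to a natural number.

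For the basic properties, I would invoke Theorem \ref{basic property} together with the supporting lemmas of \S \ref{A definition of rank and its basic properties} (the decomposition $\FF(X)^{N} = \ker C \oplus \im C$, the existence of bases for the image and kernel, the characterization via maximal nonsingular minors via Lemma \ref{maximalnonsingular}, and the equality $\rk(A) = \rank(A)$). All of these are proved in $\LAPPD$, so applying the just-established interpretation $\mathcal{I}$ yields $\VNC^{2}$-provable translations. Concretely, $\VNC^{2}$ proves that the $\Sigma^{B}_{1}$-defined $\rank(A)_{\mathcal{I}}$ counts the linearly independent columns (and rows) of $A$, equals the size of a maximal nonsingular minor of $A$, and satisfies $\rank(A)_{\mathcal{I}} + \dim(\ker A)_{\mathcal{I}} = n$ for $A \in \Mat_{\FF}(m,n)$.

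The only genuinely delicate point is making sure that no step in the $\LAPPD$-proof silently uses reasoning beyond what the interpretation preserves. Since $\mathcal{I}$ is an interpretation in the strict sense of Definition \ref{interpretation}, every first-order consequence of $\LAPPD$ translates to a consequence of $\VNC^{2}$, so this concern is automatic; one only has to verify that the statements being transferred are genuinely first-order in $\mathcal{L}_{\LAPPD}$, which they are, as all quantifiers over matrices, rationals and indices in the proofs of \S \ref{A definition of rank and its basic properties} are of the appropriate sort and bounded in the relevant sense. This yields the corollary with no further work.
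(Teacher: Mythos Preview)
Your proposal is correct and follows essentially the same approach as the paper: transfer everything from $\LAPPD$ to $\VNC^{2}$ via the interpretation $\mathcal{I}$. The paper's own proof is a two-line version of what you wrote, noting only that each $\Sigma^{B}_{0}$-formula in $\mathcal{L}_{\LAP}$ is interpreted into a $\Sigma^{B}_{0}$-formula in $\widehat{\VNC^{2}}$; your more detailed unpacking of why the translated $\rank$ is $\Sigma^{B}_{1}$-definable and why the supporting lemmas of \S\ref{A definition of rank and its basic properties} all transfer is a faithful elaboration of that same idea.
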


\begin{proof}
Just interpret the all results of $\LAPPD$ by $\mathcal{I}$.
Note that each $\Sigma^{B}_{0}$ formula in $\mathcal{L}_{\LAP}$ is interpreted into a $\Sigma^{B}_{0}$ formula in $\widehat{\VNC^{2}}$.
\end{proof}

\section{Some combinatorial results provable in $\VNC^{2}$}\label{Some combinatorial results}
First, note that the Oddtown theorem, the Graham-Pollak theorem, and the Fisher inequality discussed in \cite{Hard} are provable by quasipolynomial-sized Frege proofs as \cite{Hard} conjectured.
More precisely, they can be proven in $\VNC^{2}$ and this fact is a corollary of the result of \cite{Uniform}.
Indeed, knowing $\VNC^{2}$ proves determinant for matrices of integer coefficients (therefore also $\ZZ_{2}$-coefficients) is multiplicative, we see $\VNC^{2}$ proves that if $\det(A)=0$, then there exists a nontrivial $b$ such that $Ab=0$ (by the fact that the multiplicativity of determinant implies Cayley-Hamilton theorem and Lemma \ref{CHimpliesnontrivialsolution}).
Moreover, $\VNC^{2}$ can treat low-degree polynomials (cf. \cite{Uniform}) and also $\VNC^{2}$ can prove that sums of squares of integers (coded by the string-sort) are nonnegative.

A more careful analysis also shows that the following variation of Ray-Chaudhuri-Wilson theorem, another candidate of possibly difficult statements mentioned in \cite{Hard}, is also provable in $\VNC^{2}$:

\begin{thm}\label{nonuniformRCW}
Let $s \in \NN$. Then $\VNC^{2}$ proves the following:
Let $n,m$ be numbers and $\mathcal{F},L$ be strings.
Suppose $\mathcal{F}$ codes a family $\{A_{1}, \ldots, A_{m}\}$ of subsets of $[n]$ and $L$ codes a set $\{l_{1},\ldots, l_{s}\}$ of numbers.
Assume also $\mathcal{F}$ is $L$-intersecting, that is, $\forall i \neq j \in [m]. \#(A_{i} \cap A_{j}) \in L$.
Then the following holds:
\begin{align}\label{informalRCW}
m \leq \sum_{i=0}^{s} \binom{n}{i}
\end{align}
Note that the RHS is $\Sigma^{B}_{0}$-definable since $s$ is standard.
\end{thm}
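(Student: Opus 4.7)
The plan is to formalize the Frankl--Wilson polynomial-method argument inside the linear algebra over $\QQ$ that $\VNC^{2}$ inherits from $\LAPPD$ via the interpretation of Section~\ref{Interpretation}. First, sort $\mathcal{F}$ in nondecreasing order of size so that $|F_{1}| \leq \cdots \leq |F_{m}|$ (a term of $\widehat{\VNC^{2}}$); let $v_{i} \in \{0,1\}^{n}$ denote the characteristic vector of $F_{i}$, and set $J_{i} := \{k \leq s : l_{k} < |F_{i}|\}$. Associate to $F_{i}$ the polynomial
\[ P_{i}(x_{1}, \ldots, x_{n}) := \prod_{k \in J_{i}}\bigl(\langle v_{i}, x\rangle - l_{k}\bigr), \]
whose multilinearization $\widetilde{P}_{i}$ (replace $x_{j}^{e}$ by $x_{j}$ for $e \geq 1$, legitimate since we only evaluate at $0/1$ vectors) lies in the $\QQ$-span of the multilinear monomials $x_{S} := \prod_{j \in S} x_{j}$ with $S \subseteq [n]$ and $|S| \leq s$. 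Since $s$ is standard, there are exactly $N := \sum_{i=0}^{s} \binom{n}{i}$ such monomials.

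Second, I build three rational matrices as terms of $\widehat{\VNC^{2}}$: a coefficient matrix $A \in \QQ^{m \times N}$ whose $(i,S)$-entry is the coefficient of $x_{S}$ in $\widetilde{P}_{i}$; an evaluation matrix $V \in \QQ^{N \times m}$ with $V_{S,j} := \prod_{k \in S}(v_{j})_{k}$; and their product $B := AV \in \QQ^{m \times m}$. By construction
\[ B_{ij} \;=\; \widetilde{P}_{i}(v_{j}) \;=\; P_{i}(v_{j}) \;=\; \prod_{k \in J_{i}}(|F_{i} \cap F_{j}| - l_{k}). \]
The diagonal $B_{ii} = \prod_{k \in J_{i}}(|F_{i}| - l_{k})$ is nonzero by the definition of $J_{i}$. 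For $i > j$, the size ordering gives $|F_{i}| \geq |F_{j}|$, which together with $F_{i} \neq F_{j}$ rules out $F_{i} \subseteq F_{j}$; hence $|F_{i} \cap F_{j}| < |F_{i}|$, and since $|F_{i} \cap F_{j}| \in L$ it coincides with some $l_{k}$ having $k \in J_{i}$, so the factor $\langle v_{i}, v_{j}\rangle - l_{k}$ vanishes and $B_{ij} = 0$. Thus $B$ is upper triangular with nonzero diagonal, and $\det(B) = \prod_{i} B_{ii} \neq 0$, provable in $\LAPPD$ via the multiplicativity axiom $(MDP)$.

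The bound then follows from the rank theory of Section~\ref{A definition of rank and its basic properties}. If $c \in \QQ^{m}$ satisfies $c^{t} A = 0$, then $c^{t} B = c^{t} AV = 0$; applying the matrix $D$ from Corollary~\ref{quasiinverse} with $BD = \det(B) I$, we obtain $\det(B) c^{t} = c^{t} B D = 0$, whence $c = 0$ since $\det(B) \neq 0$. Hence $\ker(A^{t}) = 0$, so Lemma~\ref{basisofkernel} applied to $A^{t}$ forces $\rk(A^{t}) = \column(A^{t}) = m$; by the row--column symmetry of Lemma~\ref{maximalnonsingular}, $\rk(A) = m$, and combined with $\rk(A) \leq \column(A) = N$ from Lemma~\ref{basisofimage} this gives $m \leq N$. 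The principal obstacle is the explicit $\Sigma^{B}_{0}$-construction of $A$: the coefficient of $x_{S}$ in the multilinear expansion of $\prod_{k \in J_{i}}(\sum_{j \in F_{i}} x_{j} - l_{k})$ is a signed sum, over choices of ``linear term or constant'' in each of the at most $s$ factors and over surjections from the chosen index set onto $S$, of products of $l_{k}$'s. Because $|J_{i}|, |S| \leq s$ with $s$ standard, this sum has bounded arity and is definable by a fixed open formula; the real bookkeeping is choosing a $\Sigma^{B}_{0}$ enumeration of the $N$ size-$\leq s$ subsets of $[n]$ so that $V$ and the identity $B = AV$ become accessible as $\widehat{\VNC^{2}}$ operations.
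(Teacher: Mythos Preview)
Your proposal is correct and follows essentially the same Frankl--Wilson polynomial-method strategy as the paper: form the polynomials $P_i$, multilinearize, factor the $m\times m$ evaluation matrix through the $N$-dimensional space of multilinear monomials of degree $\leq s$, and exploit that the product is triangular with nonzero diagonal. The only notable differences are that you explicitly sort $\mathcal{F}$ by size to secure the triangularity (a step the paper's write-up glosses over), and you finish via the rank identities of Section~\ref{A definition of rank and its basic properties}, whereas the paper, after mentioning this rank route, prefers the slightly more elementary contradiction of padding the two factor matrices to square $m\times m$ matrices and deriving $0\neq\det(U)=0$ directly from multiplicativity of $\det$, bypassing $\rk$ altogether.
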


Before the proof of the theorem above, we put useful facts on binomial coefficients here:

\begin{lemma}\label{binomialcoeff}
Let $s \in \NN$. Then $\VNC^{2}$ proves the following:
\begin{enumerate} 
 \item\label{beyond} For any numbers $n$ and $i \leq s$, if $i \not\in [0,n]$, then $\binom{n}{i}=0$. Also, $\binom{n}{0}=1$.
 \item\label{leader} For any numbers $n$ and $i\leq s$,
 \[\binom{n+1}{i+1} = \binom{n}{i}+\binom{n}{i+1}.\]
 \item\label{casesformin} For any numbers $n$ and $i\leq s$,
 \[\binom{n+1}{i+1} = \sum_{j=i}^{n}\binom{j}{i}.\]
 Note that bounded sums in the number sort on the RHS are formalized in $VTC^{0}$.
\end{enumerate}
\end{lemma}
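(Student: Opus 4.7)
The plan is to fix a definition of $\binom{n}{i}$ that is transparent in $\VNC^{2}$ and then prove the three items using the fact that $s$ is a standard natural number, so that we may reason about each $i \in \{0,1,\ldots,s\}$ by external case analysis (equivalently, by $\Sigma^{B}_{0}$-induction with a standard bound). I would adopt the explicit formula $\binom{n}{i} := \frac{n(n-1)\cdots(n-i+1)}{i!}$ for $i \geq 1$, together with $\binom{n}{0} := 1$, interpreted with modified-minus subtraction on the number sort. For each fixed $i \leq s$ this is a polynomial in $n$ of standardly bounded degree, so its evaluation and the integrality of the division by $i!$ are provably total and $\Sigma^{B}_{0}$-definable in $\VTC^{0} \subseteq \VNC^{2}$. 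With this definition, item (\ref{beyond}) is almost immediate: $\binom{n}{0}=1$ by stipulation, and if $n < i \leq s$, then the product $n(n-1)\cdots(n-i+1)$ contains the factor $n-n = 0$, forcing $\binom{n}{i} = 0$.

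For item (\ref{leader}), Pascal's identity, I would carry out, for each fixed standard $i \leq s$, a direct algebraic computation showing that both sides equal $\binom{n}{i}\cdot\frac{n+1}{i+1}$, via
\[
\binom{n}{i}+\binom{n}{i+1}=\binom{n}{i}\left(1+\frac{n-i}{i+1}\right)=\binom{n}{i}\cdot\frac{n+1}{i+1},
\]
handling the boundary cases $i \geq n$ separately by appeal to item (\ref{beyond}). Since only standardly many values of $i$ arise, this is a finite, $\Sigma^{B}_{0}$-verifiable calculation in $\VNC^{2}$.

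For item (\ref{casesformin}) I would proceed by $\Sigma^{B}_{0}$-induction on $n$, for each fixed $i \leq s$. With the usual convention that $\sum_{j=a}^{b}=0$ when $b<a$, the base case $n=0$ reduces to item (\ref{beyond}) together with $\binom{0}{0}=1$, and the inductive step is a one-line computation using Pascal's identity (item (\ref{leader})):
\[
\sum_{j=i}^{n+1}\binom{j}{i}=\sum_{j=i}^{n}\binom{j}{i}+\binom{n+1}{i}=\binom{n+1}{i+1}+\binom{n+1}{i}=\binom{n+2}{i+1}.
\]
The main obstacle I expect is bookkeeping rather than conceptual: confirming that the bounded sum over $j$ in the number sort is indeed $\Sigma^{B}_{0}$-definable and provably total in $\VNC^{2}$ (the content of the remark appended to the lemma, via $\VTC^{0}$-definability), and checking that modified-minus subtraction meshes cleanly with the explicit polynomial formula at the boundary $i \geq n$, so that the switch between the formula and the ``$=0$'' clause is seamless across all three proofs.
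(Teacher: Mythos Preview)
The paper does not supply a proof of this lemma at all; it is stated as a list of ``useful facts'' and then the text proceeds directly to the proof of Theorem~\ref{nonuniformRCW}. Your sketch therefore fills a genuine gap, and the overall strategy---fix a standard $i\le s$, use the falling-factorial formula, appeal to $\Sigma^{B}_{0}$-induction on $n$ for the hockey-stick identity---is correct and is the natural way to proceed.

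One small point on item~(\ref{leader}): the displayed manipulation
\[
\binom{n}{i}+\binom{n}{i+1}=\binom{n}{i}\left(1+\frac{n-i}{i+1}\right)
\]
passes through the non-integer quantity $\frac{n-i}{i+1}$, so as written it is not a computation in the number sort. The clean fix is the one you hint at under ``bookkeeping'': since $i$ is standard, clear denominators by multiplying through by $(i+1)!$ and verify the resulting degree-$(i+1)$ polynomial identity in $n$ externally; $\VNC^{2}$ then proves it for all $n$. Alternatively, prove $(i+1)!\binom{n+1}{i+1}=(i+1)!\binom{n}{i}+(i+1)!\binom{n}{i+1}$ directly from the falling-factorial definition and divide at the end. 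Either way the argument goes through, but the intermediate fraction should be eliminated before you call it a proof inside $\VNC^{2}$.
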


\begin{proof}[Proof of Theorem \ref{nonuniformRCW}]
First, for each $t \in [0,s]$, let
\begin{align*}
&\varphi_{t}(i_{1},\ldots,i_{t}) :\equiv (i_{1}<i_{2} < \cdots<i_{t})\land \\
&\left(x=\sum_{j=0}^{t}\binom{n}{j} + \sum_{k=1}^{i_{1}} \binom{n-k}{t} + \sum_{k=i_{1}+1}^{i_{2}}\binom{n-k}{t-1} + \cdots + \sum_{k=i_{t-1}+1}^{i_{t}} \binom{n-k}{1}\right).
\end{align*}

Now, we work in $\VNC^{2}$.
Suppose $n,m,s$, $\mathcal{F}$ and $L$ violate the statement.
Then $m > \sum_{i=0}^{s}\binom{n}{i}$.
By Lemma \ref{binomialcoeff} (\ref{casesformin}), we have
\begin{align*}
&\forall x \in \left[\sum_{i=0}^{s}\binom{n}{i}\right]. \bigvee_{t=0}^{s}\left(\exists (i_{1},\ldots, i_{t}) \in [n]^{t}.  \varphi_{t}(i_{1},\ldots,i_{t}) \right).
\end{align*}
Moreover, for each $x$, the witness $t$ of the disjunction above is unique.
Furthermore, for any string $S$ coding a subset of $[n]$ and satisfying $\#S =t\leq s$, we have an enumeration $(i_{1},\ldots,i_{t}) \in[n]^{t}$ of the elements of $S$ satisfying $i_{1}<\cdots<i_{t}$, and 
\[\sum_{j=0}^{t}\binom{n}{j} + \sum_{k=1}^{i_{1}} \binom{n-k}{t} + \sum_{k=i_{1}+1}^{i_{2}}\binom{n-k}{t-1} + \cdots + \sum_{k=i_{t-1}+1}^{i_{t}} \binom{n-k}{1} \in \left[\sum_{i=0}^{s}\binom{n}{i}\right].\]

Therefore, we can regard the set $\left[\sum_{i=0}^{s}\binom{n}{i}\right]$ as a feasible enumeration of the all subsets of $[n]$ having cardinality $\leq s$.

Now, we follow a standard proof of non-uniform Ray-Chaudhuri-Wilson theorem.
We identify each $A_{i}$ as their characteristic vector $v_{i} \in \{0,1\}^{n}$.
For each $i \in [m]$, consider the multivariate polynomial
\[F_{i}(X_{1},\ldots,X_{n}) := \prod_{k \in [s],l_{k} <\#A_{i}}\left(\sum_{j=1}^{n}(v_{i})_{j} X_{j} - l_{k}\right).\]
coded in the string-sort as a natural arithmetic circuit.

We can evaluate the polynomials of integer-coefficients at $(X_{1},\ldots, X_{n}) = v_{j}$ for each $j \in [m]$ in $\VNC^{2}$, and we have an $(m\times m)$-matrix $U$ given by:
\[U_{ij} := \Eval_{alg}(F_{i},v_{j}) \quad (i,j \in [m]).\]
Since we have
\[\sum_{j=1}^{n}(v_{i})_{j} (v_{i^{\prime}})_{j} = \#(A_{i} \cap A_{i^{\prime}})\]
for $i,i^{\prime} \in [m]$, it is straightforward to see that $M$ is upper-triangular, and each diagonal component is nonzero.
Hence, $\det(U) \neq 0$.

Now, we would like to show that each row vector of $U$ is spanned by the functions on $[m]$ induced by multilinear monomials.
Since $\left[\sum_{i=0}^{s}\binom{n}{i}\right]$ enumerates all the subsets of $[n]$ having cardinality $\leq s$, we can regard it as an enumeration of multilinear monomials of degree $\leq s$ whose variables are among $X_{1}, \ldots, X_{n}$.

We define an arithmetic circuit $\lincoeff(F,\sigma)$ for a multivariate polynomial $F(X_{1},\ldots, X_{n})$ of degree $\leq s$ and a code $\sigma \in \left[\sum_{i=0}^{s}\binom{n}{i}\right]$ of multilinear monomials, which returns the coefficient of $\sigma$ of the multilinearization of $F$:
\begin{enumerate}
 \item \[\lincoeff(X_{i},\sigma) :=\begin{cases}
 &1  \quad (\mbox{if $\sigma$ encodes $\{i\}$}) \\
 &0  \quad (\mbox{otherwise})
 \end{cases}. \]
 \item For a constant $a \in \ZZ$, 
  \[\lincoeff(X_{i},\sigma) :=\begin{cases}
 &a  \quad (\mbox{if $\sigma$ encodes $\emptyset$}) \\
 &0  \quad (\mbox{otherwise})
 \end{cases}. \]
 \item \[\lincoeff(F+G,\sigma) := \lincoeff(F,\sigma) +\lincoeff(G,\sigma). \]
 \item \[\lincoeff(F\cdot G,\sigma) := \sum_{\tau_{1}, \tau_{2}\colon \tau_{1}\cup\tau_{2}=\sigma }\lincoeff(F,\tau_{1}) \cdot \lincoeff(G,\tau_{2}). \]
 Here, $\tau_{1}\cup \tau_{2} = \sigma$ means that $\sigma$ codes the union of the sets coded by $\tau_{1}$ and $\tau_{2}$. 
\end{enumerate}

It is straight forward to see that for any $V \in \{0,1\}^{n}$, by induction on the structure of $F$, 
\[\Eval_{alg}(F,V) = \Eval_{alg} \left(\sum_{\sigma \in \left[\sum_{i=0}^{s}\binom{n}{i}\right]} \lincoeff(F,\sigma)M(\sigma),V\right),\]
where $M(\sigma)$ denotes the multilinear monomial represented by $\sigma$.

Hence, we have $U=CM$, where $C \in \Mat_{\ZZ}(m,\sum_{i=0}^{s}\binom{n}{i}), M \in  \Mat_{\ZZ}(\sum_{i=0}^{s}\binom{n}{i},m)$, and 
\begin{align*}
C_{i\sigma} := \lincoeff(F_{i},\sigma) \ \& \ M_{\sigma j} := \Eval_{alg}(M(\sigma),v_{j}) 
\end{align*}
for each $i,j\in[m]$ and $\sigma \in \left[\sum_{i=0}^{s}\binom{n}{i}\right]$.

Now, using rank function, we can show 
\[m=\rank(U) \leq \rank(M) \leq \min\left\{m,\sum_{i=0}^{s}\binom{n}{i}\right\},\]
 a contradiction.
However, we can obtain a contradiction in a more elementary way here as follows; since we have assumed $m > \sum_{i=0}^{s}\binom{n}{i}$, we can pad $C$ by trivial column vectors to make it $(m \times m)$ so as $M$ by trivial row vectors, and obtain
\[U = \begin{bmatrix}
C & \zeromat 
\end{bmatrix}
\begin{bmatrix}
M \\
\zeromat
\end{bmatrix},\]
which is now a multiplication in $\Mat_{\ZZ}(m,m)$.
Then, taking the determinant of both sides, we obtain $0 \neq \det(U)=0$, a contradiction.
\end{proof}
 
As an application of the formalization of matrix rank function, we can obtain that a good lower bound of Ramsey number can be proven in $\VNC^{2}$:
\begin{thm}
There exists a constant $c>0$ such that $\VNC^{2}$ proves; for any number $n$, there exists a graph of $n$-vertices with no clique or independent set larger than $t:=2^{c\sqrt{|{n}| \cdot ||{n}||}}$.

Moreover, the explicit construction of such graphs can be carried out in uniform-$NC^{2}$.
\end{thm}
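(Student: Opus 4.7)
The plan is to formalize Grolmusz's explicit construction of Ramsey graphs and verify its Ramsey property in $\VNC^{2}$ using the matrix rank function from Section \ref{A definition of rank and its basic properties}, interpreted over $\QQ$ as in Section \ref{Interpretation}. The construction proceeds via a set system $\mathcal{F} = \{A_{1}, \ldots, A_{n}\} \subseteq 2^{[m]}$ (with $m$ polynomial in $\log n$) whose pairwise intersections, taken modulo $6$, are confined to a small set of residues; the desired graph $G_{n}$ on $[n]$ has edges $i \sim j$ iff $|A_{i} \cap A_{j}|$ lies in a chosen residue class modulo $2$.

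To obtain the set system, I would use the Barrington--Beigel--Rudich polynomial representation of $\mathrm{OR}$ modulo $6$: an explicit multilinear polynomial $P$ in $m$ variables of degree $d = O(\sqrt{|n| \cdot \|n\|})$ whose encoding and evaluation are $\Sigma^{B}_{0}$-definable from the indices, hence in uniform $NC^{2}$. Reading off the monomial structure of $P$ produces $\mathcal{F}$ and thus $G_{n}$ as a $\Sigma^{B}_{0}$-definable relation of $(i, j)$.

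For the Ramsey bound, I would perform a Frankl--Wilson-style rank argument inside the interpretation of $\LAPPD$ by $\VNC^{2}$. Choose $F \in \QQ[X]$ of degree $d$ vanishing on the allowed off-diagonal intersection residues but nonzero on each diagonal value $|A_{i}|$. The matrix $M \in \Mat_{\QQ}(n, n)$ with $M_{ij} := F(|A_{i} \cap A_{j}|)$ admits a factorization $M = U^{T} D U$ where $U \in \Mat_{\QQ}(D_{0}, n)$, $D$ is diagonal, and $D_{0} \leq 2^{c' \sqrt{|n| \cdot \|n\|}}$, obtained by expanding $F(\langle v_{i}, v_{j} \rangle)$ in the Boolean indicators $v_{i}$ of $A_{i}$ and collecting multilinear monomials of degree at most $d$. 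This factorization is $\Sigma^{B}_{0}$-definable, so Lemma \ref{basisofimage} gives $\rank(M) \leq D_{0}$. On the other hand, for any clique or independent set $C \subseteq [n]$ of $G_{n}$, the submatrix $M[C : C]$ is diagonal with nonzero diagonal entries by the choice of $F$, hence nonsingular; Lemma \ref{maximalnonsingular} then yields $|C| \leq \rank(M) \leq D_{0} \leq t$.

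The main obstacle is the first step: formalizing the BBR polynomial and the resulting set system as $\Sigma^{B}_{0}$-definable objects in $\VNC^{2}$. This requires encoding the recursive BBR construction, its polynomial evaluations over $\ZZ/6\ZZ$, and the extraction of $\mathcal{F}$ by $\Sigma^{B}_{0}$-induction on the recursion depth, all of which lies within the scope of $\widehat{\VNC^{2}}$ but is tedious. Once the construction is in place, the rank half of the argument is a direct application of Lemmas \ref{basisofimage} and \ref{maximalnonsingular} via the interpretation of Section \ref{Interpretation}.
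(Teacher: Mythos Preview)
Your high-level plan---formalize the Grolmusz/BBR construction and bound clique and independent-set sizes by a matrix rank---matches the paper's strategy, but the rank argument you sketch over $\QQ$ does not go through, and this is where the paper diverges essentially from your proposal.

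The difficulty is with your polynomial $F \in \QQ[X]$. You want $F$ of degree $d = O(\sqrt{|n|\cdot\|n\|})$ to vanish on ``the allowed off-diagonal intersection residues'' so that $M[C:C]$ is diagonal for every clique or independent set $C$. But the intersection sizes $|A_i\cap A_j|$ are only constrained \emph{modulo $6$}; as actual integers they range over $\Theta(m)$ distinct values (where $m$ is the ground-set size). A rational polynomial that vanishes on all of those must have degree $\Theta(m)$, and then your factorization $M=U^{T}DU$ has $D_0 = \binom{m}{\le \deg F} \approx 2^{m}$, destroying the bound. The point of the BBR/Grolmusz argument is precisely that the low-degree miracle happens only modulo a composite; over $\QQ$ (or any characteristic-zero field) there is no low-degree $F$ witnessing the co-diagonal structure.

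The paper handles this by interpreting $\LAPPD$ not with $\FF=\QQ$ but with $\FF=\ZZ_2$ and $\FF=\ZZ_3$ (both fields, so the rank machinery of Section~\ref{A definition of rank and its basic properties} applies verbatim). Grolmusz's $\ZZ_6$-rank is replaced by $\max\{\rank_{\ZZ_2}(A),\rank_{\ZZ_3}(A)\}$; the co-diagonal matrix $A$ reduced mod $2$ controls clique size (giving $t\le \rank_{\ZZ_2}(A)+1$) and reduced mod $3$ controls independent-set size (giving $t\le \binom{\rank_{\ZZ_3}(A)+1}{2}+1$), following Grolmusz's Theorem~5. The low-rank decomposition of $A$ (Grolmusz's Theorem~8 / Lemma~3) then bounds both ranks by $k^{O(\sqrt{k})}$. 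So the key missing idea in your outline is to run the rank argument over the two prime fields rather than over $\QQ$; once you make that change, the rest of your plan (formalizing BBR, showing the construction is $\Sigma^B_0$-definable, invoking Lemmas~\ref{basisofimage} and~\ref{maximalnonsingular}) is essentially what the paper does.
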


\begin{proof}[Proof Sketch.]
We observe that the construction in \cite{Grolmusz} can be formalized straightforwardly in $\VNC^{2}$.
We describe where we literally follow \cite{Grolmusz} and where we need some ad-hoc changes.

Definition 1 in \cite{Grolmusz} treats $\Sigma^{B}_{0}$ notions, and it is straightforward to formalize them in $\VNC^{2}$.
As for Definition 2 in \cite{Grolmusz}, we do not formalize it literally.
It refers the minimum number $r$ such that $A$ can be written as $A=BC$, where $B \in \Mat_{\ZZ_{6}}(n,r), C \in  \Mat_{\ZZ_{6}}(r,n)$, which is a $\Delta^{B}_{2}(\VNC^{2})$-property of $r$, and it would be difficult to utilize for further formalization in $\VNC^{2}$ if we adopted it.
Instead, we consider $\rank_{\ZZ_{2}}(A)$ and $\rank_{\ZZ_{3}}(A)$ for a suitable $A$ simultaneously. Note that both quantity are defined and formalized in $\VNC^{2}$ since $\VNC^{2}$ can interpret $\LAPPD$ not only by interpreting the $\fieldsort$ as rationals but also $\ZZ_{2}$ or $\ZZ_{3}$.

As for Lemma 3 in \cite{Grolmusz}, we prove it for $R=\ZZ_{2},\ZZ_{3}$.
It suffices to show that the formalized matrix rank function for $R=\ZZ_{2},\ZZ_{3}$ satisfies Definition 2 in \cite{Grolmusz}. 
(Then we can literally follow the proof given in \cite{Grolmusz}.)
 We work in $\VNC^{2}$. 
 We want to see that $r=\rank(A)$ is the minimum number for which there exists a decomposition $A=BC$ where $B \in \Mat_{\FF}(n,r), C \in \Mat_{\FF}(r,n)$ for $\FF=\ZZ_{2},\ZZ_{3}$.
Indeed, for the existence of a decomposition, consider the decomposition 
 \[A=\basis(A)D,\]
  gather the nontrivial column vectors of $\basis(A)$ and form $B$, and collect the corresponding row vectors of $D$ and form $C$.   
 Furthermore, for the minimality, assume $A=BC$, where $B \in \Mat_{\FF}(n,s), C \in \Mat_{\FF}(s,n)$.
 We show $r \leq s$.
 Suppose otherwise.
 By Lemma \ref{maximalnonsingular}, we can extract a maximal nonsingular submatrix $R \in \Mat_{\FF}(r,r)$ of $A$.
 $\VNC^{2}$ can permute the rows of $A$ and make $R$ the principal submatrix, so, without loss of generality, we may assume that $A$ is of the following form:
 \[A= \begin{bmatrix}
 R & S\\
 T & U
 \end{bmatrix}.\]
Now, we have submatrices $B^{\prime} \in \Mat_{\FF}(r,s)$ of $B$ and $C^{\prime} \in \Mat_{\FF}(s,r)$ of $C$ such that
$R=B^{\prime}C^{\prime}$.
Since $r > s$, we can pad $B^{\prime}$ by trivial column vectors and $C^{\prime}$ by trivial row vectors, and obtain
\[R= \begin{bmatrix}
B^{\prime} & \zeromat
\end{bmatrix}
\begin{bmatrix}
C^{\prime} \\
 \zeromat
\end{bmatrix},\]
which is now a multiplication in $\Mat_{\FF}(r,r)$.
Taking the determinant of both sides, we obtain $0 \neq \det(R) = 0$, a contradiction. 
 
 Now, we proceed to Theorem 4 in \cite{Grolmusz}.
 It suffices to prove the theorem for $p=2,3$ for our purpose this time, and it is straightforward to formalize the proof for these particular cases in $\VNC^{2}$.
 
 Theorem 5 in \cite{Grolmusz} is treated similarly.
 It suffices to prove the theorem for $r=\mmax\{\rank_{\ZZ_{2}}(A), \rank_{\ZZ_{3}}(A)\}$ instead of $r=\rank_{\ZZ_{6}}(A)$.
 Replacing the inequalities 
 \[t \leq r+1 \quad \mbox{and} \quad t \leq \binom{r+1}{2} +1\] 
 with 
 \[t \leq \rank_{\ZZ_{2}}(A)+1 \quad \mbox{and} \quad t \leq \binom{\rank_{\ZZ_{3}}(A)+1}{2} +1\]
 respectively, it is straightforward to carry out the proof of the modified statement in $\VNC^{2}$.
 Note that the graph $G$ is $\Sigma^{B}_{0}$-defined by the given matrix $A$.
 
 Thus the remaining task is to formalize the proof of Theorem 8 in \cite{Grolmusz} so that the co-diagonal matrix $A$ is constructed by a formalized $NC^{2}$-function.
 First, we observe that it suffices to show a modified version of Theorem 8, where $\rank_{\ZZ_{6}}(A)$ replaced with $\mmax\{\rank_{\ZZ_{2}}(A),\rank_{\ZZ_{3}}(A)\}$.
 The proof of Theorem 8 in \cite{Grolmusz} relies on Theorem 10 in \cite{Grolmusz} for $m=6$, $l=2$, and the smallest integer $k$ such that $n \leq k^{k}$.
  Note that we must look at Theorem 2.1 in \cite{polynomialmodcomposite} for the proof of Theorem 10.
 
 Given a number $n$, the number $k$ described previously is available in $\VNC^{2}$ by $\Sigma^{B}_{0}$-Induction.
 We describe how to formalize the proof of Theorem 8 in \cite{Grolmusz}, admitting Theorem 10 in \cite{Grolmusz} for $m=6$, $l=2$ and $k$ at the moment.
 Let $P$ be the polynomial stated in Theorem 10.
 The function $\delta$ and the matrix $\bar{A}$ are $\Sigma^{B}_{0}$-definable in $n$ and $P$, and they are available in $\VNC^{2}$, too.
 By the assumption on $P$, we can see that $\bar{A}$ is co-diagonal as shown in \cite{Grolmusz}.
 As for the paragraph including the formulae (6) and (7), since $k^{c\sqrt{k}} \leq n^{O(1)}$ and $\VNC^{2}$ can formalize counting and bounded sum, the argument can be formalized straightforwardly in $\VNC^{2}$.
 The next two paragraphs decomposing $\bar{A}$ into a sum of $D_{\vec{i}}$'s and transforming each  $D_{\vec{i}}$ to a canonical form can be treated similarly.
 Then, applying Lemma 3 for $R=\ZZ_{2}, \ZZ_{3}$, we obtain the evaluation $\mmax\{\rank_{\ZZ_{2}}(A) , \rank_{\ZZ_{3}}(A)\} \leq k^{2c\sqrt{k}}$.
 The last paragraph is straightforward to implement in $\VNC^{2}$, and it finishes the description of a formalization of the proof of Theorem 8 in \cite{Grolmusz}. 
 
 Now, we look at \cite{polynomialmodcomposite} and describe how to formalize its Theorem 2.1 for $(m,r,N)=(6,2,k)$, which is equivalent to Theorem 10 in \cite{Grolmusz} for $(m,l)=(6,2)$.
 We follow the proof presented in \S 2.2 in \cite{polynomialmodcomposite}, putting $(m,r,N)=(6,2,k)$. Our aim is to construct a polynomial $g \in \ZZ_{6}[X]$ such that $\deg(g) \leq O(k^{\frac{1}{2}})$ and it represents OR in the following sense: for $(x_{1},\ldots, x_{k}) \in \{0,1\}^{k}$, $g(x_{1},\ldots,x_{k})=0$ if and only if $x_{i}=0$ for every $i \in [k]$.  
 
 Let $N \leq k$.
 Let $p^{e}$ be the least power such that $N \leq p^{e}$, which is again available in $\VNC^{2}$ by $\Sigma^{B}_{0}$-Induction.
 For each number $a < p^{e}$, let $v_{a}$ be the following $p^{e}$-dimensional vector:
 \[v_{a}:=\left[\binom{a}{a}, \ldots, \binom{a+p^{e}-1}{a}\right]^{t} \in \ZZ_{p}^{p^{e}}.\]
 (If $a=0$, we set $v_{a}=[1,\ldots, 1]$.)
 We show that $\{v_{a}\}_{a < p^{e}}$ gives a basis of $\ZZ_{p}^{p^{e}}$. 
 
 It suffices to show that $\{v_{a}\}_{a < p^{e}}$ is $\ZZ_{p}$-linearly independent.
 Indeed, if we have the linearly independence, then, by the interpretation of Lemma \ref{CHimpliesnontrivialsolution} in $\VNC^{2}$, we have that $\det[v_{0}, \ldots, v_{p^{e}-1}]\neq 0$, and therefore $\{v_{a}\}_{a < p^{e}}$ spans the whole space $\ZZ_{p}^{p^{e}}$.
 Now, assume $\sum_{a < p^{e}} c_{a}v_{a}=0$, where $c_{a} \in \ZZ_{p}$.
 
 Let $s_{a}(X_{1},\ldots,X_{k})$ be the $a$-th elementary multilinear symmetric function, that is, the sum of all multilinear monomials of degree $a$ in the $k$ input variables.
 Note that the number of its monomials is $\binom{k}{a} \leq k^{p^{e}} \leq n^{O(1)}$, and therefore each $s_{a}$ ($a \leq p^{e}$) is available in $\VNC^{2}$ by $\Sigma^{B}_{0}$-CA.
 Furthermore, since $\VNC^{2}$ can formalize the counting function, it is provable that $s_{a}(\vec{x}) = \binom{a+j-1}{a}$ if 
 \[\#\{i \in [k] \mid x_{i}=1\} = j \geq 1.\]
 
 On the other hand, for $p=2,3$, we have that $j \mapsto \binom{j}{a} \pmod p$ is periodic with period $p^{e}$.
 Furthermore, $s_{0}, \ldots, s_{p^{e}-1}$ are linearly independent modulo $p$ so that they are a basis of the vector space of symmetric functions (i.e. the output depends only on the number of $1$'s in the input) with period $p^{e}$.
 If $N<p^{e}$, the OR of $N$ variables is represented modulo $p$ by the function $f(j)$ with $f(j)=0$ for $j = 0 \pmod p^{e}$ and $f(j)=1$ otherwise.
 This function has degree at most $p^{e}-1$.
 
 But now, putting $N=k^{\frac{1}{2}}$, we obtain $f_{1} \in \ZZ_{2}[X_{1}, \ldots, X_{k}]$ for $p=2$ with period $\geq N$ and $f_{2} \in \ZZ_{3}[X_{1},\ldots, X_{k}]$ for $p=3$ with period $\geq N$.
 Applying Chinese Remainder Theorem, $g:=3f_{1}+2f_{2} \in \ZZ_{6}[X_{1}, \ldots, X_{k}]$ satisfies
 $g(\vec{c})=0 \pmod 6$ if and only if $f_{1}(\vec{c})=0 \pmod 2$ and $f_{2}(\vec{c})=0 \pmod 3$.
 Furthermore, it has period $\geq N^{2}=k$.
 $\deg(g) \leq O(N)=O(k^{\frac{1}{2}})$, and it ends the proof.

\end{proof}

\section{Open questions}\label{Open questions}
As \cite{Uniform} pointed out, it is interesting that known formalization of matrix determinant and rank function still seem to be an overkill.
That is, we have $V\#L$ (cf. \cite{LAPinterpretation}), which corresponds to the complexity class $\#L$ (or $AC^{0}(\#L)$) and can formalize matrix determinants and therefore rank function, but we only know the proofs of their basic properties in $\VNC^{2}$, which corresponds to $NC^{2}$, which includes $\#L$. 
It is interesting to decide whether $V\#L$ is strong enough to carry out these proofs.

Moreover, in the context of linear algebra method in combinatorics, sometimes we need matrices of exponential size with respect to the size of inputs (the incidence matrix of a given set $S$, whose size is $(\#S \times 2^{\#S})$, is a typical example).
This is another possible origin of hardness of the statement which is out of scope of the researches of linear algebra in bounded arithmetics, and worth considering to decide whether we need such huge matrices to carry out the proofs. 

Also, the following question is natural:

\begin{question}\label{strongnonuniformRCW}
Does $\VNC^{2}$ prove the following?:
Let $n,m,s$ be numbers and $\mathcal{F},L$ be strings.
Suppose $\mathcal{F}$ codes a family $\{A_{1}, \ldots, A_{m}\}$ of subsets of $[n]$ and $L$ codes a set $\{l_{1},\ldots, l_{s}\}$ of numbers.
Assume also $\mathcal{F}$ is $L$-intersecting, that is, $\forall i \neq j \in [m]. \#(A_{i} \cap A_{j}) \in L$.
Then the following holds:
\begin{align}\label{informalRCW}
m \leq \sum_{i=0}^{s} \binom{n}{i}
\end{align}
in \textit{string-sort}. 
More precise meaning is as follows. 
For each $i \in [0,n]$, we can define a string $B(n,i)$ as the coefficient of degree-$s$ of the polynomial $(X+1)^{n}$ realized in string-sort.
Moreover, for each number $k$, we have a string counterpart $S(k)$, which is a string coding the binary expansion of $k$.
Now, bounded sum in string-sort is formalized in $\VNC^{1}$, and the comparison of two integers of binary expansions is formalized in $V^{0}$, so we have $\Delta^{B}_{1}(\VNC^{2})$ formula
\[S(m) \leq \sum_{i=0}^{s}B(n,i),\]
and this is the precise meaning of the inequality (\ref{informalRCW}).
\end{question}

Moreover, it is still unclear to what extent we can formalize basic notions and arguments in linear algebra in bounded arithmetics.
Especially,

\begin{question}
 For each matrix $M \in \Mat_{\ZZ}(m,n)$, let $||M||^{2}:= \sum_{(i,j) \in [m]\times[n]}M_{ij}^{2}$.
 Is the following provable in $\VNC^{2}$?:
 Let $A \in \Mat_{\ZZ}(n,n)$, and it is symmetric.
 Suppose a number $d$ satisfies $||Ax||^{2} \leq d||x||^{2}$ for $x \in \Mat_{\ZZ}(n,1)$.
 Then $||A||^{2} \leq \rk(A)d$.  
\end{question}
The statement is a basic comparison between the operator norm and Euclidean norm of $A$ (for a proof, see for example Proposition 13.6 in \cite{Jukna}).
As far as we know, a common proof relies on the notion of eigenvalues and spectral decomposition of symmetric real matrices, and it is at least not straightforward to formalize the argument in bounded arithmetics.
Note that if the answer to the question above is yes, then we can also formalize and prove, for example, the rigidity phenomenon of Hadamard matrices in $\VNC^{2}$. 
See Lemma 14.7 and Theorem 14.8 of \cite{Jukna}.

\appendix
\section{A characterization of the axioms for $\indexsort$ of $\LA$}\label{ReformulationofLA}

In this section, we show that, over Open induction for $\mathcal{L}_{\indexsort}$-formulae, the axioms $(A6)$-$(A16)$ are equivalent to (\dag) (the item \ref{axiomsforindex} in the description of $\LA$ in Section \ref{Preliminaries}).

\begin{proof}
 We prove the equivalence model theoretically. 
 If an $\mathcal{L}_{\indexsort}$-structure $\MM$ satisfies (\dag), then it clearly satisfies $(A6)$-$(A16)$.
 Now, we consider the converse.
 Let $\MM=(M,0,1,+,*,\leq,-,\divi,\rem)$ satisfy $(A6)$-$(A16)$.
 First, we observe the following:
 \begin{claim}\label{arithmeticalproperties}
 \begin{enumerate}
  \item\label{leftzero} For all $j \in M$, $0+j=j$.
  \item\label{associative} For all $i,j,k \in M$, $(i+k)+j=i+(k+j)$.
  \item\label{commutative} For all $i, j \in M$, $i+j=j+i$.
  \item\label{contraction} For all $i,j,k \in M$, $i+j=k+j$ implies $i=k$.
  \item\label{distributive} For all $i,j,k \in M$, $(i+k)*j=i*j+k*j$.
 \end{enumerate}
 \end{claim}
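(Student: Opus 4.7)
The plan is to derive all five identities by open induction, in a carefully chosen order so that the later items may reuse the earlier ones. Since the standard formalization of $(A6)$--$(A16)$ in \cite{The proof complexity of linear algebra} supplies the recursive defining equations $j+0=j$, $j+(i+1)=(j+i)+1$, $j*0=0$, $j*(i+1)=j*i+j$, injectivity of the successor, and the basic laws of $\leq$, each step below resembles the textbook development of addition and multiplication in a Peano-style framework.

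I would begin with item (\ref{leftzero}), $0+j=j$, by open induction on $j$: the base case is the right-zero axiom $0+0=0$, and the inductive step computes $0+(j+1) = (0+j)+1 = j+1$ using the addition axiom and the induction hypothesis. Next, as a preparation for commutativity, I would prove the auxiliary lemma $(i+1)+j = (i+j)+1$ by open induction on $j$; here too the recursive defining equation for $+$ does all of the work. Commutativity (item (\ref{commutative})) then follows by open induction on $i$: the base case is item (\ref{leftzero}) together with the right-zero axiom, and the inductive step invokes the auxiliary lemma.

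Associativity (item (\ref{associative})) is obtained by an open induction on $j$, invoking only the recursive addition axiom. Cancellation (item (\ref{contraction})) is a second open induction on $j$, where the inductive step reduces to injectivity of the successor function. Finally, for right distributivity (item (\ref{distributive})), I would induct on $i$: if $0*j = 0$ is not directly among $(A6)$--$(A16)$, a brief induction on $j$ using commutativity of $+$ establishes it; the inductive step for $i$ then rearranges
\[((i+1)+k)*j = ((i+k)+1)*j = (i+k)*j + j = i*j + k*j + j = (i+1)*j + k*j\]
using the earlier items and the recursive defining equation for $*$.

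The main obstacle, if any, is purely bookkeeping: one has to match the exact orientation of the recursive defining equations in $(A6)$--$(A16)$ (that is, whether the successor recursion for $+$ and $*$ is on the left or right argument, and whether both $0+j=j$ and $j+0=j$ must be proved separately) and to check that the cancellation law is not circular with its prerequisites. Once this alignment is fixed, each of the five items is a very short open induction, and the entire argument is carried out well below the level of any further $\LA$-machinery.
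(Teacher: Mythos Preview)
Your proposal is correct and takes essentially the same approach as the paper: both derive all five items by open induction from the recursive axioms, with the paper simply declaring ``induction on $j$'' for each and listing which of $(A6)$--$(A16)$ handle the base and step cases. Your own caveat about the orientation of the recursion is well placed, since your distributivity step tacitly uses $(m+1)*j = m*j + j$, which itself needs a short induction on $j$ when the recursion for $*$ in $(A6)$--$(A16)$ is on the second argument (as it is here); once that is inserted, the bookkeeping is complete.
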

 Indeed, each of them follows from induction on $j$, fixing other variables $i$ and $k$.
 Base cases are dealt with by $(A10)$, the item \ref{commutative} itself, $(A14)$, and $(A9)$.
 Step cases are by $(A12)$, $(A8)$, and $(A7)$.
 
 From now on, we sometimes omit parenthesis when we add multiple elements based on the associative law (\ref{associative})
 
 Furthermore,
 \begin{claim}\label{orderproperties}
 \begin{enumerate}
  \item\label{predecessor} For all $j \in M$, $j=0$ or $(j-1)+1=j$.
  \item\label{discretelyordered} For all $i \in M$, $i=0$ or $1 \leq i$.
  \item\label{defoforder} Forall $i,j \in M$, $i \leq j$ if and only if there exists $k \in M$ such that $i+k=j$.
  \item\label{difference} For all $i,j \in M$, $i \leq j$ implies $(j-i)+i=j$.
  \end{enumerate}
 \end{claim}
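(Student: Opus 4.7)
The plan is to establish the four items in the order stated, each via open induction on a single variable, invoking whichever of $(A6)$-$(A16)$ govern $-$, reflexivity/transitivity/totality of $\leq$, the monotonicity $i \leq j \rightarrow i+k \leq j+k$, and the fact that $1$ has no predecessor. I will freely use the commutative monoid/distributive laws already extracted in Claim \ref{arithmeticalproperties}.

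Item \ref{predecessor} is proved by open induction on $j$. The base case $j=0$ is the left disjunct. In the step case, the axiom among $(A6)$-$(A16)$ stating $(j+1)-_{\indexsort}1 = j$ combined with $j+1=j+1$ gives $((j+1)-_{\indexsort}1)+1 = j+1$, which is the right disjunct. Item \ref{discretelyordered} is a similar induction on $i$. Base: $i=0$. Step: if $i=0$ then $i+1=1$ and $1\leq 1$ by reflexivity; if instead $1\leq i$ then the monotonicity axiom gives $i\leq i+1$, hence $1\leq i+1$ by transitivity.

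Item \ref{defoforder} is reduced to item \ref{difference}. The ``if'' direction: assuming $i+k=j$, from $0\leq k$ and monotonicity we get $i=i+0\leq i+k=j$. The ``only if'' direction: if $i\leq j$, set $k:=j-_{\indexsort}i$ and apply item \ref{difference} directly to obtain $i+k=j$.

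The core work is item \ref{difference}, which I would prove by open induction on $j$ with $i$ free. The needed auxiliary facts, all to be extracted from $(A6)$-$(A16)$, are (a) $i-_{\indexsort}i=0$, (b) $i \leq j+1 \leftrightarrow (i \leq j \lor i = j+1)$, and (c) $i \leq j \rightarrow (j+1)-_{\indexsort}i = (j-_{\indexsort}i)+1$. The base case $j=0$ forces $i=0$ (using item \ref{discretelyordered} with antisymmetry of $\leq$ to exclude $1\leq i$), and then $(0-_{\indexsort}0)+0=0$ by (a). In the step, using (b) split on $i\leq j$ or $i=j+1$: in the first subcase the induction hypothesis gives $(j-_{\indexsort}i)+i=j$, and by (c) and the associativity of $+$ we obtain $((j+1)-_{\indexsort}i)+i=((j-_{\indexsort}i)+1)+i=((j-_{\indexsort}i)+i)+1=j+1$; in the second subcase, (a) yields $(i-_{\indexsort}i)+i=0+i=i=j+1$.

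The main obstacle is the case split (b) in item \ref{difference}: since we are restricted to open induction, both the disjunction $i\leq j \lor i = j+1$ and the defining equation (c) for modified minus must be stated and used as open formulae. These are expected to be available (perhaps indirectly) among $(A6)$-$(A16)$ together with the antisymmetry/totality axioms for $\leq$; once extracted, the remaining computations collapse to the associativity and unit laws of Claim \ref{arithmeticalproperties}.
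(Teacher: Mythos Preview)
Your approach is correct but organizes the argument differently from the paper. You prove item~\ref{difference} first by open induction on $j$ and then derive item~\ref{defoforder} from it; the paper does the reverse, obtaining item~\ref{defoforder} directly from the division axiom $(A16)$: given $i \leq j$, write $j = i\cdot\divi(j,i)+\rem(j,i)$; if $\divi(j,i)=0$ then $j=\rem(j,i)<i$, contradicting $i\leq j$, so the quotient is nonzero, has a predecessor $d$ by item~\ref{predecessor}, and $j = i + (i\cdot d + \rem(j,i))$. Item~\ref{difference} then drops out immediately from $(A15)$. Your route is more elementary in that it never touches $(A16)$, at the price of needing the discreteness fact~(b); the paper's route is shorter because division manufactures the witness $k$ in one step. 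Note also that your fact~(c) is not really a prerequisite: in the induction step, the IH gives $(j-i)+i=j$, hence $((j-i)+1)+i=j+1$, and $(A15)$ then yields $(j+1)-i=(j-i)+1$ on the spot.

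For item~\ref{discretelyordered} the paper also avoids a second induction: from $i\neq 0$ and item~\ref{predecessor} one has $(i-1)+1=i$, so $1+(i-1)=i$ by commutativity, and $(A9)$ gives $1\leq i$ directly.
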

 Claim \ref{orderproperties}(\ref{predecessor}) follows from induction on $j$; if $j=0$, the claim is trivial.
 Assume the claim holds for $j$.
 We can show the claim holds for $j+1$ without using IH.
 Since $1+j=j+1$ by Claim \ref{arithmeticalproperties}(\ref{commutative}), we have $(j+1)-1=j$ by $(A15)$.
 Hence, $((j+1)-1)+1=j+1$, as desired.
 
 Claim \ref{orderproperties}(\ref{discretelyordered}) follows immediately from the previous (\ref{predecessor}) as follows. 
 Suppose $i \neq 0$ and $1 \not\leq i$.
 By Claim \ref{orderproperties}(\ref{predecessor}), $(i-1)+1=i$.
 Then $(A9)$ and Claim \ref{arithmeticalproperties}(\ref{commutative}) imply $1 \leq i$.
 
 Claim \ref{orderproperties}(\ref{defoforder}) is proved as follows. 
 The converse is the very thing $(A9)$ states, so we focus on the other way around.
 Suppose $i \leq j$.
 By $(A16)$, we have $j=i*\divi(j,i)+\rem(j,i)$.
 Suppose $\divi(j,i) =0$.
 Then we have $j=\rem(j,i) < i$ by $(A10)$, Claim \ref{arithmeticalproperties} (\ref{leftzero}), and $(A16)$.
 Since $i \leq j$, we have $i=j$ by $(A13)$, a contradiction.
 Hence, we obtain $\divi(j,i) \neq 0$.
 Hence, Claim \ref{orderproperties}(\ref{predecessor}) implies there exists $d \in M$ such that $d+1=\divi(j,i)$.
 Thus we have $j=i*(d+1)+\rem(j,i)=i*d+i+\rem(j,i)=i+(i*d+\rem(j,i))$, as required.

 Claim \ref{orderproperties}(\ref{difference}) follows immediately from the previous (\ref{defoforder}) and $(A15)$. 
 
 We construct an ordered ring to which $M$ is embedded as its nonnegative part.
 Consider the following binary relation $\sim$ on $M$:
 \[(a_{1},b_{1}) \sim (a_{2},b_{2}) : \Leftrightarrow a_{1}+b_{2}=a_{2}+b_{1}.\]
 
 \begin{claim}
  $\sim$ is an equivalence relation on $M$.
 \end{claim}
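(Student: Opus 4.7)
The plan is to verify reflexivity, symmetry, and transitivity of $\sim$ directly from the arithmetical facts already established in Claim \ref{arithmeticalproperties}. Reflexivity $(a,b)\sim(a,b)$ is immediate from $a+b=a+b$. Symmetry is equally immediate since the defining equation $a_{1}+b_{2}=a_{2}+b_{1}$ is symmetric in the two pairs $(a_{1},b_{1})$ and $(a_{2},b_{2})$.

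The only content lies in transitivity. Assume $(a_{1},b_{1})\sim(a_{2},b_{2})$ and $(a_{2},b_{2})\sim(a_{3},b_{3})$, i.e.\ $a_{1}+b_{2}=a_{2}+b_{1}$ and $a_{2}+b_{3}=a_{3}+b_{2}$. Adding the two equations (formally, adding the same element to both sides of each, then combining), one obtains
\[
(a_{1}+b_{2})+(a_{2}+b_{3}) = (a_{2}+b_{1})+(a_{3}+b_{2}).
\]
Using associativity and commutativity from Claim \ref{arithmeticalproperties}(\ref{associative}) and (\ref{commutative}), both sides can be rearranged so that the summand $a_{2}+b_{2}$ appears explicitly on each side, giving
\[
(a_{1}+b_{3})+(a_{2}+b_{2}) = (a_{3}+b_{1})+(a_{2}+b_{2}).
\]
Applying the cancellation property Claim \ref{arithmeticalproperties}(\ref{contraction}) yields $a_{1}+b_{3}=a_{3}+b_{1}$, i.e.\ $(a_{1},b_{1})\sim(a_{3},b_{3})$, as required.

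No step here is really an obstacle; the only point that requires care is that cancellation (Claim \ref{arithmeticalproperties}(\ref{contraction})) was itself proved by open induction from the axioms $(A6)$--$(A16)$, so it is legitimate to invoke it in the present argument. Everything else is a routine rearrangement using associativity and commutativity of $+$. Thus $\sim$ is an equivalence relation on $M\times M$, and the construction of the enveloping discretely ordered ring can proceed.
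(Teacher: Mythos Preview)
Your proof is correct and follows essentially the same approach as the paper: reflexivity and symmetry are immediate, and transitivity is obtained by combining the two defining equations, rearranging via commutativity and associativity, and cancelling with Claim~\ref{arithmeticalproperties}(\ref{contraction}). The only cosmetic difference is that the paper cancels just $b_{2}$ (using one equation to substitute into the other) whereas you add both equations and cancel $a_{2}+b_{2}$; these are equivalent computations.
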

 Indeed, $\sim$ is reflexive and symmetric by equality axioms, and it is also transitive by the following reasoning:
 assume $(a_{1},b_{1}) \sim (a_{2},b_{2})$ and $(a_{2},b_{2}) \sim (a_{3},b_{3})$.
 Then we have $a_{1}+b_{3}+b_{2} = a_{3}+b_{1}+b_{2}$ since $+$ is commutative by Claim \ref{arithmeticalproperties} (\ref{commutative}).
 Together with Claim \ref{arithmeticalproperties} (\ref{contraction}), we obtain $a_{1}+b_{3} = a_{3}+b_{1}$, that is, $(a_{1},b_{1}) \sim (a_{3},b_{3})$.
 
 Let $R:=M^{2} / \sim$.
 For each $(x,y) \in M^{2}$, we denote its equivalence class by $[(x,y)]$.
 We consider the following operations and relations on $R$:
 \begin{enumerate}
  \item $[(x,y)] +_{R} [(z,w)] := [(x+z,y+w)]$.
  \item $[(x,y)] *_{R} [(z,w)] := [(x*z+y*w, y*z+x*w)]$.
  \item $[(x,y)] \leq_{R} [(z,w)] : \Leftrightarrow x+w \leq y+z$.
 \end{enumerate}
 They are well-defined.
 Indeed, the well-definedness of $+_{R}$ follows from the definition of $\sim$ and the commutativity of $+$ on $M$.
 $*_{R}$ is well-defined because of the following reasoning.
 We omit $*$ for readability.
 Let $(x,y) \sim (x',y')$ and $(z,w) \sim (z',w')$.
 We want to show $(xz+yw,yz+xw) \sim (x'z'+y'w',y'z'+x'w')$.  
 Using the commutativity of $+$ and the distributive law in Claim \ref{arithmeticalproperties}, we have
 \begin{align*}
 (xz+yw+y'z'+x'w') + xw'+yz' = (x'z'+y'w' + yz+xw) + xw'+yz'.
 \end{align*}
 Hence, by Claim \ref{arithmeticalproperties} (\ref{contraction}), we have $(xz+yw,yz+xw) \sim (x'z'+y'w',y'z'+x'w')$.
 
 As for $\leq_{R}$, assume $(x,y) \sim (x',y')$, $(z,w) \sim (z',w')$, and $x+w \leq y+z$.
 Then we can show $x'+w'+y+z \leq y'+z'+y+z$, and hence we have $x'+w' \leq y'+z'$ by Claim \ref{arithmeticalproperties} (\ref{contraction}). 
 
 Now, it is a straightforward exercise to see that $(R,+_{R},*_{R},\leq_{R})$ is an ordered ring;
 $[(0,0)]$ is the zero element and $[(1,0)]$ is the unit element.
 
 The original $M$ is embedded by the following mapping:
 \[h \colon M \rightarrow R;\ a \mapsto [(a,0)].\]
 It is again an easy exercise to show that $h$ preserves addition, multiplication, and ordering.
 Furthermore, it is injective since $[(a,0)]=[(b,0)]$ if and only if $a=a+0=b+0=b$.
 
 Actually, the image $h(M)$ is the nonnegative part of $R$.
 Indeed, if $[(0,0)] \leq_{R} [(a,b)]$ in $R$, then $b \leq a$.
 Hence, $(b-a)+a=b$ by Claim \ref{orderproperties}(\ref{difference}), which means $[(a,b)]=h(b-a)$.
 
 The facts that $-$ is the modified minus on $M$, $M$ admits a division with respect to $\leq$, $\divi$ and $\rem$ are the quotient and the remainder are exactly what $(A15)$ and $(A16)$ say.
 This finishes the proof.
\end{proof}

 \section{A proof sketch of Proposition \ref{[]pol is an interpretation}}\label{proof of []pol}

This section is a continuation of \S \ref{Preliminaries}.

We can show that $=_{pol},0_{pol},1_{pol},+_{pol},-_{pol}$ and $*_{pol}$ satisfy some of the axioms of $\LA$ again:
\begin{lemma}[$\LA_{-}$]
\begin{enumerate}
 \item For $f, g \in \FF[X]$, 
 \[\coeff(f *_{pol} g,k) = \sum_{j=0}^{k} \coeff(f,k-j) \coeff(g,j).\]
 \item $f =_{pol} g \leftrightarrow \forall j.\ \extract (f,j,1)=\extract (g,j,1)$.
 \item $=_{pol}$ is an equivalence relation on $\FF[X]$.
 \item $=_{pol}$ is a congruence relation with respect to $+_{pol},-_{pol}$ and $*_{pol}$.
 \item $\deg$ is invariant under the relation $=_{pol}$.
 \item The relation $=_{pol}$, the constants $0_{pol},1_{pol}$, and the functions $+_{pol},-_{pol}, $ and $*_{pol}$ satisfy the axioms $(A18)$-$(A26)$ except $(A21)$, regarded as the axioms on $\FF[X]$.
 \item $f *_{pol} g = 0 \rightarrow f =_{pol} 0_{pol} \lor g=_{pol}0_{pol}$.
\end{enumerate}
\end{lemma}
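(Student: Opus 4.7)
The plan is to dispatch the seven items in order, since each builds on its predecessors, with almost all the work reducible to componentwise identities in $\FF$.

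For item 1, I would unfold the definitions of $*_{pol}$ and $\conv$. Since $f*_{pol}g = \conv(f,\row(g)-1)\cdot g$ and $\extract(\conv(f,l),i,j)=\extract(f,(i+1)-j,1)$, the standard formula for matrix multiplication (already available in $\LA_{-}$) gives
\[\coeff(f*_{pol}g,k)=\sum_{j=1}^{\row(g)}\coeff(f,k+1-j)\,\coeff(g,j-1).\]
Reindexing $j\mapsto j+1$ and using axiom (A28) (that $\extract(h,i,1)=0$ whenever $i=0$ or $i>\row(h)$) to pad or truncate the sum to the range $0\le j\le k$ yields exactly $\sum_{j=0}^{k}\coeff(f,k-j)\coeff(g,j)$. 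Item 2 is likewise a direct consequence of (A28): the two sides agree automatically on indices beyond $\mmax\{\row(f),\row(g)\}$, where both extractions return $0$.

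Items 3, 4 and 5 then follow immediately by combining items 1 and 2 with the ring axioms on $\FF$. The only mildly nontrivial piece is that $=_{pol}$ is a congruence for $*_{pol}$, which one reads off from the convolution formula of item 1. For item 6 (verifying $(A18)$--$(A26)$ except $(A21)$ on $\FF[X]$), one passes to componentwise equalities via item 2 and reduces each identity to the corresponding axiom on $(\FF,+,*,-,0,1)$. Commutativity and associativity of $*_{pol}$ come from item 1 together with a standard double-sum reindexing; distributivity over $+_{pol}$ comes from linearity of the componentwise sum.

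The main obstacle is item 7, since ``no zero divisors'' is not in general preserved from $\FF$ to $\FF[X]$ unless $\FF$ itself has no zero divisors, and the axioms of $\LA_{-}$ listed in Definition \ref{DefLA} only make $\FF$ a commutative ring. I would therefore read item 7 as being carried out in the presence of a no-zero-divisors assumption on $\FF$, which is automatic once $(A21)$ is added (i.e.\ over $\LA$). Granting this, I would argue by leading coefficients. Suppose $f,g\in\FF[X]$ with $f\neq_{pol}0_{pol}$ and $g\neq_{pol}0_{pol}$. By open induction the degrees $d_f=\deg(f)$ and $d_g=\deg(g)$ exist with $\coeff(f,d_f)\neq 0$ and $\coeff(g,d_g)\neq 0$. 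By item 1,
\[\coeff(f*_{pol}g,\,d_f+d_g)=\sum_{j=0}^{d_f+d_g}\coeff(f,d_f+d_g-j)\,\coeff(g,j),\]
and every summand with $j\neq d_g$ vanishes: if $j>d_g$ then $\coeff(g,j)=0$, while if $j<d_g$ then $d_f+d_g-j>d_f$ so $\coeff(f,d_f+d_g-j)=0$. Hence $\coeff(f*_{pol}g,d_f+d_g)=\coeff(f,d_f)\coeff(g,d_g)\neq 0$, which via item 2 contradicts $f*_{pol}g=_{pol}0_{pol}$, completing item 7.
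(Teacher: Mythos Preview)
Your outline matches the paper's proof almost exactly: item~1 by unfolding $\conv$ and the matrix-product formula, items~2--5 straightforward, and item~6 componentwise via item~1. The one place the paper is more concrete is associativity $(A25)$: rather than invoking a generic double-sum reindexing, it packages both iterated sums as $\sum(A)$ and $\sum(A^{t})$ for a single $(k{+}1)\times(k{+}1)$ matrix $A$ and appeals to the $\LA_{-}$ theorem $\sum(A)=\sum(A^{t})$, which is a clean way to make the Fubini step go through with only open induction.

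Your caveat on item~7 is well taken. The paper's own proof sketch does not treat item~7 at all, and as you observe the claim cannot hold over bare $\LA_{-}$ since there $\FF$ is only a commutative ring; the leading-coefficient argument you give is the intended one and requires that $\FF$ have no zero divisors, which is available once $(A21)$ is present (so over $\LA$ or $\LAP$). In the paper's actual use of this lemma the ambient theory is $\LAP$, so the issue is harmless there, but the label $[\LA_{-}]$ on item~7 is an overstatement.
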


\begin{proof}
 \[\coeff(f *_{pol} g,k) = \sum_{j=0}^{k} \coeff(f,k-j) \coeff(g,j)\]
 
 follows from observing each component of $\conv(f, \row (g)-1) g$.

 It is straightforward to see $=_{pol}$ is a congruence relation.
 
 We consider the axioms of $\LA_{-}$.

 $(A18)$-$(A20)$, $(A22)$, $(A24)$, and $(A26)$ are easy. 
 Note that we can also show that 
 \[(f+_{pol}g)*_{pol} h = (f*_{pol}h) +_{pol} (g*_{pol} h)\]
 
 since $=_{pol}$ is a congruence relation, and
 \[\row (f)=\row (g) \rightarrow \conv(f+g,l) = \conv(f,l) + \conv(g,l).\]
 
 We show $(A23)$ first. 
 Let $f,g \in \FF[X]$.
 We would like to show $f*_{pol} g =_{pol} g*_{pol}f$.
 
 Comparing the coefficients of both sides, it suffices to show that 
 \[\sum_{j=0}^{k} a_{k-j}b_{j} = \sum_{j=0}^{k} b_{k-j}a_{j}\]
 for general $[a_{0}, \ldots, a_{k}]$ and $[b_{0}, \ldots, b_{k}]$.
 
 This follows from open induction, commutativity low in $\fieldsort$ and the following theorems of $\LA_{-}$:
 \begin{align*}
  \sum[c_{0}, \ldots, c_{k+1}] &= \sum[c_{0}, \ldots, c_{k}] + c_{k+1},\\
  \sum[c_{0}, \ldots, c_{k+1}] &= c_{0}+ \sum[c_{1}, \ldots, c_{k+1}] \\
  \sum[c_{0}, \ldots, c_{k}] &= \sum[c_{k}, \ldots, c_{0}] .
 \end{align*}
 
 We show $(A25)$. 
 First, we observe
 \begin{align*}
 \coeff(f*_{pol}(g*_{pol}h) ,k) 
 &= \sum_{j=0}^{k} \coeff(f,k-j)\coeff(g*_{pol}h , j) \\
 &= \sum_{j=0}^{k} \coeff(f,k-j)\left(\sum_{i=0}^{j}\coeff(g , j-i) \coeff(h , i) \right)\\
 &= \sum_{j=0}^{k} \sum_{i=0}^{j} \coeff(f,k-j)\coeff(g , j-i) \coeff(h , i) ,
 \end{align*}
 and 
 \begin{align*}
  \coeff((f*_{pol}g)*_{pol}h ,k) 
  &= \sum_{j=0}^{k} \coeff(f*_{pol}g,k-j)\coeff(h , j) \\
  &= \sum_{j=0}^{k} \left(\sum_{i=0}^{k-j}\coeff(f , k-j-i) \coeff(g , i) \right) \coeff(h , j) \\
  &= \sum_{j=0}^{k} \sum_{i=0}^{k-j} \coeff(f,k-j-i)\coeff(g , i) \coeff(h , j) .
 \end{align*}
Now, it is easy to see that
 \[\sum_{j=l_{1}}^{l_{2}} c_{j} = \sum_{j=l_{1}+a}^{l_{2}+a} c_{j-a} \]
 by open induction and the axioms on $\sum$, we obtain that  
 \begin{align*}
  &\sum_{j=0}^{k} \sum_{i=0}^{k-j} \coeff(f,k-j-i)\coeff(g , i) \coeff(h , j) \\
  =& \sum_{j=0}^{k} \sum_{i=j}^{k} \coeff(f,k-i)\coeff(g , i-j) \coeff(h , j) \\
  =&\sum_{i=0}^{k} \sum_{j=i}^{k} \coeff(f,k-j)\coeff(g , j-i) \coeff(h , i) .
 \end{align*}
 The last equation is obtained by just renaming the dummy variables.
 
 Thus, putting
 \[A:= \lamt_{ij} \langle k,k, \cond( i\leq j, \coeff(f,k-j) \coeff(g , j-i) \coeff(h,i), 0 )\rangle,\]
 we obtain 
 \begin{align*}
 \sum(A) &= \coeff((f*_{pol}g)*_{pol}h ,k), \\
 \sum(A^{t}) &= \coeff(f*_{pol}(g*_{pol}h) ,k) 
 \end{align*}
 Therefore, since $\sum(A)=\sum(A^{t})$ (see \cite{The proof complexity of linear algebra}), we obtain $(A25)$.
 \end{proof}
 
 For $(A,d) \in \Mat_{\FF[X]}(m,n)$ and $(B,d^{\prime}) \in \Mat_{\FF[X]}(n,l)$,
  \[(A,d) *_{pol} (B,d^{\prime}) := (\conv(A,d,d^{\prime})B, d+d^{\prime}).\]

\begin{lemma}\label{Ring properties for pol}
  $\LA_{-} \vdash  \conv(A,d,0) = A$.
  
  Moreover, $\LA_{-}$ can prove that $(0_{m,n},0), (I_{m},0), +_{pol}, *_{pol}, \mmax, \summation_{pol}, =_{pol}$ satisfy the Ring properties $(T1)$-$(T15)$ given in \cite{The proof complexity of linear algebra}.
 \end{lemma}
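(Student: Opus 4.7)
The plan is to split the work cleanly into two parts, matching the two sentences of the statement, and reduce everything possible to results already established for matrices with $\fieldsort$-coefficients in \cite{The proof complexity of linear algebra}.

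For the first claim $\conv(A,d,0) = A$, I would simply unfold the definitions. Examining the $\lamt$-term defining $\pconv(A,d,l)$ with $l=0$: it produces a matrix of size $\row_{pol}(d+1) \times \column_{pol}(1) = \row(A) \times \column(A)$, and its $(i,j)$-entry is
\[\extract\big(\mcoeff(A,d,\cond(\ldots)[\,\cdot\,]),\rem(i-1,\row_{pol})+1,\rem(j-1,\column_{pol})+1\big).\]
When $l=0$ we have $\divi(j-1,\column_{pol})=0$ for $j\in[\column(A)]$, so the $\cond$-term collapses to $\divi(i-1,\row_{pol})$, and the $\mcoeff$-extraction recovers exactly $\extract(A,i,j)$ by the definition of $\mcoeff$ and the division/remainder axioms. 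Matrix extensionality (Definition \ref{DefLA}(\ref{matrixextensionality})) then gives $\conv(A,d,0)=A$; this is a $\Sigma^B_0$-computation carried out in $\LA_-$.

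For the ring properties T1--T15, the strategy is to reduce each identity to its coefficient-wise analogue in $\Mat_\FF$, which is already proved in \cite{The proof complexity of linear algebra}. First I would establish the key bilinearity lemmas, all provable in $\LA_-$ by matrix extensionality and the existing sum/product calculus:
\begin{align*}
\mcoeff((A,d)+_{pol}(B,d'),k) &= \mcoeff(A,d,k) + \mcoeff(B,d',k),\\
\mcoeff((A,d)*_{pol}(B,d'),k) &= \sum_{i+j=k}\mcoeff(A,d,i)\cdot\mcoeff(B,d',j),
\end{align*}
where the right-hand sides use the already formalized $+$ and $*$ on $\Mat_\FF$ of appropriate sizes, and degrees beyond the frame contribute zero blocks (using the convention in Definition \ref{DefLA}(\ref{componentconvention})). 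The first part of this lemma ($\conv(A,d,0)=A$) is needed here, since the convolution formula for $*_{pol}$ specialises to ordinary multiplication block-by-block when the degrees line up.

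With these in hand, each of T1--T15 becomes a finite identity: fixing a coefficient index $k$, both sides unfold to identical (or, after reindexing the summation range, coefficient-wise equal) expressions built from the $\mcoeff$'s, where the required matrix identities are exactly the analogues of T1--T15 for $\Mat_\FF$ already in $\LA_-$. For example, associativity of $*_{pol}$ reduces, via a reindexing of a double sum analogous to the associativity argument for $*_{pol}$ on $\FF[X]$ already done in the previous lemma, to associativity of $*$ in $\Mat_\FF$. Distributivity and the identity/zero laws are similar but easier.

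The main obstacle I anticipate is purely bookkeeping, namely the frame-alignment issue: two summands of $+_{pol}$ may have different degree bounds $d\neq d'$, and we must show that the construction is invariant under zero-padding, i.e., if $(A,d)$ and $(A',d')$ code the same polynomial matrix with $d\leq d'$ then they behave identically under all operations. This requires an auxiliary reframing lemma showing $(A,d)=_{pol}(A\oplus \zeromat,d')$ where $\oplus$ appends $(d'-d)$ zero row-blocks, together with the fact that $\mcoeff$ evaluated beyond the degree returns the zero matrix. Once this reframing stability is in place, normalising both sides of each axiom to a common frame $\mmax\{d_1,d_2,\ldots\}$ (for additive-type axioms) or to the sum of degrees (for multiplicative-type axioms) lets the proofs of T1--T15 proceed uniformly by the coefficient-wise reduction above, and no genuinely new idea is required beyond the calculus already developed in \cite{The proof complexity of linear algebra}.
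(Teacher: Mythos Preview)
Your proposal is correct and follows essentially the same route as the paper: the paper's proof is a terse sketch whose sole substantive step is exactly your second bilinearity lemma, namely the convolution identity $\mcoeff((A,d)*_{pol}(B,d'),k)=\sum_{j}\mcoeff(A,d,j)\,\mcoeff(B,d',k-j)$, after which it defers to the $\Mat_{\FF}$ arguments in \cite{The proof complexity of linear algebra}. Your treatment is in fact more complete than the paper's---you spell out the unfolding for $\pconv(A,d,0)=A$, include the (easy) additive $\mcoeff$ identity, and flag the frame-alignment/zero-padding issue that the paper leaves implicit---but the underlying idea is the same.
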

 
 \begin{proof}
  The essential part is to prove the properties of $*_{pol}$, and it can be done following the lines of \cite{The proof complexity of linear algebra} if we can establish 
  \[\coeff((A,d) *_{pol} (B,d^{\prime}), k ) = \sum_{j=0}^{k} \coeff(A,d,j) \coeff(B,d^{\prime},k-j)\]
    for each $k$.
    Now, the verification of this equality is straightforward.
 \end{proof}

Now, 

\begin{lemma}[$\LAP$]
 $\mathtt{P}_{pol}$ satisfies $(A34), (A35)$:
 \begin{enumerate}
  \item[(A34)] \[\mathtt{P}_{pol}(0,A,d) =_{pol} (I_{m},0) \in \Mat_{\FF[X]}(m,m).\]
  \item[(A35)] \[(A,d) \in \Mat_{\FF[X]}(m,m) \rightarrow (A,d)^{k+1} =_{pol} (A,d)^{k} *_{pol} (A,d).\]
 \end{enumerate}
 Moreover, 
 \[(A,d) \in \Mat_{\FF[X]}(m,m) \rightarrow (A,d)^{k} *_{pol} (A,d)^{l} =_{pol} (A,d)^{k+l}.\]
\end{lemma}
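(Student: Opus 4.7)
The plan is to prove the three claims in order, with each reducing to prior structural facts about $\Mat_{\FF[X]}(m,m)$. For (A34), I would simply unfold $\mathtt{P}_{pol}(0, A, d) = (\bold{Q}_{pol}(0, A, d), 0)$. Since $\bold{Q}_{pol}(0, A, d)$ is an iterated product of zero copies of $\pconv(A,d,\cdot)$, it reduces to the identity matrix $\idmat_m$ of shape $m \times m$ (where $m = \row_{pol}(A,d) = \column_{pol}(A,d)$). Hence $\mathtt{P}_{pol}(0, A, d) =_{pol} (\idmat_m, 0)$, which is exactly $(I_m, 0)$ under the interpretation $=_{pol}$ on $\Mat_{\FF[X]}(m,m)$.

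For (A35), set $B := \bold{Q}_{pol}(k, A, d)$. The recursive structure of $\bold{Q}_{pol}$ gives $(A,d)^{k+1} =_{pol} (\pconv(A, d, kd) \cdot B,\ (k+1)d)$, while unfolding the definition of $*_{pol}$ on $\Mat_{\FF[X]}$ gives $(A,d)^k *_{pol} (A,d) =_{pol} (\pconv(B, kd, d) \cdot A,\ (k+1)d)$. The essential step is to verify $\pconv(A, d, kd) \cdot B =_{pol} \pconv(B, kd, d) \cdot A$. Since $\pconv$ is the block-Toeplitz convolution matrix that realises polynomial matrix multiplication (by the same coefficient-matching argument used for $\conv$ in the proof of Lemma~\ref{Ring properties for pol}), the two sides represent $A \cdot B$ and $B \cdot A$ respectively in the ring $\Mat_{\FF[X]}(m,m)$. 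An auxiliary $\Sigma^B_0$-induction on $k$ shows that $B$ represents the polynomial matrix $A^k$ (in the sense of iterated $*_{pol}$-multiplication), so both sides reduce to $A \cdot A^k$ and $A^k \cdot A$. These agree by a further short induction on $k$ combining associativity of $*_{pol}$ and the inductive hypothesis, both supplied by Lemma~\ref{Ring properties for pol}.

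For the third claim, I would proceed by $\Sigma^B_0$-induction on $l$ in $\LAP$. In the base case $l = 0$, apply (A34) and the right-identity law of the ring $\Mat_{\FF[X]}(m,m)$ to obtain $(A,d)^k *_{pol} (A,d)^0 =_{pol} (A,d)^k *_{pol} (I_m, 0) =_{pol} (A,d)^k$. The inductive step chains (A35), associativity of $*_{pol}$ from Lemma~\ref{Ring properties for pol}, and the inductive hypothesis: $(A,d)^k *_{pol} (A,d)^{l+1} =_{pol} (A,d)^k *_{pol} ((A,d)^l *_{pol} (A,d)) =_{pol} ((A,d)^k *_{pol} (A,d)^l) *_{pol} (A,d) =_{pol} (A,d)^{k+l} *_{pol} (A,d) =_{pol} (A,d)^{k+l+1}$, the last step applying (A35) in reverse.

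The main obstacle is the coefficient-level verification inside (A35) that $\pconv$ correctly implements polynomial matrix multiplication from either side of the product, combined with the fact that powers of $A$ commute with $A$ itself. Both statements are ultimately consequences of associativity of $*_{pol}$ established in Lemma~\ref{Ring properties for pol}, but executing them rigorously in $\LAP$ demands careful size bookkeeping because of the block-matrix encoding of polynomial matrices used throughout $\llbracket\cdot\rrbracket_{pol}$. Once this is handled, (A34), (A35), and the additive power law all follow by routine open and $\Sigma^B_0$-inductions.
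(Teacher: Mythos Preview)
Your argument is correct, but the paper organizes the three claims in a different order that sidesteps the commutativity detour you take in (A35). The paper first observes directly from the definition of $\bold{Q}_{pol}$ that the recursion unfolds as \emph{left} multiplication, i.e.\ $(A,d)^{k+1} =_{pol} (A,d) *_{pol} (A,d)^k$. It then proves the additive law $(A,d)^k *_{pol} (A,d)^l =_{pol} (A,d)^{k+l}$ by induction on $k$ using only this left-recursion and associativity from Lemma~\ref{Ring properties for pol}, and finally reads off (A35) as the special case $l=1$. Your route instead attacks (A35) head-on, which --- since the definition naturally gives left multiplication while (A35) asserts right multiplication --- forces you to prove that $(A,d)$ commutes with $(A,d)^k$ via an auxiliary induction before you can proceed; you then derive the additive law from (A35) by induction on $l$. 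Both strategies are valid and rest on the same associativity lemma, but the paper's ordering is leaner: by proving the additive law first it gets the right-multiplication statement (A35) for free without ever needing to argue that powers commute.
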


\begin{proof}
 For $(A34)$, we have the following equalities:
 \begin{align*}
  \mathtt{P}_{pol}(0,A,d)=(Q_{pol}(0,A,d),0) = (\pconv(A,d,0)I_{\row(A)}, 0) = ((A,d),0).
 \end{align*}
 
 For $(A35)$, first it is easy to see that 
 \[(Q_{pol}(m+1,A,d),(m+1)d) =_{pol} (A,d) *_{pol} (Q_{pol}(m,A,d),md)\]
  by definitions.
 
 Now, we show the ``moreover'' part by induction on $k$.
 
 The case when $k=0$ is clear.
 The case of $k+1$ follows easily from the case of $k$ and induction hypothesis.
 Indeed,
 \begin{align}
  (A,d)^{k+1} *_{pol} (A,d)^{l} 
  &=_{pol} ((A,d) *_{pol} (A,d)^{k}) *_{pol} (A,d)^{l} \label{by def of Ppol and *pol}\\
  &=_{pol} (A,d) *_{pol} ((A,d)^{k} *_{pol} (A,d)^{l}) \label{by Ring properties} \\
  &=_{pol} (A,d) *_{pol} (A,d)^{k+l} \label{by IH} \\
  &=_{pol} (A,d)^{k+l+1} \label{by def of Ppol and *pol2}
 \end{align}
 Here, the equalities (\ref{by def of Ppol and *pol}) and (\ref{by def of Ppol and *pol2}) follow from the definitions of $\mathtt{P}_{pol}$ and $*_{pol}$, 
 (\ref{by Ring properties}) from the associativity law established in Lemma \ref{Ring properties for pol},
 and (\ref{by IH}) from the induction hypothesis.

 $(A35)$ follows from ``moreover'' part.
\end{proof}

Now, we observe the following:

\begin{lemma}\label{bounding}
For a term $t$ outputting a $\fieldsort$ element, 
\[\LAP \vdash \deg (\llbracket t \rrbracket_{pol}) \leq \bold{b}[t],\]
that is, the universal closure of LHS is true under $\LAP$.
For a term $t$ outputting a $\matrixsort$ element,
\[\LAP \vdash \deg(\extract (\llbracket t \rrbracket_{pol}, i, j))  \leq \bold{b}[t] .\]
\end{lemma}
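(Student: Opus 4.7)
The plan is to carry out a simultaneous structural induction on $\mathcal{L}_{\LAP_{-}}$-terms $t$, handling terms with $\fieldsort$ output and $\matrixsort$ output together, since the two inductive hypotheses feed into each other (e.g.\ $\extract(A,i,j)$ and $\sum(A)$ turn a matrix-term into a field-term, while $\lamt_{ij}\langle m,n,u\rangle$ does the reverse). In each case we will reduce the bound on $\deg$ to arithmetical facts about $+_{pol}$, $*_{pol}$, $\conv$, and $\mathtt{P}_{pol}$ already established in Section~\ref{A formalization of polynomials in LAP}, and close the verification with open induction on $\indexsort$.

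The atomic cases are immediate: for a field-variable $a$, $\llbracket a \rrbracket_{pol}=h_a$, so $\deg(\llbracket a \rrbracket_{pol}) \leq \row(h_a)-1 \leq \row(h_a)=\bold{b}[a]$; for a matrix-variable $A$, every coefficient of $\extract_{pol}((A,d_A),i,j)$ lies in the vector $\mcoeff(A,d_A,k)$ for some $k\leq d_A$, giving $\deg \leq d_A=\bold{b}[A]$; the constants $0_{field},1_{field}$ clearly have degree $\leq 0$. For $t\equiv t_1 +_{field} t_2$, the coefficient-wise definition of $+_{pol}$ and $\mmax$ yield $\deg(\llbracket t \rrbracket_{pol}) \leq \mmax\{\deg(\llbracket t_1\rrbracket_{pol}),\deg(\llbracket t_2\rrbracket_{pol})\}$, and we conclude by IH. For $t\equiv t_1 *_{field} t_2$, the formula $\coeff(f*_{pol}g,k)=\sum_{j=0}^{k}\coeff(f,k-j)\coeff(g,j)$ (Lemma before Lemma~\ref{Ring properties for pol}) and open induction show $\deg(f*_{pol}g) \leq \deg(f)+\deg(g)$, and IH gives the required bound $\bold{b}[t_1]+\bold{b}[t_2]$. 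The cond-case is a case analysis on the guard $\alpha$; one branch equals $\llbracket t_1\rrbracket_{pol}$ and the other $\llbracket t_2\rrbracket_{pol}$, so the bound $\mmax\{\bold{b}[t_1],\bold{b}[t_2]\}$ holds.

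The cases $\extract(A,i,j)$ and $\sum(A)$ on a matrix-term $A$ reduce directly to IH for $A$: by the definition of $\extract_{pol}$ and $\summation_{pol}$ in \S\ref{I3onL0forpoly}, every nonzero coefficient of the result occurs at some index $k\leq \bold{b}[A]$ inside $\mcoeff$, so the degree is $\leq \bold{b}[A]$. For the matrix-output cases, the bound on $\deg(\extract(\llbracket t \rrbracket_{pol},i,j))$ is extracted component-wise from the scalar case: for $\lamt_{ij}\langle m,n,u\rangle$ the $(i,j)$-entry of the coded matrix is exactly $\llbracket u \rrbracket_{pol}(i,j)$, whose degree is $\leq \bold{b}[u]$ by IH, which matches the index $d$ recorded in $\mathcal{I}_3(\lamt_{ij}\langle m,n,u\rangle)$. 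The powering case is the main obstacle: we must show, for $(A,d)\in\Mat_{\FF[X]}(m,m)$, that every coefficient of $\extract_{pol}(\mathtt{P}_{pol}(k,A,d),i,j)$ of index $>kd$ vanishes. We prove this by open induction on $k$. The base case $k=0$ gives $(I_m,0)$, which has degree $0$. For the step, using $(A,d)^{k+1} =_{pol} (A,d) *_{pol} (A,d)^{k}$ (established just after the definition of $\bold{Q}_{pol}$) and the degree-additivity $\deg(f*_{pol}g) \leq \deg f+\deg g$ applied coefficient-wise to the block $\conv$-product, we obtain $\deg(\extract_{pol}((A,d)^{k+1},i,j)) \leq d + kd = (k+1)d$, closing the induction.

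Finally, the extension from terms in $\mathcal{L}_0$ to terms in $\mathcal{L}_k$ ($k\geq 1$) is routine once the cases for $\cond_{\fieldsort}^{\alpha,\vec{x}}$ and $\lamt_{ij}\langle m,n,u\rangle$ are settled as above; the $\cond_{\indexsort}^{\alpha,\vec{x}}$ construct produces an $\indexsort$ output and needs no bound. Collecting the cases yields the two claimed universal statements, each provable in $\LAP$ by a $\Sigma^{B}_0$-induction (available via Lemma~\ref{characteristicfunction}) on the parameters that appear as bounds in $\mmax$, $+$, $*$, and iterated product.
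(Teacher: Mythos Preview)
Your proposal is correct and follows exactly the approach the paper indicates: structural induction on $t$. The paper itself omits the proof entirely, saying only that it ``can be carried out by the structural induction on $t$ and $\varphi$,'' so your sketch is a faithful expansion of that remark. One minor simplification: in the powering case you do not actually need the object-level induction on $k$ via $(A,d)^{k+1} =_{pol} (A,d) *_{pol} (A,d)^{k}$, because by definition $\mathtt{P}_{pol}(k,A,d) = (\bold{Q}_{pol}(k,A,d), kd)$ already records $kd$ as the second component, and $\extract_{pol}$ on any pair $(M,d')$ returns a vector of length $1+d'$; so the degree bound $kd = \bold{b}[\mathtt{P}(k,A)]$ is immediate from the shape of the encoding.
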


\begin{lemma}
Let $\varphi$ be an open $\LAP$-formula.
Then $\llbracket \varphi \rrbracket_{pol}$ is equivalent to a $\Sigma^{B}_{0}$-formula in $\LAP$. 

\end{lemma}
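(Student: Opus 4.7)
The plan is to reduce the claim to the design of the interpretation $\llbracket \cdot \rrbracket_{pol}$ itself, which was set up in Sections \ref{I3onL0forpoly} and \ref{b[t] for L0} precisely so that nothing more than bounded $\indexsort$-quantifiers can appear.

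First I would verify, by a routine structural induction on $\mathcal{L}_{\LAP_{-}}$-terms $t$, that $\llbracket t \rrbracket_{pol}$ is itself a tuple of $\mathcal{L}_{\LAP}$-terms (one term if $t$ is of $\indexsort$ or $\fieldsort$, a pair consisting of a $\matrixsort$-term and an $\indexsort$-term if $t$ is of $\matrixsort$). This is transparent from the explicit definitions: every function symbol $f$ of $\mathcal{L}_{\LAP_{-}}$ is assigned an explicit tuple $f_{pol}$ of $\mathcal{L}_{\LAP}$-terms, and the auxiliary degree-bounding expressions $\bold{b}[t]$ are likewise $\mathcal{L}_{\LAP}$-terms. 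Consequently, for each $\mathcal{L}_{\LAP_{-}}$-term $t$, the graph formula $G_{t,\mathcal{I}}$ from Definition \ref{formulaeinterpretation} amounts to the quantifier-free assertion that a tuple of fresh variables equals $\llbracket t \rrbracket_{pol}$ componentwise.

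Second, I would check each of the four atomic relation interpretations $\beta_{R}$ and observe that each is $\Sigma^{B}_{0}$: $=_{\indexsort}$ and $\leq_{\indexsort}$ are mapped to themselves and are open; $=_{\fieldsort}$ is mapped to the relation $f =_{pol} g$, whose definition $\forall j \leq \mmax\{\row(f),\row(g)\}.\ \coeff(f,j)=\coeff(g,j)$ uses only a bounded $\indexsort$-quantifier; and $=_{\matrixsort}$ is mapped to a conjunction of two $\row_{pol}$, $\column_{pol}$ equalities together with the doubly bounded $\indexsort$-quantifier $\forall i \in [\row(A)]. \forall j \in [\column(A)].\ \extract(A,i,j)=\extract(B,i,j)$. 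All four $\beta_{R}$ therefore lie in $\Sigma^{B}_{0}$.

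Third, combining the previous two steps, for an atomic $\varphi \equiv R(t_1,\ldots,t_m)$ the prescription of Definition \ref{formulaeinterpretation} produces an existentially quantified conjunction whose existential quantifiers bind fresh variables that are each forced, by a quantifier-free $G_{t_j,\mathcal{I}}$, to equal the term interpretation $\llbracket t_j \rrbracket_{pol}$. These existentials can be eliminated by substitution in $\LAP$, so $\llbracket \varphi \rrbracket_{pol}$ is $\LAP$-provably equivalent to $\beta_{R}(\llbracket t_1 \rrbracket_{pol},\ldots,\llbracket t_m \rrbracket_{pol})$, which is $\Sigma^{B}_{0}$. Since $\varphi$ is open, it is built from atomic formulae using only Boolean connectives, and $\Sigma^{B}_{0}$ is trivially closed under them, so $\llbracket \varphi \rrbracket_{pol}$ is $\LAP$-provably equivalent to a $\Sigma^{B}_{0}$-formula. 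There is no genuine obstacle here; the only point requiring a line of care is that $=_{\fieldsort}$ and $=_{\matrixsort}$ are interpreted via bounded $\indexsort$-quantifiers rather than unbounded $\fieldsort$- or $\matrixsort$-quantifiers, which is exactly how $=_{pol}$ and its matrix analogue were formulated.
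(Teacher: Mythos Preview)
Your argument is correct and follows exactly the structural-induction approach the paper indicates (and then omits): first observing that each $f_{pol}$ is given by explicit $\mathcal{L}_{\LAP}$-terms so that the graph formulae reduce to term equalities, then checking that the four $\beta_R$'s are $\Sigma^{B}_{0}$, and finally propagating through Boolean connectives. The only thing the paper adds is the parenthetical remark that the same induction also handles Lemma~\ref{bounding}; your write-up is a faithful expansion of what the paper leaves implicit.
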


The proofs can be carried out by the structural induction on $t$ and $\varphi$, and we omit them.
Armed with these and Lemma \ref{characteristicfunction}, it follows that $\langle \MM, \FF[X], \Mat_{\FF[X]} \rangle$ satisfies Open Induction.
Other axioms in $\LAP_{-}$ are more straightforward to verify, and we have Proposition \ref{[]pol is an interpretation}.




\end{document}